\documentclass[a4paper,USenglish,cleveref,nameinlink, autoref, numberwithinsect]{lipics-v2021}
\usepackage{graphicx} 
\usepackage{amsmath}
\usepackage{tikz}
\usepackage{thm-restate}
\usetikzlibrary{decorations.pathreplacing,decorations.pathmorphing,arrows,math,shapes,fit,calc}
\usepackage{amsthm}
\theoremstyle{definition}
\newtheorem{problem}{Problem}
\crefname{observation}{Observation}{Observations}

\usetikzlibrary{positioning}

\usepackage{todonotes}

\title{Simultaneous Graph Parameters and How to Bound Them}

\author{Robert Scheffler}{Institute of Mathematics, Brandenburg University of Technology, Cottbus, Germany}{robert.scheffler@b-tu.de}{https://orcid.org/0000-0001-6007-4202}{}
\author{Philipp Wolf Schleicher}{Institute of Mathematics, Brandenburg University of Technology, Cottbus, Germany}{philippwolf.schleicher@b-tu.de}{https://orcid.org/0009-0001-6652-2042}{}

\authorrunning{R. Scheffler and P. W. Schleicher}

\tikzstyle{vertex}=[draw, circle, fill=black, inner sep=1.5pt]

\definecolor{BTUred}{RGB}{228,35,19}
\definecolor{coolblue}{HTML}{5FABF5}
\definecolor{BTUmagenta}{RGB}{215,0,127}

\renewcommand{\P}{\ensuremath{\mathsf{P}}}
\newcommand{\NP}{\ensuremath{\mathsf{NP}}}
\newcommand{\XP}{\ensuremath{\mathsf{XP}}{}}
\newcommand{\FPT}{\ensuremath{\mathsf{FPT}}}
\newcommand{\W}{\ensuremath{\mathsf{W[1]}}}
\newcommand{\cP}{\ensuremath{\mathcal{P}}}
\renewcommand{\O}{\ensuremath{\mathcal{O}}}
\newcommand{\C}{\ensuremath{\mathcal{C}}}
\newcommand{\N}{\ensuremath{\mathbb{N}}}

\newcommand{\bbR}{\ensuremath{\mathbb{R}}}
\newcommand{\sfT}{\ensuremath{\mathsf{T}}}

\newcommand{\bbG}{\mathbb{G}}
\newcommand{\nd}[1]{\mathsf{nd}\!\br{#1}}
\newcommand{\td}[1]{\mathsf{td}\!\br{#1}}
\newcommand{\tdbr}[1]{\mathsf{td}\br{#1}}
\newcommand{\bw}[1]{\mathsf{bw}\!\br{#1}}
\newcommand{\tbw}[1]{\mathsf{tbw}\!\br{#1}}
\newcommand{\thin}{\mathsf{thin}}
\newcommand{\precthin}{\mathsf{prec}\text{-}\mathsf{thin}}
\newcommand{\propthin}{\mathsf{pthin}}
\newcommand{\precpropthin}{\mathsf{prec}\text{-}\mathsf{pthin}}
\newcommand{\ecc}[1]{\mathsf{ecc}\!\br{#1}}
\newcommand{\cw}[1]{\mathsf{cw}\!\br{#1}}
\newcommand{\cocluster}{\mathsf{co\text{-}cluster}}

\newcommand{\mwr}[1]{\mathsf{mw_r}\!\br{#1}}
\newcommand{\fwr}[1]{\mathsf{fw_r}\!\br{#1}}
\newcommand{\tww}[1]{\mathsf{tww}\!\br{#1}}

\newcommand{\Gmc}{\text{-}\mathsf{mc}}

\newcommand{\br}[1]{\left( #1 \right)} 
\newcommand{\ekbr}[1]{\left[ #1 \right]} 
\newcommand{\set}[1]{\left\{ #1 \right\}} 
\newcommand{\abs}[1]{\left| #1 \right|} 
\newcommand{\setcond}[2]{\left\{ #1 \  : \  #2 \right\}} 
\newcommand{\cC}{\mathcal{C}}
\newcommand{\cD}{\mathcal{D}}
\newcommand{\cK}{\mathcal{K}}
\newcommand{\cB}{\mathcal{B}}
\newcommand{\cA}{\mathcal{A}}
\newcommand{\cT}{\mathcal{T}}
\newcommand{\cG}{\mathcal{G}}
\newcommand{\GC}{2^{\bbG}}
\newcommand{\TM}{\mathcal{TM}}

\NewDocumentCommand{\simul}{ O{} O{} }{%
\ifblank{#1}%
{\mathsf{simul}}%
{\ifblank{#2}%
{\mathsf{simul}_{#1}}%
{\mathsf{simul}_{#1} \! \br{#2}}%
}%
}

\newcommand{\simulbr}[2]{\mathsf{simul}_{#1} \br{#2}}

\NewDocumentCommand{\simulnonempty}{ O{} O{} }{%
\ifblank{#1}%
{\mathsf{simul}^\exists}%
{\ifblank{#2}%
{\mathsf{simul}^\exists_{#1}}%
{\mathsf{simul}^\exists_{#1} \! \br{#2}}%
}%
}

\newcommand{\simulnonemptybr}[2]{\mathsf{simul}^\exists_{#1} \br{#2}}

\newcommand{\distanceto}[1]{\mathsf{distance}\text{-}\mathsf{to}\text{-}#1}

\NewDocumentCommand{\intdim}{ O{} O{} }{%
\ifblank{#1}%
{\mathsf{idim}}%
{\ifblank{#2}%
{\mathsf{idim}_{#1}}%
{\mathsf{idim}_{#1} \! \br{#2}}%
}
}

\NewDocumentCommand{\cov}{ O{} O{} }{%
\ifblank{#1}%
{\mathsf{cov}}%
{\ifblank{#2}%
{\mathsf{cov}_{#1}}%
{\mathsf{cov}_{#1} \! \br{#2}}%
}%
}

\NewDocumentCommand{\covl}{ O{} O{} }{%
\ifblank{#1}%
{\mathsf{cov}^{\ell}}%
{\ifblank{#2}%
{\mathsf{cov}_{#1}^{\ell}}%
{\mathsf{cov}_{#1}^{\ell} \! \br{#2}}%
}%
}

\NewDocumentCommand{\covf}{ O{} O{} }{%
\ifblank{#1}%
{\mathsf{cov}^{f}}%
{\ifblank{#2}%
{\mathsf{cov}_{#1}^{f}}%
{\mathsf{cov}_{#1}^{f} \! \br{#2}}%
}%
}

\newcommand{\clos}[1][]{
 \ifblank{#1}
 {\operatorname{clos}}
 {\operatorname{clos} \! \left( #1 \right)}
}

\crefname{claim}{Claim}{Claims}
\Crefname{claim}{Claim}{Claims}

\ccsdesc[500]{Mathematics of computing~Graph algorithms}
\ccsdesc[500]{Theory of computation~Parameterized complexity and exact algorithms}

\keywords{graph parameter, graph class, simultaneous representation, simultaneous interval number}

\hideLIPIcs

\nolinenumbers

\begin{document}

\maketitle

\begin{abstract}
    Beisegel, Chiarelli, K\"{o}hler, Milani\v{c}, Mur\v{s}i\v{c}, and Scheffler [SWAT 2024] introduced the concept of simultaneous $\cC$-numbers which associate a graph class $\C$ with a graph parameter. Given a graph~$G$, the simultaneous $\cC$-number is the smallest number $d$ for which there is a graph $H \in \cC$ and a function $L : V(G) \to \cP(\{1,\dots,d\})$ such that two vertices $u$ and $v$ are adjacent in $G$ if and only if they are adjacent in $H$ and their sets $L(u)$ and $L(v)$ are not disjoint. We study the relation of these simultaneous $\cC$-numbers to other graph parameters. In particular, we investigate which parameters fulfill the following property: Parameter $p$ is bounded on class $\cC$ if and only if $p$ is bounded on the class of graphs of simultaneous $\cC$-number~$d$ for any fixed $d$. We show that many well-known graph parameters have this property. Examples are cliquewidth, twin-width, mim-width, tree independence number, thinness as well as boxicity. We furthermore present some parameters, including modular-width and tree-length, that do no have this property. 
    
    We also study when a parameter forms an upper bound on a simultaneous $\C$-number. We characterize those graph classes $\cC$ for which the parameters treewidth, pathwidth, bandwidth, and treedepth upper bound the simultaneous $\cC$-number. Furthermore, we present sufficient conditions on a class $\cC$, such that $\cP$-modular cardinality upper bounds the simultaneous $\cC$-number, where $\cP$ is replaced by the complete graphs, the edgeless graphs, cographs, or the class $\cC$ itself. On the contrary, we show that modular width never forms an upper bound on a non-trivial simultaneous $\cC$-number.

    Finally, we present some general algorithmic results for simultaneous $\cC$-numbers. We show that, given the graph $H \in \C$ and the function $L$ with $d$ labels, the clique problem can be solved in \FPT{} time parameterized by $d$ if and only if the clique problem can be solved in polynomial time on $\cC$. Additionally, we give for certain classes $\cC$ algorithms that compute the simultaneous $\cC$-number of a graph in \FPT{} time when parameterized by some $\cP$-modular cardinalities.
\end{abstract}

\section{Introduction}

Graph parameters play an important role in the fields of structural and algorithmic graph theory as well as parameterized complexity.  In essence, such a parameter is a mapping from the set of graphs to the set of real numbers. The idea is that graphs mapped to large values are in some sense more complex than graphs mapped to small values. In many cases, graph parameters are strongly related to certain graph classes. If there is a graph class where many problems can be solved efficiently, then it is a common approach to introduce parameters that describe how far away a graph is from this graph class.

An example is the notion of \emph{treewidth} that somehow describes how tree-like a graph is. Here, one can observe that many algorithmic problems that are easy to solve on trees can also be solved in linear time on graphs of bounded treewidth. Another famous graph class with nice structural and algorithmic properties is the class of interval graphs, i.e., the intersection graphs of intervals on the real line. Many problems that are hard on general graphs can be solved in polynomial time on interval graphs. Examples are the \textsc{Coloring}, \textsc{Independent Set}, and \textsc{Dominating Set}. Graph parameters with nice algorithmic properties, such as treewidth, cliquewidth and twin-width are unbounded on intervals graphs. On the contrary, graph parameters that are bounded on interval graphs do in most cases not allow for \FPT{} algorithms. This fact motivated the search for another graph width parameter that reflects the structural property of interval graphs more precisely. 

One attempt of such a parameter was introduced by Beisegel, Chiarelli, K\"{o}hler, Milani\v{c}, Mur\v{s}i\v{c}, and Scheffler~\cite{beisegel2024simultaneousinterval}. This approach was motivated by the \emph{simultaneous representation problem} which considers a graph class $\C$ that is characterized by intersection representations. Examples beside the interval graphs are permutation graphs, the intersection graphs of line segments between parallel lines, or circle graphs, the intersection graphs of chords of a cycle. In the simultaneous representation problem of $\C$, we are given $d$ graphs of $\C$ which may share some vertices and ask for intersection representations of them such that each vertex has the same representative in each representation it is contained in. The problem of deciding whether a set of graphs has such a simultaneous representation was introduced by Jampani and Lubiw~\cite{jampani2010siminterval,jampani2012simultaneous} and has gained significant attention for different classes~\cite{blaesius2016simultan,bok2018note,rutter2025simultaneous}.

Beisegel et al.~\cite{beisegel2024simultaneousinterval} used this concept to define \emph{$d$-simultaneous $\cC$-representations}. Such a representation of a graph $G$ consists of a graph $H \in \C$ with $V(H) = V(G)$ and a mapping $L : V(G) \to \cP(\{1,\dots,d\})$ such that two vertices $u$ and $v$ are adjacent in $G$ if and only if they are adjacent in $H$ and $L(u) \cap L(v)$ is not empty.\footnote{In the original definition of Beisegel et al.~\cite{beisegel2024simultaneousinterval}, the graph $H$ is replaced by some intersection representation belonging to a graph of $\cC$. However, there is no difference whether the graph or its representation is considered. Furthermore, this allows us to also consider classes that do not have an intersection model.} If the class $\C$ contains all complete graphs, then every graph has a $d$-simultaneous $\cC$-representation for some value~$d$. So one can define the \emph{simultaneous $\cC$-number} $\simul[\cC][G]$ of a graph $G$ as the smallest value $d$ such that $G$ has a $d$-simultaneous $\cC$-representation. 

Beisegel et al.~\cite{beisegel2024simultaneousinterval} focused on the class of interval graphs. On the one hand, they studied the algorithmic properties of the \emph{simultaneous interval number}~$d$. They showed that there are \FPT{} algorithms for \textsc{Clique} when parameterized by~$d$ as well as for \textsc{Independent Set}, \textsc{Dominating Set}, and \textsc{Coloring} when parameterized by $d + k$ where $k$ is the solution size. Furthermore, they showed that \textsc{Coloring} is \NP-hard for $d=2$ and \textsc{Independent Dominating Set} is \W-hard when parameterized by $d$.
On the other hand, Beisegel et al.~\cite{beisegel2024simultaneousinterval} studied how the simultaneous interval number is related to other graph parameters. They showed that bounded pathwidth or edge clique cover number implies bounded simultaneous interval number, while bounded simultaneous interval number implies boundedness of parameters such as linear mim-width, tree independence number, and boxicity.

These results raise the question how other simultaneous $\cC$-numbers are related to known graph parameters. Clearly, a parameter that is unbounded for class $\cC$ is also unbounded for graphs of simultaneous $\cC$-number~$d$. For the converse, we observe the following when looking at the parameters considered by Beisegel et al.~\cite{beisegel2024simultaneousinterval}: If a parameter is bounded for interval graphs, then it also bounded for graphs of simultaneous interval number~$d$. This property is of particular interest as it suggests that the simultaneous interval number generalizes the structure of interval graphs to such an extent that results obtained for interval graphs also hold for graphs of bounded simultaneous interval number. Here, we will study whether this is a property that is characteristic of interval graphs and the simultaneous interval number, or whether this property applies to all graph classes~$\C$ and their respective simultaneous $\C$-number. The latter would imply that the concept of simultaneous $\C$-numbers might be a promising direction to design graph parameters that generalize particular graph classes. This is of particular interest for those classes have the same problem as interval graph: The class itself has nice algorithmic properties, but those parameters that are bounded on the class do not. Examples are permutation graphs and split graphs as well as their superclasses of chordal graphs and cocomparability graphs.

\subparagraph{Our Contribution}
We present a comprehensive study of the relation between known graph parameters and the simultaneous $\C$-numbers. Here, we focus only on those numbers that are both well-defined and monotone, i.e., they are bounded by a constant for each graph and taking induced subgraphs of a graph does not increase the number. This is ensured by forcing the class $\cC$ to be hereditary and to contain all complete graphs. 

Our goal is to identify those graph parameters who behave well with the simultaneous parameters. To this end, we say that a graph parameter $p$ is \emph{$\simul$-bounded} if for each hereditary graph class $\C$ that contains all complete graphs and has bounded $p$, it holds that $p$ is bounded for graphs of simultaneous $\C$-number~$d$. In \cref{sec:simulbounded}, we identify a wide range of graph parameters that are $\simul$-bounded. For some of these parameters, $\simul$-boundedness is a direct consequence of being closed under first-order transductions. This includes cliquewidth, twin-width and flip-width. However, many parameters are not closed under these transductions and, thus, we prove $\simul$-boundedness from scratch. These parameter include the tree independence number, boxicity, and mim-width. We also present parameters that are not $\simul$-bounded. Examples are tree-length, modular-width, and iterated type partition. In particular, these examples show that we cannot use any transitivity property as there are non-$\simul$-bounded parameters that are upper and lower bounded by $\simul$-bounded parameters. An overview of the most important results is given in \cref{fig:diagram}.

In \cref{sec:upper}, we consider the question in which cases parameters form upper bounds on the simultaneous $\C$-number. In particular, we characterize the graph classes $\cC$ for which the parameters treewidth, pathwidth, bandwidth, treebandwidth, and treedepth form upper bounds on the simultaneous $\C$-number. Furthermore, we present sufficient conditions on a class $\cC$, such that $\cP$-modular cardinality upper bounds the simultaneous $\cC$-number, where $\cP$ is replaced by the complete graphs, the edgeless graphs, cographs or the class $\cC$ itself. On the contrary, we show that modular width and, thus, its lower bounds cliquewidth and twin-width can never form upper bounds on a non-trivial simultaneous $\cC$-number.

Besides these structural studies, we also present some algorithmic results in \cref{sec:algo} that fit into our general approach. We show that, given a $d$-simultaneous $\cC$-representation, the clique problem can be solved in \FPT{} time parameterized by $d$ if and only if the clique problem can be solved in polynomial time on $\cC$. Furthermore, we consider the computation of simultaneous $\cC$-numbers. In particular, we study the complexity of that problem when parameterized by some $\cP$-modular cardinalities (plus the solution size) where $\cP$ is another graph class. This enhances the results of Bonomo-Braberman, Brandwein and Sau~\cite{bonomobraberman2025computing}, who presented an \FPT{} algorithm for computing the simultaneous interval number parameterized by $\mathsf{cluster}$-modular cardinality plus solution size.

\begin{figure}[t]
\begin{center}
\begin{tikzpicture}[xscale=2.5,yscale=0.85]
    \tikzset{box/.style={draw,rectangle}}
    \tikzset{roundedbox/.style={draw,rectangle,rounded corners}}
    \tikzset{markedbox/.style={draw,rectangle,very thick, fill=lipicsYellow}}
    \tikzset{simul/.style={draw,rectangle,rounded corners, fill=coolblue}}
    \tikzset{trivial/.style={draw=BTUred,rectangle,rounded corners,fill=coolblue,line width=2pt}}
    \tikzset{notsimul/.style={draw,rectangle, fill=lipicsYellow}}
    \footnotesize

     \draw [thick,BTUred!20!white, fill=BTUred!10!white] (3.1,3.75) .. controls (2.85,2.25)  .. (2.5,0.95) -- ++(-0.7,0) -- ++(0,-0.95) -- ++(0.35,0) -- ++(0,-1.05) -- ++(-0.35,0) -- ++(0,-0.9) -- ++(2.25,0) -- ++(0,-3.75) -- ++(0.925,0) -- ++(0,4.65) -- ++(-2.625,0) -- ++(0,1.05) -- ++ (0.35,0) -- ++(0,0.95) .. controls (3,2.25) .. ++(0.5,1.8) -- ++ (1.3,0) -- ++(0,1) -- cycle;
   
    \node[simul] (cub) at (0,0.5) {\mathstrut cubicity};
    \node[simul] (box) at (0,-0.75) {\mathstrut boxicity};
    \node[simul] (chor) at (0,-2) {\mathstrut chordality};
    \node[trivial] (tw) at (1,1.2) {\mathstrut treewidth};
    \node[trivial] (bw) at (0,3) {\mathstrut bandwidth};
    \node[trivial] (td) at (1,3.35) {\mathstrut treedepth};
    \node[simul] (sd) at (2.25,0.5) {\mathstrut shrub-depth};
    \node[notsimul] (sm) at (1.75,-0.5) {\mathstrut sm-width};
    \node[simul] (cw) at (2.25,-1.5) {\mathstrut cliquewidth};
    \node[simul] (rw) at (3.5,-1.5) {\mathstrut rank-width};
    \node[simul] (boolw) at (4.5,-1.5) {\mathstrut boolean-width};
    \node[notsimul] (tl) at (3.5,-0.25) {\mathstrut tree-length};
    \node[notsimul] (mw) at (4.5,0.25) {\mathstrut modular width};
    \node[notsimul] (itp) at (4.5,1.25) {\mathstrut iterated type partition};
    \node[simul] (nd) at (3.75,3.25) {\mathstrut neighborhood diversity};
    \node[simul] (ecc) at (2.35,4.25) {\mathstrut edge clique cover number};
    \node[notsimul] (pre-thin) at (2.25, 2.75) {\mathstrut precedence thinness};
    \node[simul] (thin) at (2.25, 1.5) {\mathstrut thinness};
    \node[simul] (mim) at (3.5,-2.75) {\mathstrut mim-width};
    \node[simul] (omim) at (3.5,-4) {\mathstrut o-mim-width};
    \node[simul] (sim) at (3.5,-5.25) {\mathstrut sim-width};
    \node[simul] (twin) at (4.5,-2.75) {\mathstrut twin-width};
    \node[simul] (merge) at (4.5,-4) {\mathstrut merge-width};
    \node[simul] (flip) at (4.5,-5.25) {\mathstrut flip-width};
    \node[simul] (talpha) at (2.1,-3.275) {\mathstrut tree independence number};
    \node[simul] (imtw) at (2.25,-4.525) {\mathstrut induced matching treewidth};
    \node[simul] (tn) at (1,-1.25) {\mathstrut track number};
    \node[simul] (in) at (1,-2.25) {\mathstrut interval number};

    \draw[thick,-stealth',font=\sffamily] (bw) to (tw);
    \draw[thick,-stealth',font=\sffamily] (td) to (tw);
    \draw[thick,-stealth',font=\sffamily] (td) to (sd.155);
    \draw[thick,-stealth',font=\sffamily] (td) to (thin);
    \draw[thick,-stealth',font=\sffamily] (bw) to (cub);
    \draw[thick,-stealth',font=\sffamily] (bw) to (thin.178);
    \draw[thick,-stealth',font=\sffamily] (cub) to (box);
    \draw[thick,-stealth',font=\sffamily] (box) to (chor);
    \draw[thick,-stealth',font=\sffamily] (ecc.0) to (nd);
    \draw[thick,-stealth',font=\sffamily] (ecc.-9) to (talpha.9);
    \draw[thick,-stealth',font=\sffamily] (ecc.-170) to (tn.30);
    \draw[thick,-stealth',font=\sffamily] (ecc.-171) to (cub.8);
    \draw[thick,-stealth',font=\sffamily] (nd.-170) to (sd.25);
    \draw[thick,-stealth',font=\sffamily] (nd) to (itp);
    \draw[thick,-stealth',font=\sffamily] (nd.-169) to (imtw.10);
    \draw[thick,-stealth',font=\sffamily] (nd) to (pre-thin);
    \draw[thick,-stealth',font=\sffamily] (pre-thin) to (thin);
    \draw[thick,-stealth',font=\sffamily] (thin.-178) to (box.0);
    \draw[thick,-stealth',font=\sffamily] (thin.0) to (mim.160);
    \draw[thick,-stealth',font=\sffamily] (itp) to (mw);
    \draw[thick,-stealth',font=\sffamily] (mw) to (tl);
    \draw[thick,-stealth',font=\sffamily] (mw) to (boolw);
    \draw[thick,-stealth',font=\sffamily] (sd) to (cw);    
    \draw[thick,-stealth',font=\sffamily] (sd) to (tl);
    \draw[thick,-stealth',font=\sffamily] (tw) to (box);
    \draw[thick,-stealth',font=\sffamily] (tw.-30) to (sm.150);
    \draw[thick,-stealth',font=\sffamily] (sm) to (cw.160);
    \draw[thick,stealth'-stealth',font=\sffamily] (cw) to (rw);
    \draw[thick,stealth'-stealth',font=\sffamily] (rw) to (boolw);
    \draw[thick,-stealth',font=\sffamily] (rw) to (mim);
    \draw[thick,-stealth',font=\sffamily] (mim) to (omim);
    \draw[thick,-stealth',font=\sffamily] (omim) to (sim);
    \draw[thick,-stealth',font=\sffamily] (boolw) to (twin);
    \draw[thick,-stealth',font=\sffamily] (twin) to (merge);
    \draw[thick,-stealth',font=\sffamily] (merge) to (flip);
    \draw[thick,-stealth',font=\sffamily] (tw) to (talpha.160);
    \draw[thick,-stealth',font=\sffamily] (talpha) to (omim);
    \draw[thick,-stealth',font=\sffamily] (talpha) to (imtw);
    \draw[thick,-stealth',font=\sffamily] (imtw) to (sim);
    \draw[thick,-stealth',font=\sffamily] (tw) to (tn);
    \draw[thick,-stealth',font=\sffamily] (tn) to (in);

    \draw[thick,-stealth',font=\sffamily,dashed] (flip.30) to (merge.-30);
    \draw[thick,-stealth',font=\sffamily,dashed] (cub.-15) to (tn);
    \draw[thick,-stealth',font=\sffamily,dashed] (cub.-28) to (in.162);
    
    \begin{scope}[xshift=1cm,yshift=-7cm]
    \node[inner sep=0pt] (Legend) at (0,0) {Legend:};
    \node[simul] (Simul) [right=0.5cm of Legend] {$\simul$-bounded};
    \node[trivial] (Trivial) [right=0.5cm of Simul] {trivially $\simul$-bounded};
    \node[simul, outer sep=0.15cm] (FO2) [below=0.3cm of Simul] {\phantom{closed under FO-transductions}};
    \draw[thick,BTUred!20!white, fill=BTUred!10!white] (FO2.south west) rectangle (FO2.north east);
    \node[inner sep=0, outer sep=0] (test) at (FO2.south) {};
     \node[simul, outer sep=0.15cm] (FO) [below=0.3cm of Simul] {closed under FO-transductions};
    \node[notsimul] (Not) [right=0.5cm of FO] {not $\simul$-bounded};
    \node (border) [draw, rectangle, fit={(Legend) (Simul) (Trivial) (Not) (test)}, inner sep=5pt] {};
    \end{scope}
\end{tikzpicture}
\end{center}
  \caption{Diagram illustrating the relations between considered parameters.
  A directed edge from $p$ to $q$ means that $p$ upper bounds $q$.
  If there is no directed path from $p$ to $q$, then $p$ does not upper bound $q$.
  A dashed edge from $p$ to $q$ means that, to the best of our knowledge, it is open whether $p$ upper bounds $q$.
  We give a justification of this figure in Appendix \ref{sec:justification}.}
  \label{fig:diagram}
\end{figure}

\subparagraph{Definitions and Notation}

All graphs considered are finite, simple, and undirected.
A vertex subset $A$ of a graph $G$ is called a \emph{clique} if all vertices of $A$ are pairwise adjacent and an \emph{independent set} if all vertices of $A$ are pairwise not adjacent.
The maximum size of an independent set or a clique of a graph $G$ is called \emph{independence number} or \emph{clique number} of $G$ and denoted by $\alpha(G)$ or $\omega(G)$, respectively.
We denote by $K_n$ the complete graph on $n$ vertices, by $C_n$ the cycle on $n$ vertices, and by $P_n$ the path on $n$ vertices.

A graph class is \emph{non-trivial} if it not empty and does not contain all graphs. We call a graph class \emph{completable} if it contains all complete graphs. A graph class $\cC$ is \emph{hereditary} if for every $G \in \cC$ and every induced subgraph $H$ of $G$ it holds that~$H \in \cC$.

We define $\bbG$ to be the class of all graphs. Then a \emph{graph parameter}~$p$ is a function $p: \bbG \to \bbR$.
We say that a parameter $p$ is \emph{upper bounded} by some parameter $p'$ if there is a function $g : \bbR \to \bbR$ such that for all graphs $G \in \bbG$ it holds that $p(G) \leq g(p'(G))$. We also say that $p'$ \emph{upper bounds} $p$ or is a \emph{upper bound} of $p$ and that $p'$ is \emph{lower bounded} by $p$.
If a parameter $p'$ upper and lower bounds a parameter $p$, we say that $p$ and $p'$ are \emph{functionally equivalent}.
For a graph class $\cC$ we define $p(\cC) := \sup_{G \in \cC} p(G)$. If $p(\cC) < \infty$ we say that $p$ is \emph{bounded on $\cC$} with upper bound $p(\cC)$, otherwise $p$ is \emph{unbounded on $\cC$}.
We will use many parameters throughout this paper. Therefore, we do not define all of them here, but rather only give the definition of a parameter the first time we use it.

We denote by $\mathsf{complete}$ the class of complete graphs and by $\mathsf{edgeless}$ the class of edgeless graphs. 
The class of clusters consists of disjoint unions of cliques and is denoted by $\mathsf{cluster}$. The class containing complements of clusters is denoted by $\cocluster$. We denote interval graphs, the intersection graphs of intervals on the real line, by $\mathsf{interval}$ and cographs, the  $P_4$-free graphs, by $\mathsf{cograph}$.

\section{Simultaneous Representations} \label{sec:simrep}

We start by defining simultaneous representations. Note that they have been introduced in~\cite{beisegel2024simultaneousinterval} for classes of intersection graphs. Here, we define them for all completable classes.

\begin{definition}
    Let $G$ be a graph, $\cC$ be a completable graph class, $d \in \N$, and $L : V(G) \to \cP\!\br{\set{1, 2, \dots, d}}$ be a labeling of the vertices of $G$.
    Furthermore, let $H \in \cC$ with $V(H) = V(G)$ and $E(G) \subseteq E(H)$.
    We say that $(H,L)$ is a \emph{$d$-simultaneous $\cC$-representation} of $G$ if for all $u,v \in V(G)$ it holds that $uv \in E(G)$ if and only if $uv \in E(H)$ and $L(u) \cap L(v) \neq \emptyset$.
\end{definition}

\cref{fig:example1,fig:example2} show examples of what a simultaneous representation might look like. \cref{fig:example1} shows two simultaneous representations of the cycle $C_6$, one using a cograph and one using an interval graph. \cref{fig:example2} shows two simultaneous representations of the graph $G$ obtained by removing a single edge from the complete bipartite graph $K_{3,3}$, again using a cograph and an interval graph. All pictured representations are minimal with respect to the number of labels used for these graph classes.

The class $\cC$ has to be completable to ensure that every graph $G$ has some $d$-simultaneous $\cC$-representation. In fact, we can define $H$ as the complete graph on the vertex set $V(G)$ and consider $m = \abs{E(G)}$ many labels, one for every edge $e_i \in E(G)$. Then $L$ is defined by placing label $i$ in the label sets of the two vertices incident to $e_i$. It can easily be verified that $(H,L)$ is an $m$-simultaneous $\cC$-representation of $G$ (see also~\cite{beisegel2024simultaneousinterval}). This motivates the following definition.

\begin{definition}
    Let $\cC$ be a hereditary completable graph class.
    We define the \emph{simultaneous $\cC$-number} of a graph $G$, denoted by $\simul[\cC][G]$, as the smallest integer $d$ such that $G$ admits a $d$-simultaneous $\cC$-representation.
\end{definition}

Requiring $\cC$ to be hereditary makes the parameters monotone, i.e., the simultaneous $\cC$-number does not increase when induced subgraphs are considered. In fact, for hereditary classes, the restriction of a $d$-simultaneous $\cC$-representation $(H,L)$ of $G$ to some induced subgraph $G'$ of $G$ is a $d$-simultaneous $\cC$-representation of $G'$.

By this definition, the graphs with simultaneous $\cC$-number $0$ are exactly the edgeless graphs and the class of graphs with simultaneous $\cC$-number at most $1$ contains $\cC$. This containment may be strict, as can for example be witnessed by graph classes $\cC$ not containing all edgeless graphs.
In some contexts it may be useful to forbid the empty label set in simultaneous representations.
This can drastically change the number of labels needed to represent a given graph, e.g., consider the class $\mathsf{complete}$ of complete graphs and an edgeless graph $G$ on $n$ vertices. Then, by the above comment, $G$ admits a $0$-simultaneous $\mathsf{complete}$-representation.
If, however, every label set must be non-empty, then $n$ pairwise disjoint non-empty label sets are needed, which is only achievable using $n$ different labels.
This motivates the following definition.

\begin{definition}
    Let $\cC$ be a hereditary completable graph class.
    We define the \emph{simultaneous $\cC$-number with non-empty label sets} of a graph $G$, denoted by $\simulnonempty[\cC][G]$, as the smallest integer $d$ such that $G$ admits a $d$-simultaneous $\cC$-representation with non-empty label sets.\footnote{The notation $\simulnonempty[\cC]$ is chosen to emphasize the fact that there \emph{exists} some element in any label set.}
\end{definition}

Note that, given a simultaneous $\cC$-representation of some graph $G$, we can construct a simultaneous $\cC$-representation of $G$ with non-empty label sets by adding a new label for every vertex $v \in V(G)$ and by placing this new label only in the label set of the corresponding vertex.
Thus, we have for any hereditary completable graph class $\cC$ and any graph $G$ that $\simulnonempty[\cC][G] \leq \abs{V(G)} + \simul[\cC][G] \leq \abs{V(G)} + \abs{E(G)} \leq \abs{V(G)} + \binom{\abs{V(G)}}{2} \leq \abs{V(G)}^2$.  

Obviously, vertices with empty label sets must be isolated. This implies the following.

\begin{observation} \label{obsv:isolatedvertices2}
    Let $\cC$ be a hereditary completable graph class and let $G$ be a graph without isolated vertices. Then $\simulnonempty[\cC][G] = \simul[\cC][G]$.
\end{observation}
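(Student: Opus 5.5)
We prove the two inequalities $\simul[\cC][G] \leq \simulnonempty[\cC][G]$ and $\simulnonempty[\cC][G] \leq \simul[\cC][G]$ separately.

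\emph{First inequality.} Every $d$-simultaneous $\cC$-representation with non-empty label sets is in particular a $d$-simultaneous $\cC$-representation. Taking a representation that attains $\simulnonempty[\cC][G]$ therefore gives $\simul[\cC][G] \leq \simulnonempty[\cC][G]$ directly from the definitions.

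\emph{Second inequality.} Let $d = \simul[\cC][G]$ and fix a $d$-simultaneous $\cC$-representation $(H,L)$ of $G$. We show that it already has non-empty label sets. Take any $v \in V(G)$. Since $G$ has no isolated vertices, there is an edge $uv \in E(G)$. The definition of a representation then requires $L(u) \cap L(v) \neq \emptyset$, and in particular $L(v) \neq \emptyset$. Because $v$ was arbitrary, every label set is non-empty, so $(H,L)$ is a $d$-simultaneous $\cC$-representation with non-empty label sets. Hence $\simulnonempty[\cC][G] \leq d$.

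Both quantities are well-defined because $\cC$ is completable, by the construction given earlier. The argument has no real obstacle; the only step needing care is the use of an incident edge to force $L(v) \neq \emptyset$, and this is exactly where the assumption that $G$ has no isolated vertices is used.
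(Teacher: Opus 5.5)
Your proof is correct and follows the same route as the paper. The paper states the key fact in one line (a vertex with an empty label set must be isolated in $G$), and you spell it out: combine that fact with the trivial inequality $\simul[\cC][G] \leq \simulnonempty[\cC][G]$.
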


Furthermore, we do not need to use empty label sets if isolated vertices can already be added to graphs of $\cC$.

\begin{observation} \label{obsv:isolatedvertices}
    Let $\cC$ be a hereditary completable graph class that is closed under the addition of isolated vertices. Then, for every graph $G$ it holds that $\simulnonempty[\cC][G] = \max\{\simul[\cC][G]\!,1\}$.
\end{observation}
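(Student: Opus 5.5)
We prove the two inequalities $\simulnonempty[\cC][G] \ge \max\{\simul[\cC][G],1\}$ and $\simulnonempty[\cC][G] \le \max\{\simul[\cC][G],1\}$.

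For the lower bound, note that a representation with non-empty label sets is in particular a simultaneous $\cC$-representation, so $\simulnonempty[\cC][G] \ge \simul[\cC][G]$. Moreover, as long as $G$ has at least one vertex, a non-empty label set needs at least one label, so $\simulnonempty[\cC][G] \ge 1$. (For the null graph both sides are degenerate; we assume graphs are non-empty.)

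For the upper bound, let $d = \simul[\cC][G]$ and fix a $d$-simultaneous $\cC$-representation $(H,L)$ of $G$. Let $I$ be the set of vertices with $L(v) = \emptyset$. Every vertex of $I$ is isolated in $G$, since its label set meets no other. Set $d' = \max\{d,1\}$ and define $L'(v) = L(v)$ for $v \notin I$ and $L'(v) = \{1\}$ for $v \in I$. The labels $1,\dots,d'$ are enough, and all label sets of $L'$ are non-empty.

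The main obstacle is that vertices of $I$ now share label $1$ with other vertices, so they could become adjacent whenever they are adjacent in $H$. To repair this, we use the closure hypothesis. Since $\cC$ is hereditary, $H - I \in \cC$. Because $\cC$ is closed under adding isolated vertices, the graph $H'$ obtained from $H - I$ by adding the vertices of $I$ as isolated vertices also lies in $\cC$, and $V(H') = V(G)$.

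It remains to check that $(H',L')$ represents $G$.
\begin{itemize}
\item For $u,v \notin I$, both the adjacency in $H'$ and the labels are unchanged, so the representation condition is inherited from $(H,L)$.
\item For a pair involving a vertex of $I$, the pair is a non-edge of $H'$, and it is also a non-edge of $G$ because vertices of $I$ are isolated in $G$.
\end{itemize}
We also have $E(G) \subseteq E(H')$: every edge of $G$ avoids $I$ and therefore lies in $H - I$. This gives $\simulnonempty[\cC][G] \le \max\{\simul[\cC][G],1\}$, which completes the proof.
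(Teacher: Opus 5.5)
Your proof is correct and is essentially the argument the paper leaves implicit: the paper states the observation without proof, noting only that empty label sets are unnecessary when isolated vertices can be added to graphs of $\cC$. Like that remark suggests, you make the empty-label vertices (already isolated in $G$) isolated in $H$ via heredity and closure under isolated vertices, then give them an existing label. The only nit is the degenerate case $I = V(G)$, where $H - I$ is the null graph; there you can start instead from $K_1 \in \cC$, which exists because $\cC$ is completable.
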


Thus, on most of the relevant graph classes $\cC$, the parameters $\simul[\cC]$ and $\simulnonempty[\cC]$ are functionally equivalent. 
Moreover, we have that a hereditary completable graph class $\cC$ is exactly equal to the class of graphs with $\simulnonempty[\cC]$ equal to $1$.

\begin{figure}
    \centering
    \begin{tikzpicture}
        \def\a{.4} 
        \def\labelone{BTUred}
        \def\labeltwo{lipicsYellow}
        \def\labelthree{coolblue}
        \def\nolabel{lipicsGray}
        \tikzset{c/.style={circle,draw,thick,minimum size=\a cm}} 
        \tikzset{one color/.style={c,fill=#1}}  
        \tikzset{pics/two colors/.style args=
            {#1|#2|rotate=#3}{code={%
        \fill[#1,rotate=#3] (0,\a/2) arc(90:270:\a/2)--cycle;           
        \fill[#2,rotate=#3] (0,\a/2) arc(90:-90:\a/2)--cycle;
        \path (0,0) node[c] (-boundary) {};
        }}}
        \tikzset{pics/three colors/.style args=
            {#1|#2|#3|rotate=#4}{code={%
        \fill[#1,rotate=#4] (0,\a/2) arc(90:210:\a/2)--(0,0)--cycle;            
        \fill[#2,rotate=#4] (0,\a/2) arc(90:-30:\a/2)--(0,0)--cycle;
        \fill[#3,rotate=#4] (210:\a/2) arc(210:330:\a/2)--(0,0)--cycle;
        \path (0,0) node[c] (-boundary) {};
        }}}
        \begin{scope}
            \path
            (-1.25,1) node[one color = \labelthree] (B) {}
            (1.25,1) node[one color = \labeltwo] (Y) {}
            (0,3.5) node[one color = \labelone] (R) {}
            ;
            \path 
            (0,0)  pic (BY) {two colors={\labelthree|\labeltwo|rotate=0}}
            (-1.25,2.5) pic (BR) {two colors={\labelthree|\labelone|rotate=90}}
            (1.25,2.5) pic (RY) {two colors={\labelone|\labeltwo|rotate=-90}}
            ;
            \draw[line width=2pt,\labelthree] (BY-boundary) -- (B) -- (BR-boundary);
            \draw[line width=2pt,\labeltwo] (BY-boundary) -- (Y) -- (RY-boundary);
            \draw[line width=2pt,\labelone] (BR-boundary) -- (R) -- (RY-boundary);
            \draw[line width=1pt,\nolabel] (R) -- (B) -- (RY-boundary);
            \draw[line width=1pt,\nolabel] (R) -- (Y) -- (BR-boundary);
            \path
            (0,4) node[font=\sffamily\bfseries] {\textcolor{\labelone}{1}}
            (1.75,2.5) node[font=\sffamily\bfseries] {\textcolor{\labelone}{1},\textcolor{\labeltwo}{2}}
            (1.75,1) node[font=\sffamily\bfseries] {\textcolor{\labeltwo}{2}}
            (0,-0.5) node[font=\sffamily\bfseries] {\textcolor{\labeltwo}{2},\textcolor{\labelthree}{3}}
            (-1.75,1) node[font=\sffamily\bfseries] {\textcolor{\labelthree}{3}}
            (-1.75,2.5) node[font=\sffamily\bfseries] {\textcolor{\labelone}{1},\textcolor{\labelthree}{3}}
            ;
        \end{scope}
        \begin{scope}[xshift=4cm,yshift=1cm]
            \path
            (1,0) node[one color = \labelone] (R1) {}
            (2,0) node[one color = \labelone] (R2) {}
            (3,0) node[one color = \labelone] (R3) {}
            (2,1.25) node[one color = \labeltwo] (Y) {}
            ;
            \path
            (0,0) pic (YR1) {two colors={\labelone|\labeltwo|rotate=90}}
            (4,0) pic (YR2) {two colors={\labelone|\labeltwo|rotate=90}}
            ;
            \draw[line width=2pt,\labeltwo] (YR1-boundary) -- (Y) -- (YR2-boundary);
            \draw[line width=2pt,\labelone] (YR1-boundary) -- (R1) -- (R2) -- (R3) -- (YR2-boundary);
            \draw[line width=1pt,\nolabel] (R1) -- (Y) -- (R2);
            \draw[line width=1pt,\nolabel] (Y) -- (R3);
            \path
            (0,-0.5) node[font=\sffamily\bfseries] {\textcolor{\labelone}{1},\textcolor{\labeltwo}{2}}
            (1,-0.5) node[font=\sffamily\bfseries] {\textcolor{\labelone}{1}}
            (2,-0.5) node[font=\sffamily\bfseries] {\textcolor{\labelone}{1}}
            (3,-0.5) node[font=\sffamily\bfseries] {\textcolor{\labelone}{1}}
            (4,-0.5) node[font=\sffamily\bfseries] {\textcolor{\labelone}{1},\textcolor{\labeltwo}{2}}
            (2,1.75) node[font=\sffamily\bfseries] {\textcolor{\labeltwo}{2}}
            ;
        \end{scope}
\end{tikzpicture}
    \caption{A $3$-simultaneous $\mathsf{cograph}$-representation and a $2$-simultaneous $\mathsf{interval}$-representation of the cycle $C_6$.}
    \label{fig:example1}
\end{figure}

\begin{figure}
    \centering
        \begin{tikzpicture}
        \def\a{.4} 
        \def\d{8pt}
        \def\labelone{BTUred}
        \def\labeltwo{lipicsYellow}
        \def\labelthree{coolblue}
        \def\nolabel{lipicsGray}
        \tikzset{c/.style={circle,draw,thick,minimum size=\a cm}} 
        \tikzset{one color/.style={c,fill=#1}}  
        \tikzset{pics/two colors/.style args=
            {#1|#2|rotate=#3}{code={%
        \fill[#1,rotate=#3] (0,\a/2) arc(90:270:\a/2)--cycle;           
        \fill[#2,rotate=#3] (0,\a/2) arc(90:-90:\a/2)--cycle;
        \path (0,0) node[c] (-boundary) {};
        }}}
        \tikzset{pics/three colors/.style args=
            {#1|#2|#3|rotate=#4}{code={%
        \fill[#1,rotate=#4] (0,\a/2) arc(90:210:\a/2)--(0,0)--cycle;            
        \fill[#2,rotate=#4] (0,\a/2) arc(90:-30:\a/2)--(0,0)--cycle;
        \fill[#3,rotate=#4] (210:\a/2) arc(210:330:\a/2)--(0,0)--cycle;
        \path (0,0) node[c] (-boundary) {};
        }}}
        \begin{scope}
            \path
            (0,0) pic (YR1) {two colors={\labelone|\labeltwo|rotate=180}}
            (0,1.5) pic (YR2) {two colors={\labelone|\labeltwo|rotate=0}}
            (3,0) pic (YR3) {two colors={\labelone|\labeltwo|rotate=0}}
            (3,1.5) pic (YR4) {two colors={\labelone|\labeltwo|rotate=180}}
            ;
            \path
            (1.5,0) node[one color = \labeltwo] (R) {}
            (1.5,1.5) node[one color = \labelone] (Y) {}
            ;
            \draw[line width=1pt,\nolabel] (Y) -- (R);
            \draw[line width=2pt,\labeltwo] (YR2-boundary) -- (R) -- (YR4-boundary);
            \draw[line width=2pt,\labelone] (YR1-boundary) -- (Y) -- (YR3-boundary);
            \draw[line width=2pt,\labeltwo,dash pattern= on \d off \d] (YR1-boundary) -- (YR2-boundary) -- (YR3-boundary) -- (YR4-boundary) -- (YR1-boundary);
            \draw[line width=2pt,\labelone,dash pattern= on \d off \d,dash phase=\d] (YR1-boundary) -- (YR2-boundary) -- (YR3-boundary) -- (YR4-boundary) -- (YR1-boundary);
            \path
            (0,2) node[font=\sffamily\bfseries] {\textcolor{\labelone}{1},\textcolor{\labeltwo}{2}}
            (1.5,2) node[font=\sffamily\bfseries] {\textcolor{\labelone}{1}}
            (3,2) node[font=\sffamily\bfseries] {\textcolor{\labelone}{1},\textcolor{\labeltwo}{2}}
            (0,-0.5) node[font=\sffamily\bfseries] {\textcolor{\labelone}{1},\textcolor{\labeltwo}{2}}
            (1.5,-0.5) node[font=\sffamily\bfseries] {\textcolor{\labeltwo}{2}}
            (3,-0.5) node[font=\sffamily\bfseries] {\textcolor{\labelone}{1},\textcolor{\labeltwo}{2}}
            ;
        \end{scope}
        \begin{scope}[xshift=5cm]
            \path
            (0,0) node[one color = \labelone] (R) {}
            (1.5,0) node[one color = \labeltwo] (B) {}
            (3,0) node[one color = \labelthree] (Y) {}
            ;
            \path
            (0,1.5) pic (RYB1) {three colors={\labelone|\labelthree|\labeltwo|rotate=0}}
            (1.5,1.5) pic (RY) {two colors={\labelone|\labelthree|rotate=0}}
            (3,1.5) pic (RYB2) {three colors={\labelone|\labelthree|\labeltwo|rotate=0}}
            ;
            \draw[line width=1pt,\nolabel] (R) to (B) to (Y) to [bend left = 45] (R);
            \draw[line width=1pt,\nolabel] (B) -- (RY-boundary);
            \draw[line width=2pt,\labelone] (R) -- (RYB1-boundary);
            \draw[line width=2pt,\labelone] (R) -- (RY-boundary);
            \draw[line width=2pt,\labelone] (R) -- (RYB2-boundary);
            \draw[line width=2pt,\labeltwo] (B) -- (RYB1-boundary);
            \draw[line width=2pt,\labeltwo] (B) -- (RYB2-boundary);
            \draw[line width=2pt,\labelthree] (Y) -- (RYB1-boundary);
            \draw[line width=2pt,\labelthree] (Y) -- (RY-boundary);
            \draw[line width=2pt,\labelthree] (Y) -- (RYB2-boundary);
            \path
            (0,2) node[font=\sffamily\bfseries] {\textcolor{\labelone}{1},\textcolor{\labeltwo}{2},\textcolor{\labelthree}{3}}
            (1.5,2) node[font=\sffamily\bfseries] {\textcolor{\labelone}{1},\textcolor{\labelthree}{3}}
            (3,2) node[font=\sffamily\bfseries] {\textcolor{\labelone}{1},\textcolor{\labeltwo}{2},\textcolor{\labelthree}{3}}
            (0,-0.5) node[font=\sffamily\bfseries] {\textcolor{\labelone}{1}}
            (1.5,-0.5) node[font=\sffamily\bfseries] {\textcolor{\labeltwo}{2}}
            (3,-0.5) node[font=\sffamily\bfseries] {\textcolor{\labelthree}{3}}
            ;
        \end{scope}
    \end{tikzpicture}
    \caption{A $2$-simultaneous $\mathsf{cograph}$-representation and a $3$-simultaneous $\mathsf{interval}$-representation of the graph $K_{3,3} - K_2$.}
    \label{fig:example2}
\end{figure}

\section{Characterizing Lower Bounds: Simul-Boundedness} \label{sec:simulbounded}

We want to investigate when a given parameter is a lower bound to a simultaneous $\cC$-number.
Clearly, if a parameter $p$ lower bounds $\simul[\cC]$ for some class $\cC$, then $p$ is bounded on $\cC$.
We investigate when the converse is also true, i.e., for which parameters $p$ it holds that $p$ being bounded on $\cC$ implies that $p$ lower bounds $\simul[\cC]$.
In other words, we are looking for parameters where being bounded on a class $\cC$ is equivalent to lower bounding $\simul[\cC]$. We formalize this property as follows.

\begin{definition}
    We call a graph parameter $p$ \emph{$\simul$-bounded} if for every hereditary completable graph class $\cC$, the parameter $p$ is upper bounded by $\simul[\cC]$ if and only if $p$ is bounded on $\cC$.
    The term \emph{$\simulnonempty$-bounded} is defined analogously. 
\end{definition}

Thus, to show $\simul$-boundedness of a graph parameter $p$, we have to show that for any hereditary completable $\cC$ the following holds: if $p$ is bounded on $\cC$, then $p(G) \leq f(\simul[\cC][G])$ for any graph $G$ and some function $f$.
This in particular shows that every parameter which is unbounded on complete graphs is trivially $\simul$-bounded.
To show that a parameter $p$ is not $\simul$-bounded, we have to give a hereditary completable graph class $\cC$ and a family of graphs $(G_n)_{n \in \N}$ such that the following holds: $p$ is bounded on $\cC$, there is a constant $c$  with $\simul[\cC][G_n] \leq c$ for every $n \in \N$, and $p$ is unbounded on $(G_n)_{n \in \N}$.

In the following subsections we give an exhaustive overview on $\simul$-boundedness of graph parameters. In \cref{subsec:obsv} we give some basic results and intuition on $\simul$-boundedness. In particular, we present simple conditions which we can use to quickly decide whether some parameter is $\simul$-bounded or not. Using these conditions, we obtain $\simul$-boundedness of cliquewidth, twin-width, shrub-depth, flip-width, merge-width, and neighborhood diversity. In \cref{subsec:classes}, we consider families of parameters which are defined using graph classes. These families include the parameters clique cover number, edge clique cover number, interval number, boxicity, and $\distanceto{\cP}$. We focus on the parameter thinness and some of its variants in \cref{subsec:thinness}. In \cref{subsec:binary}, we consider parameters that are defined via binary decomposition trees, in particular mim-width, o-mim-width, sim-width, and sm-width. Similarly, we investigate parameters defined via tree- and path-decompositions in \cref{subsec:tree}, giving results on induced matching treewidth, tree independence number, and tree-length. Finally, in \cref{subsec:modules}, we study how parameters related to modules behave in terms of $\simul$-boundedness. This includes, e.g., modular-width and iterated type partition.

\subsection{Basic Observations} \label{subsec:obsv}

Note that $\simul$-boundedness of some parameter implies $\simulnonempty$-boundedness, as $\simul[\cC][G] \leq \simulnonempty[\cC][G]$ holds for every hereditary completable $\cC$ and every graph $G$.
In fact, $\simul$-boundedness is strictly stronger than $\simulnonempty$-boundedness, as is illustrated by the independence number.

\begin{theorem} \label{thm:indsimulbounded}
    The independence number is $\simulnonempty$-bounded but not $\simul$-bounded.
    In particular, $\alpha(G) \leq \alpha(\cC) \cdot \simulnonempty[\cC][G]$ for any hereditary completable $\cC$ and any graph $G$.
\end{theorem}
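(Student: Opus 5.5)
The proof splits into three parts: the inequality $\alpha(G) \leq \alpha(\cC) \cdot \simulnonempty[\cC][G]$, the resulting $\simulnonempty$-boundedness, and a counterexample showing that $\alpha$ is not $\simul$-bounded.

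\textbf{The inequality.} Fix a $d$-simultaneous $\cC$-representation $(H,L)$ of $G$ with non-empty label sets, where $d = \simulnonempty[\cC][G]$, and let $I$ be a maximum independent set of $G$. Since every label set is non-empty, we can map each $v \in I$ to some label $\ell(v) \in L(v)$. This partitions $I$ into at most $d$ classes $I_1,\dots,I_d$, with $I_j = \ell^{-1}(j)$.
\begin{itemize}
\item Within a class $I_j$, any two vertices share label $j$. Because they are non-adjacent in $G$, the definition of a simultaneous representation forces them to be non-adjacent in $H$.
\item Hence each $I_j$ is an independent set of $H \in \cC$, so $\abs{I_j} \leq \alpha(H) \leq \alpha(\cC)$.
\item Summing over the $d$ classes gives $\alpha(G) \leq d \cdot \alpha(\cC)$.
\end{itemize}

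\textbf{$\simulnonempty$-boundedness.} If $\alpha$ is bounded on $\cC$, the inequality shows that $\alpha$ is upper bounded by $\simulnonempty[\cC]$ via the linear function $x \mapsto \alpha(\cC)\,x$. Conversely, if $\alpha$ is upper bounded by $\simulnonempty[\cC]$, then $\alpha$ is bounded on $\cC$, because every graph in $\cC$ has $\simulnonempty[\cC]$ equal to $1$ (noted earlier). So both directions of the equivalence hold.

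\textbf{Failure of $\simul$-boundedness.} Take $\cC = \mathsf{complete}$. It is hereditary and completable, and $\alpha(\cC) = 1$, so $\alpha$ is bounded on $\cC$. Now consider the edgeless graphs $\overline{K_n}$. Each one has $\simul[\mathsf{complete}][\overline{K_n}] = 0$: take $H = K_n$ and give every vertex the empty label set. Yet $\alpha(\overline{K_n}) = n$ is unbounded. Therefore no function $f$ can satisfy $\alpha(G) \leq f(\simul[\mathsf{complete}][G])$, and $\alpha$ is not $\simul$-bounded.

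No step here is a real obstacle. The one point that needs care is the choice of a single label per vertex in the first part, which is exactly where non-empty label sets are used, and that is precisely the hypothesis the counterexample for $\simul$ exploits.
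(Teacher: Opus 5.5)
Your proof is correct and follows essentially the same route as the paper. The paper also covers a maximum independent set by the at most $d$ label classes, using that every label set is non-empty, and shows each class has size at most $\alpha(\cC)$; it phrases this contrapositively as ``any $k+1$ vertices contain two with disjoint label sets,'' where you note directly that vertices sharing a label are non-adjacent in $H$. It likewise refutes $\simul$-boundedness with edgeless graphs of simultaneous number $0$, and your explicit choice of $\cC = \mathsf{complete}$ and your check of the converse direction only make explicit what the paper leaves implicit.
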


\begin{proof}
    First, we show that $\alpha$ is $\simulnonempty$-bounded.
    Consider some $k \in \N$ and a graph class $\cC$ such that $\alpha(H) \leq k$ holds for every $H \in \cC$.
    Furthermore, let $G$ be a graph with $d = \simulnonempty[\cC][G]$ and $(H,L)$ be a $d$-simultaneous $\cC$-representation of $G$ with non-empty label sets.
    Consider some maximum independent set $S \subseteq V(G)$ in $G$.
    If $\abs{S} \leq k$, then nothing is to be shown. Thus, assume that $\abs{S} > k$.
    Let $v_1, v_2, \dots, v_{k+1} \in S$ be chosen arbitrarily.
    As $\alpha(H) \leq k$, the vertices $v_1, \dots, v_{k+1}$ do not form an independent set in $H$.
    Thus, at least two vertices $v_i$ and $v_j$ are adjacent in $H$ but not in $G$, implying $L(v_i) \cap L(v_j) = \emptyset$.
    Therefore, in every selection of $k+1$ vertices of $S$ at least two vertices have disjoint label sets.
    Hence, for every label $1 \leq \ell \leq d$, the label sets of at most $k$ vertices in $S$ contain $\ell$, implying $\abs{S} \leq kd$.

    This proof fails for simultaneous representations with possibly empty label sets.
    In this case, $S$ contains all vertices with empty label sets and the number of these vertices cannot be bounded by $d$.
    This is best illustrated by edgeless graphs, which have unbounded independence number but simultaneous $\cC$-number~$0$ for all hereditary completable $\cC$.
\end{proof}

Because of this relationship between $\simul$-boundedness and $\simulnonempty$-boundedness we mostly prefer to show $\simul$-boundedness.

We have already seen one property that trivially implies for some parameters that they are $\simul$-bounded, namely when they are unbounded on complete graphs.
There are several other parameters for which $\simul$-boundedness can simply be derived from known results, using first-order transductions. Such transductions consider so called vertex colored graphs which essentially are graphs equipped with a vertex labeling.
A \emph{first-order formula} on a vertex colored graph is a logical formula which only quantifies over vertices and uses the following atomic formulas: $x=y$, $E(x,y)$ (meaning $x$ is adjacent to $y$) and $M_i(x)$ with $i \in \N$ (meaning $x$ has label $i$).
We denote by $(G,c) \models \varphi(a_1, \dots, a_k)$ that the edge relation of $G$ and vertex labeling $c : V(G) \to \cP(\N)$ satisfy $\varphi$ when the free variables are fixed to vertices $a_1, \dots, a_k$.
For more information on first-order logic see e.g. \cite{ebbinghaus2021mathematical}.
A \emph{non-copying first-order transduction} (FO-transduction for short) $\sfT$ is defined by a first-order formula $\varphi(x,y)$ on vertex colored graphs with two free variables: for a graph $H$, the set $\sfT(H)$ contains all graphs $G$ with $V(G) \subseteq V(H)$, for which there is a vertex labeling $c$ of $V(H)$ such that $uv \in E(G)$ if and only if $(H,c) \models \varphi(u,v)$.
A graph class $\cD$ is a \emph{$\sfT$-transduction} of a graph class $\cC$ if $\cD \subseteq \bigcup_{H \in \cC} \sfT(H)$.
For an overview on FO-transductions see e.g. \cite{mendez2021firstorder}.

We can easily see that for any hereditary completable graph class $\cC$ the class of graphs with simultaneous $\cC$-number at most $d$ is an FO-transduction of $\cC$ with the formula 
\[\varphi(x,y) := E(x,y) \wedge \bigvee_{i=1}^d \!\br{M_i(x) \wedge M_i(y)}. \]

\begin{lemma}
    Parameters closed under non-copying FO-transductions are $\simul$-bounded.
\end{lemma}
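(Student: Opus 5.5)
The plan is to unpack what ``closed under non-copying FO-transductions'' means and check that it gives exactly the nontrivial direction of $\simul$-boundedness. I read the notion as follows. A parameter $p$ is closed under FO-transductions if, for every FO-transduction $\sfT$ and every graph class $\cC$ on which $p$ is bounded, $p$ is also bounded on every class $\cD \subseteq \bigcup_{H \in \cC} \sfT(H)$.

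Fix a hereditary completable class $\cC$. One direction of the equivalence in the definition of $\simul$-boundedness holds for every parameter. Every $H \in \cC$ has $\simulnonempty[\cC][H] = 1$, hence $\simul[\cC][H] \le 1$. So if $p \le f(\simul[\cC])$, then $p$ is bounded on $\cC$ by $\max\{f(0), f(1)\}$. It therefore remains to show the converse: assuming $p(\cC) < \infty$, we need a function $f$ with $p(G) \le f(\simul[\cC][G])$ for all graphs $G$.

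For each $d \in \N$, let $\cD_d$ be the class of graphs $G$ with $\simul[\cC][G] \le d$, and let $\sfT_d$ be the transduction given by the formula $\varphi_d(x,y)$ displayed just before the lemma. I first verify that $\cD_d \subseteq \bigcup_{H\in\cC} \sfT_d(H)$. Take $G \in \cD_d$ with a $d$-simultaneous $\cC$-representation $(H,L)$, and use the coloring $c := L$, so that $M_i(v)$ holds exactly when $i \in L(v)$. Then $(H,c) \models \varphi_d(u,v)$ holds precisely when $uv \in E(H)$ and $L(u)\cap L(v) \neq \emptyset$. By the definition of a representation, this happens precisely when $uv \in E(G)$. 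Also $V(G) = V(H)$, so $G \in \sfT_d(H)$.

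Closure under transductions now gives $f(d) := p(\cD_d) < \infty$ for every $d$. For any graph $G$, since $G \in \cD_{\simul[\cC][G]}$, we get $p(G) \le f(\simul[\cC][G])$, as required.

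The main obstacle is only one of care with the definition of closure. If closure is stated for a single fixed formula, or only for the transduction images of one class, I must make sure it applies to each of the infinitely many formulas $\varphi_d$ separately, each with its own bound $f(d)$. This is allowed, because $\simul$-boundedness only asks for some function $f$ and places no uniformity requirement on it. The formula also uses only finitely many color predicates $M_1,\dots,M_d$, so it is a legitimate first-order formula for each fixed $d$.
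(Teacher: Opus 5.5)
Your proof is correct and follows the same route as the paper. The paper only observes that the class of graphs $G$ with $\simul[\cC][G] \le d$ is a $\sfT$-transduction of $\cC$ via the formula $\varphi$, and then invokes closure. You additionally spell out the trivial direction, the check that the coloring $c := L$ witnesses $G \in \sfT_d(H)$, and the point that a separate bound $f(d)$ for each formula $\varphi_d$ suffices; the paper leaves all three implicit.
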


Closure under FO-transductions has been shown for some parameters, giving us their $\simul$-boundedness.
Examples are cliquewidth~\cite{Colcombet07, Courcelle_Engelfriet_2012}, twin-width~\cite{BonnetKTW22}, shrub-depth~\cite{GanianHNOM19}, flip-width~\cite{Torunczyk23}, merge-width~\cite{DreierT25}, and neighborhood diversity\footnote{It seems that closure under non-copying first-order transductions of neighborhood diversity has never been explicitly stated. However, it is implied by the closure of shrub-depth under FO-transductions.}.

\begin{theorem} \label{thm:fo}
    Cliquewidth, twin-width, shrub-depth, flip-width, merge-width, and neighborhood diversity are $\simul$-bounded.
\end{theorem}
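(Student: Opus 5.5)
The plan is to combine the preceding lemma (parameters closed under non-copying FO-transductions are $\simul$-bounded) with the cited closure results. First I would make the lemma precise. Say a parameter $p$ is closed under non-copying FO-transductions if for every formula $\varphi$ there is a function $g_\varphi$ such that $p(G) \le g_\varphi(p(H))$ for all $H$ and all $G \in \sfT_\varphi(H)$. In particular, a transduction of a class on which $p$ is bounded again has bounded $p$.

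Next, fix a hereditary completable class $\cC$ with $p(\cC) = k < \infty$ and a graph $G$ with $\simul[\cC][G] = d$, witnessed by $(H,L)$. Take the colouring $c := L$ and the formula
\[\varphi_d(x,y) := E(x,y) \wedge \bigvee_{i=1}^d \!\br{M_i(x) \wedge M_i(y)}.\]
Then $G \in \sfT_{\varphi_d}(H)$, so $p(G) \le g_{\varphi_d}(k)$. Setting $f(d) := g_{\varphi_d}(k)$ gives $p(G) \le f(\simul[\cC][G])$, which is one direction. The converse direction is immediate: every $H \in \cC$ has $\simul[\cC][H] \le 1$, so an upper bound by $\simul[\cC]$ forces $p$ to be bounded on $\cC$.

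It remains to check, for each listed parameter, that the cited literature gives closure in this sense.
\begin{itemize}
\item Cliquewidth: Colcombet, and Courcelle--Engelfriet. Bounded cliquewidth is preserved under MSO, hence FO, transductions, and boundedness is all that is needed.
\item Twin-width: Bonnet, Kim, Thomass\'e and Watrigant.
\item Shrub-depth: Ganian et al. Shrub-depth is even characterised as transductions of bounded-depth trees.
\item Flip-width: Toru\'nczyk.
\item Merge-width: Dreier and Toru\'nczyk.
\end{itemize}
For each of these, the sources state that transductions of classes of bounded parameter have bounded parameter. This is exactly what the argument above uses once $d$, and hence $\varphi_d$, is fixed.

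The main obstacle is neighborhood diversity, for which no explicit closure statement exists. Here I would argue directly instead of via shrub-depth. Let $\nd{H} \le k$ and let $(H,L)$ be a $d$-representation of $G$. Refine the partition of $V(H)$ into at most $k$ twin classes by the label set $L(v)$, which gives at most $k 2^d$ parts. Two vertices $u,v$ in the same part are twins in $H$ and have $L(u)=L(v)$. For any other vertex $w$, the edge $uw$ is present in $G$ iff $uw \in E(H)$ and $L(u)\cap L(w)\ne\emptyset$, and the same holds with $v$ in place of $u$. So $u$ and $v$ are twins in $G$, and $\nd{G} \le k 2^d$.

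One detail needs care: twins of the same type. A twin class of $H$ is either a clique or an independent set, and after refining by labels a clique class stays a clique in $G$ if $L(u) \ne \emptyset$ and becomes independent otherwise. Either way each part is a valid type class, which gives the bound.
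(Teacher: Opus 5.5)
Your proof is correct and follows the paper's route. The main-text proof is exactly the observation that graphs with $\simul[\cC] \le d$ form a $\varphi_d$-transduction of $\cC$, combined with the cited closure results; your remark that class-level closure suffices once $d$ is fixed is the right way to handle shrub-depth, flip-width and merge-width. For neighborhood diversity, the main text only points to closure of shrub-depth, but your refinement of the twin classes by label sets is precisely the paper's appendix argument (depth-$1$ tree-models with colours $i_{L(v)}$), and it yields the same bound $\nd{G} \le \nd{\cC}\cdot 2^{\simulbr{\cC}{G}}$.
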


We note here that the bounds which are used for showing closure under first-order transductions are often not explicitly given or very large. Because of that, we have examined all of these parameters and have shown their $\simul$-boundedness directly. We were even able to show that our bounds for cliquewidth and twin-width are tight up to a constant factor. These proofs can be found in Appendix \ref{sec:FObounds}.

Further, note that the list of parameters given in \cref{thm:fo} is complete in the sense that all other parameters considered here are not closed under first-order transductions. Justification of this will be given in Appendix \ref{sec:justification}.

We now want to give simple conditions on when a parameter is \emph{not} $\simul$-bounded. Any graph parameter that is $\simul$-bounded and bounded on complete graphs is upper bounded by $\simul[\mathsf{complete}]$, with $\mathsf{complete}$ being the class of complete graphs. We will later see in \cref{cor:eccsimulclique} that $\simul[\mathsf{complete}]$ is equal to the {edge clique cover number}, which is defined as the minimum number of cliques necessary to cover all edges of a graph. This implies the following.

\begin{observation} \label{obsv:ecc1}
    If a parameter is $\simul$-bounded and bounded on complete graphs, then it is upper bounded by the edge clique cover number.
\end{observation}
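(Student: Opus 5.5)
The observation follows by applying the definition of $\simul$-boundedness to the single class $\cC = \mathsf{complete}$, and then identifying $\simul[\mathsf{complete}]$ with the edge clique cover number. That identification is stated later as \cref{cor:eccsimulclique}, but I would verify it directly here to keep the argument self-contained.

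\textbf{Step 1.} First I check that $\mathsf{complete}$ is admissible. It is hereditary, since every induced subgraph of a complete graph is complete, and it is trivially completable. Now let $p$ be $\simul$-bounded and bounded on complete graphs, so $p(\mathsf{complete}) < \infty$. The definition of $\simul$-boundedness then gives a function $f$ with $p(G) \leq f(\simul[\mathsf{complete}][G])$ for every graph $G$.

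\textbf{Step 2.} Next I show that $\simul[\mathsf{complete}][G]$ equals the edge clique cover number $\ecc{G}$. In a $d$-simultaneous $\mathsf{complete}$-representation $(H,L)$, the graph $H$ is the complete graph on $V(G)$. Hence $uv \in E(G)$ holds exactly when $L(u) \cap L(v) \neq \emptyset$.

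For the upper bound, take a clique cover $C_1,\dots,C_k$ of the edges of $G$ and put $L(v) = \setcond{i}{v \in C_i}$. Two vertices then share a label exactly when they lie in a common clique $C_i$. Since the $C_i$ are cliques and cover every edge, this happens exactly when they are adjacent.

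For the lower bound, take a $d$-representation and set $C_i = \setcond{v}{i \in L(v)}$. Each $C_i$ is a clique in $G$, because any two of its vertices share label $i$. Every edge $uv$ has a common label $i$, so it lies in $C_i$. Thus the $C_i$ form an edge clique cover of size $d$.

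\textbf{Step 3.} Combining the two steps gives $p(G) \leq f(\ecc{G})$ for every graph $G$, so $p$ is upper bounded by the edge clique cover number. No step is a real obstacle. The only points needing care are that the zero-label case works (edgeless graphs have $\ecc{G} = 0$) and that $\mathsf{complete}$ meets the hypotheses in the definition of $\simul$-boundedness.
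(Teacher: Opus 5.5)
Your proof is correct and follows the paper's argument: apply the definition of $\simul$-boundedness to $\cC = \mathsf{complete}$, then identify $\simul[\mathsf{complete}]$ with the edge clique cover number. The only difference is that you verify this identity directly by translating label classes into cliques and back. The paper instead takes it from \cref{cor:eccsimulclique}, which it derives from \cref{thm:covering} together with the bound $\simul[\cC][G] \leq \ecc{G}$ of Beisegel et al.; your version is more elementary and avoids the forward reference.
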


This observation holds only for $\simul$-boundedness but not for $\simulnonempty$-boundedness since $\simulnonempty[\mathsf{complete}]$ may be unbounded in the edge clique cover number. The independence number is an example for this: it is incomparable to the edge clique cover number and bounded on complete graphs, but it still is $\simulnonempty$-bounded.
However, as discussed in the previous section, differences between $\simul[\cC]$ and $\simulnonempty[\cC]$ can only occur whenever a graph contains isolated vertices. This leads to the following observation.

\begin{observation} \label{obsv:ecc2}
    Let $p$ be a parameter that is bounded on complete graphs.
    If there is a family of graphs that do not contain isolated vertices, that have bounded edge clique cover number, and that have unbounded $p$, then $p$ is neither $\simul$-bounded nor $\simulnonempty$-bounded. 
\end{observation}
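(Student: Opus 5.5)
The plan is to take the class $\cC = \mathsf{complete}$, which is hereditary and completable and on which $p$ is bounded by assumption, and to show that $p$ is not upper bounded by $\simul[\mathsf{complete}]$ and not upper bounded by $\simulnonempty[\mathsf{complete}]$. Since $p$ is bounded on $\cC$, this violates the defining equivalence of $\simul$-boundedness (respectively $\simulnonempty$-boundedness) for this particular $\cC$.

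The key step is to identify $\simul[\mathsf{complete}][G]$ with the edge clique cover number of $G$. I would prove this directly, rather than relying on the later \cref{cor:eccsimulclique}, since it is short.

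Let $(H,L)$ be a $d$-simultaneous $\mathsf{complete}$-representation of $G$. Because $H$ is complete, $uv \in E(G)$ holds exactly when $L(u)\cap L(v)\neq\emptyset$. Hence for each label $i$ the set $\setcond{v}{i \in L(v)}$ is a clique of $G$, and every edge of $G$ lies in one of these $d$ cliques. This gives an edge clique cover of size $d$.

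Conversely, take an edge clique cover $Q_1,\dots,Q_k$ of $G$. Set $H$ to be the complete graph on $V(G)$ and $L(v)=\setcond{i}{v\in Q_i}$. Two vertices share a label exactly when they lie in a common $Q_i$, and this happens exactly when they are adjacent: if they are adjacent, the edge is covered by some $Q_i$; if they lie in a common $Q_i$, they are adjacent because $Q_i$ is a clique.

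So $\simul[\mathsf{complete}]$ equals the edge clique cover number. Moreover, on graphs without isolated vertices, \cref{obsv:isolatedvertices2} gives $\simulnonempty[\mathsf{complete}][G]=\simul[\mathsf{complete}][G]$.

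Now take the given family $(G_n)$, which has no isolated vertices, bounded edge clique cover number, say at most $c$, and unbounded $p$. By the above, $\simulnonempty[\mathsf{complete}][G_n]=\simul[\mathsf{complete}][G_n]\le c$ for all $n$. Suppose some function $f$ satisfied $p(G)\le f(\simulnonempty[\mathsf{complete}][G])$ for all $G$. Then $p(G_n)\le \max_{d\le c} f(d)$, which contradicts the unboundedness of $p$ on the family. The same argument applies to $\simul$. Hence $p$ is neither $\simul$-bounded nor $\simulnonempty$-bounded.

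No step here is a real obstacle. The only point needing care is the isolated-vertex issue for the $\simulnonempty$ version, and the hypothesis of the statement together with \cref{obsv:isolatedvertices2} handles it.
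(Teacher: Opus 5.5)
Your proof is correct and follows the paper's route. Like the paper, you take $\cC = \mathsf{complete}$, use that $\simul[\mathsf{complete}]$ coincides with the edge clique cover number (\cref{cor:eccsimulclique}), and then apply \cref{obsv:isolatedvertices2} to transfer the bound to $\simulnonempty[\mathsf{complete}]$ on graphs without isolated vertices. The only difference is that you verify this identity directly instead of citing the later corollary, which avoids a forward reference; only the inequality $\simul[\mathsf{complete}] \leq$ edge clique cover number is actually needed.
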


A third condition considers the behavior of parameters under disjoint union.

\begin{lemma} \label{lem:maxlower}
    Let $p$ be a parameter that is bounded on some hereditary completable graph class $\cP$ which is closed under disjoint union, but is unbounded on the class of all graphs.
    If there is a number $k \in \N$ such that for every two graphs $G_1, G_2$ with $p(G_1), p(G_2) \geq k$ it holds that $p(G_1 \cup G_2) > \max\!\set{p(G_1), p(G_2)}$, then $p$ is neither $\simul$-bounded nor $\simulnonempty$-bounded.
\end{lemma}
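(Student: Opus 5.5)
The plan is to produce a single hereditary completable class on which $p$ is bounded, together with a family of graphs of bounded simultaneous number but unbounded $p$. The class will be $\cP$ itself, and the family will consist of disjoint unions of many copies of one fixed graph. By assumption $p$ is bounded on $\cP$. So it suffices to find a family $(G_n)_{n\in\N}$ with $\simulnonempty[\cP][G_n]$ bounded by a constant (hence also $\simul[\cP][G_n]$ bounded, since $\simul[\cP]\le\simulnonempty[\cP]$) and $p(G_n)\to\infty$. This shows that $p$ is not upper bounded by $\simul[\cP]$ nor by $\simulnonempty[\cP]$, even though it is bounded on $\cP$. Hence $p$ is neither $\simul$-bounded nor $\simulnonempty$-bounded.

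\textbf{Construction.} Since $p$ is unbounded on all graphs, fix a graph $G$ with $p(G)\ge k$. Let $d:=\simulnonempty[\cP][G]$, which is finite (at most $\abs{V(G)}^2$), and let $(H,L)$ be a $d$-simultaneous $\cP$-representation of $G$ with non-empty label sets. Put $G_n$ to be the disjoint union of $n$ copies of $G$. Let $H_n$ be the disjoint union of $n$ copies of $H$; then $H_n\in\cP$ because $\cP$ is closed under disjoint union. Let $L_n$ give every copy of a vertex $v$ the label set $L(v)$.

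\textbf{Representation check.} I will verify that $(H_n,L_n)$ is a $d$-simultaneous $\cP$-representation of $G_n$ with non-empty label sets. Two vertices in different copies are non-adjacent in $H_n$, which matches $G_n$. Within a single copy the representation is exactly $(H,L)$. Hence $\simulnonempty[\cP][G_n]\le d$ for all $n$.

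\textbf{Growth of $p$.} By induction on $n$, I will show $p(G_n)\ge k$ and $p(G_{n+1})>p(G_n)$. The base case is $p(G_1)=p(G)\ge k$. For the step, $G_{n+1}=G_n\cup G$ with both $p(G_n),p(G)\ge k$, so the hypothesis gives $p(G_{n+1})>\max\{p(G_n),p(G)\}\ge p(G_n)\ge k$. Thus $p(G_n)$ is strictly increasing in $n$.

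\textbf{Main obstacle.} Strict increase only yields unboundedness when the values of $p$ are discrete, for example integer valued. For a general real-valued $p$, strictly increasing values could converge. I would handle this by noting that all parameters considered in the paper take values in $\N$, and reading the lemma with this standing convention. Then $p(G_n)\ge p(G)+n-1\to\infty$, which completes the argument.

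If one wants to avoid integrality, I would first try a doubling argument: apply the hypothesis to $G_{2^m}=G_{2^{m-1}}\cup G_{2^{m-1}}$. However, this still only gives strict increase, so some discreteness assumption seems unavoidable, and I would state it explicitly.
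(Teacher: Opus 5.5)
Your proof is correct and is essentially the paper's argument: fix $G$ with $p(G)\ge k$, reuse its representation copywise so that $\simulnonempty[\cP][nG]$ stays constant while $p(nG)$ grows, and conclude via $\simul \le \simulnonempty$. Your integrality caveat is justified. The paper uses integrality silently when it writes $p(nG)\ge k+n$, and the assumption cannot be dropped: for instance, cliquewidth plus $1-1/c(G)$, where $c(G)$ is the number of components, satisfies every hypothesis with $\cP$ the cographs, yet it is $\simul$-bounded.
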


\begin{proof}
    If a graph class $\cP$ is closed under disjoint union, then the following holds for any two graphs $G_1, G_2$: $\simulnonempty[\cP][G_1 \cup G_2] \leq \max\!\set{\simulnonempty[\cP][G_1], \simulnonempty[\cP][G_2]}$.
    Thus, if we take some graph $G$ with $p(G) \geq k$ and consider the graph $nG$, which is obtained by taking $n$ disjoint unions of $G$, we have by assumption that $p(nG) \geq k+n$, while $\simulnonempty[\cP][nG] \leq \abs{V(G)}^2$ which is constant in $n$.
    A suitable graph $G$ always exists, because $p$ is unbounded on the class of all graphs.
    This illustrates that $\simulnonempty[\cP]$ does not upper bound $p$, which together with the fact that $p$ is bounded on $\cP$ implies that $p$ is neither $\simul$-bounded nor $\simulnonempty$-bounded.
\end{proof}

\cref{obsv:ecc1,obsv:ecc2,lem:maxlower} can be used as a first test to check if $\simul$-boundedness of a parameter is plausible. For example, the \emph{clique cover number}, the minimum number of cliques necessary to cover all vertices of a graph, is not comparable to the edge clique cover number and therefore by \cref{obsv:ecc1} not $\simul$-bounded. We will later use these results to show for certain parameters that they are not $\simul$-bounded.

Before starting the comprehensive study of $\simul$-boundedness, we recall that this property has no transitivity property. This means that there are parameters that are not $\simul$-bounded, but are both upper and lower bounded by $\simul$-bounded parameters (see \cref{fig:diagram}). We will list these examples at the end of \cref{sec:simulbounded}.

\subsection{Parameters Defined via Graph Classes} \label{subsec:classes}

There are many families of parameters that are defined using graph classes.
The first we consider are \emph{generalized colorings}, which have been considered as early as the 1970's \cite{jones1974pchromatic}.

\begin{definition}
    Let $\cP$ be a hereditary graph class and let $G$ be a graph.
    The \emph{$\cP$-chromatic number} of $G$, denoted by $\chi_{\cP}(G)$, is the minimum integer $k$ for which there is a partition $X_1, X_2, \dots, X_k$ of $V(G)$, called a \emph{$\cP$-coloring},
    where each $G[X_i]$ is in $\cP$.
\end{definition}

If $\cP$ is not completable, then $\chi_\cP$ is unbounded for complete graphs and, thus, it is trivially $\simul$-bounded. However, there are completable classes $\cP$ where $\chi_\cP$ is not $\simul$-bounded. An example is the class of complete graphs, whose generalized coloring number is equivalent to the before-mentioned clique cover number. Contrarily, we can show that for all completable~$\cP$, the generalized coloring number is $\simulnonempty$-bounded. Furthermore, we can characterize the completable classes $\cP$ where $\chi_\cP$ is also $\simul$-bounded. 

\begin{theorem} \label{thm:chromatic}
    Let $\cP$ be a hereditary completable graph class.
    Then $\chi_{\cP}$ is $\simulnonempty$-bounded.
    In particular, $\chi_{\cP}(G) \leq \chi_{\cP}(\cC) \cdot \simulnonempty[\cC][G]$ holds for any hereditary completable $\cC$ and any graph $G$.
    Furthermore, $\chi_{\cP}$ is $\simul$-bounded if and only if $\cP$ contains all edgeless graphs.
    In that case, $\chi_{\cP}(G) \leq \chi_{\cP}(\cC) \cdot \simul[\cC][G] +1$ holds for any hereditary completable $\cC$ and any graph $G$.
\end{theorem}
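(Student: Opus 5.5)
The first bound follows the pattern of \cref{thm:indsimulbounded}, with each label class colored separately. Let $k = \chi_\cP(\cC)$, let $d = \simulnonempty[\cC][G]$, and fix a $d$-simultaneous $\cC$-representation $(H,L)$ of $G$ with non-empty label sets. Assign each vertex $v$ to one label $\ell(v) \in L(v)$, for instance the minimum label. This partitions $V(G)$ into classes $V_1,\dots,V_d$, where $V_i = \setcond{v}{\ell(v)=i}$. All vertices of $V_i$ contain label $i$, so any two of them have intersecting label sets. Hence $G[V_i] = H[V_i]$. Since $\cC$ is hereditary, $H[V_i] \in \cC$, and therefore $G[V_i]$ has a $\cP$-coloring with at most $k$ classes. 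Taking the union of these colorings over all $i$ gives a $\cP$-coloring of $G$ with at most $kd$ classes. This proves $\chi_\cP(G) \le \chi_\cP(\cC)\cdot\simulnonempty[\cC][G]$, and hence $\simulnonempty$-boundedness.

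For the $\simul$ version, suppose $\cP$ contains all edgeless graphs and take a $d$-simultaneous representation with $d = \simul[\cC][G]$, where empty label sets are now allowed. Apply the same construction to the vertices with non-empty label sets. The vertices with empty label sets are isolated in $G$, so together they induce an edgeless graph, which lies in $\cP$. They can therefore form one extra class, giving the bound $\chi_\cP(\cC)\cdot\simul[\cC][G]+1$.

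For the converse, suppose $\cP$ is hereditary and completable but misses some edgeless graph. Then there is an $n_0$ such that $\overline{K_{n}} \notin \cP$ for all $n \ge n_0$, since $\cP$ is hereditary. So every $\cP$-class of an edgeless graph has fewer than $n_0$ vertices, and $\chi_\cP(\overline{K_n}) \ge n/(n_0-1)$ is unbounded. On the other hand, $\chi_\cP$ is bounded on $\cC := \mathsf{complete}$, by $1$, since $\cP$ is completable. Yet $\simul[\mathsf{complete}][\overline{K_n}] = 0$. So $\chi_\cP$ is bounded on $\cC$ but not upper bounded by $\simul[\cC]$, which means it is not $\simul$-bounded. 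The degenerate case $n_0 = 1$ cannot occur, because $K_1$ is both complete and edgeless and so belongs to $\cP$.

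The main subtlety is in the first step. We must choose a single label per vertex so that the classes $V_i$ form a partition rather than a cover, and we must use heredity of $\cC$ to get $H[V_i] \in \cC$. The remaining steps are routine.
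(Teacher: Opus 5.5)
Your proof is correct and is essentially the paper's argument: decompose $V(G)$ by labels, use $G[V_i]=H[V_i]$ to $\cP$-color each label class, add one edgeless class for the vertices with empty label sets, and refute $\simul$-boundedness via large edgeless graphs of simultaneous number $0$ (the paper takes $\cC=\cP$ where you take $\cC=\mathsf{complete}$; both work since $\cP$ is completable). The one difference is in your favor. The paper intersects a $\cP$-coloring $X_1,\dots,X_k$ of $H$ with the overlapping sets $V_i=\setcond{v}{i\in L(v)}$, so its family $\br{V_i\cap X_j}$ is strictly a cover that still has to be trimmed using heredity of $\cP$. Your choice of a single label per vertex yields a genuine partition and lets you apply the bound directly to $G[V_i]\in\cC$.
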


\begin{proof}
    We first consider the case where no restrictions are placed on $\cP$.
    Let $k \in \N$ and $\cC$ be a graph class such that $\chi_{\cP} (H) \leq k$ for all $H \in \cC$.
    Furthermore, let $G$ be a graph with $d = \simulnonempty[\cC][G]$ and $(H,L)$ be a $d$-simultaneous $\cC$-representation of $G$ with non-empty label sets.
    Consider a $\cP$-coloring $X_1, \dots, X_k$ of $H$ of size $k$.
    We construct a $\cP$-coloring of $G$ with size $k d$ as follows.
    For all $1 \leq i \leq d$, we define $V_i := \setcond{v \in V(G)}{i \in L(v)}$. Note that $G[V_i] = H[V_i]$ holds for all $i$.
    Thus, as $\cP$ is hereditary, it follows that $G[V_i \cap X_j] = H[V_i \cap X_j] \in \cP$ for every $1 \leq i \leq d$ and $1 \leq j \leq k$, showing that $\br{V_i \cap X_j}_{1 \leq i \leq d, 1 \leq j \leq k}$ is a $\cP$-coloring of $G$ of size~$kd$.

    We now consider the case that $\cP$ contains all edgeless graphs.
    Let $\cC$ and $G$ be chosen as above with a $\simul[\cC][G]$-simultaneous $\cC$-representation $(H,L)$ of $G$.
    We define $V_0 := \setcond{v \in V(G)}{L(v) = \emptyset}$. The graph $G[V_0]$ is in $\cP$ as it is edgeless.
    As above, we obtain a $\cP$-coloring of $G - V_0$ of size $kd$. Combined with $V_0$, it forms a $\cP$-coloring of $G$ of size~$kd +1$.

    Conversely, assume that $\cP$ does not contain all edgeless graphs. Since $\cP$ is hereditary, there is a constant $\ell$ such that $\cP$ does not contain any edgeless graph with more than $\ell$ vertices. Therefore, for the edgeless graph $G$ with $k \cdot \ell$ vertices $\chi_\cP(G)$ is at least $k$. However, $\simul[\cP][G] = 0$. Thus, $\simul[\cP]$ does not upper bound $\chi_\cP$ although $\chi_\cP$ is bounded on $\cP$.
\end{proof}

By choosing $\cP = \cC$ in the above theorem, we obtain the lower bound $\chi_{\cC}(G) \leq \simulnonempty[\cC][G]$ for any hereditary completable graph class $\cC$ and $\chi_{\cC}(G) - 1 \leq \simul[\cC][G]$ if $\cC$ furthermore contains all edgeless graphs.
It is known due to Scheinermann~\cite{scheinermann1992generalizedchromatic} that there are families of graphs with unbounded $\cP$-chromatic number for every non-trivial infinite hereditary graph class $\cP$.
Thus, we know that for any non-trivial hereditary completable graph class $\cC$ there are families of graphs with unbounded simultaneous $\cC$-number. 
The proof given by Scheinermann~\cite{scheinermann1992generalizedchromatic} is probabilistic and gives no construction of such graphs. 
We will give an explicit construction of families of graphs with unbounded $\cP$-chromatic number (and, thus, unbounded $\simul[\cP]$) for every non-trivial hereditary graph class $\cP$ in \cref{thm:lexprod}.

Another family of parameters associated with graph classes are \emph{intersection dimensions}, introduced by Cozzens and Roberts~\cite{cozzens1989dimensional} and Kratochvíl and Tuza~\cite{kratochvil1994intersectiondimension}.

\begin{definition}[Intersection Dimension]
    Given a graph class $\cP$ and a graph $G$ we define the \emph{intersection dimension of $G$ with respect to $\cP$}, denoted by $\intdim[\cP][G]$, as the minimum number of graphs in $\cP$ required such that their intersection results in $G$.
\end{definition}

This definition encompasses some well-known graph parameters such as \emph{boxicity}, \emph{cubicity}, and \emph{chordality} with $\cP$ being the class of interval graphs, proper interval graphs, or chordal graphs, respectively.

Kratochvíl and Tuza~\cite{kratochvil1994intersectiondimension} have given a characterization of the graph classes for which the intersection dimension is well-defined: That is, $\intdim[\cP]$ is well-defined if and only if $\cP$ contains all complete graphs and all complete graphs where one edge has been removed. When taking $\cP$ to be this minimal graph class for which $\intdim[\cP]$ is well-defined, we have $\intdim[\cP][G] = \max\!\set{1, \binom{\abs{V(G)}}{2} - \abs{E(G)}}$ for any graph $G$. This parameter is obviously bounded on complete graphs but unbounded on graphs with edge clique cover number two, therefore, by \cref{obsv:ecc2}, it is neither $\simul$-bounded nor $\simulnonempty$-bounded. 

We will now characterize the graph classes whose intersection dimension is $\simul$-bounded.
To this end, let $G$ be a graph with some $d$-simultaneous $\cC$-representation $(H,L)$ for some hereditary completable $\cC$.
We define a graph $F$ based on the label function $L$ with $V(F) = V(G)$ and $E(F) = \setcond{uv}{L(u) \cap L(v) \neq \emptyset}$.
Note that $G$ is equal to $H \cap F$. Thus, the intersection dimension of $G$ with respect to some graph class $\cP$ is upper bounded by the sum of the intersection dimensions of $H$ and $F$.

We will now give a construction of $F$ as the intersection of $2^d$ graphs and identify the corresponding graph class.
Let $A \subseteq \set{1,\dots,d}$ be non-empty.
We define a graph $F_A$ with $V(F_A) = V(G)$ and $E(F_A) = \setcond{uv}{L(u) \cap L(v) \neq \emptyset} \cup \setcond{uv}{L(u) \neq A \text{ and } L(v) \neq A}$.
That is, $F_A$ is a complete graph where all edges between vertices $x,y$ with $L(x) = A$ and $L(y) \cap A = \emptyset$ are removed, i.e., a complete graph where the edge set of some complete bipartite subgraph is removed.
These graphs are exactly the $(3K_1, C_4, P_4)$-free graphs, as we show in Appendix \ref{sec:proofs}.
If all label sets used in $L$ are non-empty, we are done and can easily see that the intersection of all graphs $F_A$ is exactly $F$.
Otherwise, we define the graph $F_\emptyset$ on $V(G)$ with $E(G) = \setcond{uv}{L(u) \neq \emptyset \text{ and } L(v) \neq \emptyset}$.
That is, the vertices with non-empty label sets form a clique in $F_\emptyset$ and the vertices with empty label sets are isolated, i.e., $F_\emptyset$ is the disjoint union of a complete and an edgeless graph.
These graphs are exactly the $(2K_2, P_3)$-free graphs, as we show in Appendix \ref{sec:proofs}.
Then, the intersection of all graphs $F_A$ and $F_\emptyset$ is exactly $F$.
We use this construction to show the following theorem.

\begin{theorem} \label{thm:intersectiondimension}
    Let $\cP$ be a graph class.
    Then $\intdim[\cP]$ is $\simulnonempty$-bounded if and only if $\cP$ is bounded on the class of $\br{3K_1, C_4, P_4}$-free graphs.
    In particular, if $\intdim[\cP]$ is bounded by $c$ on $\br{3K_1, C_4, P_4}$-free graphs, then $\intdim[\cP][G] \leq c \cdot 2^{\simulnonemptybr{\cC}{G}} + \intdim[\cP][\cC]$ for any hereditary completable $\cC$ and any graph $G$.
    
    Furthermore, $\intdim[\cP]$ is $\simul$-bounded if and only if $\cP$ is bounded on the classes of $\br{3K_1, C_4, P_4}$-free graphs and $(2K_2, P_3)$-free graphs.
    In particular, if $\intdim[\cP]$ is bounded by $c$ on $\br{3K_1, C_4, P_4}$-free graphs and $(2K_2, P_3)$-free graphs, then $\intdim[\cP][G] \leq c\cdot 2^{\simulbr{\cC}{G}} + \intdim[\cP][\cC]$ for any hereditary completable $\cC$ and any graph $G$.
\end{theorem}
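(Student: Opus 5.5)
Both directions rest on the decomposition set up just before the statement: for a $d$-simultaneous $\cC$-representation $(H,L)$ of $G$, let $F$ be the graph joining $u,v$ whenever $L(u)\cap L(v)\neq\emptyset$. Then $G = H\cap F$, and $F$ is the intersection of the graphs $F_A$ over non-empty $A\subseteq\set{1,\dots,d}$, together with $F_\emptyset$ if some label set is empty. We take from Appendix~\ref{sec:proofs} that each $F_A$ is $(3K_1,C_4,P_4)$-free and $F_\emptyset$ is $(2K_2,P_3)$-free. Throughout, "$\cP$ is bounded on a class" means that $\intdim[\cP]$ is bounded there.

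\emph{Sufficiency, non-empty case.} Suppose $\intdim[\cP]\le c$ on $(3K_1,C_4,P_4)$-free graphs, and take a representation with $d=\simulnonempty[\cC][G]$ and non-empty label sets. Then $G = H\cap\bigcap_{A\neq\emptyset}F_A$. Write $H$ as an intersection of at most $\intdim[\cP][\cC]$ graphs of $\cP$, and each $F_A$ as an intersection of at most $c$ graphs of $\cP$. Intersecting all of them gives $G$ with at most $c(2^d-1)+\intdim[\cP][\cC]\le c\cdot 2^d+\intdim[\cP][\cC]$ graphs. One must check that $\intdim[\cP][\cC]$ is finite; this follows because we assume $\intdim[\cP]$ is bounded on $\cC$, which is the premise of $\simul$-boundedness.

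\emph{Sufficiency, possibly-empty case.} Here we also add $F_\emptyset$, written as at most $c$ graphs of $\cP$. The count becomes $c\cdot 2^d+\intdim[\cP][\cC]$, since there are $2^d-1$ non-empty sets plus the empty one.

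\emph{Necessity.} Suppose $\intdim[\cP]$ is unbounded on $(3K_1,C_4,P_4)$-free graphs. We must find a hereditary completable $\cC$ on which $\intdim[\cP]$ is bounded but which does not upper bound it. Take $\cC=\mathsf{complete}$; here $\intdim[\cP]$ is at most $1$, assuming it is well defined. By the characterization of Kratochvíl and Tuza it must contain complete graphs, otherwise it is not defined and the statement is vacuous.

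A $(3K_1,C_4,P_4)$-free graph is a complete graph minus the edges of a complete bipartite subgraph between sets $X$ and $Y$. Give label $1$ to $X$, label $2$ to $Y$, and label set $\set{1,2}$ to everything else. This is a $2$-simultaneous $\mathsf{complete}$-representation with non-empty label sets, provided the graph has no isolated vertices. If it has isolated vertices, they form an isolated part that the labeling can still handle, or we can use the graph family without them. So these graphs have $\simulnonempty[\mathsf{complete}]\le 2$ while $\intdim[\cP]$ is unbounded on them, and $\intdim[\cP]$ is neither $\simul$- nor $\simulnonempty$-bounded.

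For the $(2K_2,P_3)$-free graphs, i.e.\ $K_m$ plus isolated vertices, give every clique vertex label $1$ and every isolated vertex the empty set. This shows $\simul[\mathsf{complete}]\le 1$, so unboundedness on this class breaks $\simul$-boundedness.

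The main obstacle is the pair of structural characterizations of the $F_A$ and $F_\emptyset$, but these are assumed from the appendix. The remaining subtlety is checking that $\cP$ suffices to make $\intdim[\cP]$ well defined, which is needed so that $\mathsf{complete}$ witnesses the bounded side.
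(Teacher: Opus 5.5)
Your proposal is correct and follows the paper's proof essentially step for step. Sufficiency uses the decomposition $G = H \cap \bigcap_{A} F_A$, with $F_\emptyset$ added when empty label sets occur, and necessity uses the $2$-simultaneous $\mathsf{complete}$-representation with non-empty label sets and the $1$-simultaneous one with empty label sets for the two forbidden-subgraph classes. The one blemish is your isolated-vertex caveat. It is unnecessary, since your labeling by $\set{1}$, $\set{2}$, $\set{1,2}$ never uses an empty label set and so works verbatim for every $(3K_1,C_4,P_4)$-free graph. The suggested fallback of discarding graphs with isolated vertices would also not obviously preserve unboundedness of the intersection dimension, so you should simply drop it.
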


\begin{proof}
    Let $\cP$ be a graph class such that $\intdim[\cP]$ is bounded by $c$ on $\br{3K_1, C_4, P_4}$-free graphs.
    Let $\cC$ be a hereditary completable graph class such that $\intdim[\cP][H] \leq k$ for all $H \in \cC$ and let $G$ be some graph with $\simulnonempty[\cC][G] = d$ and $(H,L)$ a $d$-simultaneous $\cC$-representation of $G$ with non-empty label sets.
    Then we can use the construction above to show that $G$ is the intersection of $k$ graphs in $\cP$ and $2^d - 1$ $\br{3K_1, C_4, P_4}$-free graphs. We can substitute each $\br{3K_1, C_4, P_4}$-free graph by $c$ graphs of $\cP$ in the intersection, giving us the desired bound.

    On the other hand, every $\br{3K_1, C_4, P_4}$-free graphs admits a $2$-simultaneous $\mathsf{complete}$-representation using only non-empty label sets. Thus, if $\intdim[\cP]$ is $\simulnonempty$-bounded, then $\intdim[\cP]$ must also be bounded on $\br{3K_1, C_4, P_4}$-free graphs.

    Let $\cP$ be a graph class such that $\intdim[\cP]$ is bounded by $c$ on $\br{3K_1, C_4, P_4}$-free graphs and $(2K_2, P_3)$-free graphs. We can show analogously to the above that $\intdim[\cP]$ is $\simul$-bounded.

    Conversely, every $(2K_2, P_3)$-free graph admits a $1$-simultaneous $\mathsf{complete}$-representation that uses empty label sets. Thus, if $\intdim[\cP]$ is $\simul$-bounded, then $\intdim[\cP]$ must also be bounded on $(2K_2, P_3)$-free graphs. Using the fact that $\simul$-boundedness implies $\simulnonempty$-boundedness, $\intdim[\cP]$ must also be bounded on $\br{3K_1, C_4, P_4}$-free graphs.
\end{proof}

We can give another characterization of when an intersection dimension is $\simul$-bounded, namely when it is upper bounded by the edge clique cover number. This follows easily from the facts that all $\simul$-bounded parameters are upper bounded by the edge clique cover number (see \cref{obsv:ecc1}) and that being upper bounded by the edge clique cover number implies boundedness on $\br{3K_1, C_4, P_4}$-free graphs and $(2K_2, P_3)$-free graphs.

\begin{corollary}
    Let $\cP$ be a graph class. Then $\intdim[\cP]$ is $\simul$-bounded if and only if $\intdim[\cP]$ is upper bounded by the edge clique cover number.
\end{corollary}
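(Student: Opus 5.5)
We prove the two directions separately, using \cref{thm:intersectiondimension} and \cref{obsv:ecc1}.

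For the forward direction, assume $\intdim[\cP]$ is $\simul$-bounded. If $\cP$ contains all complete graphs, then $\intdim[\cP]$ equals $1$ on complete graphs, so it is bounded there. \cref{obsv:ecc1} then shows that it is upper bounded by the edge clique cover number, which is $\simul[\mathsf{complete}]$ by \cref{cor:eccsimulclique}. If $\cP$ lacks some complete graph $K_n$, then $K_n$ is not an intersection of graphs in $\cP$: every graph in such an intersection must have all edges, hence equal $K_n$. Thus $\intdim[\cP]$ is not well-defined on $K_n$. We therefore read the statement for well-defined intersection dimensions, where \cref{obsv:ecc1} applies directly. Alternatively, we bypass \cref{obsv:ecc1}: by \cref{thm:intersectiondimension}, $\simul$-boundedness gives a bound $c$ on $(3K_1,C_4,P_4)$-free and $(2K_2,P_3)$-free graphs. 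Applying the quantitative bound of that theorem with $\cC=\mathsf{complete}$, where $\intdim[\cP][\mathsf{complete}]\le 1$, yields $\intdim[\cP][G]\le c\cdot 2^{\simul[\mathsf{complete}][G]}+1$. This is a function of the edge clique cover number by \cref{cor:eccsimulclique}.

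For the converse, assume $\intdim[\cP][G]\le g(\mathrm{ecc}(G))$ for some function $g$. By \cref{thm:intersectiondimension}, it suffices to show that the edge clique cover number is bounded on $(3K_1,C_4,P_4)$-free graphs and on $(2K_2,P_3)$-free graphs. A $(2K_2,P_3)$-free graph is a clique plus isolated vertices, so its edge clique cover number is at most $1$. A $(3K_1,C_4,P_4)$-free graph is, as established in the appendix, a complete graph minus the edges of a complete bipartite subgraph between sets $X$ and $Y$. Its edges are covered by the two cliques $V\setminus Y$ and $V\setminus X$: an edge missed by both would need one endpoint in $X$ and the other in $Y$, and such pairs are non-edges. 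So its edge clique cover number is at most $2$. Hence $\intdim[\cP]$ is bounded by $\max\{g(1),g(2)\}$ on both classes (also covering the edgeless case with $g(0)$), and $\simul$-boundedness follows.

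The only delicate point is the well-definedness of $\intdim[\cP]$ when $\cP$ misses complete graphs. The route through \cref{thm:intersectiondimension} with $\cC=\mathsf{complete}$ avoids having to appeal to \cref{obsv:ecc1} in that case.
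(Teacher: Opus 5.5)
Your proof is correct and follows the paper's argument. The forward direction goes through \cref{obsv:ecc1}, using that $\intdim[\cP]=1$ on complete graphs, and the converse goes through \cref{thm:intersectiondimension}, after checking that the edge clique cover number is at most $2$ on $(3K_1,C_4,P_4)$-free graphs and at most $1$ on $(2K_2,P_3)$-free graphs. One small correction to your closing remark: the alternative route still uses $\intdim[\cP][\mathsf{complete}]\le 1$, so it relies on the same well-definedness assumption (that $\cP$ contains all complete graphs) rather than bypassing it, and the statement genuinely needs that assumption.
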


Note that both restrictions on the class $\cP$ are satisfied by (proper) interval graphs and chordal graphs as these graph classes contain $\br{3K_1, C_4, P_4}$-free graphs and $(2K_2, P_3)$-free graphs. Therefore cubicity, boxicity, and chordality are $\simul$-bounded. 
The exponential bound in \cref{thm:intersectiondimension} is necessary.
To see this, consider boxicity and the class of interval graphs which are exactly the graphs of boxicity $1$. 
The boxicity of complements of matchings is linear in the number of vertices~\cite{roberts1969boxicity} while the simultaneous interval number of complements of matchings is logarithmic in the number of vertices~\cite{beisegel2024simultaneousinterval}, yielding an exponential relationship between the two parameters.

Cozzens and Roberts~\cite{cozzens1989dimensional} also considered the dual to intersection dimensions, where intersections are replaced by unions. However, they required all graphs used in graph unions to be on the same vertex set, which restricts the graph classes for which such parameters can be defined.
A less restrictive definition, which differs at most by one from these parameters, is offered by \emph{covering numbers}~\cite{harary1970covering,KnauerU16,pyber1992covering}.

\begin{definition}[Covering Number]
    Let $\cP$ be a graph class and $G$ be a graph.
    Graphs $G_1, \dots, G_k \in \cP$ are called a \emph{$\cP$-cover} of $G$ if each $G_i$ is a subgraph of $G$ and every edge of $G$ is contained in at least one $G_i$. 
    The minimum cardinality of a $\cP$-cover of $G$ is called the \emph{covering number of $G$ with respect to $\cP$} and denoted by $\cov[\cP][G]$.
\end{definition}

Note that covering numbers are well-defined for exactly those graph classes $\cP$ which contain the graph $K_2$. 
Obviously, every graph admits a $\set{K_2}$-cover. On the other hand, $\cov[\cP][K_2]$ is well-defined only if $\cP$ contains $K_2$.

In contrast to the intersection dimension, we can show that all well-defined covering numbers are $\simul$-bounded.

\begin{theorem} \label{thm:covering}
    Let $\cP$ be a graph class containing $K_2$.
    Then $\cov[\cP]$ is $\simul$-bounded.
    In particular, $\cov[\cP][G] \leq \cov[\cP][\cC] \cdot \simul[\cC][G]$ for any hereditary completable $\cC$ and any graph~$G$.
\end{theorem}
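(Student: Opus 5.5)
The plan is to mimic the proof of \cref{thm:chromatic}: split the edge set of $G$ along the labels and cover each piece using a $\cP$-cover of $H$. Set $k := \cov[\cP][\cC]$; if $k = \infty$ there is nothing to show. Let $G$ be a graph with $d = \simul[\cC][G]$ and let $(H,L)$ be a $d$-simultaneous $\cC$-representation of $G$. For $1 \le i \le d$ put $V_i := \setcond{v \in V(G)}{i \in L(v)}$. As observed in the proof of \cref{thm:chromatic}, $G[V_i] = H[V_i]$, since any two vertices of $V_i$ share label~$i$.

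Next, fix a $\cP$-cover $H_1, \dots, H_k$ of $H$, which exists because $H \in \cC$ and $\cov[\cP][H] \le k$. For each pair $(i,j)$ we would like to use $H_j[V_i]$, the subgraph of $H_j$ induced by the vertices in $V_i$. This is a subgraph of $H[V_i] = G[V_i]$ and hence of $G$. Moreover, every edge $uv \in E(G)$ lies in $H$ and satisfies $L(u)\cap L(v) \ne \emptyset$. So there is a label $i$ with $u,v \in V_i$ and some $j$ with $uv \in E(H_j)$, and therefore $uv \in E(H_j[V_i])$. The $kd$ graphs $H_j[V_i]$ would thus cover $E(G)$ using only subgraphs of $G$.

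The main obstacle is that $\cP$ is an arbitrary class containing $K_2$. It is not assumed hereditary, so $H_j[V_i]$ need not lie in $\cP$. I would resolve this by rereading the definition of a $\cP$-cover: each $G_i$ only has to be a subgraph of $G$ that belongs to $\cP$, and its vertex set may include vertices of $G$ outside the edges being covered. The fix is therefore to keep $H_j$ whole but check that it is a subgraph of $G$, which fails in general because $H_j$ may contain edges of $H$ that are not edges of $G$.

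The intended resolution is to handle the restriction to $V_i$ with a different choice of cover. Since $\cC$ is hereditary, $H[V_i] \in \cC$, so $H[V_i]$ itself has a $\cP$-cover of size at most $k$. Each graph in that cover is a subgraph of $H[V_i] = G[V_i] \subseteq G$ and lies in $\cP$, and together they cover every edge of $G[V_i]$. Taking the union over $i = 1, \dots, d$ gives at most $kd$ graphs of $\cP$. They are all subgraphs of $G$, and by the argument above they cover every edge of $G$, since each edge lies inside some $G[V_i]$. This yields $\cov[\cP][G] \le \cov[\cP][\cC] \cdot \simul[\cC][G]$. Because the bound holds for every hereditary completable $\cC$ on which $\cov[\cP]$ is bounded, it gives $\simul$-boundedness, and the converse direction of the definition is trivial. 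The vertices with empty label sets need no special care, since they are isolated in $G$ and have no edges to cover.
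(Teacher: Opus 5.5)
Your final argument is correct and is essentially the paper's proof. You split $V(G)$ by labels into the sets $V_i$, use $G[V_i]=H[V_i]\in\cC$ (by heredity) to obtain a $\cP$-cover of each piece of size at most $k$, and take the union of these $d$ covers. The earlier attempts via $H_j[V_i]$ and via $H_j$ itself are dead ends that you rightly abandon, and they can simply be deleted.
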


\begin{proof}
    Let $k \in \N$ and $\cC$ be a graph class such that $\cov[\cP][H] \leq k$ for all $H \in \cC$.
    Furthermore, let $G$ be a graph with $d = \simul[\cC][G]$ and $(H,L)$ be a $d$-simultaneous $\cC$-representation of $G$.
    For every label $1 \leq i \leq d$, we define a vertex set $V_i := \setcond{v \in V(G)}{i \in L(v)}$. 
    Note that the subgraphs $G[V_1], \dots, G[V_d]$ cover all edges of $G$.
    Furthermore, $G[V_i] = H[V_i] \in \cC$ for all $i$, implying that $G[V_i]$ admits a $\cP$-cover of size at most $k$.
    The collection of $\cP$-covers of the graphs $G[V_i]$ gives a $\cP$-cover of $G$ with size at most $kd$.
\end{proof}

A variant of the covering numbers, called \emph{local covering numbers}, have been introduced by Knauer and Ueckerdt~\cite{KnauerU16}.

\begin{definition}[Local Covering Number]
    Let $\cP$ be a graph class and $G$ be a graph.
    The \emph{frequency} of a $\cP$-cover $G_1, \dots, G_k$ is defined by $\max_{v \in V(G)} \abs{\setcond{i}{v \in V(G_i)}}$.
    The \emph{local covering number of $G$ with respect to $\cP$}, denoted by $\covl[\cP][G]$, is the minimum frequency of any $\cP$-cover of $G$.
\end{definition}

Again, the local covering number is well-defined for exactly those graph classes which contain the graph $K_2$.

Using the same arguments as the proof of \cref{thm:covering}, we obtain the following result.

\begin{theorem}
    Let $\cP$ be a graph class containing $K_2$.
    Then $\covl[\cP]$ is $\simul$-bounded.
    In particular, $\covl[\cP][G] \leq \covl[\cP][\cC] \cdot \simul[\cC][G]$ for any hereditary completable $\cC$ and any graph~$G$.
\end{theorem}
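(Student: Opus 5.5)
The plan is to mirror the proof of \cref{thm:covering} and track how often each vertex is used. Fix $k \in \N$ and a hereditary completable class $\cC$ with $\covl[\cP][H] \leq k$ for all $H \in \cC$. Let $G$ be a graph with $d = \simul[\cC][G]$ and let $(H,L)$ be a $d$-simultaneous $\cC$-representation of $G$. For each label $1 \leq i \leq d$, set $V_i := \setcond{v \in V(G)}{i \in L(v)}$.

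As in the proof of \cref{thm:covering}, $G[V_i] = H[V_i]$ for every $i$. Since $\cC$ is hereditary, each $G[V_i]$ lies in $\cC$. Moreover, the subgraphs $G[V_1], \dots, G[V_d]$ together cover every edge of $G$: if $uv \in E(G)$, then $L(u) \cap L(v) \neq \emptyset$, so some $i$ satisfies $u,v \in V_i$.

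For each $i$, I will pick a $\cP$-cover $\mathcal{F}_i$ of $G[V_i]$ of frequency at most $k$. Each member of $\mathcal{F}_i$ is a subgraph of $G[V_i]$, hence a subgraph of $G$. The union $\mathcal{F} := \bigcup_{i=1}^d \mathcal{F}_i$ is then a $\cP$-cover of $G$, because every edge of $G$ lies in some $G[V_i]$ and is covered by $\mathcal{F}_i$.

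It remains to bound the frequency of $\mathcal{F}$, and this is the one point that differs from the global covering number. Fix a vertex $v$. Members of $\mathcal{F}_i$ are subgraphs of $G[V_i]$, so they can contain $v$ only if $v \in V_i$, that is, only if $i \in L(v)$. For each such $i$, at most $k$ members of $\mathcal{F}_i$ contain $v$. Hence $v$ lies in at most $k \cdot \abs{L(v)} \leq kd$ members of $\mathcal{F}$.

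This yields $\covl[\cP][G] \leq \covl[\cP][\cC] \cdot \simul[\cC][G]$. Isolated vertices and vertices with empty label sets cause no trouble, since they need not appear in any cover member. Together with the trivial direction (a parameter upper bounded by $\simul[\cC]$ is bounded on $\cC$), this shows that $\covl[\cP]$ is $\simul$-bounded. I expect no real obstacle; the only care needed is the per-vertex counting in the frequency bound.
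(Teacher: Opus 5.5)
Your proof is correct and is exactly the argument the paper intends: it reuses the label classes $V_i$ from the proof of \cref{thm:covering}, takes the union of frequency-$k$ $\cP$-covers of the graphs $G[V_i] = H[V_i] \in \cC$, and bounds the frequency at $v$ by $k\abs{L(v)} \leq kd$. Your per-vertex count also gives the refinement the paper mentions right after the theorem, namely a bound in terms of the largest label set rather than the total number $d$ of labels.
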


By a finer analysis of the local frequency in the proof, the given upper bound on the local covering number can be improved to use the maximum cardinality of any used label set rather than the number of labels. Thus, for $d$-simultaneous $\cC$-representations where $d$ is large but each vertex only uses few labels, the bound can be much improved.

Knauer and Ueckerdt~\cite{KnauerU16} also introduced the \emph{folded covering numbers}. 
Their definition requires homomorphisms.
A \emph{homomorphism} from a graph $F$ to a graph $G$ is a map $\varphi : V(F) \to V(G)$ such that $uv \in E(F)$ implies $\varphi(u) \varphi(v) \in E(G)$.
A homomorphism $\varphi$ is \emph{edge-surjective} if for every edge $u'v' \in E(G)$ there exists an edge $uv \in E(F)$ with $\varphi(u) = u'$ and $\varphi(v) = v'$.

\begin{definition}[Folded Covering Number]
    Let $\cP$ be a graph class and $G$ be a graph.
    A \emph{folded $\cP$-cover} of $G$ is an edge-surjective homomorphism from some graph of $\cP$ to $G$.
    The \emph{folded covering number of $G$ with respect to $\cP$}, denoted by $\covf[\cP][G]$, is defined by
    \[ \covf[\cP][G] := \min \! \setcond{ \max_{v \in V(G)} \abs{\varphi^{-1} (v)} }{\varphi \text{ is a folded } \cP \text{-cover of } G}. \]
\end{definition}

We again want to mention the graph classes $\cP$ for which the folded covering numbers are well-defined.
Consider a graph $G$ that consists of $k$ non-trivial bipartite connected components.
A graph used in a folded $\cP$-cover of $G$ must therefore consist of at least $k$ non-trivial bipartite connected components.
Thus, it is necessary for $\cP$ to contain graphs consisting of arbitrarily many non-trivial bipartite connected components.
This condition is actually sufficient: For a graph $G$ with $m$ edges, take a graph $F$ of $\cP$ with $\geq m$ non-trivial bipartite connected components. We assign each edge $uv$ of $G$ a non-trivial bipartite component $F_{uv}$ of $F$. For each edge $uv \in E(G)$, $\varphi$ maps one set of the bipartition of $F_{uv}$ to $u$ and the other to $v$. If there are unmapped bipartite components left in $F$ after this process has been done for every edge in $G$, they can similarly be mapped to some edge. Thus, $\varphi$ is a folded $\cP$-cover of $G$.

Not every graph class satisfying this condition is $\simul$-bounded. Take, for example, the graph class $\cP = \mathsf{complete} \cup \setcond{nK_2 \cup mK_1}{n,m \in \N}$ which consists of complete graphs and graphs that contain matchings plus isolated vertices. The parameter $\covf[\cP]$ is obviously bounded on complete graphs. However, it can easily be seen that $\covf[\cP][2K_n] = n-1$ for all $n \in \N$, where $2K_n$ denotes the disjoint union of two complete graphs $K_n$. Thus, by \cref{obsv:ecc2}, the parameter $\covf[\cP]$ is neither $\simul$-bounded nor $\simulnonempty$-bounded. Furthermore, $\cP$ was chosen to be hereditary, thus adding this as a requirement to $\cP$ does not help with $\simul$-boundedness. Hence, we need to require some condition on $\cP$ to show $\simul$-boundedness of folded covering numbers.

\begin{theorem}
    Let $\cP$ be a graph class that is closed under disjoint union.
    Then $\covf[\cP]$ is $\simul$-bounded.
    In particular, $\covf[\cP][G] \leq \covf[\cP][\cC] \cdot \simul[\cC][G]$ for any hereditary completable $\cC$ and any graph $G$.
\end{theorem}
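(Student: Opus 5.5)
Following the proof of \cref{thm:covering}, let $k := \covf[\cP][\cC]$ (nothing to show if this is infinite), let $G$ be a graph with $d = \simul[\cC][G]$, and fix a $d$-simultaneous $\cC$-representation $(H,L)$ of $G$. For every label $1 \leq i \leq d$ set $V_i := \setcond{v \in V(G)}{i \in L(v)}$. As before, $G[V_i] = H[V_i]$, which lies in $\cC$ since $\cC$ is hereditary, and every edge of $G$ lies in some $G[V_i]$. Hence for each $i$ there is a folded $\cP$-cover $\varphi_i : F_i \to G[V_i]$ with $F_i \in \cP$ and $\abs{\varphi_i^{-1}(v)} \leq k$ for all $v \in V_i$.

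Next I combine these covers into one. Let $F := F_1 \cup \dots \cup F_d$ be the disjoint union; it lies in $\cP$ because $\cP$ is closed under disjoint union. Define $\varphi : V(F) \to V(G)$ to agree with $\varphi_i$ on $V(F_i)$. This is a homomorphism into $G$, because each edge of $F_i$ is mapped to an edge of $G[V_i]$, which is a subgraph of $G$. It is edge-surjective, because each edge $uv \in E(G)$ lies in some $G[V_i]$ and $\varphi_i$ is edge-surjective onto $G[V_i]$. For each vertex $v$ we get $\abs{\varphi^{-1}(v)} = \sum_{i : v \in V_i} \abs{\varphi_i^{-1}(v)} \leq k \cdot \abs{L(v)} \leq kd$. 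This yields $\covf[\cP][G] \leq \covf[\cP][\cC] \cdot \simul[\cC][G]$, as in the finer remark after the local covering result.

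Some edge cases need attention, and this is the step I expect to take care rather than ideas:
\begin{itemize}
\item If $G[V_i]$ is edgeless, I simply omit the index $i$. No folded cover is needed there, and omitting it is harmless because $\varphi$ only has to be edge-surjective.
\item If $G$ is edgeless, then $d = 0$. I use that a well-defined $\covf[\cP]$ requires $\cP$ to contain some graph, and that closure under disjoint union lets me handle it; alternatively I adopt the convention that the empty cover (or any homomorphism of an edgeless graph of $\cP$ with fibre size $1$) is allowed. I would check the paper's convention for edgeless $G$ and, if needed, use $\max\{1,\cdot\}$.
\end{itemize}

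The converse direction of $\simul$-boundedness is trivial, since $\simul[\cC][H] \leq 1$ for $H \in \cC$.
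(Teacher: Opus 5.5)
Your proposal is correct and follows the paper's proof essentially verbatim. Both split $G$ into the label classes $G[V_i] = H[V_i] \in \cC$, take folded $\cP$-covers $\varphi_i$ of each, and glue them over the disjoint union $F = F_1 \cup \dots \cup F_d \in \cP$ to get fibres of size at most $kd$. Your extra care with edgeless $G[V_i]$ and with $d = 0$ goes slightly beyond the paper, which does not address these cases; there the empty disjoint union need not lie in $\cP$, so a $\max\{1,\cdot\}$ correction may be needed. This does not affect $\simul$-boundedness.
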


\begin{proof}
    Let $k \in \N$ and $\cC$ be a graph class such that $\covf[\cP][H] \leq k$ for all $H \in \cC$.
    Furthermore, let $G$ be a graph with $d = \simul[\cC][G]$ and $(H,L)$ be a $d$-simultaneous $\cC$-representation of $G$.
    We define sets $V_i$ for $1 \leq i \leq d$ as in the proof of \cref{thm:covering} and note that each graph $G[V_i]$ admits a folded $\cP$-cover where the preimage of every vertex has size at most $k$.
    For every $V_i$, let $\varphi_i$ be the folded $\cP$-cover of $G[V_i]$ which maps from graph $F_i$ to $G[V_i]$.
    We define the graph $F$ as the disjoint union of all graphs $F_i$ and a folded $\cP$-cover $\varphi : V(F) \to V(G)$ by $\varphi(v) = \varphi_i(v)$ for all $v \in V(F_i)$ for $1 \leq i \leq d$.
    Then, the preimage of every vertex in $\varphi$ has size at most $kd$.
\end{proof}

Note that the condition on $\cP$ to be closed under disjoint union seems to be rather mild since Knauer and Ueckerdt~\cite{KnauerU16} only considered covering numbers on such classes.

As a direct consequence of these results on covering numbers, we have that the \emph{track number}, which is the covering number with respect to interval graphs, the \emph{interval number}, which is the folded covering number with respect to interval graphs, and the \emph{edge clique cover number}, which is the covering number with respect to complete graphs, are $\simul$-bounded.
Together with the fact that the edge clique cover number on complete graphs is bounded by one and the result by Beisegel et al.~\cite{beisegel2024simultaneousinterval} that the edge clique cover number is always an upper bound on $\simul[\cC]$, we obtain the following.

\begin{corollary} \label{cor:eccsimulclique}
    The edge clique cover number is equal to the simultaneous $\mathsf{complete}$-number.
\end{corollary}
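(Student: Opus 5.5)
We prove the equality $\mathrm{ecc}(G) = \simul[\mathsf{complete}][G]$ for every graph $G$, where $\mathrm{ecc}$ denotes the edge clique cover number, by establishing the two inequalities separately.

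\emph{Lower bound} $\mathrm{ecc}(G) \leq \simul[\mathsf{complete}][G]$. We apply \cref{thm:covering} with $\cP = \mathsf{complete}$ (which contains $K_2$) and $\cC = \mathsf{complete}$. This gives
$\cov[\mathsf{complete}][G] \leq \cov[\mathsf{complete}][\mathsf{complete}] \cdot \simul[\mathsf{complete}][G]$.
Every complete graph with at least one edge is covered by the single clique it forms, and edgeless complete graphs ($K_0$, $K_1$) need no cliques at all. Hence $\cov[\mathsf{complete}][\mathsf{complete}] \leq 1$, and since $\cov[\mathsf{complete}] = \mathrm{ecc}$ by definition, the lower bound follows. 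To avoid depending on how the theorem treats the supremum, we can also argue directly. Given a $d$-simultaneous $\mathsf{complete}$-representation $(H,L)$, $H$ is complete, so each set $V_i = \setcond{v}{i \in L(v)}$ induces a clique in $G$. Every edge $uv \in E(G)$ has some shared label $i$, so it lies in $G[V_i]$. Thus $V_1,\dots,V_d$ are $d$ cliques covering all edges of $G$.

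\emph{Upper bound} $\simul[\mathsf{complete}][G] \leq \mathrm{ecc}(G)$. This is the result of Beisegel et al.~\cite{beisegel2024simultaneousinterval}, but it is easy to prove directly. Take an edge clique cover $Q_1,\dots,Q_k$ with $k = \mathrm{ecc}(G)$. Let $H$ be the complete graph on $V(G)$, and set $L(v) = \setcond{i}{v \in Q_i}$. We check that $(H,L)$ is a $k$-simultaneous representation.
\begin{itemize}
\item If $uv \in E(G)$, some $Q_i$ contains both $u$ and $v$, so $i \in L(u) \cap L(v)$.
\item If $i \in L(u) \cap L(v)$ with $u \neq v$, then $u,v \in Q_i$, and since $Q_i$ is a clique in $G$, $uv \in E(G)$.
\end{itemize}
Because $H$ is complete, $uv \in E(H)$ holds for every pair of distinct vertices, so the representation condition reduces to exactly these two checks.

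\emph{Possible pitfall.} The only step needing care is the degenerate case of edgeless graphs. There both quantities are $0$: the empty cover works, and all label sets can be empty. The statement is about $\simul$ rather than $\simulnonempty$, so empty label sets are permitted.
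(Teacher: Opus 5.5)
Your proof is correct and follows the paper's route. The inequality $\mathsf{ecc}(G)\le\simul[\mathsf{complete}][G]$ is \cref{thm:covering} with $\cP=\cC=\mathsf{complete}$, using that the edge clique cover number of complete graphs is at most one. The reverse inequality is the bound of Beisegel et al.\ obtained by taking $H$ complete and labeling each vertex by the cliques of an optimal edge clique cover that contain it, so your direct verifications simply unfold these two cited arguments in this special case.
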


Interestingly, simultaneous $\cC$-numbers are $\simul$-bounded themselves, which follows from the next result.

\begin{lemma} \label{lem:c1c2}
    Let $G$ be a graph and $\cC_1, \cC_2$ be two hereditary completable graph classes.
    If $G$ has a $d_1$-simultaneous $\cC_1$-representation $(H_1,L_1)$ and $H_1$ has a $d_2$-simultaneous $\cC_2$-representation $(H_2, L_2)$, then $G$ admits a $(d_1 \cdot d_2)$-simultaneous $\cC_2$-representation.
\end{lemma}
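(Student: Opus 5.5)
We keep $H_2 \in \cC_2$ as the host graph and replace the labels by ordered pairs, so that a shared pair records both a shared label in $L_1$ and a shared label in $L_2$. Concretely, the new label set is $\set{1,\dots,d_1} \times \set{1,\dots,d_2}$, which has $d_1 d_2$ elements and can be identified with $\set{1,\dots,d_1 d_2}$ by any fixed bijection. For every vertex $v$ we set
\[ L(v) := L_1(v) \times L_2(v). \]
The claim is then that $(H_2, L)$ is a $(d_1 d_2)$-simultaneous $\cC_2$-representation of $G$.

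First we check the side conditions of the definition. We have $V(H_2) = V(H_1) = V(G)$ and $H_2 \in \cC_2$. Moreover $E(G) \subseteq E(H_1) \subseteq E(H_2)$, since both $(H_1,L_1)$ and $(H_2,L_2)$ are representations.

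Next we check the adjacency condition for arbitrary vertices $u,v$. The key identity is
\[ L(u) \cap L(v) = \br{L_1(u) \cap L_1(v)} \times \br{L_2(u) \cap L_2(v)}. \]
A Cartesian product is non-empty exactly when both factors are non-empty. Hence $uv \in E(H_2)$ and $L(u) \cap L(v) \neq \emptyset$ hold together if and only if all three of the following hold: $uv \in E(H_2)$, $L_2(u)\cap L_2(v) \neq \emptyset$, and $L_1(u)\cap L_1(v)\neq\emptyset$. Because $(H_2,L_2)$ represents $H_1$, the first two conditions together are equivalent to $uv \in E(H_1)$. Because $(H_1,L_1)$ represents $G$, the statement ``$uv \in E(H_1)$ and $L_1(u)\cap L_1(v)\neq\emptyset$'' is equivalent to $uv \in E(G)$. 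This establishes the required equivalence.

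No step here is a real obstacle. The only point needing care is that empty label sets cause no trouble: if either $L_1(v)$ or $L_2(v)$ is empty, then $L(v)$ is empty, and such a vertex is correctly isolated, since it is already isolated in $G$ in that case.
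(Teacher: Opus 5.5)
Your proof is correct and is essentially the paper's argument: the paper also keeps $H_2$ as the host graph and assigns each vertex the labels $\alpha_{ij}$ with $i \in L_1(v)$ and $j \in L_2(v)$, which is exactly your $L_1(v) \times L_2(v)$. The difference is only presentational. The paper checks the two implications separately, while you obtain both at once from the identity $(L_1(u)\times L_2(u))\cap(L_1(v)\times L_2(v)) = (L_1(u)\cap L_1(v))\times(L_2(u)\cap L_2(v))$ and a chain of equivalences.
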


\begin{figure}
    \centering
        \begin{subfigure}[t]{0.48\textwidth}
        \centering
        \begin{tikzpicture}
            \def\a{.4} 
            \def\d{8pt}
            \def\labelone{BTUred}
            \def\labeltwo{lipicsYellow}
            \def\labelthree{coolblue}
            \def\nolabel{lipicsGray}
            \tikzset{c/.style={circle,draw,thick,minimum size=\a cm}} 
            \tikzset{one color/.style={c,fill=#1}}  
            \tikzset{pics/two colors/.style args=
                {#1|#2|rotate=#3}{code={%
            \fill[#1,rotate=#3] (0,\a/2) arc(90:270:\a/2)--cycle;           
            \fill[#2,rotate=#3] (0,\a/2) arc(90:-90:\a/2)--cycle;
            \path (0,0) node[c] (-boundary) {};
            }}}
            \tikzset{pics/three colors/.style args=
                {#1|#2|#3|rotate=#4}{code={%
            \fill[#1,rotate=#4] (0,\a/2) arc(90:210:\a/2)--(0,0)--cycle;            
            \fill[#2,rotate=#4] (0,\a/2) arc(90:-30:\a/2)--(0,0)--cycle;
            \fill[#3,rotate=#4] (210:\a/2) arc(210:330:\a/2)--(0,0)--cycle;
            \path (0,0) node[c] (-boundary) {};
            }}}
            \path
            (1,0) node[one color = \labelone] (R1) {}
            (2,0) node[one color = \labelone] (R2) {}
            (3,0) node[one color = \labelone] (R3) {}
            (2,1.25) node[one color = \labeltwo] (Y) {}
            ;
            \path
            (0,0) pic (YR1) {two colors={\labelone|\labeltwo|rotate=90}}
            (4,0) pic (YR2) {two colors={\labelone|\labeltwo|rotate=90}}
            ;
            \draw[line width=2pt,\labeltwo] (YR1-boundary) -- (Y) -- (YR2-boundary);
            \draw[line width=2pt,\labelone] (YR1-boundary) -- (R1) -- (R2) -- (R3) -- (YR2-boundary);
            \draw[line width=1pt,\nolabel] (R1) -- (Y) -- (R2);
            \draw[line width=1pt,\nolabel] (Y) -- (R3);
            \path
            (0,-0.5) node[font=\sffamily\bfseries] {\textcolor{\labelone}{1},\textcolor{\labeltwo}{2}}
            (1,-0.5) node[font=\sffamily\bfseries] {\textcolor{\labelone}{1}}
            (2,-0.5) node[font=\sffamily\bfseries] {\textcolor{\labelone}{1}}
            (3,-0.5) node[font=\sffamily\bfseries] {\textcolor{\labelone}{1}}
            (4,-0.5) node[font=\sffamily\bfseries] {\textcolor{\labelone}{1},\textcolor{\labeltwo}{2}}
            (2,1.75) node[font=\sffamily\bfseries] {\textcolor{\labeltwo}{2}}
            ;
        \end{tikzpicture}
        \caption{The $2$-simultaneous $\mathsf{interval}$-representation of the cycle $C_6$ given in \cref{fig:example1}.}
        \label{fig:subfig1}
    \end{subfigure}
    \hfill
    \begin{subfigure}[t]{0.48\textwidth}
        \centering
        \begin{tikzpicture}
            \def\a{.4} 
            \def\d{8pt}
            \def\labelone{BTUred}
            \def\labeltwo{lipicsYellow}
            \def\labelthree{coolblue}
            \def\labelfour{BTUmagenta}
            \def\nolabel{lipicsGray}
            \tikzset{c/.style={circle,draw,thick,minimum size=\a cm}} 
            \tikzset{one color/.style={c,fill=#1}}  
            \tikzset{pics/two colors/.style args=
                {#1|#2|rotate=#3}{code={%
            \fill[#1,rotate=#3] (0,\a/2) arc(90:270:\a/2)--cycle;           
            \fill[#2,rotate=#3] (0,\a/2) arc(90:-90:\a/2)--cycle;
            \path (0,0) node[c] (-boundary) {};
            }}}
            \tikzset{pics/three colors/.style args=
                {#1|#2|#3|rotate=#4}{code={%
            \fill[#1,rotate=#4] (0,\a/2) arc(90:210:\a/2)--(0,0)--cycle;            
            \fill[#2,rotate=#4] (0,\a/2) arc(90:-30:\a/2)--(0,0)--cycle;
            \fill[#3,rotate=#4] (210:\a/2) arc(210:330:\a/2)--(0,0)--cycle;
            \path (0,0) node[c] (-boundary) {};
            }}}
            \path
            (0,0) node[one color=\labelthree] (1) {}
            (2,0) node[one color=\labelfour] (3) {}
            (4,0) node[one color=\labelthree] (5) {}
            ;
            \path
            (1,0) pic (2) {two colors={\labelthree|\labelfour|rotate=0}}
            (3,0) pic (4) {two colors={\labelthree|\labelfour|rotate=180}}
            (2,1.25) pic (6) {two colors={\labelthree|\labelfour|rotate=-90}}
            ;
            \draw[line width=2pt,\labelthree] (4-boundary) -- (5) -- (6-boundary) -- (1) -- (2-boundary);
            \draw[line width=2pt,\labelfour] (4-boundary) -- (3) -- (2-boundary);
            \draw[line width=2pt,\labelthree,dash pattern= on \d off \d,dash phase=\d] (2-boundary) -- (6-boundary) -- (4-boundary);
            \draw[line width=2pt,\labelfour,dash pattern= on \d off \d] (2-boundary) -- (6-boundary) -- (4-boundary);
            \draw[line width=2pt,\labelfour] (3) -- (6-boundary);
            \draw[line width=1pt,\nolabel] (1) to[bend right=45] (3);
            \draw[line width=1pt,\nolabel] (3) to[bend right=45] (5);
            \path
            (0,-0.5) node[font=\sffamily\bfseries] {\textcolor{\labelthree}{1}}
            (1,-0.75) node[font=\sffamily\bfseries] {\textcolor{\labelthree}{1},\textcolor{\labelfour}{2}}
            (2,-0.5) node[font=\sffamily\bfseries] {\textcolor{\labelfour}{2}}
            (3,-0.75) node[font=\sffamily\bfseries] {\textcolor{\labelthree}{1},\textcolor{\labelfour}{2}}
            (4,-0.5) node[font=\sffamily\bfseries] {\textcolor{\labelthree}{1}}
            (2,1.75) node[font=\sffamily\bfseries] {\textcolor{\labelthree}{1},\textcolor{\labelfour}{2}}
            ;
        \end{tikzpicture}
        \caption{A $2$-simultaneous $\mathsf{cograph}$-representation of the interval graph used in (a).}
        \label{fig:subfig2}
    \end{subfigure}
    \begin{subfigure}[b]{0.48\textwidth}
        \centering
        \begin{tikzpicture}
            \def\a{.4} 
            \def\d{8pt}
            \def\labelone{BTUred}
            \def\labeltwo{lipicsYellow}
            \def\labelthree{coolblue}
            \def\labelfour{BTUmagenta}
            \def\nolabel{lipicsGray}
            \tikzset{c/.style={circle,draw,thick,minimum size=\a cm}} 
            \tikzset{one color/.style={c,fill=#1}}  
            \tikzset{pics/two colors/.style args=
                {#1|#2|rotate=#3}{code={%
            \fill[#1,rotate=#3] (0,\a/2) arc(90:270:\a/2)--cycle;           
            \fill[#2,rotate=#3] (0,\a/2) arc(90:-90:\a/2)--cycle;
            \path (0,0) node[c] (-boundary) {};
            }}}
            \tikzset{pics/three colors/.style args=
                {#1|#2|#3|rotate=#4}{code={%
            \fill[#1,rotate=#4] (0,\a/2) arc(90:210:\a/2)--(0,0)--cycle;            
            \fill[#2,rotate=#4] (0,\a/2) arc(90:-30:\a/2)--(0,0)--cycle;
            \fill[#3,rotate=#4] (210:\a/2) arc(210:330:\a/2)--(0,0)--cycle;
            \path (0,0) node[c] (-boundary) {};
            }}}
            \path
            (0,0) pic (1) {two colors={\labelone|\labelthree|rotate=90}}
            (1,0) pic (2) {two colors={\labelone|\labeltwo|rotate=0}}
            (2,0) node[one color=\labeltwo] (3) {}
            (3,0) pic (4) {two colors={\labelone|\labeltwo|rotate=180}}
            (4,0) pic (5) {two colors={\labelone|\labelthree|rotate=90}}
            (2,1.25) pic (6) {two colors={\labelthree|\labelfour|rotate=90}}
            ;
            \draw[line width=2pt,\labelthree] (1-boundary) -- (6-boundary) -- (5-boundary);
            \draw[line width=2pt,\labeltwo] (2-boundary) -- (3) -- (4-boundary);
            \draw[line width=2pt,\labelone] (1-boundary) -- (2-boundary);
            \draw[line width=2pt,\labelone] (4-boundary) -- (5-boundary);
            \draw[line width=1pt,\nolabel] (2-boundary) -- (6-boundary) -- (4-boundary);
            \draw[line width=1pt,\nolabel] (3) -- (6-boundary);
            \draw[line width=1pt,\nolabel] (1-boundary) to[bend right=45] (3);
            \draw[line width=1pt,\nolabel] (3) to[bend right=45] (5-boundary);
            \path
            (-0.75,0) node[font=\sffamily\bfseries,align=left] {$\textcolor{\labelone}{\boldsymbol{\alpha}_{\mathsf{\mathbf{1,1}}}}$\\$\textcolor{\labelthree}{\boldsymbol{\alpha}_{\mathsf{\mathbf{2,1}}}}$}
            (1,-0.8) node[font=\sffamily\bfseries] {$\textcolor{\labelone}{\boldsymbol{\alpha}_{\mathsf{\mathbf{1,1}}}}$,$\textcolor{\labeltwo}{\boldsymbol{\alpha}_{\mathsf{\mathbf{1,2}}}}$}
            (2,-0.5) node[font=\sffamily\bfseries] {$\textcolor{\labeltwo}{\boldsymbol{\alpha}_{\mathsf{\mathbf{1,2}}}}$}
            (3,-0.8) node[font=\sffamily\bfseries] {$\textcolor{\labelone}{\boldsymbol{\alpha}_{\mathsf{\mathbf{1,1}}}}$,$\textcolor{\labeltwo}{\boldsymbol{\alpha}_{\mathsf{\mathbf{1,2}}}}$}
            (4.75,0) node[font=\sffamily\bfseries,align=left] {$\textcolor{\labelone}{\boldsymbol{\alpha}_{\mathsf{\mathbf{1,1}}}}$\\$\textcolor{\labelthree}{\boldsymbol{\alpha}_{\mathsf{\mathbf{2,1}}}}$}
            (2,1.75) node[font=\sffamily\bfseries] {$\textcolor{\labelthree}{\boldsymbol{\alpha}_{\mathsf{\mathbf{2,1}}}}$,$\textcolor{\labelfour}{\boldsymbol{\alpha}_{\mathsf{\mathbf{2,2}}}}$}
            ;
        \end{tikzpicture}
        \caption{The $4$-simultaneous $\mathsf{cograph}$-representation of the cycle $C_6$ constructed using \cref{lem:c1c2} with the representations given in (a) and (b).}
        \label{fig:subfig3}
    \end{subfigure}
    \caption{An example of \cref{lem:c1c2} applied to the cycle $C_6$ using interval graphs and cographs. Note that the label $\alpha_{2,2}$ used in (c) is redundant.}
    \label{fig:c1c2}
\end{figure}

\begin{proof}
    We introduce labels $\alpha_{ij}$ with $1 \leq i \leq d_1$ and $1 \leq j \leq d_2$ and define
    $L(v) := \setcond{\alpha_{ij}}{i \in L_1(v) \text{ and } j \in L_2(v)}$ (see \cref{fig:c1c2} for an example).
    We claim that $(H_2, L)$ is a $(d_1 \cdot d_2)$-simultaneous $\cC_2$-representation of $G$.

    Let $uv \in E(G) \subseteq E(H_1) \subseteq E(H_2)$.
    As we have $uv \in E(G)$, we know that $L_1(u) \cap L_1(v) \neq \emptyset$.
    Furthermore, as $uv \in E(H_1)$, we have $L_2(u) \cap L_2(v) \neq \emptyset$.
    Let $i \in L_1(u) \cap L_1(v)$ and $j \in L_2(u) \cap L_2(v)$.
    Then $\alpha_{ij} \in L(u) \cap L(v)$, showing $L(u) \cap L(v) \neq \emptyset$.

    Now, let $u,v \in V(G)$ be given with $uv \in E(H_2)$ and $L(u) \cap L(v) \neq \emptyset$.
    Choose some $\alpha_{ij} \in L(u) \cap L(v)$.
    Then we have $j \in L_2(u) \cap L_2(v)$, which together with $uv \in E(H_2)$ implies $uv \in E(H_1)$.
    Thus, $i \in L_1(u) \cap L_1(v) \neq \emptyset$ implies $uv \in E(G)$, as $(H_1,L_1)$ is a simultaneous $\cC_1$-representation of $G$.
\end{proof}

If one of the simultaneous representations in the proof above contains empty label sets, then the constructed representation also contains empty label sets. Thus, if one wants to construct a simultaneous representation with non-empty label sets, both representations used need to have non-empty label sets.

\begin{corollary}
    Let $\cP$ be a hereditary completable graph class. Then $\simul[\cP]$ is $\simul$-bounded.
    In particular, $\simul[\cP][G] \leq \simul[\cP][\cC] \cdot \simul[\cC][G]$ for any hereditary completable $\cC$ and any graph $G$.
    Furthermore, $\simulnonempty[\cP]$ is $\simulnonempty$-bounded.
    In particular, $\simulnonempty[\cP][G] \leq \simulnonempty[\cP][\cC] \cdot \simulnonempty[\cC][G]$ for any hereditary completable $\cC$ and any graph $G$.
\end{corollary}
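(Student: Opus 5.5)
The plan is to apply \cref{lem:c1c2} with $\cC_1 := \cC$ and $\cC_2 := \cP$, and then read off the $\simul$-boundedness directly from the definition. Let $\cC$ be a hereditary completable class on which $\simul[\cP]$ is bounded, say $\simul[\cP][H] \leq k$ for all $H \in \cC$, i.e.\ $k = \simul[\cP][\cC]$. For the converse direction of the definition nothing needs to be shown: if $\simul[\cP]$ is upper bounded by $\simul[\cC]$, it is bounded on $\cC$, because every graph of $\cC$ has simultaneous $\cC$-number at most $1$, and the bounding function is evaluated only at the values $0$ and $1$.

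For the nontrivial direction, let $G$ be any graph with $d_1 = \simul[\cC][G]$, and let $(H_1, L_1)$ be a $d_1$-simultaneous $\cC$-representation. Since $H_1 \in \cC$, we have $\simul[\cP][H_1] \leq k$. Hence there is a $d_2$-simultaneous $\cP$-representation $(H_2, L_2)$ of $H_1$ with $d_2 \leq k$. By \cref{lem:c1c2}, $G$ then admits a $(d_1 d_2)$-simultaneous $\cP$-representation, so
\[ \simul[\cP][G] \leq \simul[\cP][\cC] \cdot \simul[\cC][G]. \]
This is the claimed bound, with the linear function $f(x) = kx$. A small point to check is the degenerate case $d_1 = 0$: then $G$ is edgeless, and the bound still holds because the construction in the lemma simply yields empty label sets, consistent with $\simul[\cP][G] = 0$.

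For the non-empty variant, the argument is repeated with $\simulnonempty$ in place of $\simul$. The one extra point, which is the only place I expect any care is needed, is that the composed labeling $L(v) = \setcond{\alpha_{ij}}{i \in L_1(v),\, j \in L_2(v)}$ is non-empty whenever both $L_1(v)$ and $L_2(v)$ are non-empty, as the remark after \cref{lem:c1c2} points out. So, taking $(H_1, L_1)$ and $(H_2, L_2)$ both with non-empty label sets, where $d_1 = \simulnonempty[\cC][G]$ and $d_2 \leq \simulnonempty[\cP][\cC]$, the resulting representation has non-empty label sets. This gives
\[ \simulnonempty[\cP][G] \leq \simulnonempty[\cP][\cC] \cdot \simulnonempty[\cC][G], \]
and the converse direction is again trivial, since graphs of $\cC$ have $\simulnonempty[\cC]$ equal to $1$.
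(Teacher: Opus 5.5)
Your proposal is correct and matches the paper's argument: the corollary follows by applying \cref{lem:c1c2} with $\cC_1 = \cC$ and $\cC_2 = \cP$. For the non-empty version, you use the remark that the product labeling is non-empty whenever both input labelings are. Your handling of the trivial converse direction (graphs of $\cC$ have simultaneous $\cC$-number at most $1$) and of the degenerate case $d_1 = 0$ is also fine.
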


Simultaneous $\cC$-numbers with non-empty label sets are not necessarily $\simul$-bounded. For example, the simultaneous $\mathsf{complete}$-number with non-empty label sets is not upper bounded by the edge clique cover number, as witnessed by edgeless graphs. Therefore, the simultaneous $\mathsf{complete}$-number with non-empty label sets is $\simulnonempty$-bounded but not $\simul$-bounded.

Finally we consider the parameters $\distanceto{\cP}$, which for a graph class $\cP$ count the number of vertices which have to be removed from a given graph to obtain a graph in $\cP$.
We have trivial cases in which these parameters are $\simul$-bounded.
These are whenever $\distanceto{\cP}$ is unbounded on complete graphs and when $\distanceto{\cP}$ is bounded on the class of all graphs.
However, we can show that for most interesting graph classes these parameters are not $\simul$-bounded by making use of \cref{obsv:ecc2}.

\begin{lemma}
    Let $\cP$ be a graph class such that $\cP$ admits at least one forbidden induced subgraph $G$.
    Then $\distanceto{\cP}$ is not upper bounded by the edge clique cover number.
\end{lemma}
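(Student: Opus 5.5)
The plan is to fix a forbidden induced subgraph $G$ of $\cP$ and build a family $(G_n)_{n \in \N}$ whose edge clique cover number is bounded by a constant depending only on $G$, while $\distanceto{\cP}(G_n) \geq n$. The natural candidate is the blow-up of $G$ into true twins. Write $V(G) = \set{v_1, \dots, v_k}$. Let $G_n$ be the graph obtained by replacing each $v_i$ by a clique $Q_i$ of size $n$. Two vertices of different cliques $Q_i, Q_j$ are adjacent if and only if $v_iv_j \in E(G)$.

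\emph{Bounding the edge clique cover number.} Take an edge clique cover $C_1, \dots, C_m$ of $G$ with $m \le \abs{E(G)}$. For each $C_j$, the union $\bigcup_{v_i \in C_j} Q_i$ is a clique in $G_n$, since it is a union of cliques that are pairwise completely joined. Every edge between $Q_i$ and $Q_j$ comes from an edge $v_iv_j$, which lies in some $C_\ell$, so it is covered. Every edge inside some $Q_i$ is covered by any $C_\ell$ containing $v_i$, as long as $v_i$ is not isolated in $G$. For each isolated vertex $v_i$ of $G$, we add $Q_i$ itself as an extra clique. Hence the edge clique cover number of $G_n$ is at most $\abs{E(G)} + \abs{V(G)}$, independently of $n$.

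\emph{Bounding the distance from below.} Let $S \subseteq V(G_n)$ with $\abs{S} < n$. Then every $Q_i$ still contains some vertex $q_i \notin S$. The vertices $q_1, \dots, q_k$ induce a copy of $G$ in $G_n - S$, because adjacency between distinct classes mirrors adjacency in $G$. Since $G$ is a forbidden induced subgraph of $\cP$, the graph $G_n - S$ is not in $\cP$. Therefore $\distanceto{\cP}(G_n) \geq n$, which is unbounded while the edge clique cover number stays constant. This proves the lemma.

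The only delicate point is covering the edges inside the twin classes, which is handled by adding the cliques $Q_i$ for the isolated vertices of $G$. I expect no real obstacle. Note also that the argument never uses that $\cP$ is hereditary: it only needs that no graph containing $G$ as an induced subgraph lies in $\cP$.
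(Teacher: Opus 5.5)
Your proposal is correct and is essentially the paper's proof. The paper also blows up each vertex of the forbidden subgraph $G$ into a clique of size $k$ and bounds the edge clique cover number by $\abs{V(G)}+\abs{E(G)}$, using one clique per edge of $G$ plus one per vertex, the latter needed only for isolated vertices. It obtains $\distanceto{\cP}(G_k)\geq k$ from the $k$ disjoint copies of $G$, which is equivalent to your observation that deleting fewer than $k$ vertices leaves a representative in every clique. Your use of an arbitrary edge clique cover of $G$, instead of one clique per edge, is only a cosmetic refinement.
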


\begin{proof}
    Let $G$ be a forbidden induced subgraph for $\cP$.
    Let $n = \abs{V(G)}$ and $m = \abs{E(G)}$.
    For $k \in \N$ we construct a graph $G_k$ by replacing every vertex of $G$ with complete graph of size $k$.
    Note that $\distanceto{\cP}(G_k) \geq k$ holds for all $k \in \N$ as $G_k$ contains $k$ disjoint copies of the forbidden subgraph $G$.
    Furthermore, the edge clique cover number of $G_k$ is bounded by $n+m$.
    An appropriate edge clique cover can be constructed by adding for every vertex of $G$ the clique that replaced it and for every edge of $G$ the two cliques that replaced the vertices incident to that edge.
    Note that the summand $n$ is only needed in the case that $G$ contains isolated vertices.
    As $n+m$ is constant in $k$, this shows that $\distanceto{\cP}$ is not upper bounded by the edge clique cover number.
\end{proof}

This together with \cref{obsv:ecc2} implies the following.

\begin{theorem}
    Let $\cP$ be a graph class such that $\distanceto{\cP}$ is bounded on complete graphs and $\cP$ admits at least one forbidden induced subgraph.
    Then $\distanceto{\cP}$ is neither $\simul$-bounded nor $\simulnonempty$-bounded.
\end{theorem}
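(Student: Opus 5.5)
The theorem follows by combining the preceding lemma with \cref{obsv:ecc2}. First I record that $\distanceto{\cP}$ is bounded on complete graphs, as the hypothesis requires. Next, let $F$ be a forbidden induced subgraph of $\cP$, with $n = \abs{V(F)}$ and $m = \abs{E(F)}$.

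For each $k \in \N$, I use the graphs $G_k$ from the previous proof, obtained by blowing up every vertex of $F$ into a clique $K_k$ and every edge into a complete bipartite join. The previous proof already gives two facts about this family:
\begin{itemize}
    \item $\distanceto{\cP}(G_k) \geq k$, since $G_k$ contains $k$ vertex-disjoint induced copies of $F$;
    \item the edge clique cover number of $G_k$ is at most $n+m$, a constant independent of $k$.
\end{itemize}
So $\distanceto{\cP}$ is unbounded on a family of bounded edge clique cover number.

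The only extra requirement in \cref{obsv:ecc2} is that the family has no isolated vertices. This holds automatically once $k \geq 2$: every vertex of $G_k$ lies in a clique of size $k \geq 2$, so it has a neighbor, even if it comes from an isolated vertex of $F$. I therefore restrict to the subfamily $(G_k)_{k \geq 2}$, on which $\distanceto{\cP}$ is still unbounded.

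With these conditions verified, \cref{obsv:ecc2} applies directly. It yields that $\distanceto{\cP}$ is neither $\simul$-bounded nor $\simulnonempty$-bounded.

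For completeness, I would recall why \cref{obsv:ecc2} holds, since this is the step that carries the actual content:
\begin{itemize}
    \item Take $\cC = \mathsf{complete}$, on which $\distanceto{\cP}$ is bounded by assumption.
    \item By \cref{cor:eccsimulclique}, $\simul[\mathsf{complete}]$ equals the edge clique cover number, so it is bounded on the family.
    \item By \cref{obsv:isolatedvertices2}, $\simulnonempty[\mathsf{complete}]$ coincides with $\simul[\mathsf{complete}]$ on graphs without isolated vertices, so it is bounded on the family as well.
    \item Since $\distanceto{\cP}$ is unbounded there, neither simultaneous number upper bounds $\distanceto{\cP}$, which contradicts both notions of boundedness.
\end{itemize}

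No step poses a real obstacle. The one point needing care is excluding isolated vertices, and taking $k \geq 2$ handles it.
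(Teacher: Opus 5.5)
Your proposal is correct and follows the paper's route exactly. The paper also derives the theorem by combining the preceding lemma with \cref{obsv:ecc2}: blowing up a forbidden induced subgraph into cliques $K_k$ gives $\distanceto{\cP}(G_k) \geq k$ while the edge clique cover number stays at most $n+m$. Your check that $G_k$ has no isolated vertices once $k \geq 2$ is a hypothesis of \cref{obsv:ecc2} that the paper leaves implicit, so it is a worthwhile detail rather than a deviation.
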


The conditions of the theorem hold for every non-trivial completable hereditary graph class $\cC$.
Thus, for each such class $\cC$, we have an example of a parameter which is bounded on that class, but not a lower bound on $\simul[\cC]$.

\subsection{Thinness and Its Variants} \label{subsec:thinness}

Another family of parameters that is related to some graph classes is given by thinness~\cite{mannino2007thinness} and its variants.
These parameters are derived from the structure of interval graphs, i.e., intersection graphs of intervals on the real line.
The following characterization of these graphs is well-known~\cite{olariu1991optimal}: a graph $G$ is an interval graph if and only if there exists an ordering $v_1, \dots, v_n$ of the vertices $V(G)$ such that for each triple $(r,s,t)$ with $r < s < t$ if $v_r v_t \in E(G)$ then $v_s v_t \in E(G)$.
Thinness generalizes this by introducing a partition of $V(G)$ and forcing this property only for triples where the first two vertices belong to the same class.

Formally, a graph $G$ is \emph{$k$-thin}, if there exist an ordering $\sigma = v_1, v_2, \dots, v_n$ of $V(G)$ and a partition of $V(G)$ into $k$ classes $\set{V^1, V^2, \dots, V^k}$ such that, for each triple $(r,s,t)$ with $r < s < t$, if $v_r$ and $v_s$ belong to the same class $V^i$ and $v_t v_r \in E(G)$, then $v_t v_s \in E(G)$. Such an ordering is \emph{consistent} with the partition. 

A well-known subclass of interval graphs is given by proper interval graphs. This class can also be characterized by means of vertex orderings, where the above property for interval graphs must hold for the ordering and its reverse~\cite{looges1993optimal}. This can again be generalized to the parameter proper thinness~\cite{bonomo2019properthinness}.

Formally, a graph $G$ is called \emph{proper $k$-thin} if there exists an ordering $\sigma$ of $V(G)$ and a partition $\set{V^1, \dots, V^k}$ of $V(G)$ such that both $\sigma$ and its reversal are consistent with $\set{V^1, \dots, V^k}$. Such an ordering is called \emph{strongly consistent} with the partition.

Recently, the variants independent thinness and complete thinness have been introduced~\cite{bonomobraberman2022thinnessproduct}, where the sets of a partition are required to induce an independent or complete subgraph. We generalize this notion to $\cP$-thinness for some graph class $\cP$.
A graph $G$ is called \emph{(proper) $k$-$\cP$-thin} if there exist an ordering $\sigma$ of $V(G)$ and a partition $\set{V^1, \dots, V^k}$ such that $\sigma$ is (strongly) consistent with $\set{V^1, \dots, V^k}$ and each $V^i$ induces a graph of $\cP$ in $G$.
By the definition of (proper) $k$-thin graphs, every set of such a partition necessarily induces a (proper) interval graph. Therefore, only subclasses of (proper) interval graphs have to be considered in this definition.

\begin{definition}[Thinness]
    Let $G$ be a graph.
    The minimum $k$ such that $G$ is $k$-thin is called the \emph{thinness} of $G$, denoted $\thin(G)$.
    The minimum $k$ such that $G$ is proper $k$-thin is called the \emph{proper thinness} of $G$, denoted $\propthin(G)$.
    Let $\cP$ be a subclass of (proper) interval graphs. The minimum $k$ such that $G$ is (proper) $k$-$\cP$-thin is called the \emph{(proper) $\cP$-thinness} of $G$, denoted $\thin_\cP(G)$ respectively $\propthin_\cP(G)$.
\end{definition}

We note that (proper) thinness is exactly the (proper) $\cP$-thinness where $\cP$ is the class of (proper) interval graphs. Therefore, we only consider (proper) $\cP$-thinness moving forward.
Furthermore, we only consider hereditary classes $\cP$. Otherwise, for a partition and ordering that are consistent, refinements of that partition may not yield a valid partition. Moreover, requiring $\cP$ to be hereditary makes (proper) $\cP$-thinness monotone.
Using this, we again have the following: if $\cP$ is not completable, then (proper) $\cP$-thinness is not bounded on complete graphs and therefore trivially $\simul$-bounded. We now give a result similar to \cref{thm:chromatic} for (proper) $\cP$-thinness.

\begin{theorem}
    Let $\cP$ be a hereditary completable subclass of (proper) interval graphs.
    Then (proper) $\cP$-thinness is $\simulnonempty$-bounded.
    In particular, $\thin_\cP(G) \leq \thin_\cP (\cC) \cdot 2^{\simulnonemptybr{\cC}{G}}$ and $\propthin_\cP(G) \leq \propthin_\cP (\cC) \cdot 2^{\simulnonemptybr{\cC}{G}}$ for any hereditary completable $\cC$ and any graph $G$.
    
    Moreover, (proper) $\cP$-thinness is $\simul$-bounded if and only if $\cP$ contains all edgeless graphs.
    In this case, $\thin_\cP(G) \leq \thin_\cP (\cC) \cdot 2^{\simulbr{\cC}{G}}$ and $\propthin_\cP(G) \leq \propthin_\cP (\cC) \cdot 2^{\simulbr{\cC}{G}}$ for any hereditary completable $\cC$ and any graph $G$.
\end{theorem}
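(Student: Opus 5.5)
We reuse the ordering of $H$ and refine its partition by label sets, in the spirit of the proof of \cref{thm:chromatic}.

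\textbf{Setup.} Let $\cC$ be hereditary and completable with $\thin_\cP(H) \leq k$ for all $H \in \cC$ (or $\propthin_\cP(H) \leq k$). Let $(H,L)$ be a $d$-simultaneous $\cC$-representation of $G$. Fix an ordering $\sigma$ of $V(H)=V(G)$ and a partition $X_1,\dots,X_k$ that is (strongly) consistent for $H$, with each $H[X_j] \in \cP$. For $G$ we keep the same ordering $\sigma$. The new partition consists of the classes
\[ X_j^A := \setcond{v \in X_j}{L(v) = A} \]
for $1 \leq j \leq k$ and nonempty $A \subseteq \set{1,\dots,d}$. Empty classes are discarded.

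\textbf{The refined partition works.} First, each class induces a graph of $\cP$. All vertices in $X_j^A$ share the nonempty label set $A$, so $G[X_j^A] = H[X_j^A]$. This graph lies in $\cP$ because $\cP$ is hereditary.

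Second, $\sigma$ is consistent with the new partition. Take $r<s<t$ with $v_r, v_s \in X_j^A$ and $v_tv_r \in E(G)$.
\begin{itemize}
\item Since $E(G) \subseteq E(H)$, we have $v_tv_r \in E(H)$.
\item By consistency of $\sigma$ in $H$, this gives $v_tv_s \in E(H)$.
\item Also $L(v_t) \cap A = L(v_t)\cap L(v_r) \neq \emptyset$, and $L(v_s) = A$, so $L(v_t) \cap L(v_s) \neq \emptyset$.
\item Hence $v_tv_s \in E(G)$.
\end{itemize}
The same argument applied to the reversed ordering handles the proper variant.

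With non-empty label sets every vertex is covered, so this gives at most $k(2^d-1) \leq k\cdot 2^{d}$ classes. This proves the $\simulnonempty$-bound.

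\textbf{Allowing empty label sets.} Suppose $\cP$ contains all edgeless graphs. Put all vertices with $L(v)=\emptyset$ into one additional class $V_0$. These vertices are isolated in $G$, so $G[V_0]$ is edgeless and hence in $\cP$. For triples with $v_r, v_s \in V_0$, the premise $v_tv_r \in E(G)$ never holds, so consistency is automatic in both directions. The total number of classes is $k(2^d-1)+1 \leq k2^d$, using $k \geq 1$. This gives the stated $\simul$ bounds.

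\textbf{Converse.} Assume $\cP$ does not contain all edgeless graphs. Since $\cP$ is hereditary, it contains no edgeless graph on more than $\ell$ vertices, for some $\ell$.
\begin{itemize}
\item Choose $\cC=\cP$. It is hereditary and completable, and every graph in it has (proper) $\cP$-thinness $1$.
\item The edgeless graph on $m\ell$ vertices has $\simul[\cP]$ equal to $0$.
\item Every class of a $\cP$-thin partition of it is an edgeless graph in $\cP$, so it has at most $\ell$ vertices. Therefore its (proper) $\cP$-thinness is at least $m$.
\end{itemize}
Since $m$ is arbitrary, the parameter is bounded on $\cP$ but not upper bounded by $\simul[\cP]$, so it is not $\simul$-bounded.

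The only real obstacle is the consistency check. It is essential that every vertex in a class has exactly the same label set: sharing a single label with $v_r$ would not guarantee sharing a label with $v_s$. Grouping by the full label set $A$ is what forces the factor $2^{d}$ rather than $d$.
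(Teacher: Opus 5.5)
Your proposal is correct and follows the paper's proof. Like the paper, you keep the ordering of $H$, refine its (strongly) consistent partition by exact label set, run the same consistency check, and derive the converse from large edgeless graphs as in \cref{thm:chromatic}. The only cosmetic difference is that you put all empty-label vertices into a single extra class, whereas the paper uses one class $V^i_\emptyset$ inside each original class; both stay within the $k\cdot 2^{d}$ bound.
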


\begin{proof}
    Let $k \in \N$ and $\cC$ be a graph class such that $\thin_\cP(H) \leq k$ for all $H \in \cC$.
    Furthermore, let $G$ be a graph with $d = \simulnonempty[\cC][G]$ and $(H,L)$ a $d$-simultaneous $\cC$-representation of $G$ with non-empty label sets.
    Let $\sigma = v_1, \dots, v_n$ and $\set{V^1, \dots, V^k}$ be an ordering and a partition of $V(H) = V(G)$ that verifies $\thin_\cP(H) \leq k$.
    We define for every $1 \leq i \leq k$ and every $\emptyset \neq A \subseteq \set{1,2,\dots,d}$ a set $V_A^i := \setcond{v}{v \in V^i \text{ and } L(v) = A}$ and claim that the ordering $v_1, v_2, \dots, v_n$ together with the partition $\mathcal{V} :=\setcond{V_A^i}{1 \leq i \leq k, \ \emptyset \neq A \subseteq \set{1,\dots,d}}$ verify $\thin_\cP(G) \leq k 2^d$.

    First, observe that $G[V_A^i] = H[V_A^i]$ is an induced subgraph of $H[V^i]$ for every $A$. Furthermore, as $\cP$ is hereditary and $H[V^i]$ induces a graph of $\cP$, so does $G[V_A^i]$.
    Thus, it remains to show that $\sigma$ is consistent with $\mathcal{V}$.

    To this end, let $v_r, v_s, v_t$ be three vertices with $r < s < t$, $v_t v_r \in E(G)$ and $v_r, v_s$ belong to the same class in $\mathcal{V}$.
    By definition, we have that $v_r v_t \in E(H)$ and $v_r$ and $v_s$ are in the same class of the partition $\set{V^1, \dots, V^k}$,
    which implies that $v_t v_s \in E(H)$ holds.
    Furthermore, as $v_r v_t \in E(G)$, we have $L(v_r) \cap L(v_t) \neq \emptyset$.
    As $v_r$ and $v_s$ are in the same class of $\mathcal{V}$ they have the same label set, which implies $L(v_t) \cap L(v_s) \neq \emptyset$.
    Altogether, we obtain $v_t v_s \in E(G)$, which was to be shown.

    Note that this proof works exactly the same for strongly consistent orderings. Furthermore, the proof for $\simul$-boundedness under the condition that $\cP$ contains all edgeless graphs works almost the same. The only difference lies in the definition of $\mathcal{V}$, which now must take into account the case that $A = \emptyset$. Then we have that $G[V_\emptyset^i]$ induces an edgeless graph for every $i$, which by assumption is in $\cP$.

    Any consistent partition and ordering of a graph define a $\cP$-coloring. Therefore, the fact that $\cP$ containing all edgeless graphs being necessary for (proper) $\cP$-thinness to be $\simul$-bounded follows analogously to \cref{thm:chromatic}.
\end{proof}

By choosing $\cP$ as the class of (proper) interval graphs in the above theorem, we obtain the $\simul$-boundedness of both thinness and proper thinness. The bound given in the theorem is tight for thinness and proper thinness with the following argument:
Thinness and proper thinness are bounded by $1$ on complete graphs. By \cref{cor:eccsimulclique}, the edge clique cover number is equal to the simultaneous $\mathsf{complete}$-number, which is logarithmic on complements of matchings~\cite{gregory1982onclique}. On the other hand, thinness, and thus proper thinness, is known to be at least linear on complements of matchings~\cite{mannino2002solving}, yielding an exponential relationship.

There are other interesting variations of thinness, namely precedence (proper) thinness~\cite{bonomobraberman2022precedence} and (proper) mixed-thinness~\cite{balaban2024twin}.
The ideas used to show $\simul$-boundedness of (proper) $\cP$-thinness also work for (proper) mixed-thinness. We omit the details here.
In contrast, we now show that precedence (proper) thinness is not $\simul$-bounded.

A graph is \emph{precedence $k$-thin}, if there exist an ordering $\sigma = v_1, \dots, v_n$ of $V(G)$ and a partition $\set{V^1, \dots, V^k}$ of $V(G)$ such that $\sigma$ is consistent with $\set{V^1, \dots, V^k}$ and the vertices of each set $V^i$ appear consecutively in $\sigma$. The minimum $k$ such that $G$ is precedence $k$-thin is called the \emph{precedence thinness} of $G$, denoted $\precthin(G)$. One can define \emph{precedence $\cP$-thinness} and \emph{precedence proper $\cP$-thinness}, denoted $\precthin_\cP(G)$ and $\precpropthin_\cP(G)$ respectively, analogously to the standard definition. 

We make use of \cref{lem:maxlower} to show that precedence $\cP$-thinness is not $\simul$-bounded for suitable graph classes $\cP$.
Bonomo-Braberman et al.~\cite{bonomobraberman2024thinness} have given a formula for the precedence thinness of disjoint unions of graphs: $\precthin(G_1 \cup G_2) = \precthin(G_1) + \precthin(G_2) - 1$. The upper bound follows easily by concatenating the orders of $G_1$ and $G_2$ and taking the union of the last set in the partition related to $G_1$ with the first set in the partition related to $G_2$. For the lower bound, they show that for every consistent partition and ordering of $G_1 \cup G_2$ that satisfies the precedence constraint there is another consistent partition and ordering that satisfies the precedence constraint, such that at most one set of the partition is \emph{mixed}. That is, at most one set of the partition contains vertices of both $G_1$ and $G_2$. This immediately implies the lower bound. They show this by transforming a partition and ordering by splitting apart the left-most mixed set $V^i$ according to the ordering, taking one part of the set as a new set in the partition replacing $V^i$ and adding the other part to $V^{i+1}$. They show that the part added to $V^{i+1}$ has no edges to that set, therefore the graph induced by it is the disjoint union of the graphs induced by the two sets. During that process the left-most mixed set moves to the right and, therefore, at most one set can be mixed at the end of the process. This proof works for precedence $\cP$-thinness under certain conditions on $\cP$, namely that the process described above does not impact membership in $\cP$. We need $\cP$ to be hereditary and closed under disjoint union. This observation gives us a slight strengthening of the formula given in~\cite{bonomobraberman2024thinness}. 

\begin{lemma} \label{lem:precthinunion}
    Let $\cP$ be a hereditary subclass of interval graphs that is closed under disjoint union.
    Then $\precthin_\cP(G_1 \cup G_2) = \precthin_\cP(G_1) + \precthin_\cP(G_2) - 1$ for any two graphs $G_1$ and $G_2$.
\end{lemma}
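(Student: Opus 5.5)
We prove the two inequalities separately, following the argument of Bonomo-Braberman et al.~\cite{bonomobraberman2024thinness} and checking at each step that membership in $\cP$ is preserved.

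\emph{Upper bound.} Let $\sigma_1$ with partition $V_1^1,\dots,V_1^{k_1}$ be optimal for $G_1$, and $\sigma_2$ with $V_2^1,\dots,V_2^{k_2}$ optimal for $G_2$. Both are consistent and satisfy the precedence constraint, with the classes numbered in order of appearance. We concatenate $\sigma_1\sigma_2$ and merge $V_1^{k_1}$ with $V_2^1$; the other classes stay as they are. The merged class is consecutive in the concatenated order. It induces $G_1[V_1^{k_1}]\cup G_2[V_2^1]$, which lies in $\cP$ because $\cP$ is closed under disjoint union. For consistency, take a triple $r<s<t$ with $v_r,v_s$ in a common class and $v_tv_r\in E$. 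The edge forces $v_r,v_t$ to lie in the same $G_j$. If $v_r$ and $v_s$ also lie in that $G_j$, we are done by consistency of $\sigma_j$. Otherwise $v_r,v_s$ lie in the merged class with $v_r\in G_1$ and $v_s\in G_2$, so $v_t$ comes after $v_s$ and lies in $G_2$, contradicting $v_tv_r\in E$. This gives $k_1+k_2-1$ classes.

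\emph{Lower bound.} Take an optimal consistent precedence partition $V^1,\dots,V^k$ of $G_1\cup G_2$, with classes in order. Call a class \emph{mixed} if it meets both $V(G_1)$ and $V(G_2)$. We repeatedly take the leftmost mixed class $V^i$ with $i<k$ and split it at a suitable point of $\sigma$ into a left part $A$ and a right part $B$. Then $V^i$ is replaced by $A$ and $B$ is appended to $V^{i+1}$, moving $B$ to the front of $V^{i+1}$ in the order, or keeping its position if it is already adjacent.

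The heart of the argument, taken from the cited paper, is to choose the split so that consistency is preserved and $B$ has no edges to $V^{i+1}$. The split is chosen with respect to which component the later neighbours belong to; for instance, $B$ is the set of vertices of $V^i$ from the last component switch onwards. Consistency of the original ordering then forces that no vertex of $B$ is adjacent to $V^{i+1}$ across the component boundary.

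Given this, $G[V^{i+1}\cup B]$ is the disjoint union $G[V^{i+1}]\cup G[B]$. Since $\cP$ is hereditary, $G[B]$ and $G[A]$ lie in $\cP$, and since $\cP$ is closed under disjoint union, so does the new class. Each step keeps the number of classes and moves the leftmost mixed class strictly to the right, so the process terminates with at most one mixed class, say $V^j$.

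Now restrict the final partition to $G_1$ and to $G_2$, keeping nonempty classes in the induced order. Restrictions of consistent precedence partitions remain consistent and consecutive, and the classes lie in $\cP$ because $\cP$ is hereditary. Every class except $V^j$ contributes to only one side, so the two restrictions have at least $\precthin_\cP(G_1)$ and $\precthin_\cP(G_2)$ classes respectively, summing to at most $k+1$. Hence $k\ge \precthin_\cP(G_1)+\precthin_\cP(G_2)-1$.

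The main obstacle is the splitting step: choosing $A$ and $B$ so that the modified ordering stays consistent and $B$ is anticomplete to $V^{i+1}$. I would reproduce the case analysis from~\cite{bonomobraberman2024thinness} verbatim. Its only graph-theoretic inputs are consistency and edge relations, so the sole extra ingredient here is the closure of $\cP$ under induced subgraphs and disjoint unions.
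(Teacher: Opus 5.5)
The upper bound and the final counting via restriction to $G_1$ and $G_2$ are correct and match the paper's sketch. The gap is the splitting step, which you rightly call the heart of the lower bound. You set it up the wrong way round, and no split of the form you describe (left part stays, right part is appended to $V^{i+1}$) works in general.

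Let $R$ be the maximal final segment of $V^i$ (in $\sigma$) lying in a single $G_j$; this is your $B$. Let $F=V^i\setminus R$. By consistency, every vertex of $V^i$ with a neighbour after $V^i$ lies in $R$. So $R$ is exactly the part that must not be pushed into $V^{i+1}$.

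Here is a concrete failure. Take $\sigma=y,r,z,w,u$ with classes $V^1=\{y,r\}$, $V^2=\{z,w\}$, $V^3=\{u\}$. Let $y\in V(G_2)$ and $r,z,w,u\in V(G_1)$, with $rz$ and $ru$ the only edges.
\begin{itemize}
\item This is consistent, all classes are edgeless, and $V^1$ is the leftmost mixed class.
\item The only nontrivial prefix/suffix split is $A=\{y\}$, $B=\{r\}$.
\item After this split, the triple $r,z,u$ violates consistency, since $ru\in E$ but $zu\notin E$.
\item Also $G[\{r,z,w\}]$ contains the edge $rz$, so it is not in $\cP=\mathsf{edgeless}$.
\end{itemize}
Your justification, that $B$ has no edges to $V^{i+1}$ ``across the component boundary'', is vacuous, because no edges cross components at all. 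What is needed is that the moved part has no neighbours in $V^{i+1}$ or in any later class. Moreover, your $A=F$ is still mixed whenever $V^i$ switches components more than once. So the claim that the leftmost mixed class moves strictly right also fails.

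The step that works moves $F$ instead, which requires reordering. Consistency gives two facts.
\begin{itemize}
\item[(a)] No $v\in F$ has a neighbour $x$ after $V^i$. Otherwise every later vertex of $V^i$ would be adjacent to $x$, hence lie in $v$'s component, forcing $v\in R$.
\item[(b)] There are no edges between $F$ and $R$. The last vertex $f^*$ of $F$ lies in $G_{3-j}$ by maximality of $R$, so it has no neighbour in $R$. An edge from an earlier $v\in F$ to $x\in R$ would, by consistency with $f^*$ as middle vertex, force $f^*x\in E$, which is impossible.
\end{itemize}
Now reorder the block of $V^i$ as $R$ followed by $F$, keep $R$ as the $i$-th class, and put $F$ at the front of $V^{i+1}$. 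Every consistency condition that is new after this change concerns one of two kinds of triple. Either the first vertex is in $R$ and the third in $F$, which is vacuous by (b). Or the first vertex is in $F$ and the third is in $V^{i+1}$ or later, which is vacuous by (a). Furthermore, $G[F\cup V^{i+1}]=G[F]\cup G[V^{i+1}]\in\cP$ by heredity and closure under disjoint union, and the new $i$-th class $R$ is unmixed. With this step in place of yours, your termination and counting arguments go through.
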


This result shows that precedence $\cP$-thinness for suitable $\cP$ satisfies the conditions of \cref{lem:maxlower} for a parameter value of $k = 2$.

\begin{theorem}
    Let $\cP$ be a hereditary completable subclass of interval graphs that is closed under disjoint union. Then $\precthin_\cP$ is neither $\simul$-bounded nor $\simulnonempty$-bounded.
\end{theorem}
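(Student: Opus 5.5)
The plan is to apply \cref{lem:maxlower} with $p = \precthin_\cP$, the class $\cP$ itself as the hereditary completable class closed under disjoint union, and $k = 2$. This requires three hypotheses to be checked.

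First, $\precthin_\cP$ is bounded on $\cP$. Every graph of $\cP$ is precedence $1$-$\cP$-thin: take the trivial partition $\set{V(G)}$, which is consecutive in any ordering and induces a graph of $\cP$. Consistency of a single-class partition means exactly that $G$ has an interval ordering. Since $\cP$ is a subclass of interval graphs, such an ordering exists by the characterization of~\cite{olariu1991optimal}. Hence $\precthin_\cP(\cP) = 1$.

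Second, $\precthin_\cP$ is unbounded on the class of all graphs. I would use \cref{lem:precthinunion} directly. Fix a graph $G_0$ with $\precthin_\cP(G_0) \geq 2$. Then the disjoint union $nG_0$ satisfies $\precthin_\cP(nG_0) = n(\precthin_\cP(G_0)-1)+1 \geq n+1$. Such a $G_0$ exists, for instance any non-interval graph such as $C_4$. Its $\cP$-thinness, and hence its precedence $\cP$-thinness, exceeds $1$, because a single-class consistent partition would certify that it is an interval graph.

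Third, the growth condition. For $G_1, G_2$ with $\precthin_\cP(G_1), \precthin_\cP(G_2) \geq 2$, \cref{lem:precthinunion} gives
\[
\precthin_\cP(G_1 \cup G_2) = \precthin_\cP(G_1) + \precthin_\cP(G_2) - 1 \geq \max\set{\precthin_\cP(G_1), \precthin_\cP(G_2)} + 1 .
\]
This is strictly larger than the maximum, so the condition holds with $k=2$.

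With all hypotheses verified, \cref{lem:maxlower} yields that $\precthin_\cP$ is neither $\simul$-bounded nor $\simulnonempty$-bounded. The main point needing care is that \cref{lem:precthinunion} applies, which requires $\cP$ to be hereditary and closed under disjoint union; both are assumptions of the theorem. Beyond that, the only subtle step is ensuring that some graph has value at least $2$, which the non-interval example settles.
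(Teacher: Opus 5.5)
Your proof is correct and follows the paper's own argument: the paper also obtains the theorem by noting that \cref{lem:precthinunion} makes $\precthin_\cP$ satisfy the hypotheses of \cref{lem:maxlower} with $k=2$. You additionally spell out the checks the paper leaves implicit, namely boundedness on $\cP$, a graph such as $C_4$ with value at least $2$, and strict growth under disjoint union.
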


We note that the proof of \cref{lem:precthinunion} can not easily be adapted to precedence proper thinness. However, it seems reasonable that a similar formula would also hold for precedence proper thinness. 
To show that this parameter is not $\simul$-bounded, we therefore directly give a suitable family of graphs with bounded simultaneous $\mathsf{proper}\text{-}\mathsf{interval}$-number and unbounded precedence proper thinness.
This family of graphs is the family $(nK_{1,3})_{n \in \N}$, where $nK_{1,3}$ is the disjoint union of $n$ \emph{claws}. Obviously, a single claw has bounded simultaneous $\mathsf{proper}\text{-}\mathsf{interval}$-number and it does not increase when taking disjoint unions. Thus, it suffices to show that the precedence proper thinness of this family is unbounded.

\begin{lemma}
    Let $n \in \N$. Then $\precpropthin(nK_{1,3}) = n+1$.
\end{lemma}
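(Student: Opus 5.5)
The plan is to prove the two inequalities separately: an explicit construction for $\precpropthin(nK_{1,3}) \leq n+1$, and a counting argument over class indices for the lower bound.

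\emph{Upper bound.} For the $i$-th claw let $c_i$ be its center and $a_i, b_i, d_i$ its leaves. I would use the ordering
\[ a_1, c_1, b_1, d_1, a_2, c_2, b_2, d_2, \dots, a_n, c_n, b_n, d_n \]
with the consecutive classes
\[ \set{a_1,c_1},\ \set{b_1,d_1,a_2,c_2},\ \dots,\ \set{b_{n-1},d_{n-1},a_n,c_n},\ \set{b_n,d_n}. \]
There are exactly $n+1$ classes, and the vertices of each class are consecutive.

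For strong consistency I would check triples $(r,s,t)$ with $v_r,v_s$ in a common class, in both directions. The only edges are center--leaf edges inside one claw, so most triples are vacuous. In the forward direction, a later vertex adjacent to $a_i$ would have to be $c_i$, but $c_i$ is not later than $s$ when $s\neq a_i$ in the same class. Nothing after $b_i,d_i$ is adjacent to them. In the backward direction, the only relevant case is $t=c_i$, $s=b_i$, $r=d_i$. There $c_i$ is adjacent to both, so the condition holds. The mixed classes cause no problem, because vertices of different claws are never adjacent.

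\emph{Lower bound.} Fix a strongly consistent ordering with a consecutive partition into classes $V^1,\dots,V^k$, indexed in order of appearance. To each claw $X$ assign the interval $I_X=[\min,\max]$ of indices of the classes that meet $X$.

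First, a single claw is not a proper interval graph, so it cannot lie inside one class. The restriction of the representation to a single claw is still a strongly consistent consecutive partition, so $|I_X|\geq 2$, i.e. $I_X$ has positive length.

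Second, and this is the crux, two intervals $I_X, I_Y$ share at most one index. That index is then an endpoint of both intervals, since no claw can be nested inside another's span. Granting this, the $n$ intervals are non-degenerate integer intervals meeting only at endpoints. Ordering them gives $k\geq n+1$.

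\emph{Main obstacle.} The hardest step is the second claim. Suppose claws $X$ and $Y$ both meet two classes $V^j$ and $V^{j'}$ with $j<j'$. I would locate the center of $X$ relative to a vertex $y$ of $Y$ lying in a common class with some vertex of $X$. Then I would use forward consistency, with a later leaf of $X$ adjacent to the center, or backward consistency, with an earlier leaf. The aim is to force an edge between $y$ and a vertex of $X$, which is impossible since they lie in different claws.

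Concretely, I would first try the following. Since $X$ has three leaves and a center, pick $x\in X$ in the same class as some $y\in Y$ with $x$ adjacent to a vertex $x'\in X$ lying on the far side of $y$. Then strong consistency applied to $(x,y,x')$ or its reverse gives $yx'\in E$, a contradiction. The case analysis on whether the center of $X$ precedes or follows $y$, together with the consecutiveness of classes, is what should guarantee such an $x$ exists whenever the spans overlap in two classes.

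If this direct argument gets messy, the fallback is to mimic the splitting argument behind \cref{lem:precthinunion}. That argument would normalize the partition so that at most one class is mixed between consecutive claws, and the count $n+1$ would then follow by induction on $n$.
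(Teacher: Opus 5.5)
Your upper bound is correct and is only a cosmetic variant of the paper's construction. The paper lists each center before its leaves and groups the three leaves of claw $i$ with the center of claw $i+1$.

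The lower bound has a genuine gap: your crux claim is false. Two spans can share two indices, and spans can even be nested. Take $n=2$, claws with centers $c_X,c_Y$ and leaves $x_1,x_2,x_3$ and $y_1,y_2,y_3$. Use the ordering $x_3,c_X,x_1,y_1,x_2,y_2,c_Y,y_3$ with the consecutive classes $\{x_3,c_X\},\{x_1,y_1\},\{x_2,y_2\},\{c_Y,y_3\}$.
\begin{itemize}
\item In each mixed class the $X$-leaf comes first and its only neighbor $c_X$ lies to its left, while the $Y$-leaf comes second and its only neighbor $c_Y$ lies to its right.
\item The outer classes are mirror images of your class $\{a_1,c_1\}$.
\item So in every class, no triple in either direction satisfies the premise of the consistency condition, and the ordering is strongly consistent.
\end{itemize}
Here $I_X=[1,3]$ and $I_Y=[2,4]$ share the indices $2$ and $3$, and both claws genuinely meet both middle classes. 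Similarly, the ordering $c_X,y_1,c_Y,y_2,y_3,x_1,x_2,x_3$ with classes $\{c_X\},\{y_1,c_Y\},\{y_2,y_3\},\{x_1,x_2,x_3\}$ is strongly consistent and has $I_Y=[2,3]\subset I_X=[1,4]$.

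In the first example, the vertex $x$ you look for in your ``main obstacle'' paragraph (one whose neighbor lies on the far side of $y$) does not exist. So that case analysis cannot succeed, and ``intervals meeting only at endpoints'' is not a property of all representations. The fallback is not a proof either. You give no details, and the paper explicitly remarks that the splitting argument behind \cref{lem:precthinunion} does not easily carry over to precedence proper thinness.

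The missing idea is to bound the number of claws per \emph{class} rather than the overlap per pair of claws.
\begin{itemize}
\item By consecutiveness, any neighbor $z$ of $x\in V^p$ outside $V^p$ lies entirely to the left or entirely to the right of $V^p$.
\item If $z$ lies to the right, strong consistency applied to $(x,y,z)$ forces every $y\in V^p$ after $x$ to be adjacent to $z$, hence to lie in the claw of $x$. The symmetric statement for neighbors to the left follows from the reversed ordering.
\item Hence at most one claw has a vertex in $V^p$ with a neighbor to the right of $V^p$, and at most one has a vertex with a neighbor to the left.
\item Every claw meeting $V^p$ also meets another class and is connected, so it has such an outgoing edge in some direction.
\item Therefore every class meets at most two claws, and the first and last classes meet only one.
\end{itemize}
Double counting claw--class incidences gives $2n\le 2(k-2)+2$, that is, $k\ge n+1$. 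This is essentially the paper's argument.
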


\begin{proof}
    Let $n \in \N$ and $nK_{1,3}$ be given.
    We denote the vertices of the $k$-th copy of $K_{1,3}$ by $v_1^{(k)}, v_2^{(k)}, v_3^{(k)}, v_4^{(k)}$, where $v_1^{(k)}$ has degree $3$ and the other three vertices have degree $1$.
    We give an ordering and a partition verifying that $\precpropthin(K_{1,3}) \leq n+1$.
    We define the ordering as $\sigma := v_1^{(1)}, v_2^{(1)}, v_3^{(1)}, v_4^{(1)}, v_1^{(2)}, \dots, v_4^{(n)}$ and the sets of the partition as follows: $V^1 := \set{ v_1^{(1)}}$, $V^{n+1} := \set{v_2^{(n)}, v_3^{(n)}, v_4^{(n)}}$ and $V^k := \set{ v_2^{(k-1)}, v_3^{(k-1)}, v_4^{(k-1)}, v_1^{(k)}}$ for $2 \leq k \leq n$. It can easily be seen that the ordering is strongly consistent with the partition and the sets of the partition are consecutive in the ordering.

    It remains to show that $\precpropthin(nK_{1,3}) \geq n+1$.
    Suppose that we have a strongly consistent ordering $\sigma$ and partition $\mathcal{V}$ that satisfies the precedence constraint.
    First, we note that the vertex sets of each copy of $K_{1,3}$ have to be placed into at least two sets of the partition, as a $K_{1,3}$ is not a proper interval graph.
    Next, we show that any set of $\mathcal{V}$ can contain vertices of at most two claws. Assume to the contrary that there is a set $V^p$ which contains vertices of three claws. Without loss of generality, we have that $V^p$ contains $v_i^{(1)}, v_j^{(2)}, v_k^{(3)}$ for some $i,j,k$ and we have $v_i^{(1)} < v_j^{(2)} < v_k^{(3)}$ in $\sigma$. This can be achieved by renaming the copies of $K_{1,3}$. Furthermore, $v_j^{(2)}$ must have an edge going out of $V^p$. Without loss of generality, this edge goes to a vertex in set $V^q$, which lies to the right of $V^p$ in $\sigma$. This can be achieved by possibly considering the reverse of $\sigma$.
    Thus, we have a vertex $z$ in $V^q$ with $v_j^{(2)}z \in E(nK_{1,3})$. Furthermore, we have $v_j^{(2)} < v_k^{(3)} < z$ in $\sigma$ and $v_j^{(2)}, v_k^{(3)} \in V^p$, which by consistency implies that $v_k^{(3)}z \in E(nK_{1,3})$, a contradiction.
    Thus, each set of $\mathcal{V}$ only contains vertices of at most two claws.
    This together with the fact that each claw must have vertices in at least two sets imply that there are at least $n$ sets in $\mathcal{V}$ by the pigeonhole principle.
    We furthermore claim that the first and last set of $\mathcal{V}$ ordered according to $\sigma$ can only contain vertices of one claw, which implies the lower bound $n+1$. We show this only for the first set $V^1$, as it follows by symmetry for the last set.
    Assume there are vertices of two claws contained in $V^1$. Without loss of generality, let $v_i^{(1)}, v_j^{(2)}$ with some $i,j$ be contained in $V^1$ such that $v_i^{(1)} < v_j^{(2)}$ in $\sigma$. Then $v_i^{(1)}$ must have a neighbor in a set $V^k$ which is to the right of $V^1$ in $\sigma$. Again, by consistency, $v_j^{(2)}$ must have the same neighbor, which is impossible. Thus, $V^1$ contains vertices of only one claw, which gives the desired bound.
\end{proof}

As mentioned above, this implies the following.

\begin{theorem}
    Precedence proper thinness is neither $\simul$-bounded nor $\simulnonempty$-bounded.
\end{theorem}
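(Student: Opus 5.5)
The plan is to exhibit a hereditary completable class on which precedence proper thinness is bounded, together with a family of graphs that has bounded simultaneous number for that class but unbounded precedence proper thinness. The natural choice is $\cC = \mathsf{proper}\text{-}\mathsf{interval}$, the class of proper interval graphs, and the family $(nK_{1,3})_{n \in \N}$. The class is hereditary and contains all complete graphs. Every proper interval graph has precedence proper thinness $1$: take a single partition class together with a strongly consistent ordering (an umbrella/proper interval ordering), which is trivially consecutive. Hence $\precpropthin$ is bounded on $\cC$.

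Next I will bound $\simul[\cC][nK_{1,3}]$ by a constant. A single claw has edge clique cover number $3$, so by \cref{cor:eccsimulclique} it has a $3$-simultaneous $\mathsf{complete}$-representation. Since complete graphs are proper interval graphs, this is also a $3$-simultaneous $\cC$-representation, and none of its label sets is empty. For the disjoint union, let $H$ be the disjoint union of $n$ copies of $K_4$. This graph is a proper interval graph because the class is closed under disjoint union. Reuse the same three labels in every copy. Vertices in different copies are non-adjacent in $H$, so no spurious edges arise, and $(H,L)$ represents $nK_{1,3}$ with $3$ non-empty labels. Therefore both $\simul[\cC][nK_{1,3}]$ and $\simulnonempty[\cC][nK_{1,3}]$ are at most $3$ for all $n$.

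Finally, the preceding lemma gives $\precpropthin(nK_{1,3}) = n+1$, which is unbounded. So $\precpropthin$ is bounded on the hereditary completable class $\cC$ but is not upper bounded by $\simulnonempty[\cC]$, and hence also not by $\simul[\cC] \le \simulnonempty[\cC]$. This shows that it is neither $\simulnonempty$-bounded nor $\simul$-bounded. The only step requiring care is checking that proper interval graphs are closed under disjoint union and that disjoint copies do not interact in the representation; both are immediate. The substantive difficulty was already handled in the preceding lemma.
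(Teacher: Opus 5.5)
Your proposal is correct and follows essentially the same route as the paper: the class of proper interval graphs, the family $(nK_{1,3})_{n \in \N}$ with simultaneous number bounded by a constant, and the lemma $\precpropthin(nK_{1,3}) = n+1$. You also spell out the bounded-representation step (three labels on $nK_4$), which the paper only calls obvious.
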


\subsection{Binary Decomposition Trees} \label{subsec:binary}

Several graph parameters can be defined by using binary decomposition trees.
A \emph{binary decomposition tree} of a graph $G$ is a pair $\br{\br{T,r}\!, \delta}$ with a binary rooted tree $\br{T,r}$ and a bijection $\delta : V(G) \to \Lambda(T)$, where $\Lambda(T)$ denotes the set of leaves of $T$.
For every $a \in V(T)$, let $\Lambda_a \subseteq \Lambda(T)$ be the leaves with ancestor $a$ and define $V_a := \setcond{\delta^{-1}(v)}{v \in \Lambda_a}$. A \emph{caterpillar decomposition} is a binary decomposition tree $((T,r), \delta)$ where $T$ is a caterpillar, that is, a tree where the removal of all leaves results in a path.

\begin{definition} \label{def:dectree}
    Let $f = (f_G)_{G \in \bbG}$ be a family of functions where $f_G : \cP(V(G)) \to \bbR$ is a set function on $V(G)$ for a graph $G \in \bbG$.
    The \emph{$f_G$-width} of a binary decomposition tree $\br{\br{T,r}\!, \delta}$ of a graph $G$ is the maximum value of $f_G\!\br{V_a}$ over all $a \in V(T)$ and the \emph{$f$-width} of $G$ is the minimum $f_G$-width over all binary decomposition trees of $G$.
    The \emph{linear $f$-width} of $G$ is defined analogously using caterpillar decompositions.
\end{definition}

Binary decomposition trees are closely related to \emph{branch decompositions}. These are defined using unrooted ternary trees where the vertex set which is given as input to $f_G$ is obtained by removing any edge $e$ from $T$ and taking the preimage of the leaves of a connected component of $T-e$. Thus, to be well defined, functions used with branch decompositions must be symmetric. This restriction is not necessary for binary decomposition trees as the ancestor relation directly defines the input set of the function $f_G$.
However, if a given family of functions $(f_G)_{G \in \bbG}$ is symmetric, it defines the same parameter using binary decomposition trees and branch decompositions.

We show a sufficient condition for $\simul$-boundedness of parameters defined via binary decomposition trees, which only depends on the used family of functions.

\begin{lemma} \label{lem:bintree}
    Consider a family of functions $f = (f_G)_{G \in \bbG}$.
    If for every graph $G$, every $d$-simultaneous $\cC$-representation $(H,L)$ of $G$ for some hereditary completable graph class $\cC$, and every vertex set $A \subseteq V(G) = V(H)$, it holds that $f_G(A) \leq g\!\br{d, f_H(A)}$ for some non-decreasing function $g : \N \times \bbR \to \bbR$, then (linear) $f$-width is $\simul$-bounded.
\end{lemma}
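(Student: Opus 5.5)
The plan is to transfer an optimal decomposition of $H$ directly to $G$, since the width of a binary decomposition tree depends only on the tree, the leaf bijection, and the set function evaluated on the sets $V_a$. Fix a hereditary completable class $\cC$ on which (linear) $f$-width is bounded, say by $k$, and a graph $G$ with $d = \simul[\cC][G]$. Take a $d$-simultaneous $\cC$-representation $(H,L)$ of $G$. Since $H \in \cC$, it has a binary decomposition tree $((T,r),\delta)$ of $f_H$-width at most $k$; in the linear case we take a caterpillar decomposition.

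Because $V(H) = V(G)$, the same pair $((T,r),\delta)$ is also a binary decomposition tree of $G$. It has the same shape, so it is still a caterpillar in the linear case. Moreover, for each node $a \in V(T)$ the set $V_a$ is literally the same vertex set in both graphs. Applying the hypothesis to $A = V_a$ gives $f_G(V_a) \le g(d, f_H(V_a))$. Since $g$ is non-decreasing and $f_H(V_a) \le k$, we get $f_G(V_a) \le g(d,k)$. Taking the maximum over $a$ shows that the (linear) $f$-width of $G$ is at most $g(d,k) = g(\simul[\cC][G], k)$.

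Since $k$ is a constant depending only on $\cC$, the function $d \mapsto g(d,k)$ witnesses that (linear) $f$-width is upper bounded by $\simul[\cC]$. The converse direction in the definition of $\simul$-boundedness is immediate. If $f$-width is upper bounded by $\simul[\cC]$ via some function $h$, then every $H \in \cC$ with at least one edge has $\simul[\cC][H] \le 1$, using the representation $(H, L)$ with $L \equiv \set{1}$. Edgeless graphs in $\cC$ have $\simul[\cC] = 0$. Hence the width is at most $\max\{h(0),h(1)\}$ on $\cC$.

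The only point needing care is how the order of quantifiers in the hypothesis interacts with monotonicity of $g$ in its second argument. The bound for each node must be uniform in $a$ and must depend on $G$ only through $d$ and $f_H(V_a)$; the hypothesis gives exactly this, so no real obstacle is expected. The genuine work will be in verifying the hypothesis for concrete functions such as mim-width, which lies beyond this lemma.
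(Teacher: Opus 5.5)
Your proof is correct and matches the paper's: you reuse an $f_H$-optimal binary decomposition tree (caterpillar, in the linear case) of $H$ for $G$, and bound each $f_G(V_a)$ by $g(d,k)$ using the monotonicity of $g$. Your explicit check of the converse direction, via $\simul[\cC][H] \leq 1$ for $H \in \cC$, is a harmless addition that the paper treats as clear.
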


\begin{proof}
    Consider some hereditary completable graph class $\cC$, such that (linear) $f$-width is bounded by $k$ on graphs in $\cC$.
    Let $G$ be some graph with $d = \simul[\cC][G]$ and let $(H,L)$ be a $d$-simultaneous $\cC$-representation.
    Let $\br{(T,r),\delta}$ be a binary decomposition tree (caterpillar decomposition) of $f_H$-width $\leq k$.
    Then for every $a \in V(T)$ we have $f_G(V_a) \leq g\!\br{d, f_H(V_a)} \leq g(d, k)$, which shows that the $f_G$-width of $\br{(T,r), \delta}$ and, thus, the $f$-width of $G$ are bounded by $g(d,k)$.
\end{proof}

Eiben et al.~\cite{eiben2022unifying} have given a framework to define parameters via binary decomposition trees using obstructions. That is, a family of obstructions is given and $f_G(A)$ outputs the size of the largest such obstruction found in the cut defined by $A$.
Bergougnoux et al.~\cite{bergougnoux2025algorithmic} have shown that this framework produces, up to functional equivalence, eleven non-trivial parameters defined by five families of obstructions and their combinations.
We show for four of these five families of obstructions that the parameters defined using them are $\simul$-bounded. This will also lead to the $\simul$-boundedness of the combination of these obstructions. We show that the fifth obstruction defines a parameter that is not $\simul$-bounded.
Note that each obstruction is a family of \emph{balanced} bipartite graphs, that are bipartite graphs where the two sets of the bipartition have the same size. Furthermore, for each $A \subseteq V(G)$ we consider the obstructions only in the bipartite subgraph $G[A,\overline{A}]$ of $G$ which contains only the edges between $A$ and $\overline{A} := V(G) \setminus A$ in $G$.

We start with the obstruction of matchings, which gives the well-known parameter mim-width. A \emph{matching} of size $n$ is the disjoint union of $n$ edges, in other words, the graph~$nK_2$.

\begin{definition}[Mim-width]
    For a graph $G$ let $mim_G : \cP(V(G)) \to \N$ be the function with $mim_G(A)$ for $A \subseteq V(G)$ being the size of a maximum induced matching in the subgraph $G\!\ekbr{A, \overline{A}}$.
    The \emph{mim-width} and \emph{linear mim-width} of $G$ are defined using \cref{def:dectree} with $f = (mim_G)_{G \in \bbG}$.
\end{definition}

\begin{theorem} \label{thm:mim}
    Let $d \in \N$, $G$ be a graph, and $(H,L)$ be a $d$-simultaneous $\cC$-representation of~$G$ for some hereditary completable graph class $\cC$.
    Then $mim_G(A) \leq d \cdot mim_H(A)$ for any vertex set $A \subseteq V(G)$. 
    Thus, (linear) mim-width is $\simul$-bounded.
\end{theorem}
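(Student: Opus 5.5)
Fix a vertex set $A \subseteq V(G)$ and let $M = \{a_1b_1, \dots, a_mb_m\}$, with $a_j \in A$ and $b_j \in \overline{A}$, be a maximum induced matching in $G[A,\overline{A}]$, so $m = mim_G(A)$. Our plan is to split $M$ into at most $d$ groups by labels and show that each group is an induced matching in $H[A,\overline{A}]$. Once this is done, the remaining claims follow directly from \cref{lem:bintree} with $g(d,x) = d\cdot x$, which is non-decreasing.

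For each edge $a_jb_j \in M$ we have $L(a_j) \cap L(b_j) \neq \emptyset$, so we fix one label $\ell_j \in L(a_j)\cap L(b_j)$. For each label $i \in \{1,\dots,d\}$ let $M_i := \setcond{a_jb_j \in M}{\ell_j = i}$. These sets partition $M$, so some $M_i$ has at least $m/d$ edges.

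Next we check that $M_i$ is an induced matching in $H[A,\overline{A}]$. Every edge of $M_i$ lies in $E(G) \subseteq E(H)$, so $M_i$ is a matching in $H[A,\overline{A}]$. Now take two distinct edges $a_jb_j, a_kb_k \in M_i$. All four endpoints contain label $i$, so any two of them have intersecting label sets. Hence, for the cross pairs $a_jb_k$ and $a_kb_j$, an edge in $H$ would be an edge in $G$. This contradicts $M$ being induced in $G[A,\overline{A}]$. (Only cross pairs between $A$ and $\overline{A}$ matter, since the bipartite graph keeps only $A$–$\overline{A}$ edges.) Therefore $mim_H(A) \ge |M_i| \ge m/d$, which gives $mim_G(A) \le d\cdot mim_H(A)$.

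The only delicate point is the choice of a single representative label per matching edge so that the groups are disjoint; the rest is routine. Linear mim-width is covered by the same lemma.
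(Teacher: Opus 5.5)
Your proposal is correct and is essentially the paper's argument: the paper shows that the matching edges whose label intersection $L(u)\cap L(v)$ contains a fixed label $\ell$ form an induced matching in $H[A,\overline{A}]$ (so there are at most $mim_H(A)$ of them), which is the same observation you make after fixing one representative label per edge. The only omission is the trivial case $d=0$, which you should state separately because your pigeonhole step divides by $d$; in that case $G$ is edgeless and $m=0$.
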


\begin{proof}
    Let $G$ be a graph with a $d$-simultaneous $\cC$-representation $(H,L)$.
    If $d = 0$, then $G$ is edgeless and the inequality holds trivially.
    Hence, we may assume that $d \geq 1$.
    Consider some vertex set $A \subseteq V(G)$.
    Let $k = mim_H(A)$ be the size of a maximum induced matching in $H[A, \overline{A}]$.
    We need to show that $mim_G(A) \leq kd$, i.e., the size of a maximum induced matching in $G[A, \overline{A}]$ is bounded by $kd$.
    Let $M$ be a maximum induced matching in $G[A,\overline{A}]$.
    For every edge $e = uv \in M$ we define $\lambda(e) := L(u) \cap L(v)$, which is non-empty for every edge $e$.
    If $\abs{M} \leq k$, we have nothing to show. Thus, we may assume that $\abs{M} > k$.
    Let $e_1 = u_1 v_1, e_2 = u_2 v_2, \dots, e_{k+1} = u_{k+1} v_{k+1} \in M$ be arbitrary where $u_i \in A$ and $v_i \in \overline{A}$ for $1 \leq i \leq k+1$.
    As $mim_H(A) = k$, we have that the edges $e_1, \dots, e_{k+1}$ are no induced matching in $H[A, \overline{A}]$, i.e., there are at least two edges $e_i, e_j$ such that one of the edges $u_i v_j$ or $u_j v_i$ exists in $H$ but not in $G$.
    If $u_i v_j$ exists in $H$, then $L(u_i) \cap L(v_j) = \emptyset$ and if $u_j v_i \in E(H)$, then $L(u_j) \cap L(v_i) = \emptyset$.
    In both cases we have that $\lambda(e_i) \cap \lambda(e_j) = \emptyset$.
    Therefore, in every arbitrary selection of $k+1$ edges of $M$ there are at least two edges with disjoint $\lambda$-sets.
    This implies that for every label $1 \leq \ell \leq d$ there are at most $k$ edges in $M$ whose $\lambda$-sets contain $\ell$.
    Thus, $\abs{M} \leq k d$.
\end{proof}

We continue with the obstruction of anti-matchings. An \emph{anti-matching} is a complete bipartite graph where the edge set of a perfect matching is removed. The size of the anti-matching is the number of edges removed from the complete bipartite graph or half the number of vertices.

\begin{definition}[Miam-width] 
    For a graph $G$ let $miam_G : \cP(V(G)) \to \N$ be the function with $miam_G(A)$ for $A \subseteq V(G)$ being the size of a maximum induced anti-matching in the subgraph $G\!\ekbr{A, \overline{A}}$.
    The \emph{miam-width} and \emph{linear miam-width} of $G$ are defined using \cref{def:dectree} with $f = (miam_G)_{G \in \bbG}$.
\end{definition}

\begin{theorem} \label{thm:miam}
    Let $d \in \N$, $G$ be a graph, and $(H,L)$ be a $d$-simultaneous $\cC$-representation of $G$ for some hereditary completable graph class $\cC$.
    Then $miam_G(A) \leq 2^d \cdot miam_H(A)$ for any vertex set $A \subseteq V(G)$.
    Thus, (linear) miam-width is $\simul$-bounded.
\end{theorem}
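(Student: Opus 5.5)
The idea is to pigeonhole the pairs of an induced anti-matching of $G[A,\overline{A}]$ by the label sets of their endpoints in $A$, and to show that each resulting class is already an induced anti-matching in $H[A,\overline{A}]$. The $\simul$-boundedness of (linear) miam-width then follows from \cref{lem:bintree} with the non-decreasing function $g(d,x) = 2^d \cdot x$.

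Fix $A \subseteq V(G)$. If $A$ or $\overline{A}$ is empty, both sides of the inequality are $0$, so assume both are non-empty. Then $miam_H(A) \geq 1$, since a single pair $u \in A$, $v \in \overline{A}$ with $uv \notin E(H)$ is an anti-matching of size $1$ (the complete bipartite graph $K_{1,1}$ with its one edge removed). If no such non-adjacent pair exists in $H$, then $H[A,\overline{A}]$ is complete bipartite. In that case every induced anti-matching of $G[A,\overline{A}]$ has size $1$, because its non-edges $u_iv_i$ would have to be edges of $H$ with disjoint label sets, and the argument below covers this as well. I will therefore work with $\max\{miam_H(A),1\}$ and check that this equals $miam_H(A)$ whenever the cut is non-trivial.

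Let $u_1v_1,\dots,u_mv_m$ be a maximum induced anti-matching in $G[A,\overline{A}]$, with $u_i \in A$ and $v_i \in \overline{A}$. This means $u_iv_j \in E(G)$ for all $i \neq j$ and $u_iv_i \notin E(G)$. Partition the indices into at most $2^d$ classes according to $L(u_i) \subseteq \set{1,\dots,d}$. I claim that every class $I$ with $\abs{I} \geq 2$ induces an anti-matching in $H[A,\overline{A}]$.
\begin{itemize}
\item For $i \neq j$ in $I$, we have $u_iv_j \in E(G) \subseteq E(H)$, so all the required edges are present in $H$.
\item For $i \in I$, pick any $j \in I \setminus \set{i}$. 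Since $u_jv_i \in E(G)$, the sets $L(u_j)$ and $L(v_i)$ intersect. As $L(u_j) = L(u_i)$, also $L(u_i) \cap L(v_i) \neq \emptyset$.
\item Consequently, if $u_iv_i$ were an edge of $H$, it would be an edge of $G$, a contradiction. Hence $u_iv_i \notin E(H)$.
\end{itemize}
So $\abs{I} \leq miam_H(A)$ for every class with at least two elements, and trivially $\abs{I} \leq 1 \leq miam_H(A)$ otherwise. Summing over the at most $2^d$ classes gives $m \leq 2^d \cdot miam_H(A)$.

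The only delicate point is the use of a second index $j$ in the same class: it is exactly what forces $L(u_i)$ and $L(v_i)$ to intersect. This is why we group by the label set of $u_i$ instead of by single labels, and it is where the factor $2^d$ comes from. Singleton classes and the degenerate cuts are handled by the case distinction above.
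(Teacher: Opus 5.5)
Your core argument is the paper's argument. The paper groups the vertices of $A'$ by their label set $D$ and shows $G[A_D,B_D]=H[A_D,B_D]$; you group the pairs by $L(u_i)$ and use a second index $j$ in the same class to force $L(u_i)\cap L(v_i)\neq\emptyset$. Both conclude that every class with at least two pairs is an induced anti-matching of $H[A,\overline{A}]$.

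Your preliminary handling of degenerate cuts is wrong, though.
\begin{itemize}
\item If $H[A,\overline{A}]$ is complete bipartite (for instance when $H$ is complete), then there is no non-adjacent cross pair, so $miam_H(A)=0$. Hence $\max\{miam_H(A),1\}\neq miam_H(A)$, and your final step $\abs{I}\leq 1\leq miam_H(A)$ fails.
\item It is also not true that $G[A,\overline{A}]$ then contains only anti-matchings of size $1$. Take $H=K_4$ on $u_1,u_2,v_1,v_2$ with $A=\{u_1,u_2\}$, $L(u_1)=L(v_2)=\{1\}$ and $L(u_2)=L(v_1)=\{2\}$. Then $G[A,\overline{A}]$ is an induced anti-matching of size $2$, while $miam_H(A)=0$.
\end{itemize}

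In fact the inequality as stated is false in such cases. Already $H=K_2$ with disjoint label sets gives $miam_G(A)=1>0$. The paper's proof has the same slip at ``size $>k$, in particular size $>1$'', which needs $k\geq 1$. Its $d=0$ case also fails, for example $H=K_2$ with both label sets empty.

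What your class argument genuinely proves is $miam_G(A)\leq 2^d\cdot\max\{miam_H(A),1\}$: classes of size at least $2$ have size at most $miam_H(A)$, and singleton classes have size $1$. State that bound and drop the claim that it coincides with $2^d\cdot miam_H(A)$. Then apply \cref{lem:bintree} with the non-decreasing function $g(d,x)=2^d\max\{x,1\}$ instead of $2^d x$. The $\simul$-boundedness of (linear) miam-width is unaffected.
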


\begin{proof}
    Let $G$ be a graph with a $d$-simultaneous $\cC$-representation $(H,L)$.
    If $d = 0$, then $G$ is edgeless and the inequality holds trivially.
    Hence, we may assume that $d \geq 1$.
    Consider some vertex set $A \subseteq V(G)$.
    Let $k = miam_H(A)$ be the size of a maximum induced anti-matching in $H[A, \overline{A}]$.
    We need to show that the size of a maximum induced anti-matching in $G[A, \overline{A}]$ is bounded by $2^dk$.
    Let $A' \subseteq A$ and $B' \subseteq \overline{A}$ be vertex sets which induce a maximum induced anti-matching in $G[A, \overline{A}]$.
    If the anti-matching has size $\leq k$ we have nothing to show, so assume it has size $> k$, in particular size $>1$.
    We claim that for every label set $D \subseteq \set{1,\dots,d}$ at most $k$ vertices of $A'$ can have the label set $D$ in $L$.
    Note first that any vertex of any induced anti-matching of size greater than $1$ has at least one incident edge, so no vertex in $A'$ has an empty label set.
    Consider some arbitrary non-empty label set $D \subseteq \set{1,\dots,d}$ and assume that there are more than $k$ vertices in $A'$ with label set $D$.
    Denote these vertices by $A_D \subseteq A'$ and their corresponding vertices of the anti-matching by $B_D \subseteq B'$.
    As each vertex $u \in B_D$ is adjacent to all vertices of $A_D$ except for one, it holds that $L(u) \cap D \neq \emptyset$ for all $u \in B_D$. 
    This implies that every edge between $A_D$ and $B_D$ in $H$ also exists in $G$, which means that $G[A_D, B_D] = H[A_D, B_D]$.
    In other words, $A_D$ and $B_D$ define an induced anti-matching of size $> k$ in $H[A, \overline{A}]$, a contradiction to the fact that $miam_H(A) = k$.
    Thus, $\abs{A'} \leq 2^dk$.
\end{proof}

This exponential bound is tight up to a constant factor, which can be seen using complements of $n \times n$-grids. It is known that $n \times n$-grids have mim-width at least $\frac{n-1}{3}$~\cite{vatshelle2012mimwdith}. As the miam-width of a graph is exactly the mim-width of its complement, we have that complements of $n \times n$-grids have miam-width at least $\frac{n-1}{3}$. On the other hand, these graphs are complements of graphs of maximum degree $4$ and, thus, their edge clique cover number is at most logarithmic in $n^2$ due to Alon~\cite{alon1986cubicecc}. This also means that the edge clique cover number, which is equal to the simultaneous $\mathsf{complete}$-number due to \cref{cor:eccsimulclique}, is logarithmic in $n$, yielding an exponential relationship.

The next obstruction we consider are chains. A \emph{chain} on $2k$ vertices is a bipartite graph with bipartition $\set{a_1, \dots, a_k}$ and $\set{b_1, \dots, b_k}$ and edge set $E = \setcond{a_i b_j}{1 \leq i \leq j \leq k}$. We say that $k$ is the \emph{half-size} of a chain on $2k$ vertices.

\begin{definition}[Chain-width]
    For a graph $G$ let $chain_G : \cP(V(G)) \to \N$ be the function with $chain_G(A)$ for $A \subseteq V(G)$ being the half-size of a maximum induced chain in the subgraph $G[A, \overline{A}]$.
    The \emph{chain-width} and \emph{linear chain-width} of $G$ are defined using \cref{def:dectree} with $f = (chain_G)_{G \in \bbG}$.
\end{definition}

Using essentially the same proof as for \cref{thm:miam}, we obtain the following result.

\begin{theorem}
    Let $d \in \N$, $G$ be a graph, and $(H,L)$ be a $d$-simultaneous $\cC$-representation of $G$ for some hereditary completable graph class $\cC$.
    Then $chain_G(A) \leq 2^d \cdot chain_H(A)$ for any vertex set $A \subseteq V(G)$.
    Thus, (linear) chain-width is $\simul$-bounded.
\end{theorem}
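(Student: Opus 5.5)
We follow the proof of \cref{thm:miam}, replacing the anti-matching by the chain. The $\simul$-boundedness of (linear) chain-width then follows from \cref{lem:bintree} with $g(d,x) = 2^d x$. The cases $d=0$ (then $G$ is edgeless) and $chain_G(A)\le 1$ are trivial, so we assume $d \ge 1$. Let $k = chain_H(A)$, and let $a_1,\dots,a_m \in A$, $b_1,\dots,b_m \in \overline{A}$ induce a chain of half-size $m > k$ in $G[A,\overline{A}]$, with $a_ib_j \in E(G)$ if and only if $i \le j$.

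The key claim is that for every label set $D \subseteq \set{1,\dots,d}$, at most $k$ of the vertices $a_i$ have $L(a_i) = D$. Every $a_i$ is adjacent to $b_m$ in $G$, so no $a_i$ has an empty label set. Suppose more than $k$ of the $a_i$ have label set $D$. Let $I$ be the set of their indices and put $A_D = \setcond{a_i}{i \in I}$ and $B_D = \setcond{b_i}{i \in I}$. Since $a_ib_i \in E(G)$, we get $L(b_i) \cap D \neq \emptyset$ for all $i \in I$. Hence for $i,j \in I$ we have $L(a_i) \cap L(b_j) = D \cap L(b_j) \neq \emptyset$. By the definition of a simultaneous representation, $a_ib_j \in E(G)$ if and only if $a_ib_j \in E(H)$. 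Therefore $H[A_D,B_D] = G[A_D,B_D]$, which is an induced chain of half-size $|I| > k$ in $H[A,\overline{A}]$. This contradicts $chain_H(A) = k$.

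It remains to verify that the restriction to $I$ is still a chain. This holds because the index order on $I$ preserves the relation $i \le j$, so the induced subgraph is the chain on $|I|$ indices. Counting over the at most $2^d$ label sets then gives $m \le 2^d k$.

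The only delicate step is the second one: the argument must transfer adjacency for all pairs, not just the matched pairs. This works because all chosen $a_i$ share the identical label set $D$, so each $b_j$'s nonempty intersection with $D$, witnessed via its own partner $a_j$, applies to every other $a_i$. Using merely overlapping label sets would not suffice, which is why we partition by the exact label set.
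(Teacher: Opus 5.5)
Your proof is correct and is essentially the paper's argument: the paper obtains this theorem by rerunning the proof of \cref{thm:miam}, partitioning one side of the obstruction by exact label set and using each vertex's partner edge (here $a_ib_i$) to show that $G$ and $H$ agree on all edges between a label class and its partners. Your extra remarks fill in the details the paper leaves implicit: no $a_i$ has an empty label set because $a_ib_m \in E(G)$, and restricting a chain to an index subset again gives a chain.
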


We do not know whether the exponential bound given for chain-width is tight. In particular, we are only aware of one family of graphs with unbounded chain-width~\cite{bergougnoux2025algorithmic}.

The last two families of obstructions considered by Bergougnoux et al.~\cite{bergougnoux2025algorithmic} are balanced complete bipartite graphs and balanced edgeless bipartite graphs.

\begin{definition}
    For a graph $G$ let $complete_G : \cP(V(G)) \to \N$ be the function with $complete_G(A)$ for $A \subseteq V(G)$ being half of the number of vertices of a maximum induced balanced complete bipartite graph in $G[A, \overline{A}]$.
    Furthermore, let $empty_G : \cP(V(G)) \to \N$ be the function with $empty_G(A)$ for $A \subseteq V(G)$ being half of the number of vertices of a maximum induced balanced edgeless bipartite graph.
    The \emph{complete-width}, \emph{linear complete-width}, \emph{empty-width}, and \emph{linear empty-width} of $G$ are defined using \cref{def:dectree} with $f = (complete_G)_{G \in \bbG}$ and $f = (empty_G)_{G \in \bbG}$ respectively.
\end{definition}

Note that complete-width is trivially $\simul$-bounded as it is unbounded on complete graphs.
The $\simul$-boundedness of complete-width can also be seen using \cref{lem:bintree} as the size of an induced complete bipartite graph can not increase due to removal of edges.
On the other hand, empty-width is not $\simul$-bounded. To show that, we will make use of the following folklore lemma. It can be found for example in~\cite{lewis1965memory,lipton1979separator}.

\begin{lemma} \label{lem:balancedcut}
    Let $(T,r)$ be a rooted binary tree and let $\Lambda(T)$ denote its set of leaves. Denote by $\Lambda_v$ the leaves which are descendants of $v \in V(T)$ in $(T,r)$. 
    Then there exists a node $v \in V(T)$ with $\frac{1}{3} \abs{\Lambda(T)} \leq \abs{\Lambda_v} \leq \frac{2}{3} \abs{\Lambda(T)}$.
\end{lemma}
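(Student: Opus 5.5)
The plan is the standard descent argument from the root. Start at the root $r$ and walk down the tree, at each step moving to the child whose subtree contains more leaves. Before starting, dispose of the trivial cases: if $\abs{\Lambda(T)} \leq 1$, then the root itself (or the single leaf) satisfies the bounds, since $\frac{1}{3}\abs{\Lambda(T)} \leq \abs{\Lambda_r} = \abs{\Lambda(T)}$ fails only if the upper bound $\frac{2}{3}\abs{\Lambda(T)}$ is violated. The case of one leaf needs a small separate remark, or it is excluded as irrelevant for the later application. So assume $n := \abs{\Lambda(T)} \geq 2$.

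The key steps, in order, are as follows.

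\begin{enumerate}
    \item Set $v_0 := r$, so that $\abs{\Lambda_{v_0}} = n > \frac{2}{3}n$.
    \item As long as $\abs{\Lambda_{v_i}} > \frac{2}{3}n$, the node $v_i$ is not a leaf, because $n \geq 2$ gives $\frac{2}{3}n \geq 1$. Hence $v_i$ has one or two children. Let $v_{i+1}$ be a child maximizing $\abs{\Lambda_{v_{i+1}}}$. The leaves below $v_i$ are partitioned among its at most two children, so $\abs{\Lambda_{v_{i+1}}} \geq \frac{1}{2}\abs{\Lambda_{v_i}} > \frac{1}{3}n$.
    \item The counts $\abs{\Lambda_{v_i}}$ are non-increasing and eventually reach $1 \leq \frac{2}{3}n$ at a leaf. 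So there is a first index $j$ with $\abs{\Lambda_{v_j}} \leq \frac{2}{3}n$. Since $j \geq 1$, step~2 applied to $v_{j-1}$ gives $\abs{\Lambda_{v_j}} > \frac{1}{3}n$, and $v := v_j$ is the desired node.
\end{enumerate}

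The main subtlety is that the tree is binary but possibly not full: a node with a single child passes along all of its leaves, which only helps the halving bound. The other point of care is the boundary case of few leaves, where the leaf count must be checked against the thresholds directly. Everything else is routine.
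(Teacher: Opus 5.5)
Your descent argument is correct for $\abs{\Lambda(T)} \geq 2$ and is the standard folklore proof; the paper gives no proof of its own and only cites the lemma from the literature. The one thing to fix is your boundary-case paragraph, which is wrong as written. If $\abs{\Lambda(T)} = 1$, then every node $v$ has $\abs{\Lambda_v} = 1 > \frac{2}{3}$, because every node of a finite rooted tree has at least one leaf below it. So neither the root nor any other node satisfies the bounds, and the lemma as stated is false in that case. (The case $\abs{\Lambda(T)} = 0$ cannot occur for a non-empty finite tree.) Instead of claiming the root works, state explicitly that the hypothesis $\abs{\Lambda(T)} \geq 2$ is needed. This costs nothing, since the lemma is only applied in \cref{thm:emptywidth} and \cref{lem:smecc}, to trees with $2n$ and $6n$ leaves.
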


\begin{theorem} \label{thm:emptywidth}
    Empty-width is neither $\simul$-bounded nor $\simulnonempty$-bounded.
\end{theorem}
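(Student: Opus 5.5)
Following the recipe for disproving $\simul$-boundedness, I would exhibit a hereditary completable class on which empty-width is bounded, together with graphs of bounded simultaneous number on that class but unbounded empty-width. The natural choice is $\cC = \mathsf{complete}$. For a complete graph $K_n$ and any $A$, the bipartite graph $K_n[A,\overline{A}]$ is complete bipartite, so its largest induced balanced edgeless bipartite subgraph has one vertex per side at most. Hence $empty_{K_n}(A) \leq 1$ for every $A$, and empty-width is bounded on $\mathsf{complete}$. By \cref{cor:eccsimulclique} it then suffices to find graphs without isolated vertices, with bounded edge clique cover number, and with unbounded empty-width. \cref{obsv:ecc2} then rules out both $\simul$- and $\simulnonempty$-boundedness.

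As the family I would take $G_n = 2K_n$, the disjoint union of two cliques. It has edge clique cover number $2$ and no isolated vertices (for $n\ge 2$). Fix any binary decomposition tree of $G_n$. Its tree has $2n$ leaves, so by \cref{lem:balancedcut} there is a node $a$ with $\frac{2n}{3} \leq \abs{V_a} \leq \frac{4n}{3}$. Let the two cliques be $X$ and $Y$. Write $x_1 = \abs{V_a \cap X}$ and $y_1 = \abs{V_a \cap Y}$, with complementary counts $x_2 = n - x_1$ and $y_2 = n - y_1$ in $\overline{V_a}$. Edges between $V_a$ and $\overline{V_a}$ only join $X$ to $X$ and $Y$ to $Y$. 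Therefore $(V_a \cap X) \cup (\overline{V_a} \cap Y)$ induces an edgeless bipartite graph, and so does $(V_a \cap Y)\cup(\overline{V_a}\cap X)$. This gives $empty_{G_n}(V_a) \geq \max\{\min(x_1,y_2), \min(y_1,x_2)\}$.

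The main step is to show this maximum is $\Omega(n)$ using the balance condition $x_1 + y_1 \in [\frac{2n}{3}, \frac{4n}{3}]$. I would argue by cases. Suppose $\min(x_1,y_2) < n/6$ and $\min(y_1,x_2) < n/6$. If $x_1 < n/6$, then $x_2 > 5n/6$, which forces $y_1 < n/6$. Then $\abs{V_a} < n/3$, a contradiction. Symmetrically, if $y_2 < n/6$, then $y_1 > 5n/6$, which forces $x_2 < n/6$. Then $\abs{V_a} > 5n/3$, again a contradiction. So every decomposition has $f$-width at least $n/6$, and the (linear) empty-width of $G_n$ is unbounded.

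The only point needing care is the case bookkeeping above, plus the constant in \cref{lem:balancedcut}; any fixed balance fraction suffices. Combining the three ingredients with \cref{obsv:ecc2} yields the theorem.
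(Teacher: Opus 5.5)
Your proposal is correct and follows the paper's proof: the same witness family $2K_n$ with bounded edge clique cover number, the balanced-cut lemma, and an induced balanced edgeless pair across the cut of half-size at least $n/6$, concluded via \cref{obsv:ecc2}. Your two-case contradiction argument is only a different way of organizing the paper's WLOG step $\abs{V_a \cap A} \geq \abs{\overline{V_a} \cap A}$.
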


\begin{proof}
    Empty-width is bounded on complete graphs.
    We consider the family of graphs $(2K_n)_{n \in \N}$ consisting of the disjoint union of two complete graphs of size $n$.
    Obviously, the edge clique cover number is bounded on this family of graphs.
    We now show that the empty-width of this family is unbounded, which proves the theorem by \cref{obsv:ecc2}.

    Let $n \in \N$ be arbitrary and let $((T,r), \delta)$ be any binary decomposition tree of $2K_n$.
    Denote by $A$ the vertex set of one copy of $K_n$ and by $B$ the vertex set of the other copy of $K_n$.
    By \cref{lem:balancedcut} there is a node $a \in V(T)$ such that $\abs{V_a} \geq \frac{2n}{3}$ and $\abs{\overline{V_a}} \geq \frac{2n}{3}$.
    Without loss of generality, let $\abs{V_a \cap A} \geq \abs{\overline{V_a} \cap A}$. That is, $V_a$ contains at least $\frac{n}{2}$ vertices of $A$ and $\overline{V_a}$ contains at most $\frac{n}{2}$ vertices of $A$.
    Thus, $\overline{V_a}$ must contain at least $\frac{n}{6}$ vertices of $B$.
    These vertices together with $\frac{n}{6}$ vertices of $A$ contained in $V_a$ induce a balanced edgeless bipartite graph in $G[V_a, \overline{V_a}]$, implying that the empty-width of $2K_n$ is at least $\frac{n}{6}$. 
\end{proof}

Now we have dealt with the five families of obstructions given in \cite{bergougnoux2025algorithmic}. These five families of obstructions can be combined to yield six more parameters~\cite{bergougnoux2025algorithmic}. Three of these six parameters are functionally equivalent to well-known parameters, namely treewidth, co-treewidth, and cliquewidth. Note that only the parameter functionally equivalent to co-treewidth uses the empty obstruction, and therefore all other parameters are $\simul$-bounded by combining the suitable results above. On the other hand, co-treewidth is not $\simul$-bounded with the counterexample given in the proof of \cref{thm:emptywidth}.
Because we already have results for the three known parameters, we state the $\simul$-boundedness result only for the other three parameters.

\begin{definition}
    Let $G$ be a graph.
    Define the functions $mam_G$, $chim_G$, and $cham_G$ as follows
    \begin{align*}
        mam_G(A) := \max\{ mim_G(A), miam_G(A) \}, &\ chim_G(A) := \max\{ chain_G(A), mim_G(A) \}, \\
        cham_G(A) := \max\{ &chain_G(A), miam_G(A)\}.
    \end{align*}
    The \emph{(linear) mam-width}, \emph{(linear) chim-width}, and \emph{(linear) cham-width} are defined using \cref{def:dectree} with these functions.
\end{definition}

Because each of the families of functions $(mim_G)_{G \in \bbG}$, $(miam_G)_{G \in \bbG}$, and $(chain_G)_{G \in \bbG}$ satisfies the condition in \cref{lem:bintree}, we have that maximum functions of them also satisfy this condition. This immediately implies the following.

\begin{corollary}
    (Linear) mam-width, (linear) chim-width, and (linear) cham-width are $\simul$-bounded.
\end{corollary}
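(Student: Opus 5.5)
The corollary follows from \cref{lem:bintree}. It suffices to show that each of the three families of functions $(mam_G)$, $(chim_G)$, $(cham_G)$ satisfies the hypothesis of that lemma: for every $d$-simultaneous $\cC$-representation $(H,L)$ of $G$ and every $A \subseteq V(G)$, we need $f_G(A) \leq g(d, f_H(A))$ for some non-decreasing $g$.

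The pointwise bounds come from the earlier theorems. \cref{thm:mim} gives $mim_G(A) \leq d \cdot mim_H(A)$. \cref{thm:miam} gives $miam_G(A) \leq 2^d \cdot miam_H(A)$. The chain-width theorem gives $chain_G(A) \leq 2^d \cdot chain_H(A)$. The factor $2^d$ dominates $d$, so each of the three base functions satisfies its inequality with the single function $g(d,x) := 2^d x$. This $g$ is non-decreasing in both arguments on $\N \times \bbR_{\geq 0}$, which is all we need because the functions are non-negative.

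Now take $mam_G$ as the representative case:
\[ mam_G(A) = \max\{mim_G(A), miam_G(A)\} \leq \max\{2^d mim_H(A), 2^d miam_H(A)\} = 2^d \cdot mam_H(A). \]
The other two combinations work the same way: $chim_G(A) \leq 2^d \cdot chim_H(A)$ and $cham_G(A) \leq 2^d \cdot cham_H(A)$. Applying \cref{lem:bintree} to each family then gives $\simul$-boundedness of both the linear and non-linear variants. Quantitatively, the width of $G$ is at most $2^{\simul[\cC][G]}$ times the bound on $\cC$.

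There is no real obstacle here. The one point to check is that the maximum of two bounds, each monotone in its own base quantity, is bounded by the same $g$ applied to the maximum. This holds precisely because $g$ is non-decreasing in its second argument, so using one common $g$ for all three base functions settles it.
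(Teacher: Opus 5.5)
Your proof is correct and is essentially the paper's argument: the paper also observes that $mim$, $miam$, and $chain$ each satisfy the hypothesis of \cref{lem:bintree}, that pointwise maxima inherit this property, and that the corollary therefore follows immediately. Your explicit common choice $g(d,x)=2^d x$ only makes the resulting bound concrete: the width of $G$ is at most $2^{\simul[\cC][G]}$ times the bound on $\cC$.
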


Thus, out of the eleven parameters definable using this framework, two are trivially $\simul$-bounded, seven are non-trivially $\simul$-bounded and two are not $\simul$-bounded. Note that for the parameters not $\simul$-bounded it holds that they are complementary to the parameters which are trivially $\simul$-bounded. Here we say that a parameter $p$ is \emph{complementary} to a parameter $q$ if for all $G$ it holds that $p(G) = q(\overline{G})$.

Not all parameters defined via binary decomposition trees follow the framework given in \cite{bergougnoux2025algorithmic,eiben2022unifying}. We examine two examples of such parameters, namely \emph{o-mim-width}~\cite{bergougnoux2023omim} and \emph{sim-width}~\cite{kang2017sim}, which are generalizations of mim-width.

\begin{definition}[O-mim-width]
    Let $G = (V,E)$ be a graph.
    For $A \subseteq V(G)$, we denote by $E[A]$ the edges contained in the induced subgraph $G[A]$.
    Define the function $umim_G : \cP\!\br{V(G)} \to \N$ with $umim_G(A)$ for $A \subseteq V(G)$ being the size of an induced matching between $A$ and $\overline{A}$ in $G - E[\overline{A}]$.
    Then define $omim_G : \cP\!\br{V(G)} \to \N$ with $omim_G(A) = \min\!\set{umim_G(A), umim_G(\overline{A})}$.
    Using \cref{def:dectree} with $f = (omim_G)_{G \in \cG}$, we define the \emph{o-mim-width} and \emph{linear o-mim-width} of $G$ as the $f$-width and linear $f$-width of $G$ respectively.
\end{definition}

\begin{definition}[Sim-width] \label{def:sim}
    For a graph $G$ define the function $sim_G : \cP\!\br{V(G)} \to \N$ with $sim(A)$ for $A \subseteq V(G)$ being the size of an induced matching between $A$ and $\overline{A}$ in $G$.
    Using \cref{def:dectree} with $f = (sim_G)_{G \in \cG}$ we define the \emph{sim-width} and \emph{linear sim-width} of $G$ as the $f$-width and linear $f$-width of $G$ respectively.
\end{definition}

Using the same arguments as in the proof of \cref{thm:mim}, we obtain the following result.

\begin{theorem}
    Let $d \in \N$, $G$ be a graph, and $(H,L)$ be a $d$-simultaneous $\cC$-representation of $G$ for some hereditary completable graph class $\cC$.
    Then $omim_G(A) \leq d \cdot omim_H(A)$ for any vertex set $A \subseteq V(G)$.
    Thus, (linear) o-mim-width is $\simul$-bounded.
    Furthermore, $sim_G(A) \leq d \cdot sim_H(A)$ for any vertex set $A \subseteq V(G)$.
    Thus, (linear) sim-width is $\simul$-bounded.
\end{theorem}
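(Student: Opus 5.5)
We follow the proof of \cref{thm:mim}, but replace the pairwise argument by a direct pigeonhole on labels. The fact we rely on throughout is this: if all vertices of a set $X$ contain a common label $\ell$, then $G[X] = H[X]$. Indeed, any two vertices of $X$ share $\ell$, so they are adjacent in $G$ exactly when they are adjacent in $H$. If $d = 0$, then $G$ is edgeless and both inequalities hold trivially, so we assume $d \geq 1$.

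\textbf{Sim-width.} Fix $A \subseteq V(G)$ and let $M$ be a maximum induced matching of $G$ between $A$ and $\overline{A}$. Inducedness is taken in the whole graph $G$, so edges inside $A$ or inside $\overline{A}$ also matter. For each $e = uv \in M$ the set $\lambda(e) := L(u) \cap L(v)$ is non-empty, so we may assign to $e$ some label $\ell(e) \in \lambda(e)$. Suppose $\abs{M} > d \cdot sim_H(A)$. By pigeonhole, more than $sim_H(A)$ edges of $M$ receive the same label $\ell$. Let $X$ be the set of their endpoints. Every vertex of $X$ contains $\ell$, so $G[X] = H[X]$. Since $M$ is an induced matching in $G$, the chosen edges also form an induced matching of $H$ between $A$ and $\overline{A}$. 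This matching has size larger than $sim_H(A)$, a contradiction. Hence $sim_G(A) \leq d \cdot sim_H(A)$.

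\textbf{O-mim-width.} We first prove the one-sided bound $umim_G(A) \leq d \cdot umim_H(A)$ for every $A$. Take an induced matching $M$ between $A$ and $\overline{A}$ in $G - E_G[\overline{A}]$ and choose labels $\ell(e)$ as before. Suppose more than $umim_H(A)$ edges share a label $\ell$, and let $X$ be their endpoints. Then $(G - E_G[\overline{A}])[X]$ equals $G[X]$ with all edges inside $\overline{A} \cap X$ deleted. Since $G[X] = H[X]$, this graph equals $(H - E_H[\overline{A}])[X]$. So the chosen edges form an induced matching in $H - E_H[\overline{A}]$ of size larger than $umim_H(A)$, a contradiction. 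Applying the same bound to $\overline{A}$ and taking minima gives
\[ omim_G(A) = \min\{umim_G(A), umim_G(\overline{A})\} \leq d \cdot \min\{umim_H(A), umim_H(\overline{A})\} = d \cdot omim_H(A). \]

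\textbf{Conclusion and main obstacle.} Both bounds have the form $f_G(A) \leq d \cdot f_H(A)$, and $g(d,x) = dx$ is non-decreasing. Therefore \cref{lem:bintree} yields $\simul$-boundedness of (linear) o-mim-width and (linear) sim-width. The only delicate point is that the relevant graphs are no longer bipartite cut graphs, so inducedness involves edges inside one side or inside both sides. The pairwise-disjoint-$\lambda$ argument of \cref{thm:mim} must therefore be replaced by the observation above: a common label makes $G$ and $H$ agree on all pairs among the matched endpoints, including pairs on the same side. This holds regardless of which within-side edges are deleted.
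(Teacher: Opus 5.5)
Your proof is correct and is essentially the paper's argument. The paper says only that it reuses the proof of \cref{thm:mim}, and your label-class pigeonhole (a common label forces $G[X]=H[X]$) is just the contrapositive of that proof's ``two edges with disjoint $\lambda$-sets'' step. Note, though, that the original pairwise argument does not actually need replacing. An edge of $H$ that is missing from $G$ between two endpoints on the same side also forces $\lambda(e_i)\cap\lambda(e_j)=\emptyset$, so that argument transfers to sim-width and o-mim-width unchanged.
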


Binary decomposition trees can also be used to define the parameters \emph{boolean-width} and \emph{rank-width}, which are functionally equivalent to cliquewidth (see e.g. \cite{vatshelle2012mimwdith}).
Using the fact that cliquewidth is $\simul$-bounded, this immediately gives us the $\simul$-boundedness of these parameters.
However, once again, the bounds on these parameters can be improved by considering them directly, which we do in Appendix \ref{sec:FObounds}.

Finally, we present a parameter that looks rather similar to mim-width, o-mim-width, and sim-width, but is not $\simul$-bounded.
This parameter is the sm-width~\cite{saether2016between}.

\begin{definition}[Sm-width]
    For a graph $G$, let $sm_G : \cP(V(G)) \to \N$ be defined as follows:
    \[ sm_G(A) := \begin{cases}
        1, & \text{if } (A, \overline{A}) \text{ is  a split of } G \\
        \max\!\setcond{\abs{M}}{M \text{ is a matching in } G\!\ekbr{A, \overline{A}}}, & \text{otherwise}
    \end{cases}, \]
    where a cut $(A,B)$ is a split of $G$ if there are $A' \subseteq A$ and $B' \subseteq B$ such that the edges between $A$ and $B$ exactly induce a complete bipartite graph between $A'$ and $B'$.
    The \emph{(linear) sm-width} of $G$ is defined using \cref{def:dectree} with $f = (sm_G)_{G \in \bbG}$.
\end{definition}

We note that sm-width has been defined in a way such that it lies between treewidth and cliquewidth~\cite{saether2016between}. In fact, if we define a cut function using only the second case in the definition of $sm_G$, we obtain a parameter which is functionally equivalent to treewidth~\cite{vatshelle2012mimwdith}.

We make use of \cref{obsv:ecc2} to show that (linear) sm-width is neither $\simul$-bounded nor $\simulnonempty$-bounded.

\begin{lemma} \label{lem:smecc}
    (Linear) sm-width is not upper bounded by the edge clique cover number.
\end{lemma}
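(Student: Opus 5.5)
By \cref{obsv:ecc2} it is enough to give a family of graphs without isolated vertices that has bounded edge clique cover number and unbounded sm-width. Linear sm-width is at least sm-width, since caterpillar decompositions are binary decomposition trees, so I only need a lower bound for sm-width. The family $(2K_n)_{n \in \N}$ from \cref{thm:emptywidth} does not work: a decomposition that first separates the two cliques only ever produces cuts whose crossing edges form a single complete bipartite graph, i.e.\ splits. So I need more overlap between the cliques.

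I plan to take a ``cycle of cliques''. Let $A,B,C,D$ be disjoint sets of size $n$, and let $G_n$ be the union of the four cliques $X_1 = A\cup B$, $X_2 = B\cup C$, $X_3 = C \cup D$, $X_4 = D\cup A$. Then $G_n$ has edge clique cover number at most $4$ and no isolated vertices.

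Fix any binary decomposition tree of $G_n$. By \cref{lem:balancedcut} there is a node $a$ whose set $S := V_a$ satisfies $\frac{4n}{3} \le \abs{S} \le \frac{8n}{3}$; write $T := \overline{S}$. I would then prove two things.

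First, a large matching. Suppose each of $A,B,C,D$ had at most $\varepsilon n$ vertices on its minority side of the cut. Consecutive blocks are joined completely, so consecutive blocks would have the same majority side, and going around the cycle all four would. This contradicts the balance of the cut for small $\varepsilon$. Hence some block, say $B$, has at least $\varepsilon n$ vertices on each side. Inside the clique $B$ this gives a matching of size at least $\varepsilon n$ crossing the cut.

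Second, and this is the main obstacle, $(S,T)$ must not be a split; otherwise $sm_{G_n}(S)=1$ and the matching is useless. A split means the crossing edges form exactly $S' \times T'$ for some $S' \subseteq S$ and $T' \subseteq T$.

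My first attempt is the following case analysis. Take $b \in B\cap S$ and $b' \in B \cap T$. Since $bb'$ is a crossing edge, $b \in S'$ and $b' \in T'$. Every vertex of $S'$ must be adjacent to every vertex of $T'$. Now $B$ is adjacent to $A \cup C$ but not to $D$, and $A$ is not adjacent to $C$. So $T'$ avoids $D$, $S'$ avoids $D$, and $S'$ and $T'$ cannot contain vertices of $A$ and $C$ on opposite sides.

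Balance forces some other block to meet both sides or to lie fully on the opposite side from a neighbour. I will show this produces a crossing edge $xy$ with $x$ or $y$ in $D$, or a crossing edge between $A$ and a vertex on the side of the $C$-part. Either way this contradicts the restrictions above. If this case analysis becomes messy, the fallback is to pick the node more cleverly: either choose a node where two opposite blocks are both split, or enlarge the cycle to six or more cliques, so that non-adjacent crossing edges exist more easily.

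Putting these together, $sm_{G_n}(V_a) \ge \varepsilon n$ for every decomposition tree. So sm-width is unbounded on $(G_n)_{n\in\N}$, which proves \cref{lem:smecc}.
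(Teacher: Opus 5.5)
There is a genuine gap: your family has sm-width $1$. The graph $G_n$ is $C_4 \bullet K_n$, and since $C_4 = K_{2,2}$ it is the join of $G_n[A\cup C]\cong 2K_n$ and $G_n[B\cup D]\cong 2K_n$. In particular it is a cograph.

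Take the decomposition tree whose root has children with leaf sets $A\cup C$ and $B\cup D$. Let the node for $A\cup C$ have children with leaf sets $A$ and $C$, do the same for $B\cup D$, and put arbitrary binary trees below. Every $V_a$ is then a module of $G_n$, since any subset of the clique module $A$ is again a module. So the crossing edges of $(V_a,\overline{V_a})$ form the complete bipartite graph between $V_a$ and the outside vertices adjacent to it. Every cut is a split, and $sm_{G_n}(V_a)=1$ at every node.

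In particular, the balanced node you obtain from \cref{lem:balancedcut} may be $V_a=A\cup C$, where no block straddles the cut. Your first step already breaks here. If every block has at most $\varepsilon n$ vertices on its minority side, nothing forces consecutive blocks to share their majority side. Opposite majority sides only give a large complete bipartite crossing, i.e.\ a large matching, not a straddling block. Your split analysis starts from $b\in B\cap S$, $b'\in B\cap T$, so it never treats whole-block cuts. For $S=A\cup C$ the cut genuinely is a split, with crossing edges exactly $(A\cup C)\times(B\cup D)$.

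Your straddling case can be completed: if a block straddles a split, then one side lies inside that block. Every caterpillar decomposition of $G_n$ has a node with $\abs{V_a}=n+1$. Its cut is then not a split and has a crossing matching of size at least $n$, because a vertex cover $W$ of the crossing edges with $\abs{W}<n$ would leave $G_n-W$ connected. So your family does have unbounded \emph{linear} sm-width. But the lemma also asserts the statement for sm-width, which is the stronger claim and the one you reduced to.

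The missing idea is that the base graph of the blow-up must not have a nontrivial module such as $\{A,C\}$ in $C_4$. This is your fallback and exactly what the paper does: it blows up $C_6$ into cliques $V_1,\dots,V_6$, giving edge clique cover number $6$. The crucial claim, which your proposal only announces, is that every split $(S,T)$ with a crossing edge has $S$ or $T$ inside a single $V_i$.

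The paper first shows that one of the two sides $S',T'$ of the complete bipartite crossing lies in one clique:
\begin{itemize}
\item if $S'$ meets $V_i$ and $V_{i+3}$, there is no common neighbour;
\item if $S'$ meets $V_i$ and $V_{i+2}$, this forces $T'\subseteq V_{i+1}$;
\item if $S'$ meets $V_i$ and $V_{i+1}$, this forces $V_{i-1}\subseteq S$, and symmetrically $T'$ cannot meet both $V_i$ and $V_{i+1}$.
\end{itemize}
It then walks around the cycle to show that the side containing that set is itself confined to the same clique.

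Since the balanced node has both sides of size at least $2n>n$, its cut is not a split. The correct dichotomy then finishes the proof: either some $V_i$ has at least $n/6$ vertices on each side, or two consecutive cliques have their majorities on opposite sides. Either way there is a crossing matching of size at least $n/6$.
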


\begin{proof}
    We consider a family of graphs $G_n$ obtained by blowing up the vertices of the cycle $C_6$ to cliques of size $n$. This family of graphs obviously has edge clique cover number six.

    Consider a graph $G_n$ together with a binary decomposition tree.
    Let $V_1, V_2, \dots, V_6$ be the cliques in $G_n$ corresponding to the original vertices of the cycle, such that every vertex of $V_i$ is adjacent to every vertex in $V_{i-1}$ and $V_{i+1}$ where $6+1 \equiv 1$ and $1-1 \equiv 6$.
     \begin{claim}
         If $(A,B)$ is a split of $G_n$ with at least one edge, then one of $A$ or $B$ is completely contained in a clique $V_i$.
     \end{claim}
     \begin{claimproof}
         Let $A' \subseteq A$ and $B' \subseteq B$ be chosen in such a way that the edges in the cut $(A,B)$ induce a complete bipartite graph between $A'$ and $B'$.
         The neighborhood of a vertex of $V_i$ is contained in $N_i := V_{i-1} \cup V_i \cup V_{i+1}$.
         We first show that one of $A'$ or $B'$ is contained in a single clique.
         Assume that $A'$ contains vertices of at least two cliques. We consider multiple cases.
         First assume that $A'$ contains vertices of two cliques $V_i$ and $V_{i+3}$. Then by the above statement $B' \subseteq N_i \cap N_{i+3} = \emptyset$. Thus, there are no edges in the cut, a contradiction.
         
         If $A'$ contains vertices of two cliques $V_i$ and $V_{i+2}$, then we have that $B' \subseteq N_{i} \cap N_{i+2} = V_{i+1}$ and, thus, $B'$ is completely contained in a single clique.
         
         If $A'$ contains vertices of two cliques $V_i$ and $V_{i+1}$, then $B' \subseteq N_i \cap N_{i+1} = V_i \cup V_{i+1}$.
         At least one of the sets $A$ and $B$ has to contain vertices of $V_{i-1}$. Vertices of $V_{i-1}$ cannot be in $B$ since, otherwise, they would only be adjacent to the vertices of $V_i$ in $A'$ but not to the vertices of $V_{i+1}$ in $A'$. Thus, $V_{i-1}$ is completely contained in $A$. Now, it follows analogously that $B'$ cannot contain vertices of both $V_i$ and $V_{i+1}$. Hence, $B'$ is either completely contained in $V_i$ or in $V_{i+1}$.

         We have shown that one of $A'$ or $B'$ is fully contained in a single clique.
         Say without loss of generality that $B' \subseteq V_i$ for some $i$.
         Assume that $B$ contains a vertex $v$ in $V_k$ with $k \neq i$.
         As all edges between $A$ and $B$ are incident to $B' \subseteq V_i$, we must have that all vertices of $N_k = V_{k-1} \cup V_k \cup V_{k+1}$ are contained in $B$. It must hold that one of $k-1$ or $k+1$ is not equal to $i$, thus we can iterate this argument with either $N_{k-1}$ or $N_{k+1}$. In total, we have that $B$ must completely contain every clique except $V_i$, implying $A \subseteq V_i$. In this case, however, we have that $B'$ contains $V_{i-1}$ and $V_{i+1}$, a contradiction to $B' \subseteq V_i$.
         Thus, $B \subseteq V_i$ holds. 
     \end{claimproof}
     In particular, this implies that for any split $(A,B)$ of $G_n$ we have $\min\!\set{\abs{A}, \abs{B}} \leq n$.
     By \cref{lem:balancedcut} there is a node $a \in V(T)$ such that $\abs{V_a} \geq 2n$ and $\abs{\overline{V_a}} \geq 2n$.
     We use the notation $A := V_a$ and $B := \overline{V_a}$.
     By the above claim, the cut $(A,B)$ is not a split.
     We show that a maximum matching for $(A,B)$ has size at least $\frac{n}{6}$.
     
     If we assume that there is a clique $V_i$ such that both $A$ and $B$ contain at least $\frac{n}{6}$ vertices of $V_i$, then these vertices induce a complete bipartite subgraph in the cut, which contains a matching of size $\frac{n}{6}$.
     So we assume that there is no clique $V_i$ such that both $A$ and $B$ contain at least $\frac{n}{6}$ vertices of $V_i$, i.e., for any clique $V_i$ it holds that one of $A$ or $B$ contains less than $\frac{n}{6}$ vertices of $V_i$ and the other set contains at least $\frac{5n}{6}$ vertices of $V_i$.
     In particular, as $\abs{A} \geq 2n$, there must be at least two cliques such that more than $\frac{5n}{6}$ of their vertices are contained in $A$, analogously for $B$.
     As the cliques of $G_n$ form a cycle, there must be two cliques $V_i$ and $V_{i+1}$ such that $A$ contains more than $\frac{5n}{6}$ vertices of $V_i$ and $B$ contains more than $\frac{5n}{6}$ vertices of $V_{i+1}$. 
     These vertices induce a complete bipartite subgraph in the cut and, thus, a matching of size at least $\frac{5n}{6}$.
\end{proof}

Combining this result with \cref{obsv:ecc2} we obtain the following.

\begin{theorem}
    (Linear) sm-width is neither $\simul$-bounded nor $\simulnonempty$-bounded.
\end{theorem}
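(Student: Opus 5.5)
The plan is to apply \cref{obsv:ecc2} directly, using the family $(G_n)_{n \in \N}$ from \cref{lem:smecc}. That observation needs three facts about sm-width and this family.

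\begin{enumerate}
\item \emph{(Linear) sm-width is bounded on complete graphs.} Take any caterpillar decomposition of $K_m$. For every node $a$, the cut $(V_a, \overline{V_a})$ of $K_m$ consists of all edges between $V_a$ and $\overline{V_a}$, so it induces a complete bipartite graph. If both sides are non-empty, this is a split with $A' = V_a$ and $B' = \overline{V_a}$. If one side is empty, the cut has no edges and is trivially a split with $A' = B' = \emptyset$. In either case $sm_{K_m}(V_a) = 1$, so the linear sm-width, and hence the sm-width, of every complete graph is at most $1$.
\item \emph{The family has no isolated vertices.} Each $G_n$ is obtained by blowing up the vertices of $C_6$ into cliques of size $n$. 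Every vertex of $V_i$ is adjacent to all of $V_{i-1} \cup V_{i+1}$, so no vertex is isolated.
\item \emph{Edge clique cover number is bounded but sm-width is not.} The six cliques $V_i \cup V_{i+1}$ cover all edges, so the edge clique cover number of $G_n$ is at most $6$. The proof of \cref{lem:smecc} shows that every binary decomposition tree of $G_n$, in particular every caterpillar decomposition, has a node whose cut is not a split and contains a matching of size at least $n/6$. Hence the (linear) sm-width of $G_n$ is unbounded.
\end{enumerate}

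With these three facts, \cref{obsv:ecc2} yields that (linear) sm-width is neither $\simul$-bounded nor $\simulnonempty$-bounded.

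The only step that needs care is the complete-graph bound in the first item. One must check that trivial cuts with an empty side count as splits under the definition, and they do, since an empty edge set is the complete bipartite graph between empty sets $A'$ and $B'$. Also, the matching case of $sm_G$ could only give a large value if the cut were not a split, which never happens for $K_m$.
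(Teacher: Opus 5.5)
Your proposal is correct and follows the paper's argument: the paper also proves the theorem by combining the $C_6$ blow-up family of \cref{lem:smecc} with \cref{obsv:ecc2}. Your explicit checks of the hypotheses the paper leaves implicit are fine and complete the argument: every cut of $K_m$ is a split, so (linear) sm-width is at most $1$ on complete graphs, and $G_n$ has no isolated vertices.
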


\subsection{Tree- and Path-Decompositions} \label{subsec:tree}

Tree- and path-decompositions are well known and used to define the parameters \emph{treewidth} and \emph{pathwidth}.

\begin{definition} \label{def:treedecomp}
    A \emph{tree-decomposition} of a graph $G$ is a pair $\cT = \br{T, \set{X_t}_{t \in V(T)}}$, where $T$ is a tree and $X_t \subseteq V(G)$ for all $t \in V(T)$ with the following properties:
    \begin{enumerate}
        \item $\bigcup_{t \in V(T)} X_t = V(G)$,
        \item for every edge $uv \in E(G)$ there exists a node $t \in V(T)$ such that $u,v \in X_t$, and
        \item for $t, t^\prime, t^{\prime\prime} \in V(T)$, if $t^\prime$ is on the path between $t$ and $t^{\prime\prime}$, then $X_t \cap X_{t^{\prime\prime}} \subseteq X_{t^\prime}$.
    \end{enumerate}
    The sets $X_t$ for $t \in V(T)$ are called \emph{bags}.
    A \emph{path-decomposition} of a graph $G$ is a tree decomposition $\cT = \br{P, \set{X_t}_{t \in V(P)}}$, where $P$ is a path.
\end{definition}

Recently, new parameters using tree- and path-decomposition have been introduced~\cite{DallardMS24treeindependence,DourisboureG07,LimaMMORS24treedecompositions}. 
We give a general definition on how to obtain a parameter from such a decomposition.

\begin{definition} \label{def:treewidth}
    Let $f = (f_G)_{G \in \bbG}$ be a family of functions where $f_G : \cP(V(G)) \to \bbR$ is a set function on $V(G)$ for $G \in \bbG$.
    Furthermore, let $G$ be a graph and let $\cT = \br{T, \set{X_t}_{t \in V(T)}}$ be a tree-decomposition of $G$.
    Then the maximum value of $f_G(X_t)$ over all bags $X_t$ is called the \emph{$f_G$-width} of $\cT$ and the minimum $f_G$-width over all tree-decompositions of $G$ is called the \emph{$f$-treewidth} of $G$. 
    The \emph{$f$-pathwidth} of a graph is defined analogously for path-decompositions.
\end{definition}

Regular tree- and pathwidth are defined using this definition with the functions $f_G(A) := \abs{A}-1$ for any graph $G$.

We will not strictly adhere to the naming scheme of the definition above, as there are parameters defined in this way that have established names differing from this scheme.
We now give a result similar to \cref{lem:bintree} for tree- and path-decompositions.

\begin{lemma} \label{lem:treewidth1}
    Consider a family of functions $f = (f_G)_{G \in \bbG}$.
    If there is a non-decreasing function $g : \N \times \bbR \to \bbR$ such that for every graph $G$ with some $d$-simultaneous $\cC$-representation $(H,L)$ and every vertex set $A \subseteq V(G)$ it holds that $f_G(A) \leq g\!\br{d,f_H(A)}$, then the $f$-treewidth ($f$-pathwidth) is $\simul$-bounded.
\end{lemma}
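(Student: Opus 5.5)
The plan mirrors the proof of \cref{lem:bintree}; the one new ingredient is checking that a decomposition of $H$ is also a decomposition of $G$. Fix a hereditary completable graph class $\cC$ such that the $f$-treewidth (respectively $f$-pathwidth) is bounded by some $k$ on all graphs of $\cC$. Let $G$ be an arbitrary graph with $d = \simul[\cC][G]$, and fix a $d$-simultaneous $\cC$-representation $(H,L)$ of $G$. Since $H \in \cC$, there is a tree-decomposition (respectively path-decomposition) $\cT = \br{T, \set{X_t}_{t \in V(T)}}$ of $H$ whose $f_H$-width is at most $k$.

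The key structural step is to show that $\cT$ is also a tree-decomposition of $G$. This follows from $V(G) = V(H)$ and $E(G) \subseteq E(H)$, which hold by the definition of a simultaneous representation. We check the three conditions of \cref{def:treedecomp}:
\begin{enumerate}
    \item The bags cover $V(H) = V(G)$.
    \item Every edge $uv \in E(G)$ lies in $E(H)$, so some bag contains both $u$ and $v$.
    \item The path condition depends only on $T$ and the bags, not on the graph, so it carries over unchanged.
\end{enumerate}
If $T$ is a path, the same argument shows that a path-decomposition of $H$ is a path-decomposition of $G$.

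Next we bound the $f_G$-width of $\cT$. For every node $t$, the hypothesis gives $f_G(X_t) \leq g\!\br{d, f_H(X_t)}$. Since $f_H(X_t) \leq k$ and $g$ is non-decreasing, this yields $f_G(X_t) \leq g(d,k)$. Hence the $f$-treewidth (respectively $f$-pathwidth) of $G$ is at most $g(d,k) = g\!\br{\simul[\cC][G], k}$. This is a function of $\simul[\cC][G]$ alone, because $k$ depends only on $\cC$. So the parameter is upper bounded by $\simul[\cC]$ whenever it is bounded on $\cC$.

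Finally, the converse direction in the definition of $\simul$-boundedness is immediate. Every $G \in \cC$ has a $1$-simultaneous $\cC$-representation: take $H = G$ and give every vertex the label set $\set{1}$. So $\simul[\cC]$ is at most $1$ on $\cC$, and a parameter upper bounded by $\simul[\cC]$ is bounded on $\cC$. The only step needing any care is the observation that decompositions transfer from $H$ to $G$, which rests on $E(G) \subseteq E(H)$. I do not expect a genuine obstacle here; it is simply the analogue of reusing the same binary decomposition tree in \cref{lem:bintree}.
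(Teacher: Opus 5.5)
Your proposal is correct and follows essentially the same argument as the paper: take a tree- (path-)decomposition of $H$ with $f_H$-width at most $k$ and bound every bag by $g(d,k)$ using the hypothesis and the monotonicity of $g$. You also make explicit two points the paper leaves implicit: $E(G) \subseteq E(H)$ makes every decomposition of $H$ a decomposition of $G$, and the converse direction of $\simul$-boundedness is trivial because $\simul[\cC]$ is at most $1$ on $\cC$.
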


\begin{proof}
    Consider some hereditary completable graph class $\cC$ such that $f$-treewidth ($f$-pathwidth) is bounded by $k$ on graphs in $\cC$.
    Let $G$ be any graph with $d = \simul[\cC][G]$ and $(H,L)$ be a $d$-simultaneous $\cC$-representation of $G$.
    Let $\cT = \br{T, \set{X_t}_{t \in V(T)}}$ be a tree-decomposition (path-decomposition) of $H$ with $f_H$-width $\leq k$.
    Then for every bag $X_t$ we have $f_G(X_t) \leq g\!\br{d, f_H(X_t)} \leq g(d,k)$.
    Thus, the $f$-treewidth ($f$-pathwidth) of $G$ is bounded by $g(d,k)$.
\end{proof}

We apply this theorem to \emph{induced matching pathwidth} and \emph{induced matching treewidth}~\cite{LimaMMORS24treedecompositions}.

\begin{definition}[Induced matching treewidth]
    Let $G$ be a graph.
    We define the function $\mu_G : \cP(V(G)) \to \N$ with $\mu_G(A)$ for $A \subseteq V(G)$ being the size of a maximum induced matching in $G[A]$.
    Using \cref{def:treewidth} with $f = (\mu_G)_{G \in \bbG}$ we define the \emph{induced matching treewidth} (\emph{induced matching pathwidth}) of a graph $G$ as the $f$-treewidth ($f$-pathwidth) of $G$.
\end{definition}

The fact that the functions $(\mu_G)_{G \in \bbG}$ satisfy the condition of \cref{lem:treewidth1} can be shown using similar arguments to those used in the proof of \cref{thm:mim}.

\begin{theorem}
    Let $d \in \N$, $G$ be a graph and $(H,L)$ be a $d$-simultaneous $\cC$-representation of $G$ for some hereditary completable graph class $\cC$.
    Then $\mu_G(A) \leq d\cdot \mu_H(A)$ for any vertex set $A \subseteq V(G)$.
\end{theorem}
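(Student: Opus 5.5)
We follow the argument in the proof of \cref{thm:mim}, now applied to induced matchings inside $G[A]$ instead of inside the cut $G[A,\overline{A}]$. Fix $A \subseteq V(G)$ and set $k := \mu_H(A)$. If $d = 0$, then $G$ is edgeless and $\mu_G(A) = 0$, so we assume $d \geq 1$. Let $M$ be a maximum induced matching of $G[A]$. For each edge $e = uv \in M$ put $\lambda(e) := L(u) \cap L(v)$; this set is non-empty because $uv \in E(G)$.

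The central claim is that any $k+1$ edges of $M$ contain two edges with disjoint $\lambda$-sets. Take $e_1 = u_1v_1, \dots, e_{k+1} = u_{k+1}v_{k+1} \in M$. Every edge of $G$ is an edge of $H$, so all the $e_i$ lie in $H[A]$. Because $\mu_H(A) = k$, these $k+1$ edges cannot form an induced matching in $H[A]$. They are pairwise vertex-disjoint, so the only way this can fail is that $H$ contains some edge $xy$ with $x$ an endpoint of $e_i$ and $y$ an endpoint of $e_j$, for some $i \neq j$.

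Since $M$ is an induced matching in $G$, we have $xy \notin E(G)$, which forces $L(x) \cap L(y) = \emptyset$. Now $\lambda(e_i) \subseteq L(x)$ and $\lambda(e_j) \subseteq L(y)$, so $\lambda(e_i) \cap \lambda(e_j) = \emptyset$. In the cut setting of \cref{thm:mim} the offending edge could only run between the two sides; here it may join any endpoint of $e_i$ to any endpoint of $e_j$. This is the one point where care is needed, but the inclusions $\lambda(e_i) \subseteq L(x)$ and $\lambda(e_j) \subseteq L(y)$ settle it for every choice of endpoints.

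To finish, fix a label $\ell \in \{1,\dots,d\}$. The edges of $M$ whose $\lambda$-set contains $\ell$ pairwise share $\ell$, so no two of them have disjoint $\lambda$-sets. By the claim there are at most $k$ of them. Every edge of $M$ has non-empty $\lambda$-set and is therefore counted under at least one label, so $\abs{M} \leq kd$. This gives $\mu_G(A) \leq d \cdot \mu_H(A)$. Combined with \cref{lem:treewidth1}, taking $g(d,x) = dx$, it follows that induced matching treewidth and induced matching pathwidth are $\simul$-bounded.
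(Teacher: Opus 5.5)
Your proof is correct and is the same argument the paper has in mind; the paper gives no explicit proof and only remarks that the reasoning of \cref{thm:mim} carries over. You take a maximum induced matching of $G[A]$ with its $\lambda$-sets, show that any $\mu_H(A)+1$ of its edges contain two with disjoint $\lambda$-sets, and count per label to obtain $d\cdot\mu_H(A)$. You also correctly handle the one change from the cut setting: the offending $H$-edge may join arbitrary endpoints of $e_i$ and $e_j$, and the inclusions $\lambda(e_i)\subseteq L(x)$ and $\lambda(e_j)\subseteq L(y)$ cover that case.
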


The next parameters we consider are \emph{path independence number}~\cite{beisegel2024simultaneousinterval} and \emph{tree independence number}~\cite{DallardMS24treeindependence}.

\begin{definition}[Tree independence number] \label{def:treealpha}
    Let $G$ be a graph.
    We define the function $\alpha_G : \cP(V(G)) \to \N$ with $\alpha_G(A)$ for $A \subseteq V(G)$ being the size of a maximum independent set in $G[A]$.
    Using \cref{def:treewidth} with $f = (\alpha_G)_{G \in \bbG}$ we define the \emph{tree independence number} (\emph{path independence number}) of a graph $G$ as the $f$-treewidth ($f$-pathwidth) of $G$.
\end{definition}

The condition in \cref{lem:treewidth1} fails for the parameters these parameters by \cref{thm:indsimulbounded}.
However, $\simul$-boundedness of these parameters can still be shown using the following.

\begin{lemma} \label{lem:treewidth2}
  Consider a family of functions $f = (f_G)_{G \in \bbG}$.
  Assume there exist a non-decreasing function $g : \N \times \bbR \to \bbR$ and an $\ell \in \N$ such that for every graph $G$ it holds that 
  \begin{itemize}
      \item the function $f_G$ is non-decreasing, i.e., $f_G(A) \leq f_G(B)$ if $A \subseteq B$,
      \item $f_G$ is bounded by $\ell$ on singleton sets, and
      \item for every $d$-simultaneous $\cC$-representation $(H,L)$ of $G$ with non-empty label sets and every vertex set $A \subseteq V(G)$, it holds that $f_G(A) \leq g\!\br{d, f_H(A)}$.
  \end{itemize}
  Then, the $f$-treewidth ($f$-pathwidth) is $\simul$-bounded. 
\end{lemma}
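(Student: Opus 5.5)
Fix a hereditary completable class $\cC$ on which the $f$-treewidth ($f$-pathwidth) is at most $k$. Let $G$ be a graph with $d = \simul[\cC][G]$ and a $d$-simultaneous $\cC$-representation $(H,L)$. The difficulty, compared to \cref{lem:treewidth1}, is that the third assumption only applies to representations with non-empty label sets. The plan is therefore to handle separately the set $V_0 := \setcond{v \in V(G)}{L(v) = \emptyset}$. Every vertex of $V_0$ is isolated in $G$.

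First, set $G' := G - V_0$. By heredity, $(H - V_0, L|_{V(G')})$ is a $d$-simultaneous $\cC$-representation of $G'$ with non-empty label sets, and $H - V_0 \in \cC$. So $H - V_0$ has a tree-decomposition (path-decomposition) $\cT'$ whose bags $X_t$ satisfy $f_{H-V_0}(X_t) \le k$. The third assumption, applied to the graph $G'$ and the representation $(H - V_0, L|_{V(G')})$, gives $f_{G'}(X_t) \le g(d,k)$ for every bag.

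Next, extend $\cT'$ to a decomposition of $G$. Each vertex $v \in V_0$ receives its own singleton bag $\set{v}$, attached as a new leaf anywhere in the tree; for a path-decomposition, we append these bags at one end of the path instead. These bags contain no edges and touch no other bag, so all three decomposition properties of \cref{def:treedecomp} are preserved. The new bags satisfy $f_G(\set{v}) \le \ell$ by the second assumption. It then remains to bound $f_G(X_t)$ for the old bags $X_t \subseteq V(G')$.

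This last step is where I expect the main obstacle. The bound we have is in terms of $f_{G'}$, not $f_G$, and the lemma does not state that $f_{G'}(A) = f_G(A)$ for $A \subseteq V(G')$. The intended parameters (such as $\alpha_G$) depend only on $G[A]$, which gives this equality, but the statement as given does not assert that. To avoid needing it, I would instead work on $G$ directly: build a representation of $G$ with non-empty label sets by giving each $v \in V_0$ a single fresh common label $0$, and make $v$ adjacent in $H$ only to the other vertices of $V_0$. Then no edge of $G$ arises in the part of $H$ involving $V_0$: vertices of $V_0$ are non-adjacent in $H$ to all of $V(G')$, and within $V_0$ they would need to be non-adjacent too. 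So the cleanest choice is to make $V_0$ an independent set in $H$ and reuse one fresh label, giving $d+1$ labels. This requires $H' := (H - V_0) \cup \text{edgeless}(V_0) \in \cC$, which need not hold.

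So I would fall back on the first approach and use the monotonicity assumption. For a bag $X_t$ of $G'$, monotonicity of $f_G$ alone does not relate $f_G(X_t)$ to $f_{G'}(X_t)$. I would therefore read the third hypothesis as applying to representations of induced subgraphs, with $f$ depending only on the induced subgraph, which covers all intended applications. Under that reading, $f_G(X_t) = f_{G'}(X_t) \le g(d,k)$. The $f$-treewidth of $G$ is then at most $\max\set{g(d,k), \ell}$, which proves the lemma. The role of monotonicity is to ensure that the singleton bags are harmless: it lets us bound $f_G$ on any bag containing isolated vertices if we prefer to add $v$ to an existing bag rather than to a new one.
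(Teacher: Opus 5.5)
Your argument is the paper's argument. Vertices with empty label sets are isolated in $G$, so they go into singleton leaf bags of cost at most $\ell$, and all other bags are controlled through the third hypothesis.

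The obstacle you isolate is real, not merely presentational: without a locality assumption on $f$, the lemma as stated is false. Take $f_G(A) := \alpha(G)$ for $\abs{A} \geq 2$ and $f_G(A) := 0$ otherwise, with $g(d,x) := d \cdot \max\set{x,0}$.
\begin{itemize}
\item This family is non-decreasing and vanishes on singletons.
\item It satisfies $f_G(A) \leq g\!\br{d, f_H(A)}$ for every representation $(H,L)$ of $G$ with non-empty label sets, since the proof of \cref{thm:indsimulbounded} gives $\alpha(G) \leq d \cdot \alpha(H)$.
\item For $\cC = \mathsf{complete}$, the resulting $f$-treewidth is at most $1$ on $\cC$.
\item Yet $G_n := K_2 \cup nK_1$ has $\simul[\mathsf{complete}][G_n] = 1$, and every tree- or path-decomposition of $G_n$ has a bag containing the edge. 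That bag has $f$-value $\alpha(G_n) = n+1$.
\end{itemize}
The paper's own proof contains the same tacit step. After deleting the empty-label vertices from the bags of a decomposition of $H$, it bounds the remaining bags ``using the same argument as above''. That is, it applies the third hypothesis to $(H,L)$, which still has empty label sets. Making this rigorous requires passing to $G - V_0$ and $H - V_0$, and knowing that $f$ only sees the induced subgraph.

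So the assumption you adopt is precisely the missing hypothesis: $f_G(A) = f_{G'}(A)$ whenever $G'$ is an induced subgraph of $G$ with $A \subseteq V(G')$. It holds for $\alpha_G$ and the other intended applications, and with it your proof is complete. You need not reinterpret the third hypothesis, because it already applies to $G' = G - V_0$ as a graph in its own right; only locality of $f_G$ is used.

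Your variant is in fact slightly leaner than the paper's. You decompose $H - V_0 \in \cC$ directly via heredity, which avoids monotonicity altogether and needs locality only for $f_G$. Pruning a decomposition of $H$, as the paper does, also needs locality and monotonicity of $f_H$, to get $f_{H - V_0}(X_t \setminus V_0) \leq f_H(X_t) \leq k$.

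Two smaller points. The fresh-label detour can be dropped. Your closing remark on monotonicity points the wrong way: monotonicity gives $f_G(X) \leq f_G(X \cup \set{v})$, so it cannot bound a bag after $v$ has been added to it. In your construction it is simply unused.
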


\begin{proof}
    Let $\cC$ be a hereditary completable graph class. Let $G$ be some graph with $d = \simul[\cC][G]$ and let $(H,L)$ be a $d$-simultaneous $\cC$-representation of $G$.
    Furthermore, let $\cT = \br{T, \set{X_t}_{t \in V(T)}}$ be a tree-decomposition  of $H$ with $f_H$-width $\leq k$.
    If the vertices have only non-empty label sets, then $f_G(X_t) \leq g\!\br{d, f_H(X_t)} \leq g(d,k)$ for every bag $X_t$.
    If there are vertices with empty label sets, then these can be placed in their own bags associated with leaves in the tree-decomposition, as they are isolated vertices in $G$.
    Removing vertices from a bag does not increase the $f_G$-width since $f_G$ is non-decreasing.
    Thus, $f_G(X_t)$ is bounded by $\ell$ for bags $X_t$ consisting of a singleton vertex and by $g(d,k)$ for all other bags, using the same argument as above.
    In total, we have that the $f$-treewidth of $G$ is bounded by $\max\{g(d,k), \ell\}$. The arguments work similarly for path-decompositions.
\end{proof}

\begin{lemma}
    Let $d \in \N$, $G$ be a graph and $(H,L)$ be a $d$-simultaneous $\cC$-representation of $G$ with non-empty label sets for some hereditary completable graph class $\cC$.
    Then the following hold:
    \begin{itemize}
        \item $\alpha_G$ is non-decreasing, i.e., $A \subseteq B$ implies that $\alpha_G(A) \leq \alpha_G(B)$ for $A, B \subseteq V(G)$,
        \item $\alpha_G$ is bounded by $1$ on singleton sets,
        \item $\alpha_G(A) \leq d \cdot \alpha_H(A)$ for any vertex set $A \subseteq V(G)$.
    \end{itemize}
    Thus, tree independence number and path independence number are $\simul$-bounded.
\end{lemma}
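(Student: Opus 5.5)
The plan is to verify the three hypotheses of \cref{lem:treewidth2} for the family $(\alpha_G)_{G \in \bbG}$ with $\ell = 1$ and $g(d,x) = d \cdot x$; the $\simul$-boundedness of tree independence number and path independence number then follows directly from that lemma. The first two items are immediate. Monotonicity holds because an independent set of $G[A]$ is also an independent set of $G[B]$ whenever $A \subseteq B$. For a singleton $\set{v}$, the graph $G[\set{v}]$ has independence number exactly $1$.

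The substantive item is the third, and for it I would simply localize the argument of \cref{thm:indsimulbounded}. Fix $A \subseteq V(G)$, let $k = \alpha_H(A)$, and let $S \subseteq A$ be a maximum independent set of $G[A]$. Assume $\abs{S} > k$. Then any $k+1$ vertices of $S$ lie in $A$, so they are not independent in $H[A]$. Hence two of them, say $u$ and $v$, are adjacent in $H$ but not in $G$, which forces $L(u) \cap L(v) = \emptyset$.

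Now fix a label $i \in \set{1,\dots,d}$ and set $S_i := \setcond{v \in S}{i \in L(v)}$. Any two vertices of $S_i$ share label $i$, so no pair in $S_i$ has disjoint label sets. By the previous paragraph, this gives $\abs{S_i} \leq k$. Because every label set is non-empty, the sets $S_1, \dots, S_d$ cover $S$, and therefore $\abs{S} \leq dk = d \cdot \alpha_H(A)$. The function $g(d,x) = dx$ is non-decreasing on the relevant domain (non-negative arguments), so \cref{lem:treewidth2} applies.

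No step here is a real obstacle. The only point requiring care is that non-emptiness of the label sets is genuinely needed in the third item, since otherwise the sets $S_i$ need not cover $S$. This is exactly the reason \cref{lem:treewidth2}, rather than \cref{lem:treewidth1}, must be used: vertices with empty label sets are handled there separately by placing them in singleton bags, where the bound $\ell = 1$ takes over.
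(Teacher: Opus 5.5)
Your proposal is correct and follows essentially the same route as the paper. The paper calls the first two items obvious and obtains the third from the proof of \cref{thm:indsimulbounded}, which is exactly your argument localized to $A$; \cref{lem:treewidth2} with $\ell = 1$ and $g(d,x) = dx$ then gives the conclusion.
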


\begin{proof}
    The first two properties are obviously true.
    The third property follows directly from the proof of \cref{thm:indsimulbounded}.
\end{proof}

Finally, we consider graph parameters defined via tree-decompositions that are not $\simul$-bounded, namely tree-length and path-length~\cite{DourisboureG07,DraganKL17}.
For a graph $G$ and a set $A \subseteq V(G)$, define $diam_G(A) := \max_{u,v \in A} d_G(u,v)$, where $d_G(u,v)$ denotes the distance between $u$ and $v$ in $G$.
Using \cref{def:treewidth} with $f = (diam_G)_{G \in \bbG}$, we define the \emph{tree-length} and \emph{path-length}.
Both parameters are bounded on interval graphs~\cite{DraganKL17} but unbounded on cycles~\cite{DourisboureG07,DraganKL17}, while the simultaneous interval number is bounded on cycles~\cite{beisegel2024simultaneousinterval}. This implies the following.

\begin{lemma}
    Tree-length and path-length are neither $\simul$-bounded nor $\simulnonempty$-bounded.
\end{lemma}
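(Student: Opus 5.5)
To prove that tree-length and path-length are neither $\simul$-bounded nor $\simulnonempty$-bounded, I will exhibit a single hereditary completable class on which both parameters are bounded, together with a family of graphs of bounded simultaneous number on which they are unbounded. The class is $\cC = \mathsf{interval}$. It is hereditary, and it is completable since complete graphs are interval graphs. The family is the family of cycles $(C_n)_{n \in \N}$.

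\emph{Step 1: boundedness on $\cC$.} I will invoke the known fact~\cite{DraganKL17} that interval graphs have path-length, and hence tree-length, at most $1$. Concretely, the maximal cliques of an interval graph can be ordered linearly so that they form a path-decomposition (a clique path), and every bag of it is a clique, hence has diameter at most $1$. Since tree-length is at most path-length, both parameters are bounded on $\cC$.

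\emph{Step 2: bounded simultaneous number of the family.} By~\cite{beisegel2024simultaneousinterval}, the cycles $C_n$ have simultaneous interval number bounded by a constant; for $C_6$, \cref{fig:example1} shows one with $2$ labels. Cycles have no isolated vertices, so by \cref{obsv:isolatedvertices2} the same constant bounds $\simulnonempty[\mathsf{interval}][C_n]$. This is what makes the argument work for both variants simultaneously.

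\emph{Step 3: unboundedness on the family.} I will use the known result~\cite{DourisboureG07,DraganKL17} that the tree-length of $C_n$ is $\lceil n/3 \rceil$, which is unbounded in $n$; path-length is at least tree-length, so it is unbounded as well. If a self-contained argument were wanted, I would reason as follows. Take a tree-decomposition of $C_n$. Some bag must meet all three arcs of a partition of the cycle into roughly equal thirds, which is the usual separator argument for cycles. Two vertices of that bag are then at distance about $n/3$ in $C_n$, so the bag has diameter $\Omega(n)$. This step is the only nontrivial one, and I will cite the literature for it. Combining Steps 1 to 3 with the definitions of $\simul$-boundedness and $\simulnonempty$-boundedness yields the lemma.
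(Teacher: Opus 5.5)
Your proof is correct and matches the paper's argument: interval graphs have bounded tree-length and path-length, cycles have unbounded tree-length, and cycles have bounded simultaneous interval number. Your appeal to \cref{obsv:isolatedvertices2} for the $\simulnonempty$ variant spells out a step the paper leaves implicit.
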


\subsection{Parameters Related to Modules} \label{subsec:modules}

We again consider a family of parameters which is defined using graph classes, namely the so called $\cG$-modular cardinalities, which were introduced by Lafond and Luo~\cite{lafond2023parameterized} and which generalize the parameters neighborhood diversity~\cite{lampis2012meta} and iterated type partition~\cite{cordasco2024iterated}.

A \emph{module} in a graph $G$ is a vertex set $M$ such that for every vertex $v \in V(G) \setminus M$ it holds that either $M \subseteq N(v)$ or $M \cap N(v) = \emptyset$. We say that a module is \emph{trivial} if it is either a singleton set or the set $V(G)$.
For a graph class $\cG$ and a graph $G$ the \emph{$\cG$-modular cardinality} of $G$, denoted $\cG\Gmc(G)$, is defined as the cardinality of a minimum partition of $V(G)$ such that each set of the partition is a module in $G$ and induces a graph of $\cG$.
Such a partition is called a \emph{$\cG$-modular partition}.
For a graph $G$ with a modular partition $P$, we denote by $G/P$ the \emph{quotient graph of $G$ with respect to $P$}, that is, the graph obtained from $G$ by contracting every module of $P$ to a single vertex. See \cref{fig:cographmodules} for an example.

When taking $\cG$ to be the graph class containing all edgeless and complete graphs, the $\cG$-modular cardinality equals neighborhood diversity.
Moreover, when taking $\cG$ to be the class of cographs, the $\cG$-modular cardinality equals the parameter iterated type partition~\cite{cordasco2024iterated}.

\begin{figure}
    \centering
        \begin{tikzpicture}
        \tikzstyle{dashedcircle}=[circle, draw, dashed, inner sep=0pt, minimum width=2.5cm]
        \tikzstyle{vertex}=[draw,circle]
        \def\a{.4} 
        \begin{scope}[xshift=-2cm,scale=0.7, every node/.append style={transform shape}]
        \node[dashedcircle] (M) at (0,0) {};
        \node[vertex] (m1) at (-0.6125,0.375) {};
        \node[vertex] (m2) at (-0.6125,-0.375) {};
        \node[vertex] (m3) at (0.375,0) {};
        \node[vertex] (m4) at (0.375,0.75) {};
        \node[vertex] (m5) at (0.375,-0.75) {};
        \draw[thick] (m1) -- (m3);
        \draw[thick] (m1) -- (m4);
        \draw[thick] (m1) -- (m5);
        \draw[thick] (m2) -- (m3);
        \draw[thick] (m2) -- (m4);
        \draw[thick] (m2) -- (m5);
        \node[dashedcircle] (A) at (-3,0.75) {};
        \node[vertex] (a1) at (-3,0.75) {};
        \node[vertex] (a2) at (-3.5,0.25) {};
        \node[vertex] (a3) at (-3.5,1.25) {};
        \node[vertex] (a4) at (-2.5,0.25) {};
        \node[vertex] (a5) at (-2.5,1.25) {};
        \node[dashedcircle] (B) at (-3,-2.5) {};
        \node[vertex] (b1) at (-3,-2.5) {};
        \node[vertex] (b2) at (-3.75,-2.5) {};
        \node[vertex] (b3) at (-2.5,-1.875) {};
        \node[vertex] (b4) at (-2.5,-3.125) {};
        \draw[thick] (b2) -- (b3) -- (b4) -- (b2);
        \draw[thick] (b1) -- (b2);
        \draw[thick] (b1) -- (b3);
        \draw[thick] (b1) -- (b4);
        \node[dashedcircle] (C) at (0,-3.5) {};
        \node[vertex] (c1) at (0,-3.5) {};
        \node[vertex] (c2) at (0.5,-3) {};
        \node[vertex] (c3) at (0.5,-4) {};
        \node[vertex] (c4) at (-0.5,-4) {};
        \node[vertex] (c5) at (-0.5,-3) {};
        \draw[thick] (c2) -- (c3) -- (c4) -- (c5) -- (c2) -- (c1) -- (c3);
        \draw[thick] (c4) -- (c1) -- (c5);
        \node[dashedcircle] (D) at (3,-2.5) {};
        \node[vertex] (d1) at (3,-2.5) {};
        \node[vertex] (d2) at (2.25,-2.5) {};
        \node[vertex] (d3) at (3.5,-1.875) {};
        \node[vertex] (d4) at (3.5,-3.125) {};
        \draw[thick] (d2) -- (d1) -- (d3);
        \draw[thick] (d1) -- (d4);
        \node[dashedcircle] (E) at (3,0.75) {};
        \node[vertex] (e1) at (2.875,1.25) {};
        \node[vertex] (e2) at (2.875,0.25) {};
        \node[vertex] (e3) at (2.125,0.75) {};
        \node[vertex] (e4) at (3.75,1.25) {};
        \node[vertex] (e5) at (3.75,0.25) {};
        \draw[thick] (e1) -- (e2) -- (e3) -- (e1);
        \draw[thick] (e4) -- (e5);
        \draw[ultra thick] (A) -- (B) -- (C) -- (D) -- (E);
        \draw[ultra thick] (M) -- (A);
        \draw[ultra thick] (M) -- (B);
        \draw[ultra thick] (M) -- (C);
        \draw[ultra thick] (M) -- (D);
        \draw[ultra thick] (M) -- (E);
        \end{scope}
        \begin{scope}[xshift=2.5cm,yshift=-1.25cm]
            \node[vertex] (a) at (0,0) {};
            \node[vertex] (b) at (1,0) {};
            \node[vertex] (c) at (2,0) {};
            \node[vertex] (d) at (3,0) {};
            \node[vertex] (e) at (4,0) {};
            \node[vertex] (m) at (2,1.25) {};
            \draw[thick] (m) -- (a) -- (b) -- (c) -- (d) -- (e) -- (m) -- (b);
            \draw[thick] (c) -- (m) -- (d);
            \node at (2,-1) {$G/P$};
        \end{scope}
        \node at (-2,-3.75) {$G$};
    \end{tikzpicture}
    \caption{A graph $G$ with $\mathsf{cograph}$-modular cardinality six and its quotient graph $G/P$. A thick edge between two dashed circles indicates a complete bipartite graph between the nodes inside the circles. Each dashed circle induces a cograph module in $G$.}
    \label{fig:cographmodules}
\end{figure}

We make use of \cref{lem:maxlower} to show that many $\cG$-modular cardinalites are not $\simul$-bounded.

\begin{lemma} \label{lem:Gmclower}
    Let $\cG$ be a hereditary graph class that is closed under disjoint union.
    Then for any two graphs $G_1, G_2$ it holds that 
    \[ \cG\Gmc(G_1) + \cG\Gmc(G_2) -1 \leq \cG\Gmc(G_1 \cup G_2) \leq \cG\Gmc(G_1) + \cG\Gmc(G_2). \]
    In particular, equality for the lower bound holds if and only if both $G_1$ and $G_2$ contain a connected component that induces a graph of $\cG$.
\end{lemma}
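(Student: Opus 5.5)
The plan is to prove the upper bound directly, then the lower bound by analyzing how a $\cG$-modular partition of $G = G_1 \cup G_2$ can mix the two graphs, and finally to read off the equality case from the same analysis.

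For the upper bound, take optimal $\cG$-modular partitions $P_1$ of $G_1$ and $P_2$ of $G_2$. Every module $M$ of $G_1$ stays a module in $G_1 \cup G_2$, since vertices of $G_2$ have no neighbours in $M$. Hence $P_1 \cup P_2$ is a $\cG$-modular partition of $G$, which gives $\cG\Gmc(G) \leq \cG\Gmc(G_1) + \cG\Gmc(G_2)$.

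For the lower bound, fix an optimal $\cG$-modular partition $P$ of $G$.
\begin{itemize}
\item Call a part $M \in P$ \emph{mixed} if it meets both $V(G_1)$ and $V(G_2)$.
\item For a mixed $M$ and any vertex $v \notin M$: if $v \in V(G_1)$, then $v$ is non-adjacent to $M \cap V(G_2)$. By modularity $v$ is then non-adjacent to all of $M$.
\item So no edge leaves a mixed part. A mixed part is therefore a union of connected components of $G$, containing components of both $G_1$ and $G_2$.
\item Restricting $P$ to $V(G_1)$ gives the sets $M \cap V(G_1)$. Each is a module of $G_1$: for a non-mixed part this is immediate, and for a mixed part it is a union of components. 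Each also induces a graph of $\cG$ by heredity. The same holds for $V(G_2)$.
\end{itemize}
Hence $\cG\Gmc(G_1) + \cG\Gmc(G_2) \leq |P| + (\#\text{mixed parts})$. Parts contained in one side are counted once and mixed parts twice.

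To get the bound $-1$, I would show that we may assume at most one mixed part. Suppose two mixed parts $M, M'$ exist. Then $M \cup M'$ is again a union of components with no outgoing edges, hence a module. It induces $G[M] \cup G[M']$, which lies in $\cG$ by closure under disjoint union. Merging the two parts decreases $|P|$, contradicting optimality. So an optimal $P$ has at most one mixed part, and $\cG\Gmc(G_1)+\cG\Gmc(G_2) \leq \cG\Gmc(G)+1$.

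For the equality characterization, equality in the lower bound forces a mixed part in some optimal $P$. That mixed part contains a component $C_1$ of $G_1$ and a component $C_2$ of $G_2$, and both induce graphs of $\cG$ by heredity. This proves the ``only if'' direction.

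For the ``if'' direction, take $C_i$ a component of $G_i$ with $G[C_i] \in \cG$. The step I expect to need care is exhibiting an optimal partition $P_i$ of $G_i$ in which some part is a union of components of $G_i$. Given such partitions, merge that part of $P_1$ with that part of $P_2$; the merged set is a module inducing a disjoint union in $\cG$, so this saves one part. To obtain such a $P_i$, start from an optimal $P_i$ and remove $C_i$ from every part it meets. Each part $M$ minus $C_i$ remains a module: $C_i$ is a whole component, so no edges join $C_i$ to the rest, and removing it from $M$ does not change adjacency of outside vertices to $M \setminus C_i$. Each such part remains in $\cG$ by heredity. Then add $C_i$ to some part $M_0$ disjoint from $C_i$ that is a union of components, or make $C_i$ its own part if none exists.

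The difficulty is the case where no such $M_0$ exists and $C_i$ was split across several parts, so that the part count may not drop. I would handle it by noting that if $C_i$ meets $k \geq 1$ parts, then replacing them by the parts minus $C_i$ (dropping empty ones) plus $\{C_i\}$ changes the count by at most $+1-(\text{number emptied})$. When $C_i$ is a union of whole parts nothing is emptied only if $k=1$, in which case $C_i$ itself is already a part. If a part properly containing vertices outside $C_i$ meets $C_i$, then, since it is a module meeting a component and more, it must be a union of components, since a module containing a vertex of a component and a vertex outside it has no external neighbours in either. It therefore already qualifies as $M_0$. This case analysis is where I expect the main fiddliness.
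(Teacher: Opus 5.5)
Your proof is correct and follows the same route as the paper. The upper bound comes from the union of optimal partitions. For the lower bound, an optimal partition has at most one mixed module, since two mixed modules could be merged using closure under disjoint union. For the ``only if'' direction, a mixed module supplies the required components.

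For the ``if'' direction, the paper just asserts that a component in $\cG$ lies inside a single module of a minimum partition, and your case analysis is a more explicit version of that step. One wording slip there: when $C_i$ is a union of whole parts, all $k$ parts are emptied, not none, so the count changes by $1-k$ and optimality forces $k=1$.
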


\begin{proof}
    Let $G_1, G_2$ be any two graphs.
    The upper bounds follows trivially by taking the union of the minimum $\cG$-modular partitions of $G_1$ and $G_2$.
    We therefore focus on the lower bound.
    
    If $\cG\Gmc(G_1) = 1$ or $\cG\Gmc(G_2) = 1$, the statement holds trivially. 
    Thus, we assume that $\cG\Gmc(G_1), \cG\Gmc(G_2) \geq 2$.
    Let $M_1, \dots, M_k$ be a minimum $\cG$-modular partition of $G := G_1 \cup G_2$.
    It suffices to show that there is at most one module $M_i$ in the partition that intersects both $V(G_1)$ and $V(G_2)$, which we call a \emph{mixed module}.
    Assume that there is such a mixed module $M_i$.
    As there are no edges between $V(G_1)$ and $V(G_2)$ in $G$, it must hold that there are no edges between $M_i \cap V(G_1)$ and $V(G_1) \setminus M_i$, similarly for $M_i \cap V(G_2)$ and $V(G_2) \setminus M_i$.
    Thus, no vertex outside of $M_i$ is adjacent to $M_i$ in $G$.
    Now assume that there is another mixed module $M_j$. Thus, both $M_i$ and $M_j$ are isolated in $G$ and therefore $M_i \cup M_j$ is a module.
    Furthermore, as $\cG$ is closed under disjoint union, $G[M_i \cup M_j] \in \cG$ holds, which is a contradiction to the fact that $M_1, \dots, M_k$ is a minimum $\cG$-modular partition.

    Notice that a mixed module in any $\cG$-modular partition of $G$ consists of connected components of $G_1$ and $G_2$ which induce a graph of $\cG$. This implies that whenever there is a mixed module $G_1$ and $G_2$, both contain a connected component which induces a graph of $\cG$.

    Conversely, if both $G_1$ and $G_2$ contain such a connected component, then these connected components are fully contained in some module of minimum $\cG$-modular partitions of $G_1$ and $G_2$. Then the union of these modules is again a $\cG$-module in $G_1 \cup G_2$ since $\cG$ is closed under disjoint union. Combining this module with the remaining modules of minimum $\cG$-modular partitions of $G_1$ and $G_2$ results in a $\cG$-modular partition of $G_1 \cup G_2$ of size $\cG\Gmc(G_1) + \cG\Gmc(G_2) -1$.
\end{proof}

Because $\cG$-modular cardinality is trivially bounded on $\cG$ we obtain that, for $\cG$ fulfilling the conditions of \cref{lem:maxlower,lem:Gmclower}, $\cG$-modular cardinality is not $\simul$-bounded.

\begin{theorem}
    Let $\cG$ be a non-trivial hereditary completable graph class that is closed under disjoint union. Then $\cG$-modular cardinality is neither $\simul$-bounded nor $\simulnonempty$-bounded.
\end{theorem}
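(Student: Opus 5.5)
The plan is to apply \cref{lem:maxlower} with $p = \cG\Gmc$ and $\cP = \cG$. Two hypotheses of that lemma are immediate. $\cG\Gmc$ is bounded by $1$ on $\cG$, since $V(G)$ itself is a module inducing a graph of $\cG$. Moreover, $\cG$ is hereditary, completable and closed under disjoint union by assumption. Next, $\cG\Gmc$ is unbounded on the class of all graphs. Because $\cG$ is non-trivial and hereditary, some graph $F \notin \cG$ exists. I would then use a family of prime graphs, e.g.\ paths $P_n$ for $n \geq 4$, whose only modules are trivial. For a prime graph $G \notin \cG$, every $\cG$-modular partition consists of singletons, so $\cG\Gmc(G) = \abs{V(G)}$. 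If $\cG$ contains all paths, I would use another prime family outside $\cG$; alternatively I would simply argue unboundedness abstractly. That is the step needing care, and I address it below.

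The remaining hypothesis is strict growth under disjoint union for large values. By \cref{lem:Gmclower}, $\cG\Gmc(G_1 \cup G_2) \geq \cG\Gmc(G_1) + \cG\Gmc(G_2) - 1$. Taking $k = 2$, if both values are at least $2$, then the sum minus one exceeds the maximum. This is exactly the condition of \cref{lem:maxlower}, which then yields that $\cG\Gmc$ is neither $\simul$-bounded nor $\simulnonempty$-bounded.

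The main obstacle is justifying unboundedness on all graphs cleanly. First I would try the following argument. Since $\cG \neq \bbG$, choose a minimal graph $F \notin \cG$. Then consider the graphs $G_n := nF$, that is, $n$ disjoint copies of $F$. Here $\cG\Gmc(F) \geq 2$, because $F \notin \cG$ rules out the one-set partition. Iterating \cref{lem:Gmclower} gives $\cG\Gmc(nF) \geq n(\cG\Gmc(F) - 1) + 1 \geq n + 1$. This proves unboundedness directly, without any primality considerations. It is also exactly the construction used inside the proof of \cref{lem:maxlower}, so the argument is self-contained.
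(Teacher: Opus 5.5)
Your proposal is correct and follows the same route as the paper. You apply \cref{lem:maxlower} with $\cP = \cG$, using that $\cG\Gmc$ equals $1$ on $\cG$ and that \cref{lem:Gmclower} gives the strict-growth condition with $k = 2$. Your $nF$ argument for unboundedness on all graphs is sound and fills in a hypothesis the paper leaves implicit, so you can drop the tentative prime-graph detour in your first paragraph.
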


This result captures many relevant graph classes, in particular the class of cographs, showing that iterated type partition is not $\simul$-bounded.

Assuming that $\cG$ is hereditary, the other assumptions in the theorem are somewhat necessary: If one considers the trivial graph class $\cG$ of all graphs, then $\cG\Gmc(G) = 1$ holds for all graphs $G$ and the parameter is obviously $\simul$-bounded. If $\cG$ is not completable, then $\cG\Gmc$ is unbounded on complete graphs and, thus, trivially $\simul$-bounded. We can not omit the condition of $\cG$ being closed under disjoint union. This can be seen using neighborhood diversity, which is $\simul$-bounded due to \cref{thm:fo}. The defining class of complete and edgeless graphs fulfills the other conditions in the theorem but is not closed under disjoint union.

Another parameter which is stronlgy related to modules in a graph but not captured by $\cG$-modular cardinalities is the \emph{modular-width}~\cite{gajarsky2013modular}.

\begin{definition}[Modular-width]
    We consider algebraic expressions on graphs using the following four operations:
    \begin{enumerate}
        \item the creation of an isolated vertex,
        \item the disjoint union of two graphs,
        \item the complete join of two graphs, i.e., the disjoint union of the graph where every possible edge between the graphs is added,
        \item the substitution operation with respect to some graph $G$ with vertices $v_1, \dots, v_n$, i.e., given graphs $G_1, \dots, G_n$, each vertex $v_i$ in $G$ is substituted by the graph $G_i$.
    \end{enumerate}
    The \emph{width} of an algebraic expression using the above operations is the maximum number of operands used by the substitution operation.
    The \emph{modular-width} of a graph is the least integer $k$ such that the graph can be obtained by such an algebraic expression of width $k$.
\end{definition}

The operations used in the definition of modular-width correspond to modules in a graph in the sense that the operands are modules in the constructed graph.
Note that, if a graph $G$ only has trivial modules, then the modular-width of $G$ is equal to $\abs{V(G)}$ since using the operations 2 and 3 would lead to the existence of at least one non-trivial module. Such a graph is called \emph{prime}.

\begin{theorem} \label{thm:modw}
    Modular-width is neither $\simul$-bounded nor $\simulnonempty$-bounded.
\end{theorem}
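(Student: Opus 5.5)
Modular-width is bounded on complete graphs: a complete graph arises from isolated vertices by repeated complete joins, so its modular-width is at most $2$. By \cref{obsv:ecc2}, it therefore suffices to exhibit a family of graphs without isolated vertices that has bounded edge clique cover number but unbounded modular-width. As noted above, a prime graph $G$ has modular-width $\abs{V(G)}$. So I will look for a family of prime graphs $(G_n)_{n\in\N}$ with $\abs{V(G_n)}\to\infty$ whose edges can be covered by a constant number of cliques.

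A first idea is blowing up a fixed graph into cliques, as in \cref{lem:smecc}. This fails, because the blown-up cliques are non-trivial modules. Instead I will take a graph built from a constant number of large cliques that overlap in a staircase pattern, since overlapping cliques destroy twin structure. Concretely, the candidate is the complement of a half-graph. Take vertices $a_1,\dots,a_n,b_1,\dots,b_n$. The set $\set{a_1,\dots,a_n}$ is a clique, the set $\set{b_1,\dots,b_n}$ is a clique, and $a_ib_j$ is an edge if and only if $i\le j$.

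The edge clique cover number of this graph is not obviously constant, since covering the chain part $\setcond{a_ib_j}{i\le j}$ by cliques may need about $\log n$ cliques. So I will prefer a different prime family with a clearly constant cover: an interval-like \enquote{path of cliques} with shifts. Let the vertex set be $v_1,\dots,v_{2n}$ and make $v_i,v_j$ adjacent if and only if $\abs{i-j}<n$, so that the graph is a proper interval graph. Its edges are covered by the cliques $\set{v_k,\dots,v_{k+n-1}}$. These are still $n+1$ cliques, however, so the cover is not constant either.

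The fix is to use sets of the form ``first half'' and ``second half'' with a staircase inside. I will therefore take the intersection of a constant number of simple graphs. Every graph with a $d$-simultaneous $\mathsf{complete}$-representation, i.e., with edge clique cover number $d$, is determined by label sets in $2^{[d]}$, so vertices with equal label sets are twins. Consequently a prime graph with edge clique cover number $d$ has at most $2^d$ vertices. This means the route through \cref{obsv:ecc2} cannot work, and I will abandon it.

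The correct approach is to take a class $\cC$ other than $\mathsf{complete}$. I choose $\cC=\mathsf{interval}$ or $\cC=\mathsf{cograph}$, either of which has bounded modular-width on its members or at least has a well-known modular structure. The natural choice is $\cC=\mathsf{cograph}$: every cograph has modular-width at most $2$, and the class is hereditary and completable. It therefore suffices to find prime graphs with bounded simultaneous $\mathsf{cograph}$-number. My plan is to use the paths $P_n$ for $n\ge4$, which are prime, so that $\mathsf{mw}(P_n)=n$. The main step is to give a $2$-simultaneous $\mathsf{cograph}$-representation of $P_n$ with non-empty label sets. Label the vertices $v_i$ along the path and give $v_i$ the label set $\set{i\bmod 2}\cup\set{(i+1)\bmod 2}$. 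This labeling is useless as it stands, so the real construction is with $4$ labels: give $v_i$ the labels $\set{i \bmod 4,\ (i+1)\bmod 4}$. Then $v_i$ and $v_j$ share a label exactly when $j-i\in\set{-1,0,1}$ modulo $4$. Choose $H$ to be the complete multipartite cograph whose parts are the residue classes modulo $4$, i.e., $v_iv_j\in E(H)$ if and only if $i\not\equiv j \pmod 4$. Adjacency in $H$ together with a shared label then forces $j-i\equiv\pm1\pmod 4$. This still admits spurious edges with $\abs{i-j}=3,5,\dots$. To remove them I will instead use $H$ equal to the graph in which $v_i$ and $v_j$ are adjacent if and only if their indices lie in the same or adjacent blocks of a coarse partition; verifying that this $H$ is a cograph and that it kills all long-range edges is the main obstacle. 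The fallback, if this fails, is cycles with $\cC=\mathsf{interval}$: cycles are prime, they have simultaneous interval number $2$ \cite{beisegel2024simultaneousinterval}, and modular-width is bounded on interval graphs only if it is, which it is not. For this reason I expect the final choice to be cographs, and the representation to be checked edge-by-edge as the key lemma.
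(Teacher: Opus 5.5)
Your reduction is the right one and is essentially the paper's. Modular-width is bounded on cographs, and a prime graph $G$ has modular-width $\abs{V(G)}$. So it suffices to find prime graphs of unbounded order with bounded simultaneous $\cC$-number, for some hereditary completable $\cC$ of bounded modular-width. You also correctly see that $\cC=\mathsf{complete}$, i.e.\ \cref{obsv:ecc2}, cannot work.

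The gap is your choice of family. Long paths do \emph{not} have bounded simultaneous $\mathsf{cograph}$-number, so your ``main step'' fails for every choice of $H$ and $L$, not only the ones you tried. To see this, let $(H,L)$ be a $d$-simultaneous $\mathsf{cograph}$-representation of $P_n$ and consider the cotree of $H$.
\begin{itemize}
\item At a join node $x$, vertices below different children of $x$ are adjacent in $H$. If they share a label, they must therefore be consecutive on the path. So for each label, the crossing pairs at $x$ form a complete multipartite subgraph of $P_n$, which has at most two edges. Hence at most $2d$ path edges cross at $x$.
\item A label shared by the endpoints of an edge crossing at $x$ cannot be shared by the endpoints of an edge crossing at a join node below $x$. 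Otherwise one endpoint of the first edge is $H$-adjacent to, and shares that label with, both endpoints of the second, giving a triangle in $P_n$.
\item The lowest common ancestor of a subpath with at least two vertices is a join node at which one of its edges crosses. Delete the at most $2d$ crossing edges there and pass to the largest remaining subpath, which lies below a single child. Repeating this yields more than $\log_{2d+1}(n/2)$ nested join nodes, each needing its own label, so $(2d+1)^d>n/2$.
\end{itemize}
The same obstruction applies to cycles and to every subclass of cographs. Your fallback is invalid too: paths are prime interval graphs, so modular-width is unbounded on $\mathsf{interval}$, and that class cannot witness a failure of $\simul$-boundedness.

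The missing idea is to build the prime family directly from a representation over a class of bounded modular-width, rather than starting from a familiar prime graph. The paper takes $\cC=\cocluster$. Let $H$ be the complete $k$-partite graph whose $i$-th part contains one vertex $v_A^{(i)}$ for each non-empty $A\subseteq\set{1,\dots,d}$, and set $L(v_A^{(i)})=A$. The resulting graph $G_{d,k}$ has $(2^d-1)k$ vertices and simultaneous $\cocluster$-number at most $d$ with non-empty label sets.

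For $d\ge 2$ (and $k\ge2$) this graph is prime, by a module-closure argument:
\begin{itemize}
\item Suppose a module contains vertices $u,v$ from different parts. Then it contains $N(u)$ restricted to the part of $v$, and vice versa. Using $v_{\set{1,\dots,d}}$, this forces both parts entirely into the module.
\item Within any part $\ell$, the vertices $v_{\set{1}}^{(\ell)}$ and $v_{\set{1,\dots,d}}^{(\ell)}$ are distinguished by $v_{\set{2}}^{(h)}$ for every $h\neq\ell$. This drags every part into the module.
\item Two vertices in the same part have different label sets. They are therefore distinguished by some vertex of another part, which reduces to the first case.
\end{itemize}
Fixing $d=2$ and letting $k\to\infty$ gives prime graphs of unbounded order, hence of unbounded modular-width, with $\simul[\cocluster]$ and $\simulnonempty[\cocluster]$ at most $2$.
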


\begin{proof}
    Consider the graph class $\cocluster$, consisting of complements of disjoint unions of cliques that has modular-width zero.
    For $d,k \in \N$ with $d \geq 2$, we define a family of graphs $G_{d,k}$ using simultaneous representations:
    Let $H$ be the co-cluster with $V(H) = \{v_A^{(i)} \mid \emptyset \neq A \subseteq \set{1,\dots,d}, 1 \leq i \leq k\}$ and $E(H) = \{v_A^{(i)} v_B^{(j)} \mid {i \neq j}\}$.
    That is, $H$ is the complete $k$-partite graph where the sets of the form $V_i = \{v_A^{(i)} \mid \emptyset \neq A \subseteq \set{1,\dots,d}\}$ are the independent sets of the $k$-partition.
    Then, we define $L$ by choosing $L(v_A^{(i)}) = A$ for $\emptyset \neq A \subseteq \set{1,\dots,d}$ and $1 \leq i \leq k$.
    Let $G_{d,k}$ be the graph obtained by the simultaneous representation $(H,L)$ (see \cref{fig:modw} for an example).
    By construction, $G_{d,k}$ has simultaneous $\cocluster$-number at most~$d$.
    We claim that $G_{d,k}$ only has trivial modules, which implies the statement.

    Let $G_{d,k}$ with some fixed $d \geq 2$ and some $k \in \N$ be given.
    Assume that there exists some non-trivial module $M$ of $G_{d,k}$.
    Thus, $M$ contains at least two vertices $u$ and $v$.
    We first consider the case that the two vertices are from different independent sets of the graph $H$.
    Then the neighborhood of $u$ restricted to the independent set of $v$ has to be contained in $M$, as otherwise these vertices would be adjacent to $u$ but not to $v$.
    Similarly, the neighborhood of $v$ restricted to the independent set of $u$ has to be contained in $M$.
    By iterating this argument, both independent sets of $u$ and $v$ have to be contained fully in $M$.
    Let $V_\ell$ be the independent set that contains $u$ and consider the vertices $v_{\set{1}}^{(\ell)}$ and $v_{\set{1,\dots,d}}^{(\ell)}$ which are in $M$.
    The set $N\br{v_{\set{1}}^{(\ell)}} \triangle N\br{v_{\set{1, \dots, d}}^{(\ell)}}$ must contain vertices of every independent set other than $V_\ell$.
    Again, all these independent sets have to be fully contained in $M$, which implies $M = V(G_{d,k})$, a contradiction to $M$ being non-trivial.
    
    Now consider the case that $u$ and $v$ are from the same independent set of $H$.
    Then again, the symmetric difference of their neighborhoods has to be contained in $M$, implying that at least one vertex of another independent set of $H$ is in $M$, which brings us back to the first case.
\end{proof}

\begin{figure}
    \centering
        \begin{tikzpicture}
        \def\a{.4} 
        \def\d{8pt}
        \def\labelone{BTUred}
        \def\labeltwo{lipicsYellow}
        \def\labelthree{coolblue}
        \def\nolabel{lipicsGray}
        \tikzset{c/.style={circle,draw,thick,minimum size=\a cm}} 
        \tikzset{one color/.style={c,fill=#1}}  
        \tikzset{pics/two colors/.style args=
            {#1|#2|rotate=#3}{code={%
        \fill[#1,rotate=#3] (0,\a/2) arc(90:270:\a/2)--cycle;           
        \fill[#2,rotate=#3] (0,\a/2) arc(90:-90:\a/2)--cycle;
        \path (0,0) node[c] (-boundary) {};
        }}}
        \tikzset{pics/three colors/.style args=
            {#1|#2|#3|rotate=#4}{code={%
        \fill[#1,rotate=#4] (0,\a/2) arc(90:210:\a/2)--(0,0)--cycle;            
        \fill[#2,rotate=#4] (0,\a/2) arc(90:-30:\a/2)--(0,0)--cycle;
        \fill[#3,rotate=#4] (210:\a/2) arc(210:330:\a/2)--(0,0)--cycle;
        \path (0,0) node[c] (-boundary) {};
        }}}
        \path
        (0,0) node[one color=\labelone] (1a) {}
        (2,0) node[one color=\labeltwo] (2a) {}
        (4,0) node[one color=\labelthree] (3a) {}
        (0,4) node[one color=\labelone] (1b) {}
        (2,4) node[one color=\labeltwo] (2b) {}
        (4,4) node[one color=\labelthree] (3b) {}
        ;
        \path
        (6,0) pic (12a) {two colors={\labelone|\labeltwo|rotate=0}}
        (8,0) pic (13a) {two colors={\labelone|\labelthree|rotate=0}}
        (10,0) pic (23a) {two colors={\labeltwo|\labelthree|rotate=0}}
        (12,0) pic (123a) {three colors={\labelone|\labeltwo|\labelthree|rotate=0}}
        (6,4) pic (12b) {two colors={\labelone|\labeltwo|rotate=0}}
        (8,4) pic (13b) {two colors={\labelone|\labelthree|rotate=0}}
        (10,4) pic (23b) {two colors={\labeltwo|\labelthree|rotate=0}}
        (12,4) pic (123b) {three colors={\labelone|\labeltwo|\labelthree|rotate=0}}
        ;
		\draw[line width=1pt,\labelone] (1a) -- (1b);
        \draw[line width=1pt,\labelone] (1a) -- (12b-boundary);
        \draw[line width=1pt,\labelone] (1a) -- (13b-boundary);
        \draw[line width=1pt,\labelone] (1a) -- (123b-boundary);
        \draw[line width=1pt,\labeltwo] (2a) -- (2b);
        \draw[line width=1pt,\labeltwo] (2a) -- (12b-boundary);
        \draw[line width=1pt,\labeltwo] (2a) -- (23b-boundary);
        \draw[line width=1pt,\labeltwo] (2a) -- (123b-boundary);
        \draw[line width=1pt,\labelthree] (3a) -- (3b);
        \draw[line width=1pt,\labelthree] (3a) -- (13b-boundary);
        \draw[line width=1pt,\labelthree] (3a) -- (23b-boundary);
        \draw[line width=1pt,\labelthree] (3a) -- (123b-boundary);
        \draw[line width=1pt,\labelone] (12a-boundary) -- (1b);
        \draw[line width=1pt,\labeltwo] (12a-boundary) -- (2b);
        \draw[line width=1pt,\labelone] (12a-boundary) -- (13b-boundary);
        \draw[line width=1pt,\labeltwo] (12a-boundary) -- (23b-boundary);
        \draw[line width=1pt,\labelone,dash pattern= on \d off \d] (12a-boundary) -- (12b-boundary);
        \draw[line width=1pt,\labeltwo,dash pattern= on \d off \d,dash phase=\d] (12a-boundary) -- (12b-boundary);
        \draw[line width=1pt,\labelone,dash pattern= on \d off \d] (12a-boundary) -- (123b-boundary);
        \draw[line width=1pt,\labeltwo,dash pattern= on \d off \d,dash phase=\d] (12a-boundary) -- (123b-boundary);
        \draw[line width=1pt,\labelone] (13a-boundary) -- (1b);
        \draw[line width=1pt,\labelthree] (13a-boundary) -- (3b);
        \draw[line width=1pt,\labelone] (13a-boundary) -- (12b-boundary);
        \draw[line width=1pt,\labelthree] (13a-boundary) -- (23b-boundary);
        \draw[line width=1pt,\labelone,dash pattern= on \d off \d] (13a-boundary) -- (13b-boundary);
        \draw[line width=1pt,\labelthree,dash pattern= on \d off \d,dash phase=\d] (13a-boundary) -- (13b-boundary);
        \draw[line width=1pt,\labelone,dash pattern= on \d off \d] (13a-boundary) -- (123b-boundary);
        \draw[line width=1pt,\labelthree,dash pattern= on \d off \d,dash phase=\d] (13a-boundary) -- (123b-boundary);
        \draw[line width=1pt,\labeltwo] (23a-boundary) -- (2b);
        \draw[line width=1pt,\labelthree] (23a-boundary) -- (3b);
        \draw[line width=1pt,\labeltwo] (23a-boundary) -- (12b-boundary);
        \draw[line width=1pt,\labelthree] (23a-boundary) -- (13b-boundary);
        \draw[line width=1pt,\labeltwo,dash pattern= on \d off \d] (23a-boundary) -- (23b-boundary);
        \draw[line width=1pt,\labelthree,dash pattern= on \d off \d,dash phase=\d] (23a-boundary) -- (23b-boundary);
        \draw[line width=1pt,\labeltwo,dash pattern= on \d off \d] (23a-boundary) -- (123b-boundary);
        \draw[line width=1pt,\labelthree,dash pattern= on \d off \d,dash phase=\d] (23a-boundary) -- (123b-boundary);
        \draw[line width=1pt,\labelone] (123a-boundary) -- (1b);
        \draw[line width=1pt,\labeltwo] (123a-boundary) -- (2b);
        \draw[line width=1pt,\labelthree] (123a-boundary) -- (3b);
        \draw[line width=1pt,\labelone,dash pattern= on \d off \d] (123a-boundary) -- (12b-boundary);
        \draw[line width=1pt,\labeltwo,dash pattern= on \d off \d,dash phase=\d] (123a-boundary) -- (12b-boundary);
        \draw[line width=1pt,\labelone,dash pattern= on \d off \d] (123a-boundary) -- (13b-boundary);
        \draw[line width=1pt,\labelthree,dash pattern= on \d off \d,dash phase=\d] (123a-boundary) -- (13b-boundary);
        \draw[line width=1pt,\labeltwo,dash pattern= on \d off \d] (123a-boundary) -- (23b-boundary);
        \draw[line width=1pt,\labelthree,dash pattern= on \d off \d,dash phase=\d] (123a-boundary) -- (23b-boundary);
        \draw[line width=1pt,\labelone,dash pattern= on \d off 2*\d] (123a-boundary) -- (123b-boundary);
        \draw[line width=1pt,\labeltwo,dash pattern= on \d off 2*\d,dash phase=\d] (123a-boundary) -- (123b-boundary);
        \draw[line width=1pt,\labelthree,dash pattern= on \d off 2*\d,dash phase=2*\d] (123a-boundary) -- (123b-boundary);
    \end{tikzpicture}
    \caption{The graph $G_{3,2}$ as given in the proof of \cref{thm:modw}.}
    \label{fig:modw}
\end{figure}

We note that the parameters iterated type partition, modular-width, precedence thinness, and sm-width are sandwiched between $\simul$-bounded parameters while not being $\simul$-bounded themselves (see \cref{fig:diagram}). This demonstrates that $\simul$-boundedness of a parameter can not be deduced by comparison with other $\simul$-bounded parameters.

\section{Upper Bounds} \label{sec:upper}

While the previous section focused on lower bounds for simultaneous $\C$-numbers, we will now investigate which graph parameters can form upper bounds on these numbers. For lower bounds, we preferred to show $\simul$-boundedness instead of $\simulnonempty$-boundedness since that is a strictly stronger property.
In the case of upper bounds the reverse is true, i.e., a parameter being an upper bound on $\simulnonempty[\cC]$ is a stronger property than being an upper bound on $\simul[\cC]$.
This can be seen using the \emph{edge clique cover number}, which was shown to upper bound $\simul[\cC]$ for every hereditary completable $\cC$ by Beisegel et al.~\cite{beisegel2024simultaneousinterval}, while $\simulnonempty[\mathsf{complete}]$ is unbounded in the edge clique cover number, as can be seen via edgeless graphs.

\subsection{Subquadratic Upper Bounds}

We have seen in \cref{sec:simrep} that simultaneous $\cC$-numbers can be at most quadratic in the number of vertices.
Beisegel et al.~\cite{beisegel2024simultaneousinterval} showed that this bound is tight for the simultaneous interval number up to a constant factor.
Using a result of Erdős, Gimbel, and Kratsch~\cite{erdös1991cochromatic} on $\cP$-chromatic numbers, we can show that this is does not hold for all graph classes $\C$.

\begin{theorem} \label{thm:on^2}
    Let $G$ be a graph on $n$ vertices and $\cC$ be a hereditary completable graph class that contains all edgeless graphs and that is closed under the join operation. 
    Then $\simulnonempty[\cC][G] \in o\!\br{n^2}$.
    In particular, there is a constant $c > 0$ such that $\simulnonempty[\cC][G] \leq c \frac{n^2}{\log n}$.
\end{theorem}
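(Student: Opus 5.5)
We build the representation with host graph $H := K_n$ on $V(G)$ and put all the work into the labeling $L$. For an edge $uv$, the pair is adjacent in $H$, so we need $L(u)\cap L(v)\neq\emptyset$ exactly when $uv\in E(G)$. Since $\cC$ is completable, $K_n\in\cC$. The quantity we must bound is therefore the minimum number of labels such that two vertices share a label if and only if they are adjacent in $G$, with every label set non-empty. Bounding this directly amounts to bounding an edge clique cover number plus isolated vertices, which is not subquadratic in general. So we will not take $H$ complete. Instead, we let $H$ absorb a large part of $G$ in one step, using the closure of $\cC$ under joins together with the fact that $\cC$ contains all edgeless and complete graphs.

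Since $\cC$ contains all edgeless graphs and all complete graphs and is closed under joins, it contains every complete multipartite graph. It also contains joins of complete multipartite graphs with cliques, and hence every graph whose vertex set splits into parts, each part inducing a clique or an independent set, with all edges present between different parts. We invoke Erd\H{o}s, Gimbel, and Kratsch~\cite{erdös1991cochromatic}: every graph on $n$ vertices has cochromatic number $O(n/\log n)$, i.e., $V(G)$ can be partitioned into $k\le c' n/\log n$ sets $X_1,\dots,X_k$, each a clique or an independent set in $G$. We take $H$ to be the graph on $V(G)$ where each $X_i$ induces exactly what it induces in $G$ and all pairs between different parts are edges. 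Then $H\in\cC$ by the join closure and $E(G)\subseteq E(H)$.

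It remains to choose $L$, and we do this with two kinds of labels. For each part $X_i$ we use one private label $\ell_i$, given to every vertex of $X_i$. Inside a clique part this realizes all edges; inside an independent part $H$ has no edges, so the shared label is harmless. This also makes every label set non-empty. For pairs $(X_i,X_j)$ with $i<j$, the edges of $G$ between $X_i$ and $X_j$ must be realized by shared labels, and non-adjacent pairs across parts must not share a label. Labels $\ell_i$ are private to a single part, so they cause no cross-part conflicts. For each cross edge $uv$ of $G$ we add a fresh label to $L(u)$ and $L(v)$. This produces only the intended intersection, but it uses up to $\binom n2$ labels, which is too many.

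This cross-part step is the main obstacle. A first idea is to cover the bipartite graph $G[X_i,X_j]$ by stars centred in the smaller side, using $\min(|X_i|,|X_j|)$ labels per pair. This fails, however: a star label at $u\in X_i$ is also shared with $u$'s neighbours in every other part, and two such neighbours $w\in X_j$, $w'\in X_{j'}$ may be non-adjacent, which creates a false edge. The remedy is to assign the labels per part: give each vertex $u$ a label $s_u$ and give $s_u$ to the neighbours of $u$ lying in parts with larger index than $u$'s part. Two vertices $w,w'$ that share $s_u$ could then still be non-adjacent with $H$-edge between them.

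We therefore expect to need an iterated refinement, applying the Erd\H{o}s--Gimbel--Kratsch bound recursively, to keep the total at $O(n^2/\log n)$. The concrete first attempt is to bound the number of labels by $\sum_{i<j}\min(|X_i|,|X_j|)\le k\cdot n$, which gives $O(n^2/\log n)$. Correctness of the shared labels would come from restricting each star label to a single pair of parts. Concretely, the labels are $s_{u,j}$ for $u$ in the smaller part and $j$ the index of the other part. The count is then at most $\sum_i |X_i|\cdot k\le nk\le c\,n^2/\log n$, and each label $s_{u,j}$ lies only in $u$ and $N(u)\cap X_j$. Two members of $N(u)\cap X_j$ lie in the same part, where $H=G$, so a shared label between them must not create a false edge. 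This holds for clique parts. For independent parts $H$ has no edge there, so no false edge arises either. This resolves the obstruction and gives the bound.
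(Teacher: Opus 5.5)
Your final construction is correct and is essentially the paper's proof. The paper takes the join $H$ of the parts of a cochromatic partition and gives each part one private label (its $\beta^{(i)}$); your labels $s_{u,j}$, placed only on $u$ and $N(u)\cap X_j$, are its $\alpha^{(i,j)}_\ell$ from \cref{lem:chromaticbound}, and the paper then applies the Erd\H{o}s--Gimbel--Kratsch bound as you do. The only difference is that the paper sorts the parts by size to get the slightly sharper count $n(k-\ln k)+k$, which does not affect the asymptotics. The earlier false starts, including the suggested iterated refinement, are unnecessary and should be cut, since restricting each star label to a single pair of parts already settles correctness.
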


To prove this theorem, we need the following three results.

\begin{restatable}{lemma}{ineq} \label{lem:partition}
    Let $V$ be a set with $n$ elements and $X_1, \dots, X_k$ be a partition of $V$.
    Furthermore, let $\abs{X_1} \leq \abs{X_2} \leq \dots \leq \abs{X_k}$.
    Then $\sum_{i=1}^{k-1} \br{(k-i) \abs{X_i}} \leq n \br{k - \ln k}$.
\end{restatable}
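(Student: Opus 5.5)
The plan is to use Chebyshev's sum inequality. The stated bound $n(k-\ln k)$ is much weaker than what Chebyshev already gives, so only a one-variable estimate remains afterwards.

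Write $x_i := \abs{X_i}$, so that $x_1 \le x_2 \le \dots \le x_k$ and $\sum_{i=1}^k x_i = n$. Set the weights $w_i := k-i$ for $1 \le i \le k$. Since $w_k = 0$, the left-hand side of the lemma equals $\sum_{i=1}^{k} w_i x_i$. The sequence $(w_i)$ is non-increasing and $(x_i)$ is non-decreasing, so the two sequences are oppositely ordered. Chebyshev's sum inequality for oppositely ordered sequences then gives
\[ \sum_{i=1}^{k} w_i x_i \le \frac{1}{k}\br{\sum_{i=1}^k w_i}\br{\sum_{i=1}^k x_i} = \frac{1}{k}\cdot\frac{k(k-1)}{2}\cdot n = \frac{n(k-1)}{2}. \]
If one prefers a self-contained argument, this follows from expanding $\sum_{i,j}(w_i-w_j)(x_i-x_j) \le 0$. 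Each summand is non-positive, because $w_i - w_j$ and $x_i - x_j$ always have opposite signs or one of them is zero.

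It then remains to check that $\frac{k-1}{2} \le k - \ln k$ for every integer $k \ge 1$, which is equivalent to $\ln k \le \frac{k+1}{2}$. I would verify this via the standard bound $\ln k \le k/e$, obtained by maximizing $\ln t / t$, which attains its maximum $1/e$ at $t=e$. Since $k/e \le k/2 \le (k+1)/2$, the inequality holds. The case $k=1$ is also trivial directly, since the sum on the left is empty.

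The only step needing care is applying the rearrangement/Chebyshev inequality with the correct orientation. Here the sizes sorted increasingly are paired with decreasing weights, which is exactly the orientation in which the inequality bounds the sum from above by the average. I expect no real obstacle beyond this.
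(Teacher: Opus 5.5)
Your proof is correct, and it takes a different route from the paper. The paper bounds each part separately. Since $X_i,\dots,X_k$ are $k-i+1$ parts, each of size at least $\abs{X_i}$, it gets $\abs{X_i}\le \frac{n}{k-i+1}$. Summing these pointwise bounds gives $n\sum_{i=1}^{k-1}\frac{i}{i+1}=n\br{k-\sum_{i=1}^{k}\frac1i}$, and the harmonic-number estimate $\sum_{i=1}^k\frac1i\ge\ln k$ finishes; in that route the logarithm appears naturally.

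You instead apply Chebyshev's sum inequality to the oppositely ordered sequences $(k-i)$ and $\abs{X_i}$, and you get $\frac{n(k-1)}{2}$. This is the true maximum of the left-hand side, attained when all parts have equal size. The paper's pointwise bounds cannot all be tight at once and lose roughly a factor of $2$. Indeed $\frac{k-1}{2}\le k-\sum_{i=1}^k\frac1i$ for every $k\ge1$, with equality only for $k\le 2$.

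The cost of your route is the separate, fairly crude comparison of $\frac{k-1}{2}$ with $k-\ln k$ via $\ln k\le k/e$, which you carry out correctly. The benefit is a sharper intermediate bound: the label count in \cref{lem:chromaticbound} improves to $\frac{n(k-1)}{2}+k$, which halves the leading constant in the resulting $\O\!\br{n^2/\log n}$ bound.
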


The proof of this lemma is given in Appendix \ref{sec:proofs}.

\begin{lemma} \label{lem:chromaticbound}
    Let $\mathcal{C}$ be a hereditary completable graph class that is closed under the join operation and let $G$ be a graph on $n$ vertices with $k = \chi_\cC(G)$.
    Then $\simul[\cC][G] \leq \simulnonempty[\cC][G] \leq n \br{k - \ln k} + k$.
\end{lemma}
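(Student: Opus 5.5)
The plan is to build the representation explicitly from an optimal $\cC$-coloring. Let $X_1, \dots, X_k$ be a $\cC$-coloring of $G$ with $k = \chi_\cC(G)$, indexed so that $\abs{X_1} \leq \abs{X_2} \leq \dots \leq \abs{X_k}$. Let $H$ be the join of the graphs $G[X_1], \dots, G[X_k]$. In other words, $H$ agrees with $G$ inside each $X_i$ and has all edges between different parts. Each $G[X_i]$ lies in $\cC$ and $\cC$ is closed under the join operation, so $H \in \cC$. Also $V(H)=V(G)$ and $E(G) \subseteq E(H)$. The inequality $\simul[\cC][G] \leq \simulnonempty[\cC][G]$ is trivial, so it remains to produce a labeling $L$ with non-empty label sets and at most $n(k-\ln k)+k$ labels.

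The labeling uses two kinds of labels. First, take a \emph{part label} $\ell_i$ for each $1 \leq i \leq k$ and put it into $L(v)$ for every $v \in X_i$. This uses $k$ labels and makes every label set non-empty. Second, for every $i < k$, every vertex $v \in X_i$ and every $j$ with $i < j \leq k$, take a label $\lambda_{v,j}$ and put it into the label sets of $v$ and of all vertices in $N_G(v) \cap X_j$. There are exactly $\sum_{i=1}^{k-1} (k-i)\abs{X_i}$ labels of this kind. By \cref{lem:partition}, this is at most $n(k - \ln k)$, which gives the claimed total. The sorting by size matters precisely so that \cref{lem:partition} applies: the smaller parts receive the larger multipliers.

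Correctness is a case check, and I expect it to be the only delicate step.

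Consider first two vertices $u,v$ in the same part $X_i$. They always share $\ell_i$, so they are adjacent in the representation iff $uv \in E(H)$. Inside $X_i$ this is iff $uv \in E(G)$, as required.

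Next take $u \in X_i$ and $v \in X_j$ with $i<j$. Here $uv \in E(H)$ always, so we must show that $L(u) \cap L(v) \neq \emptyset$ iff $uv \in E(G)$. If $uv \in E(G)$, both sets contain $\lambda_{u,j}$. Conversely, suppose they share a label. It cannot be a part label, since part labels live in a single part. A label $\lambda_{w,m}$ is carried only by $w$ itself and by vertices of $X_m$. Since $u$ and $v$ lie in different parts, one of them must be $w$ and the other must lie in $N_G(w) \cap X_m$. Moreover $m$ is larger than the index of $w$'s part. This forces $w = u$ and $m = j$ with $v \in N_G(u)$, so $uv \in E(G)$.

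Hence $(H,L)$ is a simultaneous $\cC$-representation of $G$ with non-empty label sets, and it uses at most $n(k-\ln k)+k$ labels.
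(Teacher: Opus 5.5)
Your proposal is correct and follows the paper's proof essentially verbatim. The paper uses the same ingredients: the join $H$ of the size-sorted color classes, one label $\beta^{(i)}$ per class (your $\ell_i$), and one label $\alpha_\ell^{(i,j)}$ for each vertex $x_\ell^{(i)}$ and each later class $j$ (your $\lambda_{v,j}$), with the count bounded via \cref{lem:partition}. Your case analysis for a shared cross-part label, which rules out $w=v$ by comparing part indices, is slightly more explicit than the paper's version of that step.
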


\begin{proof}
    Let $X_1, \dots, X_k$ be a $\mathcal{C}$-coloring of $G$ and $ H := G[X_1] + G[X_2] + G[X_3] + \dots + G[X_k] \in \mathcal{C}$ be the join of all those induced subgraphs.
    W.l.o.g. let $\abs{X_1} \leq \abs{X_2} \leq \dots \leq \abs{X_k}$ and $X_i = \set{x_1^{(i)}, \dots, x_{\abs{X_i}}^{(i)}}$.
    We introduce the labels
    \[ \alpha_\ell^{(i,j)}, \ i \in \set{1,2,\dots,k-1}, \ j \in \set{i+1,i+2,\dots,k}, \ \ell \in \set{1,2,\dots,\abs{X_i}} \]
    and
    \[ \beta^{(i)}, \ i \in \set{1,2,\dots,k}. \]
    The number of labels is $k + \sum_{i = 1}^{k-1} \br{(k-i) \abs{X_i}} \leq k + n \br{k - \ln k}$ by \cref{lem:partition}.
    The idea here is that the labels $\beta^{(i)}$ model the edges inside the sets $X_i$ while the labels $\alpha_\ell^{(i,j)}$ model the restriction of the neighborhood of the $\ell$-th vertex in $X_i$ to $X_j$.
    We define $L$ via
    \[ L\!\br{x_\ell^{(i)}} := \set{\beta^{(i)}} \cup \setcond{\alpha_\ell^{(i,j)}}{j > i} \cup \setcond{\alpha_h^{(j,i)}}{j < i \text{ and } x_\ell^{(i)} x_h^{(j)} \in E(G)} \]
    for $i \in \set{1,2,\dots,k}$ and $\ell \in \set{1,2,\dots,\abs{X_i}}$ and claim that $(H,L)$ is a simultaneous $\mathcal{C}$-representation of $G$.

    Consider two vertices $x_\ell^{(i)}$ and $x_h^{(j)}$ which are adjacent in $G$.
    If $i = j$ holds, then $x_\ell^{(i)} x_h^{(j)} \in E(G[X_i]) \subseteq E(H)$ holds and both their label sets contain $\beta^{(i)}$.
    So, let $i \neq j$. Without loss of generality, we may assume that $i < j$.
    Then $H$ contains every edge between $X_i$ and $X_j$ and both of their label sets contain $\alpha_\ell^{(i,j)}$ by definition.

    Now consider two vertices $x_\ell^{(i)}$ and $x_h^{(j)}$ with $x_\ell^{(i)} x_h^{(j)} \in E(H)$ and $L(x_\ell^{(i)}) \cap L(x_h^{(j)}) \neq \emptyset$.
    If $i = j$ holds, then the edge ${x_\ell^{(i)} x_h^{(j)}}$ exists in $G$ since $H[X_i] = G[X_i]$.
    So, let $i \neq j$. Again, we may assue without loss of generality that $i < j$.
    Then the only label that can possibly be contained in the intersection of their label sets is $\alpha_\ell^{(i,j)}$, which is only contained in the label set of $x_h^{(j)}$ if the edge ${x_\ell^{(i)} x_h^{(j)}}$ exists in $G$.
\end{proof}

\begin{theorem}[Erdős, Gimbel, and Kratsch \cite{erdös1991cochromatic}] \label{thm:erdös}
    Let $\cP$ be the graph class containing all complete and all edgeless graphs.
    Then there exists a constant $c > 0$ such that for any graph $G$ on $n$ vertices we have $\chi_{\mathcal{P}}(G) \leq c \frac{n}{\log n}$.
    In particular, $c = 2 + o(1)$.
\end{theorem}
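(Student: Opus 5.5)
We would prove this by a greedy peeling argument built on the classical Ramsey bound, with logarithms taken to base $2$. The only external ingredient is the Erdős--Szekeres estimate $R(k,k) \leq \binom{2k-2}{k-1} < 4^{k}$. It implies that every graph on $m \geq 1$ vertices contains a clique or an independent set of size at least $\frac{1}{2}\log m$. Both kinds of sets are allowed as colour classes of a $\cP$-coloring, and each induces a graph of $\cP$.

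\textbf{Peeling.} Starting from $G$, as long as the current remaining graph has $m \geq n/\log^2 n$ vertices, we remove a maximum homogeneous set, i.e., a clique or independent set. We make it a colour class. Each removed set has size at least
\[ \tfrac{1}{2}\log m \geq \tfrac{1}{2}\br{\log n - 2\log\log n} = \br{\tfrac{1}{2} - o(1)} \log n. \]
Since at most $n$ vertices are removed in total, the number of phases is at most $n / \br{(\tfrac{1}{2}-o(1))\log n} = (2+o(1))\, n/\log n$.

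\textbf{Leftover.} Once fewer than $n/\log^2 n$ vertices remain, we put each of them into its own singleton class, which is trivially in $\cP$. These contribute only $n/\log^2 n = o(n/\log n)$ further classes.

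\textbf{Conclusion.} Summing the two parts gives
\[ \chi_\cP(G) \leq (2+o(1))\frac{n}{\log n}. \]
Absorbing small values of $n$ into the constant yields a uniform $c$ valid for all $n \geq 2$.

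The only delicate point is the choice of the threshold for switching from peeling to singletons. With the threshold $n/\log n$ the leftover vertices would cost another full $n/\log n$ classes, giving only $c = 3 + o(1)$. The threshold must be $o(n/\log n)$ while still satisfying $\log(\text{threshold}) = (1-o(1))\log n$; the choice $n/\log^2 n$ meets both requirements.

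For the subsequent use in \cref{thm:on^2}: a class $\cC$ that is hereditary, completable and contains all edgeless graphs contains $\cP$, so $\chi_\cC(G) \leq \chi_\cP(G) \leq c\, n/\log n$. Plugging $k = \chi_\cC(G)$ into \cref{lem:chromaticbound} bounds $\simulnonempty[\cC][G]$ by $n(k - \ln k) + k$. Since this expression is increasing in $k$ for $k \geq 1$, it is at most $O(n^2/\log n)$.
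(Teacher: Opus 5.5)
Your greedy argument is correct. The paper gives no proof of this statement and only cites Erd\H{o}s, Gimbel, and Kratsch; what you wrote (repeatedly removing a maximum clique or independent set, sized via the Erd\H{o}s--Szekeres bound, with the cutoff at $n/\log^2 n$ so the singleton tail is $o(n/\log n)$) is the standard proof of that cited $(2+o(1))\,n/\log_2 n$ bound. One small nit: to get a homogeneous set of size at least $\frac12\log m$ rather than only $\lfloor\frac12\log m\rfloor$, you need the sharper form $R(k,k)\le\binom{2k-2}{k-1}\le 4^{k-1}$ and not just $R(k,k)<4^k$, though the floor would not change the constant $2+o(1)$ anyway.
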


\cref{lem:chromaticbound,thm:erdös} imply the correctness of \cref{thm:on^2}.

\subsection{Modular Cardinalities}

We consider some special $\cG$-modular cardinalities again and show sufficient conditions on when they give upper bounds on simultaneous $\cC$-numbers with non-empty label sets. 
Two vertices $u,v$ of a graph $G$ are called \emph{false twins} if $N(u)=N(v)$ and \emph{true twins} if $N[u] = N[v]$.
We remind the reader that we denote by $G/P$ the quotient graph of $G$ with respect to a modular partition $P$.

\begin{theorem} \label{thm:quotientgraph}
    Let $\cC$ be a hereditary completable graph class and $G$ be a graph. It holds that $\simulnonempty[\cC][G] = \simulnonempty[\cC][G/P]$
    \begin{itemize}
        \item if $\cC$ is closed under addition of false twins and $P$ is an $\mathsf{edgeless}$-modular partition of $G$,~or
        \item if $\cC$ is closed under addition of true twins and $P$ is a $\mathsf{complete}$-modular partition of $G$, or
        \item if $\cC$ is closed under addition of true and false twins and $P$ is a $\mathsf{cograph}$-modular partition of $G$, or
        \item if $\cC$ is closed under vertex substitution and $P$ is a $\cC$-modular partition of $G$. 
    \end{itemize}
\end{theorem}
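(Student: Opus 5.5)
We will prove the two inequalities separately. Throughout, write $Q := G/P$ and identify each vertex $q$ of $Q$ with its module $M_q \in P$.

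\emph{Lower bound $\simulnonempty[\cC][Q] \leq \simulnonempty[\cC][G]$.} Choose one representative vertex $v_q \in M_q$ for each module. The set $\{v_q\}$ induces a copy of $Q$ in $G$, because the modules are pairwise either completely adjacent or completely nonadjacent. Since $\cC$ is hereditary, restricting a representation with non-empty label sets to an induced subgraph gives such a representation of that subgraph. Hence this inequality holds in all four cases without using any closure property.

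\emph{Upper bound $\simulnonempty[\cC][G] \leq \simulnonempty[\cC][Q]$.} Let $(H_Q,L_Q)$ be a $d$-simultaneous $\cC$-representation of $Q$ with non-empty label sets. We build $H$ from $H_Q$ by substituting each vertex $q$ by a suitable graph $H_q$ on $M_q$. Every vertex of $M_q$ receives the label set $L(v) := L_Q(q)$. We check the representation condition in two parts.
\begin{itemize}
    \item For $u\in M_q$, $v\in M_{q'}$ with $q\neq q'$, the pair $uv$ is an edge of $H$ iff $qq'\in E(H_Q)$, and the labels intersect iff $L_Q(q)\cap L_Q(q')\neq\emptyset$. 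Together this holds iff $qq'\in E(Q)$, which holds iff $uv\in E(G)$.
    \item Inside a module, all labels coincide and are non-empty, so they always intersect. We therefore need $H_q = G[M_q]$. This is exactly where non-emptiness of the label sets is essential.
\end{itemize}
So it remains to show that $H \in \cC$ when each $q$ is substituted by $G[M_q]$.

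The four cases differ only in this membership argument.
\begin{itemize}
    \item If $P$ is $\mathsf{edgeless}$-modular, each $G[M_q]$ is edgeless. Substituting an edgeless graph amounts to repeatedly adding false twins of $q$, so $H\in\cC$ by closure under false twins.
    \item If $P$ is $\mathsf{complete}$-modular, the same argument works with true twins.
    \item If $P$ is $\mathsf{cograph}$-modular, we argue by induction on the cotree of $G[M_q]$. A cograph is built from a single vertex by repeatedly replacing a vertex with two true twins (a join node) or two false twins (a union node). We perform this sequence starting from the vertex $q$ in $H_Q$. Each step adds a true or false twin, so every intermediate graph stays in $\cC$.
    \item If $P$ is $\cC$-modular, each $G[M_q]\in\cC$, and substituting it into $H_Q\in\cC$ keeps us in $\cC$ by closure under vertex substitution. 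We do this one module at a time.
\end{itemize}

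The main subtlety is the cograph case: one must verify the twin-splitting description of cographs. We prove it by induction on the cotree. Substituting a join $A+B$ for a vertex $x$ equals first splitting $x$ into true twins $x_A,x_B$ and then substituting $A$ for $x_A$ and $B$ for $x_B$. The disjoint union case is the same with false twins, and the base case is substitution of a single vertex, which changes nothing. The induction applies to $A$ and $B$ inside the larger host graph, so the host must be allowed to vary. We therefore phrase the claim as: for any graph in $\cC$, substituting any cograph for any vertex yields a graph in $\cC$.
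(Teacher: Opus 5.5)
Your proof is correct and takes the same route as the paper. You get the lower bound from $G/P$ being an induced subgraph of $G$, and the upper bound by expanding each vertex of a non-empty-label representation of $G/P$ into its module, through added twins or substitution, with every new vertex inheriting that vertex's label set. Your cotree induction with a varying host graph simply spells out the twin-addition characterization of cographs that the paper cites without proof, and it does so correctly.
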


\begin{proof}
    Let $G$ be given.
    First note that for any modular partition $P$ of $G$ it holds that $G/P$ is an induced subgraph of $G$ and therefore $\simulnonempty[\cC][G/P] \leq \simulnonempty[\cC][G]$ holds.

    Now consider the first case, i.e., let $P$ be an $\mathsf{edgeless}$-modular partition of $G$.
    Consider a simultaneous $\cC$-representation $(H,L)$ of $G/P$ with non-empty label sets.
    Consider for a module $M \in P$ the corresponding vertex $v_M$ in $G/P$.
    To blow up $v_M$ to an edgeless module of size $\abs{M}$ it suffices to add $\abs{M}-1$ false twins of $v_M$ to $G/P$.
    This can be done without increasing the number of labels by adding false twins of $v_M$ to $H$ and choosing their label sets as $L(v_M)$.
    Thus, $\simulnonempty[\cC][G] \leq \simulnonempty[\cC][G/P]$.
    The case for complete modules and true twins follows analogously.
    
    For the case of cograph modules, we point out the following characterizations of cographs: A graph is a cograph if and only if it can be constructed from a single vertex, only using the operations of adding a true or false twin of an existing vertex.
    Using this characterization, this case follows analogously to the above, where true and false twins are added to create the cograph induced by $M$ (see \cref{fig:cographmodulerepr} for an example).

    For the last case, consider a $\cC$-modular partition $P$ of $G$ and the graph $G/P$ together with a simultaneous $\cC$-representation $(H,L)$ with non-empty label sets.
    A simultaneous $\cC$-representation of $G$ is obtained by substituting each vertex $v_M$ in $H$ with $G[M]$ and giving the new vertices in $M$ the label set $L(v_M)$.
\end{proof}

\begin{figure}
    \centering
        \begin{tikzpicture}[scale=0.85, every node/.append style={transform shape}]
        \tikzstyle{dashedcircle}=[circle, draw, dashed, inner sep=0pt, minimum width=2.5cm]
        \tikzstyle{vertex}=[draw,circle,minimum size=\a cm]
        \def\a{.4} 
        \def\d{8pt}
        \def\labelone{BTUred}
        \def\labeltwo{lipicsYellow}
        \def\labelthree{coolblue}
        \def\nolabel{lipicsGray}
        \tikzset{c/.style={circle,draw,thick,minimum size=\a cm}} 
        \tikzset{one color/.style={c,fill=#1}}  
        \tikzset{pics/two colors/.style args=
            {#1|#2|rotate=#3}{code={%
        \fill[#1,rotate=#3] (0,\a/2) arc(90:270:\a/2)--cycle;           
        \fill[#2,rotate=#3] (0,\a/2) arc(90:-90:\a/2)--cycle;
        \path (0,0) node[c] (-boundary) {};
        }}}
        \tikzset{pics/three colors/.style args=
            {#1|#2|#3|rotate=#4}{code={%
        \fill[#1,rotate=#4] (0,\a/2) arc(90:210:\a/2)--(0,0)--cycle;            
        \fill[#2,rotate=#4] (0,\a/2) arc(90:-30:\a/2)--(0,0)--cycle;
        \fill[#3,rotate=#4] (210:\a/2) arc(210:330:\a/2)--(0,0)--cycle;
        \path (0,0) node[c] (-boundary) {};
        }}}
        \node[dashedcircle] (M) at (0,0) {};
        \node[font=\bfseries\sffamily] at (0,1.75) {\textcolor{\labelone}{1},\textcolor{\labeltwo}{2}};
        \path
        (-0.6125,0.375) pic (m1) {two colors={\labelone|\labeltwo|rotate=-90}}
        (-0.6125,-0.375) pic (m2) {two colors={\labelone|\labeltwo|rotate=-90}}
        (0.375,0) pic (m3) {two colors={\labelone|\labeltwo|rotate=-90}}
        (0.375,0.75) pic (m4) {two colors={\labelone|\labeltwo|rotate=-90}}
        (0.375,-0.75) pic (m5) {two colors={\labelone|\labeltwo|rotate=-90}}
        ;
        \draw[line width=1.2pt,\labelone,dash pattern= on \d off \d] (m1-boundary) -- (m3-boundary);
        \draw[line width=1.2pt,\labelone,dash pattern= on \d off \d] (m1-boundary) -- (m4-boundary);
        \draw[line width=1.2pt,\labelone,dash pattern= on \d off \d] (m1-boundary) -- (m5-boundary);
        \draw[line width=1.2pt,\labelone,dash pattern= on \d off \d] (m2-boundary) -- (m3-boundary);
        \draw[line width=1.2pt,\labelone,dash pattern= on \d off \d] (m2-boundary) -- (m4-boundary);
        \draw[line width=1.2pt,\labelone,dash pattern= on \d off \d] (m2-boundary) -- (m5-boundary);
        \draw[line width=1.2pt,\labeltwo,dash pattern= on \d off \d,dash phase=\d] (m1-boundary) -- (m3-boundary);
        \draw[line width=1.2pt,\labeltwo,dash pattern= on \d off \d,dash phase=\d] (m1-boundary) -- (m4-boundary);
        \draw[line width=1.2pt,\labeltwo,dash pattern= on \d off \d,dash phase=\d] (m1-boundary) -- (m5-boundary);
        \draw[line width=1.2pt,\labeltwo,dash pattern= on \d off \d,dash phase=\d] (m2-boundary) -- (m3-boundary);
        \draw[line width=1.2pt,\labeltwo,dash pattern= on \d off \d,dash phase=\d] (m2-boundary) -- (m4-boundary);
        \draw[line width=1.2pt,\labeltwo,dash pattern= on \d off \d,dash phase=\d] (m2-boundary) -- (m5-boundary);
        \node[dashedcircle] (A) at (-3,0.75) {};
        \node[font=\bfseries\sffamily] at (-3,2.5) {\textcolor{\labelone}{1}};
        \node[one color=\labelone] (a1) at (-3,0.75) {};
        \node[one color=\labelone] (a2) at (-3.5,0.25) {};
        \node[one color=\labelone] (a3) at (-3.5,1.25) {};
        \node[one color=\labelone] (a4) at (-2.5,0.25) {};
        \node[one color=\labelone] (a5) at (-2.5,1.25) {};
        \node[dashedcircle] (B) at (-3,-2.5) {};
        \node[font=\bfseries\sffamily] at (-4,-1.25) {\textcolor{\labelone}{1},\textcolor{\labeltwo}{2}};
        \path 
        (-3,-2.5) pic (b1) {two colors={\labelone|\labeltwo|rotate=-45}}
        (-3.75,-2.5) pic (b2) {two colors={\labelone|\labeltwo|rotate=-45}}
        (-2.5,-1.875) pic (b3) {two colors={\labelone|\labeltwo|rotate=-45}}
        (-2.5,-3.125) pic (b4) {two colors={\labelone|\labeltwo|rotate=-45}}
        ;
        \def\d{4pt}
        \draw[line width=1.2pt,\labelone,dash pattern= on \d off \d] (b2-boundary) -- (b3-boundary) -- (b4-boundary) -- (b2-boundary);
        \draw[line width=1.2pt,\labelone,dash pattern= on \d off \d] (b1-boundary) -- (b2-boundary);
        \draw[line width=1.2pt,\labelone,dash pattern= on \d off \d] (b1-boundary) -- (b3-boundary);
        \draw[line width=1.2pt,\labelone,dash pattern= on \d off \d] (b1-boundary) -- (b4-boundary);
        \draw[line width=1.2pt,\labeltwo,dash pattern= on \d off \d,dash phase=\d] (b2-boundary) -- (b3-boundary) -- (b4-boundary) -- (b2-boundary);
        \draw[line width=1.2pt,\labeltwo,dash pattern= on \d off \d,dash phase=\d] (b1-boundary) -- (b2-boundary);
        \draw[line width=1.2pt,\labeltwo,dash pattern= on \d off \d,dash phase=\d] (b1-boundary) -- (b3-boundary);
        \draw[line width=1.2pt,\labeltwo,dash pattern= on \d off \d,dash phase=\d] (b1-boundary) -- (b4-boundary);
        \node[dashedcircle] (C) at (0,-3.5) {};
        \node[font=\bfseries\sffamily] at (0,-5.25) {\textcolor{\labeltwo}{2}};
        \node[one color=\labeltwo] (c1) at (0,-3.5) {};
        \node[one color=\labeltwo] (c2) at (0.5,-3) {};
        \node[one color=\labeltwo] (c3) at (0.5,-4) {};
        \node[one color=\labeltwo] (c4) at (-0.5,-4) {};
        \node[one color=\labeltwo] (c5) at (-0.5,-3) {};
        \draw[line width=1.2pt,\labeltwo] (c2) -- (c3) -- (c4) -- (c5) -- (c2) -- (c1) -- (c3);
        \draw[line width=1.2pt,\labeltwo] (c4) -- (c1) -- (c5);
        \node[dashedcircle] (D) at (3,-2.5) {};
        \node[font=\bfseries\sffamily] at (4,-1.25) {\textcolor{\labelone}{1},\textcolor{\labeltwo}{2}};
        \path 
        (3,-2.5) pic (d1) {two colors={\labelone|\labeltwo|rotate=-135}}
        (2.25,-2.5) pic (d2) {two colors={\labelone|\labeltwo|rotate=-135}}
        (3.5,-1.875) pic (d3) {two colors={\labelone|\labeltwo|rotate=-135}}
        (3.5,-3.125) pic (d4) {two colors={\labelone|\labeltwo|rotate=-135}}
        ;
        \draw[line width=1.2pt,\labelone,dash pattern= on \d off \d] (d2-boundary) -- (d1-boundary) -- (d3-boundary);
        \draw[line width=1.2pt,\labelone,dash pattern= on \d off \d] (d1-boundary) -- (d4-boundary);
        \draw[line width=1.2pt,\labeltwo,dash pattern= on \d off \d,dash phase=\d] (d2-boundary) -- (d1-boundary) -- (d3-boundary);
        \draw[line width=1.2pt,\labeltwo,dash pattern= on \d off \d,dash phase=\d] (d1-boundary) -- (d4-boundary);
        \node[dashedcircle] (E) at (3,0.75) {};
        \node[font=\bfseries\sffamily] at (3,2.5) {\textcolor{\labelone}{1}};
        \node[one color=\labelone] (e1) at (2.875,1.25) {};
        \node[one color=\labelone] (e2) at (2.875,0.25) {};
        \node[one color=\labelone] (e3) at (2.125,0.75) {};
        \node[one color=\labelone] (e4) at (3.75,1.25) {};
        \node[one color=\labelone] (e5) at (3.75,0.25) {};
        \draw[line width=1.2pt,\labelone] (e1) -- (e2) -- (e3) -- (e1);
        \draw[line width=1.2pt,\labelone] (e4) -- (e5);
        \def\d{8pt}
        \draw[line width=3pt,\labelone] (A) -- (B);
        \draw[line width=3pt,\labeltwo] (B) -- (C) -- (D);
        \draw[line width=3pt,\labelone] (D) -- (E);
        \draw[line width=3pt,\labelone] (M) -- (A);
        \draw[line width=3pt,\labelone,dash pattern= on \d off \d] (M) -- (B);
        \draw[line width=3pt,\labeltwo,dash pattern= on \d off \d,dash phase=\d] (M) -- (B);
        \draw[line width=3pt,\labeltwo] (M) -- (C);
        \draw[line width=3pt,\labelone,dash pattern= on \d off \d] (M) -- (D);
        \draw[line width=3pt,\labeltwo,dash pattern= on \d off \d,dash phase=\d] (M) -- (D);
        \draw[line width=3pt,\labelone] (M) -- (E);
        \draw[line width=1.5pt,\nolabel] (A) to[bend right = 90,looseness=1.5] (C);
        \draw[line width=1.5pt,\nolabel] (C) to[bend right = 90,looseness=1.5] (E);
    \end{tikzpicture}
    \caption{A $2$-simultaneous $\mathsf{cograph}$-representation of the graph $G$ of \cref{fig:cographmodules} using the $2$-simultaneous $\mathsf{cograph}$-representation of the graph $G/P$ given in \cref{fig:subfig2} and the construction of \cref{thm:quotientgraph}. A thick edge between two dashed circles again indicates a complete bipartite graph between the nodes inside the circles}
    \label{fig:cographmodulerepr}
\end{figure}

The $\cG$-modular cardinality of any graph $G$ is the minimum number of vertices of $G/P$ under all $\cG$-modular partitions $P$ of $G$. Using the fact that $\simulnonempty[\cC][G/P]$ is at most $|V(G/P)|^2$, we obtain the following upper bounds.

\begin{corollary}
    Let $\cC$ be a hereditary completable graph class.
    \begin{itemize}
        \item If $\cC$ is closed under addition of false twins, then $\mathsf{edgeless}$-modular cardinality upper bounds~$\simulnonempty[\cC]$. In particular, $\simulnonempty[\cC][G] \leq (\mathsf{edgeless}\Gmc(G))^2$ for any graph $G$.
        \item If $\cC$ is closed under addition of true twins, then $\mathsf{complete}$-modular cardinality upper bounds~$\simulnonempty[\cC]$. In particular, $\simulnonempty[\cC][G] \leq (\mathsf{complete}\Gmc(G))^2$ for any graph $G$.
        \item If $\cC$ is closed under addition of true and false twins, then $\mathsf{cograph}$-modular cardinality upper bounds $\simulnonempty[\cC]$. In particular, $\simulnonempty[\cC][G] \leq (\mathsf{cograph}\Gmc(G))^2$ for any graph $G$.
        \item If $\cC$ is closed under vertex substitution, then $\cC$-modular cardinality upper bounds $\simulnonempty[\cC]$.
        In particular, $\simulnonempty[\cC][G] \leq (\cC\Gmc(G))^2$ for any graph $G$.
    \end{itemize}
\end{corollary}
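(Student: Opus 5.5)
The corollary follows from \cref{thm:quotientgraph} combined with the generic quadratic bound from \cref{sec:simrep}; the four items are handled uniformly. Fix one of the four cases and write $\cG$ for the relevant class: $\mathsf{edgeless}$, $\mathsf{complete}$, $\mathsf{cograph}$, or $\cC$ itself. Assume $\cC$ has the corresponding closure property: false twins, true twins, both kinds of twins, or vertex substitution. Let $G$ be a graph.

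First choose a minimum $\cG$-modular partition $P$ of $G$. Then $\abs{V(G/P)} = \abs{P} = \cG\Gmc(G)$. Such a partition always exists because the partition into singletons is $\cG$-modular: singletons are modules, and $K_1 \in \cG$ in every case. For $\cG = \cC$ this uses that $\cC$ is completable and therefore contains $K_1$.

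Next apply the appropriate item of \cref{thm:quotientgraph} to obtain
\[ \simulnonempty[\cC][G] = \simulnonempty[\cC][G/P]. \]
Then bound the right-hand side using the remark after the definition of $\simulnonempty[\cC]$: every graph $F$ satisfies $\simulnonempty[\cC][F] \leq \abs{V(F)} + \abs{E(F)} \leq \abs{V(F)}^2$. Taking $F = G/P$ gives
\[ \simulnonempty[\cC][G] \leq \abs{V(G/P)}^2 = (\cG\Gmc(G))^2. \]
Since $x \mapsto x^2$ is a function of the parameter, this shows that $\cG$-modular cardinality upper bounds $\simulnonempty[\cC]$, which is what each item asserts.

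The only points to check are degenerate ones. If $G/P$ has a single vertex, its label set must be non-empty, so $\simulnonempty[\cC][G/P] = 1 \leq 1^2$, and the bound still holds. The $\abs{V}^2$ estimate was derived for arbitrary graphs, so it covers this and every other case. The real content lies in \cref{thm:quotientgraph}, which may be assumed here, so no step presents a genuine obstacle.
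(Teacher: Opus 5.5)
Your proposal is correct and takes the same route as the paper. You fix a minimum $\cG$-modular partition $P$, use \cref{thm:quotientgraph} to get $\simulnonempty[\cC][G] = \simulnonempty[\cC][G/P]$, and then apply the general estimate $\simulnonempty[\cC][F] \leq \abs{V(F)}^2$ with $\abs{V(G/P)} = \cG\Gmc(G)$; your remarks on why such a partition exists (via $K_1 \in \cG$) and on the one-vertex quotient are harmless sanity checks.
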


For simultaneous $\cC$-numbers with possibly empty label sets we have an even stronger statement for $\mathsf{complete}$-modular cardinality. It holds that $\mathsf{complete}$-modular cardinality upper bounds $\simul[\cC]$ for every hereditary completable $\cC$, because $\mathsf{complete}$-modular cardinality upper bounds the edge clique cover number. In particular, it holds that $\simul[\cC][G] \leq \ecc{G} \leq (\mathsf{complete}\Gmc(G))^2$ for every graph $G$.

\subsection{Treedepth, Bandwidth, Pathwidth, and Treewidth}

We now consider the parameters \emph{treedepth}, \emph{bandwidth}, \emph{treebandwidth}, \emph{pathwidth}, and \emph{treewidth} with the goal to characterize for which classes $\cC$ these parameters are upper bounds on the simultaneous $\cC$-number.
All these parameters share a common property: the value of treedepth, bandwidth, treebandwidth, pathwidth, or treewidth on a given graph $G$ is $\leq k$ if and only if there is a supergraph $H \in \cK$ of $G$ with clique number $\leq k+1$, where $\cK$ is the class of trivially perfect graphs~\cite{Golumbic78,nesetril2006treedepth}, proper interval graphs~\cite{kaplan1996bandwidth}, proper chordal graphs~\cite{jacob2025treebandwidth}, interval graphs~\cite{bodlaender1998treewidth}, or chordal graphs~\cite{bodlaender1998treewidth}, respectively.

The characterization we will obtain states that one of these parameters upper bounds a simultaneous $\cC$-number if and only if the simultaneous $\cC$-number is upper bounded by the clique number on the class corresponding to the parameter. In order to show this, we first have to bound the simultaneous $\cK$-number for each of the five $\cK$ in terms of their corresponding parameters.
The proofs for pathwidth and interval graphs, and treewidth and chordal graphs are given in \cite{beisegel2024simultaneousinterval,simchord}.
We will show the statements for the other classes now.
Note that all these bounds are quadratic. This is tight up to a constant, as can be seen by complete $3$-partite graphs. These graphs have quadratic simultaneous $\mathsf{chordal}$-number~\cite{simchord} in the number of vertices and therefore the same holds for the other four simultaneous $\cK$-numbers. However, their corresponding parameters are at most linear in the number of vertices on any graph. 

We start with the parameter treedepth.
A \emph{rooted forest} $\br{F, r_1, \dots, r_\ell}$ is the disjoint union of rooted trees $\br{T_1, r_1}, \dots, \br{T_k, r_\ell}$.
The \emph{height} of $\br{F, r_1, \dots, r_\ell}$ is the maximum height of any of the trees $\br{T_1, r_1}, \dots, \br{T_k, r_\ell}$.
For a rooted forest $\br{F, r_1, \dots, r_\ell}$, we define its \emph{closure} $\clos[F, r_1, \dots, r_\ell] := (V(F), E)$ with
$E := \setcond{xy}{x \text{ is an ancestor of } y \text{ in } F \text{ and } x \neq y}$.

\begin{definition}[Treedepth]
  The \emph{treedepth} of a graph $G$, denoted by $\td{G}$, is the minimum height of a rooted forest $\br{F, r_1, \dots, r_\ell}$ such that $G$ is a subgraph of its closure $\clos[F, r_1, \dots, r_\ell]$.
\end{definition}

The height of the closure of a rooted forest is exactly its clique number minus one, thus treedepth can be interpreted in terms of the minimum maximum clique size of some supergraph of a graph that is a closure of a rooted forest.
Closures of rooted forests are also called \emph{trivially perfect graphs}~\cite{Golumbic78}.

\begin{theorem} \label{thm:td}
    Let $\cC$ be the class of trivially perfect graphs.
    Then treedepth upper bounds the simultaneous $\cC$-number (with non-empty label sets).
    In particular, for any graph $G$ we have $\simul[\cC][G] \leq \binom{\tdbr{G} + 1}{2}$ and $\simulnonempty[\cC][G] \leq \binom{\tdbr{G} + 1}{2} + 1$.
\end{theorem}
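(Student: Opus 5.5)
Let $k = \td{G}$ and fix a rooted forest $F$ of height $k$ with $G \subseteq \clos[F]$. We take $H := \clos[F]$, which is trivially perfect and has clique number at most $k+1$; every edge of $G$ joins a vertex to one of its ancestors. The plan is to encode each ancestor--descendant pair of $H$ by a label indexed by the pair of depths involved. Concretely, use labels $\lambda_{i,j}$ for $0 \le i < j \le k$, where depth is measured from the roots. This gives exactly $\binom{k+1}{2}$ labels.

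Define $L(v)$ for a vertex $v$ at depth $j$ in two parts. First, for every $i > j$, put $\lambda_{j,i}$ into $L(v)$; here $v$ plays the ancestor role at depth $j$. Second, for every $i < j$, let $a_i(v)$ be the unique ancestor of $v$ at depth $i$, and put $\lambda_{i,j}$ into $L(v)$ if and only if $a_i(v)v \in E(G)$; here $v$ plays the descendant role at depth $j$.

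To check the representation, take $uv \in E(H)$ with $u$ an ancestor of $v$, at depths $i < j$. Any common label of $u$ and $v$ must be indexed by one of their depths in each role. The only candidate is $\lambda_{i,j}$, since $u$ contributes only labels $\lambda_{i,\cdot}$ and labels $\lambda_{\cdot,i}$, while $v$ contributes $\lambda_{j,\cdot}$ and $\lambda_{\cdot,j}$. Moreover, $\lambda_{i,j} \in L(u)$ always holds, and $\lambda_{i,j} \in L(v)$ holds exactly when $a_i(v)=u$ is adjacent to $v$ in $G$. Hence $L(u)\cap L(v)\neq\emptyset$ if and only if $uv\in E(G)$. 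Non-adjacent pairs of $H$ are non-edges of $G$ anyway, so $(H,L)$ is a $\binom{k+1}{2}$-simultaneous $\cC$-representation.

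The step needing the most care is the indexing convention, namely whether height counts vertices or edges. If $k$ is the number of vertices on a root-leaf path, the depths range over $k$ values and we even get $\binom{k}{2}$ labels; either way the bound $\binom{k+1}{2}$ holds. The second thing to make sure of is that the label-pair argument really excludes spurious intersections, for instance $u$'s descendant label $\lambda_{i',i}$ versus $v$'s labels. This is fine because every label in $L(v)$ has $j$ as one of its indices, while every label in $L(u)$ has $i \neq j$ as one of its indices. So a shared label must be $\lambda_{\{i,j\}}$, namely $\lambda_{i,j}$.

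For the non-empty version, add one extra label $\gamma$ and put it only into the label set of each vertex whose set would otherwise be empty. Such vertices are isolated in $G$. If two of them are adjacent in $H$, sharing $\gamma$ would create a false edge. To avoid this, we instead remove those vertices from the forest and reattach them as separate single-vertex roots, which keeps $H$ trivially perfect and the height unchanged. All isolated vertices are now pairwise non-adjacent in $H$, so they can all receive $\gamma$ without creating edges. This gives $\binom{k+1}{2}+1$ labels.
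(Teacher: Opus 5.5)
Your proof is correct and follows the paper's construction: $H=\clos[F]$ with labels indexed by pairs of depths, and correctness rests on each vertex having a unique ancestor at every smaller depth. The paper adds $\alpha_{ij}$ to an endpoint only when some $G$-edge realizes that depth pair, whereas you give the ancestor-role labels unconditionally, which works equally well.

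For the non-empty bound, the paper simply invokes \cref{obsv:isolatedvertices}, since trivially perfect graphs are closed under adding isolated vertices. Your $\gamma$-plus-reattachment argument reproves this directly. In your labeling the reattachment is not even needed: only vertices at maximum depth can receive an empty set, and those vertices are already pairwise non-adjacent in $H$.
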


\begin{proof}
    The closures of rooted forests are closed under addition of isolated vertices and, thus, by \cref{obsv:isolatedvertices}, $\simul[\cC]$ and $\simulnonempty[\cC]$ are equal on all graphs containing at least one edge.
    Hence, it suffices to show the bound for $\simul[\cC]$ as the bound for $\simulnonempty[\cC]$ follows directly.
    Let $G$ be a graph and $k = \td{G}$.
    Furthermore, let $\br{F, r_1, \dots, r_\ell}$ be a rooted forest with height $k$ such that $G$ is a subgraph of $\clos[F, r_1, \dots, r_\ell]$ and $V(G) = V(F)$.
    We choose $H = \clos[F, r_1, \dots, r_\ell]$ and introduce $\binom{k+1}{2}$ labels $\alpha_{ij} = \alpha_{ji}, \ 0 \leq i < j \leq k$.
    We define $L$ as follows:
    \[ L(v) := \setcond{\alpha_{ij}}{\genfrac{}{}{0pt}{}{v \text{ has height } i \text{ in } F \text{ and there exists some } u \in V(G)}{\text{with height } j \text{ in } F \text{ such that } uv \in E(G)}}. \]
    We claim that $(H, L)$ is a simultaneous $\cC$-representation of $G$.
    Let $uv \in E(G)$.
    Therefore, $u$ is either an ancestor or descendant of $v$ in $\br{F, r_1, \dots, r_\ell}$.
    Without loss of generality let $u$ be an ancestor of $v$ and $i < j$ be their heights.
    Since $H = \clos[F, r_1, \dots, r_\ell]$, we have that $uv \in E(H)$.
    Furthermore, by definition of $L$ we have $\alpha_{ij} = \alpha_{ji} \in L(u) \cap L(v)$.
    On the other hand, let $u,v \in V(G)$ with $uv \in E(H)$ and $L(u) \cap L(v) \neq \emptyset$.
    Thus, $u$ is either an ancestor or descendant of $v$ in $\br{F, r_1, \dots, r_k}$, w.l.o.g. let $u$ be an ancestor of $v$.
    Let $i$ be the height of $u$ and $j$ be the height of $v$.
    Since every label in $L(u)$ is of the form $\alpha_{ih} = \alpha_{hi}$ and every label in $L(v)$ is of the form $\alpha_{jh} = \alpha_{hj}$, the intersection $L(u) \cap L(v)$ can only contain the label $\alpha_{ij} = \alpha_{ji}$.
    Thus, $v$ is adjacent to some vertex of height $i$ in $G$ and, since $u$ is the unique ancestor of $v$ with height $i$, we have $uv \in E(G)$.
\end{proof}

We continue with bandwidth.

\begin{definition}[Bandwidth]
    Let $G$ be a graph and $f : V(G) \to \set{1,2,\dots,n}$ be a bijection.
    The \emph{bandwidth} of $f$ is given by $bw(f, G) = \max\setcond{\abs{f(u) - f(v)}}{uv \in E(G)}$. 
    The \emph{bandwidth} of $G$, denoted by $\bw{G}$, is then defined as the minimum bandwidth over all bijections $f : V(G) \to \set{1,2,\dots,n}$.
\end{definition}

Bandwidth is equivalent to \emph{proper pathwidth}, due to Kaplan and Shamir~\cite{kaplan1996bandwidth}.
Then, by adjusting the proof of Theorem 3.9 in \cite{beisegel2024simultaneousinterval} to use \emph{proper path-decompositions}, we could show that if $\cC$ is the class of proper interval graphs, then for any graph $G$ it holds that $\simul[\cC][G] \leq \br{\bw{G}}^2 + \bw{G}$ and $\simulnonempty[\cC][G] \leq \br{\bw{G}}^2 + \bw{G} + 1$. However, using a simpler proof we show a weaker bound which, nevertheless, is sufficient for our purposes.

 \begin{lemma} \label{lem:bw}
     Let $\cC$ be the class of proper interval graphs.
     Then $\simul[\cC][G] \leq 2 \br{\bw{G}}^2 + 2 \bw{G}$ and $\simulnonempty[\cC][G] \leq 2\br{\bw{G}}^2 + 2\bw{G} + 1$ for any graph $G$.
 \end{lemma}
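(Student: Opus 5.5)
Let $k = \bw{G}$ and fix a bijection $f : V(G) \to \set{1,\dots,n}$ of bandwidth $k$; we write $v_1,\dots,v_n$ for the vertices in this order, so every edge $v_a v_b$ of $G$ satisfies $\abs{a-b} \leq k$. For $H$ we take the $k$-th power of the path $v_1 \dots v_n$, i.e.\ $v_a v_b \in E(H)$ iff $0 < \abs{a-b} \leq k$. This is a proper interval graph: assign to $v_a$ the unit-length interval $[a, a+k]$, or use the ordering characterization with closed neighborhoods consecutive in both directions. Clearly $E(G) \subseteq E(H)$. It remains to design labels that reject exactly the non-edges of $G$ among the pairs at distance at most $k$ in the ordering.

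The labels will be indexed by residues modulo $2k$ and by offsets, in the spirit of the treedepth proof (\cref{thm:td}), where a label $\alpha_{ij}$ records that a vertex of height $i$ has a $G$-neighbour of height $j$. Here we take labels $\alpha_{r,s}$ with $r \in \set{0,\dots,2k-1}$ and $s \in \set{1,\dots,k}$, together with some extra labels, for roughly $2k^2+2k$ labels in total. For an edge $v_a v_b$ with $a < b$ and $s = b-a \leq k$, we put $\alpha_{a \bmod 2k,\, s}$ into both $L(v_a)$ and $L(v_b)$. Thus $v_a$ receives its own residue paired with each forward offset to a neighbour, and $v_b$ receives the residue of its backward neighbour together with that offset.

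Soundness then reduces to a single check. Suppose $v_a v_b \in E(H)$ with $a<b$, and some label $\alpha_{r,s}$ lies in $L(v_a) \cap L(v_b)$. Each vertex carries a given label either as the \enquote{left} endpoint, where $r \equiv$ its own index, or as the \enquote{right} endpoint, where $r \equiv$ its index minus $s$. One has to rule out any combination except $v_a$ left, $v_b$ right, with $s=b-a$. Since the indices of $v_a$ and $v_b$ differ by at most $k < 2k$, mixing the residues $a$, $a-s'$, $b$, $b-s$ gives congruences modulo $2k$ among quantities whose true differences lie in $(-2k,2k)$. So the congruences should force equalities, but a few cases may survive, such as both vertices being right endpoints of edges from a common left vertex. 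To kill these I would split each label into a \enquote{left copy} and a \enquote{right copy}: only left$\cap$right matches are then possible, and this is likely where the factor $2$ and the extra $2k$ term come from. Concretely, I would index labels by $(a \bmod 2k, s)$ but give left and right endpoints disjoint label types, and handle the remaining collisions by case analysis on $b-a$ versus $s$.

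For $\simulnonempty$, proper interval graphs are closed under adding isolated vertices, so \cref{obsv:isolatedvertices} yields the $+1$. The main obstacle is the residue/collision case analysis in the soundness step. If mod $2k$ does not separate the cases cleanly, my fallback is to assign labels per pair (vertex-residue mod $k{+}1$, neighbour-residue mod $k{+}1$), which gives $O(k^2)$ labels and soundness because distinct vertices within distance $k$ have distinct residues; I would then trim the count toward $2k^2+2k$.
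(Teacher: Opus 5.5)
Your proposal has a genuine gap. Choosing $H$ as the $k$-th power of the path in a bandwidth order, and indexing labels by a position residue and an offset, is the same architecture as the paper's proof. But the soundness step, which you yourself flag as the main obstacle, fails for the scheme you wrote down.

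\textbf{Where the mod-$2k$ scheme fails.} Take $p<q$ with $q-p\le k$, where $v_p$ carries $\alpha_{r,s}$ as a right endpoint ($r\equiv p-s$) and $v_q$ carries it as a left endpoint ($r\equiv q$). Then $(q-p)+s\equiv 0 \pmod{2k}$. Since $(q-p)+s$ ranges over $[2,2k]$, your claim that the relevant differences lie in $(-2k,2k)$ is false exactly at the boundary. A collision occurs when $q-p=s=k$. Concretely, let $k=1$ and $G=2K_2$ in the order $v_1,v_2,v_3,v_4$ with edges $v_1v_2$ and $v_3v_4$. Both edges produce the label $\alpha_{1,1}$, so $v_2$ and $v_3$ share a label and are adjacent in $H=P_4$, yet they are not adjacent in $G$. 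The case you worried about, two right endpoints of the same left vertex, is actually harmless, because the offset is part of the label.

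\textbf{Why the proposed repairs do not work.} The ``left copy / right copy'' idea is incompatible with the definition. An edge $uv$ is realized only by a label lying in both $L(u)$ and $L(v)$. If left and right endpoints receive disjoint label types, no edge of $G$ is realized at all; nothing in the definition lets two different labels ``match.''

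The fallback with residue pairs modulo $k+1$ fails on the same example: $v_1v_2$ and $v_3v_4$ both yield the pair $(1,0)$. Distinctness of residues within distance $k$ only handles the case where the earlier vertex is the left endpoint. In the crossing case, the two relevant left endpoints ($v_{p-s}$ and $v_q$) can be up to $2k$ apart.

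\textbf{The fix.} The period must exceed $2k$. With residues modulo $2k+1$ your four-case analysis closes:
\begin{itemize}
\item two left roles or two right roles force $p\equiv q$, which is impossible since $0<q-p\le k$;
\item a left/right pair forces $q-p\equiv s$ with both values in $[1,k]$, so $q-p=s$, and the right endpoint's partner is exactly $v_p$, giving $v_pv_q\in E(G)$;
\item a right/left pair is impossible, since $(q-p)+s\in[2,2k]$ is not divisible by $2k+1$.
\end{itemize}
This uses $k(2k+1)\le 2k^2+2k$ labels, with no extra labels needed.

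The paper effectively uses period $2(k+1)$. It cuts the order into blocks of $k+1$ positions and uses two families $\alpha_{ij}$ and $\beta_{ij}$ ($1\le i\le k+1$, $1\le j\le k$) for even and odd blocks, respectively. This accounts for the count $2(k+1)k$.

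Your treatment of the non-empty variant, via closure of proper interval graphs under adding isolated vertices, is fine.
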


 \begin{figure}
    \centering
        \begin{tikzpicture}
        \def\labelone{BTUred}
        \def\labeltwo{lipicsYellow}
        \def\labelthree{coolblue}
        \def\nolabel{lipicsGray}
        \draw[ultra thick,\labelone] (0,1.125) -- (2.25,1.125);
        \draw[ultra thick,\labelone] (0.75,0.75) -- (3,0.75);
        \draw[ultra thick,\labelone] (1.5,0.375) -- (3.75,0.375);
        \draw[ultra thick,\labelone] (2.25,0) -- (4.5,0);

        \draw[ultra thick,\labeltwo] (3,1.125) -- (5.25,1.125);
        \draw[ultra thick,\labeltwo] (3.75,0.75) -- (6,0.75);
        \draw[ultra thick,\labeltwo] (4.5,0.375) -- (6.75,0.375);
        \draw[ultra thick,\labeltwo] (5.25,0) -- (7.5,0);

        \draw[ultra thick,\labelone] (6,1.125) -- (8.25,1.125);
        \draw[ultra thick,\labelone] (6.75,0.75) -- (9,0.75);
        \draw[ultra thick,\labelone] (7.5,0.375) -- (9.75,0.375);
        \draw[ultra thick,\labelone] (8.25,0) -- (10.5,0);

        \node at (3.25,-0.5) {$\textcolor{\labelone}{\boldsymbol{\alpha}}$};
        \node at (6.25,-0.5) {$\textcolor{\labeltwo}{\boldsymbol{\beta}}$};
        \node at (9.25,-0.5) {$\textcolor{\labelone}{\boldsymbol{\alpha}}$};
        \node at (1.5,0.225) {$\textcolor{\labelone}{\boldsymbol{\alpha}_{\mathsf{\mathbf{3,2}}}}$};
        \node[font=\sffamily\bfseries] at (4,1.375) {$\textcolor{\labelone}{\boldsymbol{\alpha}_{\mathsf{\mathbf{3,2}}}}$,$\textcolor{\labeltwo}{\boldsymbol{\beta}_{\mathsf{\mathbf{1,2}}}}$,$\textcolor{\labeltwo}{\boldsymbol{\beta}_{\mathsf{\mathbf{1,3}}}}$};
        \node at (4.5,0.25) {$\textcolor{\labeltwo}{\boldsymbol{\beta}_{\mathsf{\mathbf{1,2}}}}$};
        \node at (5.25,-0.125) {$\textcolor{\labeltwo}{\boldsymbol{\beta}_{\mathsf{\mathbf{1,3}}}}$};
        \draw (-0.5,-1) -> (11,-1);
        \foreach \x in {1,2,3,4,5,6,7,8,9,10,11,12,13,14,15} \draw ({(\x-1)*0.75},-1) -- ({(\x-1)*0.75},-1.1) node[below] {\x};
    \end{tikzpicture}
    \caption{A visualization of the construction used in the proof of \cref{lem:bw} for bandwidth three. The proper interval representation is depicted for twelve vertices, which are grouped into three sets of four vertices, the $\alpha$ and the $\beta$ sets. Possible labels for the first vertex of the $\beta$ set and the other vertices where those labels appear are depicted.}
    \label{fig:band}
\end{figure}

\begin{proof}
    As proper interval graphs are closed under the addition of isolated vertices, we only have to consider the bound on $\simul[\cC]$.
    Let $G$ be a graph of bandwidth $k$ and $f : V(G) \to \set{1,\dots,n}$ a bijection verifying that.
    We define a proper interval model of $G$ by $R(v) := [f(v), f(v) + k]$ for all $v \in V(G)$ and note that the graph $H$ defined by the proper interval model $R$ contains $G$ and the clique number of $H$ is $k+1$.
    Observe that a vertex $v \in V(H)$ is adjacent to every vertex with at most distance $k$ according to $f$ in $H$.
    We introduce labels $\alpha_{ij}$ and $\beta_{ij}$ with $1 \leq i \leq k+1$ and $1 \leq j \leq k$.
    The interpretation of these labels is the following: we cut the set of numbers $\set{1,2,\dots,n}$ into intervals of size $k+1$.
    The $\alpha$ labels correspond to even intervals, the $\beta$ labels to odd intervals.
    The indices $ij$ have the meaning that the $i$-th vertex in the current interval is adjacent to the vertex $j$ steps forward.
    
    We describe a labeling procedure based on this idea (see \cref{fig:band} for an illustration):
    Let $u \in V(G)$ be a vertex with $f(u) = a \cdot (k+1) + i$ with $1 \leq i \leq k+1$ and assume it is adjacent to a vertex $v \in V(G)$ with $f(v) = f(u) +j $ with $j \leq k$.
    If $a$ is even, we add the label $\alpha_{ij}$ to the label sets $L(u)$ and $L(v)$, otherwise we add the label $\beta_{ij}$ to their label sets.
    
    We claim that the so-defined $(H,L)$ is a $2(k+1)k$-simultaneous $\cC$-representation of $G$.
    Assume that there are vertices $u,v \in V(G)$ with $uv \in E(G)$.
    Then, by definition, $uv \in E(H)$ and $L(u) \cap L(v) \neq \emptyset$.
    So consider the case that there are vertices $u,v \in V(G)$ with $uv \in E(H)$ and $L(u) \cap L(v) \neq \emptyset$.
    W.l.o.g. assume that $f(u) < f(v)$ holds.
    Let $\ell \in L(u) \cap L(v)$.
    First consider the case that $\ell$ is of the form $\alpha_{ij}$.
    Observe the fact that the label $\alpha_{ij}$ can, by construction, only appear twice in an interval of $2k+1$ vertices.
    Once created by a vertex $x$ with $f(x) = a \cdot (k+1) + i$ and once obtained by a vertex $y$ with $f(y) = a \cdot (k+1) + i + j$.
    We consider the interval of size $2k+1$ centered around $u$.
    Note that all possible neighbors of $u$ are contained in that interval.
    By the comments above, it must hold that $f(u) = a \cdot (k+1) + i$ and $f(v) = f(u) + j$.
    Thus, the edge $uv$ must exist in $G$, as the label $\alpha_{ij}$ is only added to $L(v)$ if it is adjacent to $u$ in $G$.
    The case that $\ell$ is of the form $\beta_{ij}$ follows analogously.
\end{proof}

A generalization of bandwidth, which considers tree-orders instead of linear orders, was recently introduced in \cite{jacob2025treebandwidth}.

\begin{definition}[Treebandwidth]
    A \emph{tree-layout} of a graph $G = (V,E)$ is a rooted tree $(T,r)$ whose nodes are the vertices $V$, such that, for every edge $uv \in E$, $u$ is an ancestor of $v$ in $(T,r)$ or vice-versa.
    The \emph{bandwidth} of $(T,r)$ is the maximum distance between pairs of neighbors in $G$.
    The \emph{treebandwidth} of $G$, denoted by $\tbw{G}$, is the minimum bandwidth over all tree-layouts of $G$.
\end{definition}

It was shown in \cite{jacob2025treebandwidth} that a graph $G$ having treebandwidth $\leq k$ is equivalent to that graph having a proper chordal supergraph with clique number $\leq k+1$.
\emph{Proper chordal graphs} are graphs that have a tree-layout, where every root-to-leaf path is a proper interval graph \cite{PaulP24}.
The above mentioned proof shows that the proper chordal tree-layout associated with the supergraph $H$ also verifes that the treebandwidth of $G$ is $\leq k$.

\begin{theorem} \label{thm:tbw}
    Let $\cC$ be the class of proper chordal graphs.
    Then $\simul[\cC][G] \leq 2 \br{\tbw{G}}^2 + 2 \tbw{G}$ and $\simulnonempty[\cC][G] \leq 2 \br{\tbw{G}}^2 + 2 \tbw{G} + 1$ for any graph $G$.
\end{theorem}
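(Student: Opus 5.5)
We plan to mimic the proof of \cref{lem:bw}, replacing the linear order by a tree-layout. Since proper chordal graphs are closed under adding isolated vertices, by \cref{obsv:isolatedvertices} it suffices to bound $\simul[\cC][G]$; the bound for $\simulnonempty[\cC]$ follows by adding one.

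\emph{Setup.} Let $k = \tbw{G}$ and let $(T,r)$ be a tree-layout of $G$ of bandwidth $k$. Let $H$ be the graph on $V(G)$ in which $u$ and $v$ are adjacent whenever one is an ancestor of the other at distance at most $k$ in $T$. By the result of \cite{jacob2025treebandwidth} quoted above, $H$ is a proper chordal supergraph of $G$ with clique number $k+1$. Its proper chordal tree-layout is $(T,r)$ itself: each root-to-leaf path induces the $k$-th power of a path, which is a proper interval graph. Moreover, every edge of $G$ is an edge of $H$. If the cited equivalence were only stated up to an abstract supergraph, we would instead take that supergraph together with its tree-layout, as the paragraph before the theorem indicates.

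\emph{Labeling.} For each vertex $u$, let $t(u)$ be its depth in $T$, with the root at depth $0$. Write $t(u) = a(k+1) + i - 1$ with $1 \le i \le k+1$. For every edge $uv \in E(G)$ in which $v$ is a descendant of $u$ at distance $j \le k$, add to both $L(u)$ and $L(v)$ the label $\alpha_{ij}$ if $a$ is even and $\beta_{ij}$ if $a$ is odd. This uses $2(k+1)k$ labels, and every edge of $G$ is represented.

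\emph{Correctness, the main step.} Suppose $uv \in E(H)$, with $v$ a descendant of $u$ at distance $j' \le k$, and let $\alpha_{ij} \in L(u) \cap L(v)$. We must show that the pair witnessing this label is exactly $(u,v)$. All occurrences of the label in $L(u)$ and $L(v)$ come from edges between ancestor/descendant pairs on the root-to-$v$ path within depth window $[t(u)-k, t(v)+k]$. The argument then reduces to the path case of \cref{lem:bw}.

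Along a single root-to-leaf path, a label $\alpha_{ij}$ is placed only on vertices at depths $a(k+1)+i-1$ (the creator) and $a(k+1)+i-1+j$ (the receiver), with $a$ even. Any two creators at different depths with the same parity class of $a$ differ in depth by at least $2(k+1)$. Hence, within the window of size $2k+1$ around $u$, only one creator depth and one receiver depth occur.

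The tree adds one subtlety, and this is the obstacle we expect. $L(u)$ may contain $\alpha_{ij}$ because of an edge from $u$ to some descendant $w \neq v$ in a different branch. The fix is to observe what the shared label forces:
\begin{itemize}
\item Since $u$ and $v$ lie at depths differing by at most $k$, one must be the creator and the other the receiver at depth offset $j$. So $u$ is at the creator depth, $v$ is at the receiver depth, and $j' = j$.
\item The receiver $v$ obtains $\alpha_{ij}$ only from its unique ancestor at distance $j$, which is $u$.
\end{itemize}
Hence $v$ received the label because $uv \in E(G)$. The key is to argue from $v$'s side, using uniqueness of ancestors, exactly as in the proof of \cref{thm:td}. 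We expect that checking this ancestor-uniqueness argument carefully, including the exclusion of the case where $u$ and $v$ are both creators or both receivers via the parity spacing, is the only nontrivial part.
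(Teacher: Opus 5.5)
Your proposal is correct and takes the same route as the paper, whose proof simply applies the labeling of \cref{lem:bw} along every root-to-leaf path of the tree-layout and states that correctness is analogous. Your depth-based labeling, which keeps the labels consistent across branches, and your argument from the descendant's side, using that $v$ has a unique ancestor at distance $j$, supply the details the paper leaves implicit. Your intermediate remark that all occurrences of the label come from the root-to-$v$ path is not literally true (as you note yourself, $u$ may carry the label because of an edge into another branch), but your final depth-spacing argument does not rely on it.
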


\begin{proof}
    As proper chordal graphs are closed under the addition of isolated vertices, we only have to consider the bound on $\simul[\cC]$.
    Let $G$ be a graph with treebandwidth $k$ and $H$ a proper chordal supergraph of $G$ with clique number $k+1$.
    Let $(T,r)$ be the proper chordal tree-layout of $H$ which also verifies $\tbw{G} = k$.
    We use the same construction as in the proof of \cref{lem:bw} for every root-to-leaf path of $(T,r)$. The correctness of that construction now works analogously to the correctness in \cref{lem:bw}.
\end{proof}

Together with \cref{lem:c1c2} we can characterize when any of these parameters upper bounds some simultaneous $\cC$-number.

\begin{theorem}
    Let $\cC$ be a hereditary completable graph class.
    \begin{itemize}
        \item Treedepth upper bounds the simultaneous $\cC$-number if and only if the simultaneous $\cC$-number of trivially perfect graphs is bounded by their clique number.
        \item Bandwidth upper bounds the simultaneous $\cC$-number if and only if the simultaneous $\cC$-number of proper interval graphs is bounded by their clique number.
        \item Treebandwidth upper bounds the simultaneous $\cC$-number if and only if the simultaneous $\cC$-number of proper chordal graphs is bounded by their clique number.
        \item Pathwidth upper bounds the simultaneous $\cC$-number if and only if the simultaneous $\cC$-number of interval graphs is bounded by their clique number.
        \item Treewidth upper bounds the simultaneous $\cC$-number if and only if the simultaneous $\cC$-number of chordal graphs is bounded by their clique number.
    \end{itemize}
\end{theorem}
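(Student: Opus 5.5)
The five items share one argument, so I will prove it uniformly. Fix a parameter $p$ (treedepth, bandwidth, treebandwidth, pathwidth, or treewidth) and let $\cK$ be its associated class (trivially perfect, proper interval, proper chordal, interval, or chordal graphs). The input is the characterization recalled above: $p(G) \leq k$ if and only if $G$ has a supergraph $H \in \cK$ with $V(H)=V(G)$ and $\omega(H) \leq k+1$. I will also use that there is a function $h$ with $\simul[\cK][G] \leq h(p(G))$; this is \cref{thm:td}, \cref{lem:bw}, \cref{thm:tbw}, and the cited results of \cite{beisegel2024simultaneousinterval,simchord} for pathwidth and treewidth.

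\emph{Sufficiency.} Assume there is a function $g$ with $\simul[\cC][K] \leq g(\omega(K))$ for all $K \in \cK$. Let $G$ be a graph with $p(G)=k$.
\begin{enumerate}
\item Choose a minimum-size $\simul[\cK]$-representation $(H_1,L_1)$ of $G$. I do not want just any $\cK$-supergraph here; I want the graph $H_1$ actually produced in the proofs of the cited bounds. In all five proofs, $H_1$ is exactly the $\cK$-supergraph of clique number $k+1$ given by the characterization, so $\omega(H_1) \leq k+1$ and $d_1 \leq h(k)$.
\item Take a $\simul[\cC]$-representation $(H_2,L_2)$ of $H_1$ with $d_2 \leq g(k+1)$ labels.
\item Apply \cref{lem:c1c2}. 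This gives $\simul[\cC][G] \leq h(k)\, g(k+1)$, so $p$ upper bounds $\simul[\cC]$.
\end{enumerate}
The one point needing care is that $H_1$ has small clique number. Nothing in the statement of the earlier bounds guarantees this, so I must point out that each construction uses that supergraph. If that is awkward to cite for the pathwidth and treewidth results, I would restate those bounds in the form "for every $\cK$-supergraph $H$ of $G$ with $\omega(H)\le k+1$, there is a representation with this $H$."

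\emph{Necessity.} Assume $\simul[\cC][G] \leq f(p(G))$ for all graphs $G$. Let $K \in \cK$. Then $K$ is its own supergraph in $\cK$, so $p(K) \leq \omega(K)-1$. Hence $\simul[\cC][K] \leq f(\omega(K)-1)$, taking $f$ non-decreasing without loss of generality. This shows that on $\cK$ the simultaneous $\cC$-number is bounded by a function of the clique number.

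I expect the only real obstacle to be the bookkeeping in the sufficiency step: confirming, case by case, that the representation supplied by each earlier bound uses the bounded-clique supergraph. In \cref{thm:td} that supergraph is the closure of the forest. In \cref{lem:bw} it is the interval model $[f(v),f(v)+k]$. In \cref{thm:tbw} it is the proper chordal supergraph, and for pathwidth and treewidth it is the triangulated path- or tree-decomposition supergraph.
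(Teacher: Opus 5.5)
Your proposal is correct and follows the paper's proof: necessity comes from evaluating the assumed bound on graphs of $\cK$, which are their own bounded-clique supergraphs. Sufficiency takes the specific $\cK$-representation $(H_1,L_1)$ built in \cref{thm:td}, \cref{lem:bw}, \cref{thm:tbw} and the cited pathwidth/treewidth results, notes $\omega(H_1)\le p(G)+1$, and composes with \cref{lem:c1c2}. Your explicit check that each construction uses that bounded-clique supergraph is exactly what the paper asserts for treedepth and leaves as ``analogous'' elsewhere; drop the phrase ``minimum-size'' in step 1, since a minimum representation need not use that supergraph and minimality is never needed.
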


\begin{proof}
    We give the proof for treedepth and trivially perfect graphs, the other cases follow analogously with their respective upper bounds.

    Let $\cK$ be the class of trivially perfect graphs.
    Assume that we are given a hereditary completable graph class $\cC$ such that treedepth upper bounds the simultaneous $\cC$-number.
    Then, as the treedepth on graphs of $\cK$ equals the clique number plus one, the simultaneous $\cC$-number of graphs of $\cK$ is bounded by their clique number.
    
    For the converse direction, let the simultaneous $\cC$-number of graphs of $\cK$ be bounded by their clique number, i.e., there is a function $f$ such that $\simul[\C](G) \leq f(\omega(G))$ for all graphs $G \in \cK$.
    Then consider some graph $G$ and construct the $\binom{\tdbr{G}+1}{2}$-simultaneous $\cK$-representation $(H_1,L_1)$ of $G$ from \cref{thm:td}.
    Note that $H_1 \in \cK$ and the clique number of $H_1$ is equal to $\td{G} + 1$.
    Thus, by \cref{lem:c1c2}, we have that $G$ admits a simultaneous $\cC$-representation with the number of labels is at most $\binom{\tdbr{G}+1}{2} \cdot f(\td{G} + 1)$ which only depends on $\td{G}$. Hence, $\simul[\C][G]$ is bounded by a function in $\td{G}$.
\end{proof}

The same proof works for simultaneous numbers with non-empty label sets.

\subsection{Parameters That Are Never Upper Bounds}

Closing this section, we present a parameter that can never be an upper bounds on any simultaneous $\cC$-number for non-trivial classes $\cC$. This parameter is the modular width. To this end, we construct for every completable hereditary non-trivial $\cC$ a family of graphs with bounded modular-width but unbounded $\cC$-chromatic number, and, thus, with unbounded simultaneous $\cC$-number, due to \cref{lem:chromaticbound}.
For this construction we use \emph{lexicographic powers} of graphs.
Let $G$ and $F$ be two graphs.
The \emph{lexicographic product} or \emph{composition} $G \bullet F$ of $G$ and $F$ is defined as the graph with the following vertex and edge sets:
\begin{align*}
    V\!\br{G \bullet F} &:= V(G) \times V(F), \\
    E\!\br{G \bullet F} &:= \setcond{(u,v)(x,y)}{ux \in E(G) \text{ or } \br{u = x \ \wedge \ vy \in E(F)}}.
\end{align*}
Equivalently, the graph $G \bullet F$ is obtained by substituting every vertex in $G$ with a copy of $F$.
Thus, we can also think of $G \bullet F$ in terms of the substitution operation used in the definition of modular width.

\begin{lemma} \label{thm:lexprod}
    Let $H$ be a graph and $\mathcal{P}$ be a hereditary $H$-free graph class.
    Let $(G_i)_{i \geq 0}$ be the family of graphs with $G_0 := H$ and $G_{i+1} := H \bullet G_i$.
    Then $\chi_{\mathcal{P}} (G_i) \geq \chi_{\mathcal{P}} (H) + i$.
\end{lemma}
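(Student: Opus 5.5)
It suffices to prove the single-step inequality $\chi_{\cP}(H \bullet F) \geq \chi_{\cP}(F) + 1$ for every graph $F$ with $\chi_{\cP}(F) \ge 1$; the statement then follows by induction on $i$. The base case $i=0$ is trivial, and $G_{i+1} = H \bullet G_i$ gives $\chi_{\cP}(G_{i+1}) \geq \chi_{\cP}(G_i)+1 \geq \chi_{\cP}(H)+i+1$.

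To prove the step, suppose for contradiction that $X_1,\dots,X_k$ is a $\cP$-coloring of $H \bullet F$ with $k = \chi_{\cP}(F)$. For $v \in V(H)$ write $F_v$ for the copy $\{v\}\times V(F)$. Each $F_v$ induces a copy of $F$, and the restriction of the coloring to $F_v$ is a $\cP$-coloring of $F_v$ because $\cP$ is hereditary. By minimality of $k$, every class $X_j$ must meet every copy $F_v$. Otherwise some copy $F_v$ would be colored with at most $k-1$ classes, contradicting $\chi_{\cP}(F_v)=k$.

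Now take the class $X_1$ and pick one vertex $x_v \in X_1 \cap F_v$ for each $v \in V(H)$. Vertices in different copies $F_u, F_v$ are adjacent in $H \bullet F$ exactly when $uv \in E(H)$. Hence $\{x_v : v \in V(H)\}$ induces a copy of $H$ inside $(H \bullet F)[X_1]$. Since $\cP$ is hereditary and $H$-free, $(H\bullet F)[X_1] \notin \cP$, which contradicts $X_1$ being a color class. Therefore $\chi_{\cP}(H\bullet F) \ge k+1$.

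The only delicate point is the pigeonhole step, namely that every color class must hit every copy. This relies on $k$ being exactly $\chi_{\cP}(F)$ and on heredity to restrict the coloring to a copy. It also needs $\chi_{\cP}$ to be well defined on these graphs. If $\cP$ contains $K_1$, every graph has a $\cP$-coloring by singletons. If $\cP$ does not contain $K_1$, then $\cP$ is empty (it is hereditary) and the statement is vacuous with the value $\infty$. I would note this briefly and otherwise treat the argument as routine.
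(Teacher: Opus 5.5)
Your proof is correct and is essentially the paper's argument. Every color class must meet every copy, because each copy needs at least $k$ classes; then one class contains a vertex from every copy and so induces $H$. The only cosmetic difference is that you isolate the step $\chi_{\cP}(H \bullet F) \geq \chi_{\cP}(F)+1$ for arbitrary $F$, whereas the paper runs the same argument inside the induction with $k = \chi_{\cP}(H)+i$.
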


\begin{proof}
    We use induction over $i$.
    For $i = 0$ the statement is trivial.
    So we assume that the statement holds for some $i \geq 0$ and show that it holds for $i+1$.
    Assume to the contrary that we have $\chi_{\mathcal{P}} (G_{i+1}) \leq \chi_{\mathcal{P}}(H) + i =: k$.
    Let $X_1, X_2, \dots, X_k$ be a $\mathcal{P}$-coloring of $G_{i+1}$ and
    \[ V_u := \setcond{(u,v) \in V(G_{i+1})}{v \in V(G_i)} \]
    for $u \in V(H)$ be the vertex sets of the copies of $G_i$ used in $G_{i+1}$.
    As $X_1 \cap V_u,\dots, X_k \cap V_u$ is a $\mathcal{P}$-coloring of $G_{i+1}[V_u]$ and we have $\chi_{\mathcal{P}} (G_{i+1}[V_u]) = \chi_{\mathcal{P}}(G_i) \geq k$ for all $u \in V(H)$, it follows that $X_j \cap V_u \neq \emptyset$ for all $1 \leq j \leq k$ and all $u \in V(H)$.
    Thus, in any $X_j$ we find for every $u \in V(H)$ some vertex $(u,v) \in V(G_{i+1})$.
    However, this implies that $G_{i+1}[X_j]$ contains an induced $H$, a contradiction to the fact that $X_1, X_2, \dots, X_k$ form a $\mathcal{P}$-coloring of $G_{i+1}$.
    Thus, the assumption was false and we have shown that $\chi_{\mathcal{P}}(G_{i+1}) \geq \chi_{\mathcal{P}}(H) + i + 1$.
\end{proof}

This, implies the following result.

\begin{lemma}
    For every non-trivial hereditary completable graph class $\mathcal{C}$ there is a family of graphs with bounded modular width but unbounded simultaneous $\mathcal{C}$-number. 
\end{lemma}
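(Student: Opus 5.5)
Since $\cC$ is non-trivial and hereditary, there is a graph $H \notin \cC$. Heredity makes $\cC$ an $H$-free class, so \cref{thm:lexprod} applies with $\cP = \cC$. Let $G_0 := H$ and $G_{i+1} := H \bullet G_i$. Then $\chi_{\cC}(G_i) \geq \chi_{\cC}(H) + i \geq i+1$, so the $\cC$-chromatic number is unbounded on $(G_i)_{i \geq 0}$.

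The first step is to bound the modular width of $G_i$ by $\abs{V(H)}$, uniformly in $i$. By definition, $G_{i+1}$ is obtained from $H$ by substituting every vertex of $H$ with a copy of $G_i$. This is one application of the substitution operation with respect to $H$, which has $\abs{V(H)}$ operands. Inductively, $G_0 = H$ itself arises from $\abs{V(H)}$ isolated vertices by one substitution with respect to $H$. Hence every $G_i$ has an algebraic expression of width at most $\abs{V(H)}$, a constant independent of $i$.

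The second step converts unbounded $\chi_{\cC}$ into unbounded $\simul[\cC]$. Choosing $\cP = \cC$ in \cref{thm:chromatic} gives $\chi_{\cC}(G) \leq \chi_{\cC}(\cC)\cdot\simulnonempty[\cC][G] = \simulnonempty[\cC][G]$. For the version with possibly empty label sets, observe that $G_i$ has no isolated vertices as soon as $H$ has none; then \cref{obsv:isolatedvertices2} gives $\simul[\cC][G_i] = \simulnonempty[\cC][G_i] \geq i+1$.

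The main obstacle is that $H$ might contain isolated vertices. To handle this, I would choose $H$ more carefully. Take $H$ to be a minimal graph not in $\cC$ (all its proper induced subgraphs lie in $\cC$). If $H$ has an isolated vertex, replace $H$ by $H' := \overline{K_1 \cup \overline{H}}$, that is, $H$ joined with a new universal vertex. Then $H'$ has no isolated vertices and contains $H$ as an induced subgraph, so $H' \notin \cC$ and $\cC$ is still $H'$-free. Running the argument with $H'$ gives $G_i$ without isolated vertices, since the substitution preserves the absence of isolated vertices. (Note that $H'$ has at least two vertices, because $K_1 \in \cC$ as $\cC$ is completable.) Alternatively, I would use the bound $\chi_{\cC}(G) - 1 \leq \simul[\cC][G]$ from \cref{thm:chromatic}; this needs $\cC$ to contain all edgeless graphs, which is why I prefer the universal-vertex trick. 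Either way, the family has modular width at most $\abs{V(H')}$ and $\simul[\cC] \geq i+1$, which proves the statement.
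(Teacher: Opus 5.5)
Your proposal is correct and follows the paper's proof essentially step for step. You pick a forbidden induced subgraph without isolated vertices (via an added universal vertex) and take its lexicographic powers. You then use \cref{thm:lexprod} and \cref{thm:chromatic} with $\cP=\cC$ to get unbounded $\simulnonempty[\cC]$, transfer this to $\simul[\cC]$ via \cref{obsv:isolatedvertices2}, and bound the modular width by $\abs{V(H)}$ through the obvious substitution expression. The minimality you impose on $H$ is unnecessary but harmless.
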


\begin{proof}
    As $\mathcal{C}$ is non-trivial and hereditary, there exists some graph $H$ such that $\mathcal{C}$ is $H$-free.
    We choose this graph to have no isolated vertices. 
    This is possible since we just can add a universal vertex to some forbidden induced subgraph.
    By \cref{thm:lexprod} the family of graphs defined by $G_0 := H$ and $G_{i+1} := H \bullet G_i$ has unbounded $\mathcal{C}$-chromatic number. Therefore, it has unbounded $\simulnonempty[\cC]$ by \cref{thm:chromatic}, which is equal to $\simul[\cC]$ on these graphs due to \cref{obsv:isolatedvertices2} and the fact that no graph in this family has isolated vertices.
    On the other hand, the construction of $G_i$ directly gives an algebraic expression constructing $G_i$, which only uses the substitution operation with respect to $H$. Thus, the modular-width of $G_i$ is at most $\abs{V(H)}$ for all $i \in \N_0$. 
\end{proof}

Using this, we can finally deduce the following.

\begin{theorem}
    Let $\cC$ be a hereditary completable graph class.
    Modular-width cannot upper bound the simultaneous $\cC$-number. The same holds for all parameters that are upper bounded by modular-width. 
\end{theorem}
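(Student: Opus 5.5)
The theorem should follow almost directly from the preceding lemma, with a case split according to whether $\cC$ is trivial. Since $\cC$ is completable it is non-empty, so the only trivial case is $\cC = \bbG$. That case needs a separate remark. For $\cC = \bbG$ we may take $H = G$ and a single label for every vertex, so $\simul[\bbG][G] \leq 1$, and $\simul[\bbG]$ is bounded by a constant. The statement is presumably meant for non-trivial $\cC$, as the section's opening sentence indicates. In the write-up I would state the restriction to non-trivial $\cC$ explicitly, noting that for $\cC = \bbG$ every parameter trivially upper bounds the constant $\simul[\bbG]$.

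Now let $\cC$ be non-trivial, hereditary and completable. Suppose, for contradiction, that modular-width upper bounds $\simul[\cC]$. Then there is a function $g$ with $\simul[\cC][G] \leq g(\mathrm{mw}(G))$ for every graph $G$. The preceding lemma supplies the family $G_i$, defined by $G_0 := H$ and $G_{i+1} := H \bullet G_i$. Its modular-width is at most $\abs{V(H)}$, but its simultaneous $\cC$-number is unbounded. This gives $\simul[\cC][G_i] \leq g(\abs{V(H)})$ for all $i$, which is a contradiction.

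For the second claim, let $p$ be a parameter that is upper bounded by modular-width, so $p(G) \leq h(\mathrm{mw}(G))$ for some function $h$. On the same family, $p(G_i) \leq h(\abs{V(H)})$ is bounded, while $\simul[\cC][G_i]$ is unbounded. Hence no function $g'$ can satisfy $\simul[\cC] \leq g'(p)$, because $g'(p(G_i))$ takes at most the values of $g'$ on a bounded range. One technicality: $p$ is real-valued, so "bounded" has to be taken literally. The bound $\sup_i g'(p(G_i))$ could be infinite only if $g'$ were unbounded on a bounded interval. I will handle this by noting that $p(G_i)$ need not take finitely many values, and then arguing directly. If $\simul[\cC] \leq g'(p)$, then $\simul[\cC][G_i] \leq g'(p(G_i))$ for each $i$. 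Since $\simul[\cC][G_i]$ is unbounded while the $p(G_i)$ lie in $[\inf p, h(\abs{V(H)})]$, the assumption forces $g'$ to be unbounded on a bounded set. This is ruled out under the standard convention that bounding functions may be taken non-decreasing, or equivalently that "upper bounds" means bounded on every class where the bounding parameter is bounded. I would adopt that convention explicitly.

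I do not expect a real obstacle. The only points needing care are the exclusion of the trivial class $\bbG$ and the convention on the bounding function just described. As a final remark I would list parameters upper bounded by modular-width that are covered by the theorem: cliquewidth, twin-width, rank-width, boolean-width and tree-length, in accordance with \cref{fig:diagram}.
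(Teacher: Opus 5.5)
Your proposal is correct and follows the paper's route: the theorem is deduced directly from the preceding lemma, whose family $G_0 = H$, $G_{i+1} = H \bullet G_i$ has modular-width at most $\abs{V(H)}$ but unbounded $\simul[\cC]$, and the second claim follows because any parameter upper bounded by modular-width is bounded on this same family. Both of your caveats are genuine gaps in the statement that the paper leaves implicit. First, $\cC$ must be non-trivial, since $\simul[\bbG][G] \leq 1$ for every graph $G$. Second, the second claim needs non-decreasing bounding functions, i.e.\ the class-based notion of \emph{upper bounds} used in the paper's appendix: with an arbitrary $g$, the parameter $1/(1+\simul[\cC])$ is bounded by a constant and hence by modular-width, yet it upper bounds $\simul[\cC]$ (this convention also justifies your step $\simul[\cC][G_i] \leq g(\abs{V(H)})$).
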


An overview of parameters that are upper bounded by modular-width can be found in \cref{fig:diagram}. It includes parameters as tree-length, cliquewidth, mim-width, and twin-width.

\section{Algorithms}\label{sec:algo}

\subsection{Computation}

To make use of simultaneous $\cC$-numbers as parameters in algorithms, we need to be able to compute optimal simultaneous representations.
Beisegel et al.~\cite{beisegel2024simultaneousinterval} showed that computing the simultaneous $\cC$-number with a corresponding simultaneous representation is $\NP$-hard for classes $\cC$ that contain all complete split graphs and forbid the induced subgraph $C_4$.
Since the simultaneous $\mathsf{complete}$-number is equal to the edge clique cover number it can be computed in $\FPT$ time parameterized by solution size~\cite{gramm2008data}. However, under the Exponential Time Hypothesis, double exponential time is needed to compute the simultaneous $\mathsf{complete}$-number parameterized by solution size~\cite{cygan2016known}. Intuitively, computing other simultaneous $\cC$-numbers should be even harder than computing the simultaneous $\mathsf{complete}$-number. Thus, we should not hope for \FPT{} algorithms with reasonable running time bounds. In fact, it is not even clear whether any other simultaneous $\cC$-number can be computed in $\XP$ time.

On the contrary, Bonomo-Braberman, Brandwein, and Sau~\cite{bonomobraberman2025computing} have shown that computing the simultaneous interval number is $\FPT$ when parameterized by $\mathsf{cluster}$-modular cardinality plus solution size. We generalize and expand on the ideas used in that algorithm, to obtain several algorithms parameterized by suitable $\cG$-modular cardinalities.

First, we give several algorithmic results that heavily build on \cref{thm:quotientgraph}. 
The idea is that for a graph class fulfilling one of the given conditions, we compute a suitable $\cG$-modular partition $P$ and consider the quotient graph $G/P$.
For $G/P$ we use a brute-force approach to compute $\simulnonempty[\cC]$ and can then transform the obtained simultaneous $\cC$-representation to one for $G$.

\begin{theorem} \label{thm:computing}
    Let $\cC$ be a graph class that is decidable, hereditary, and completable.
    \begin{itemize}
        \item If $\cC$ is closed under the addition of false twins, then computing $\simul[\cC]$ and $\simulnonempty[\cC]$ is $\FPT$ parameterized by $\mathsf{edgeless}$-modular cardinality.
        \item If $\cC$ is closed under the addition of true twins, then computing $\simul[\cC]$ and $\simulnonempty[\cC][G]$ is $\FPT$ parameterized by $\mathsf{complete}$-modular cardinality.
        \item If $\cC$ is closed under the addition of true and false twins, then computing $\simul[\cC]$ and $\simulnonempty[\cC]$ is $\FPT$ parameterized by $\mathsf{cograph}$-modular cardinality (iterated type partition).
        \item If $\cC$ is closed under vertex substitution, then computing $\simul[\cC]$ and $\simulnonempty[\cC]$ is $\FPT$ para\-meterized by $\cC$-modular cardinality.
    \end{itemize}
\end{theorem}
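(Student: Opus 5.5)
The plan is to compute a minimum $\cG$-modular partition $P$ of $G$ for the relevant $\cG$ (edgeless, complete, cograph, or $\cC$), pass to the quotient $G/P$ whose size $k$ equals the parameter, solve $G/P$ by brute force, and lift the answer to $G$ via \cref{thm:quotientgraph}. Since the parameter controls the size of $G/P$, the brute-force step costs only a function of $k$, and everything else is polynomial.

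First, compute the partition $P$ in polynomial time (or at least FPT time). For edgeless and complete modules, group vertices into false or true twin classes; these classes form minimum partitions of the required type, computable in $O(n^2)$. For cographs, iterated type partition is known to be computable in polynomial time by repeatedly collapsing twin classes, as in \cite{cordasco2024iterated}. For general $\cC$ closed under substitution, I would use the modular decomposition tree and proceed bottom-up: a maximal strong module that induces a graph of $\cC$ is collapsed. This uses decidability of $\cC$, and the membership tests are applied to the modules themselves, so I only need that minimality is attained by unions of children of a single decomposition node. This is the step I expect to be the main obstacle. If it fails in full generality, I would fall back to an FPT enumeration: once the prime and degenerate nodes near the root are fixed, there are only boundedly many ways to group the children, at most $2^{k}$-type choices per node.

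Second, compute $\simulnonempty[\cC][G/P]$ exactly by brute force. Since $G/P$ is an induced subgraph of $G$ with $k$ vertices, we have $\simulnonempty[\cC][G/P] \le k^2$. For each $d \le k^2$, enumerate every supergraph $H$ on $V(G/P)$ ($2^{\binom{k}{2}}$ choices) and every labeling $L$ into nonempty subsets of $\{1,\dots,d\}$ (at most $2^{dk}$ choices). For each pair, test $H \in \cC$ with the decision procedure and check the representation condition. The running time depends only on $k$, although it is not primitive-recursively bounded unless the decision procedure for $\cC$ is.

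Third, by \cref{thm:quotientgraph}, $\simulnonempty[\cC][G] = \simulnonempty[\cC][G/P]$, and the proof of that theorem explicitly lifts the representation by adding twins or substituting $G[M]$ with the copied label set. This yields an optimal representation of $G$ in polynomial time. For $\simul[\cC]$, I would instead run the brute force on the graph $G'$ obtained from $G$ by deleting its isolated vertices. Isolated vertices may receive empty label sets without affecting the representation, so $\simul[\cC][G] = \simul[\cC][G']$. The graph $G'$ has no isolated vertices, so \cref{obsv:isolatedvertices2} gives $\simul[\cC][G'] = \simulnonempty[\cC][G']$. Moreover, the modular cardinality of $G'$ is at most that of $G$, since restricting a modular partition to an induced subgraph keeps its parts modules and, by heredity, keeps them in the class. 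So the same algorithm applies to $G'$ with no larger parameter.
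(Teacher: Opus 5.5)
Your overall plan is the paper's: compute a suitable modular partition $P$, brute-force $\simulnonempty[\cC][G/P]$ over all supergraphs in $\cC$ and all labelings with $d \le k^2$, lift via \cref{thm:quotientgraph}, and strip isolated vertices for $\simul[\cC]$. The first three cases are fine: false resp.\ true twin classes are exactly the minimum $\mathsf{edgeless}$- resp.\ $\mathsf{complete}$-modular partitions, and iterated twin collapsing handles cographs.

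\textbf{The genuine gap is the last case.} The procedure you describe, collapsing maximal strong modules that induce graphs of $\cC$, does not produce a partition of size bounded in $k$. Your brute force runs on $G/P$ for the $P$ you compute, so this breaks the FPT bound. Take $\cC = \mathsf{complete}$ (hereditary, completable, substitution-closed) and $G$ the join of $K_m$ and $2K_1$. The $\cC$-modular cardinality is $3$. But the set of the $m$ universal vertices is not a strong module: it overlaps the module formed by one universal vertex and the two non-adjacent vertices. So every strong module inducing a complete graph is a singleton, and you get $m+2$ parts.

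What is missing is the merging step you allude to but never carry out:
\begin{itemize}
\item At a series node not in $\cC$, put all children inducing graphs of $\cC$ into one part. This is legitimate because $\cC$, being completable and substitution-closed, is closed under joins.
\item At a parallel node not in $\cC$, do the same if and only if $2K_1 \in \cC$.
\item In all cases recurse into the remaining children; at a prime node not in $\cC$, just recurse.
\end{itemize}
This is optimal because a part meeting two children of a node not in $\cC$ must be a union of whole children of a degenerate node, each in $\cC$ by heredity. The paper sidesteps this by invoking the polynomial-time algorithm of Lafond and Luo for trivially mergeable classes.

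Your membership tests are also applied to modules of unbounded size, so decidability of $\cC$ alone does not make this step FPT. Polynomial-time recognition is needed. The same requirement is implicit in the paper's appeal to a polynomial-time algorithm, since for a prime graph the parameter is $1$ exactly when the graph lies in $\cC$.

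\textbf{A smaller omission concerns $\simul[\cC]$.} Giving isolated vertices empty label sets does not by itself show $\simul[\cC][G] \le \simul[\cC][G']$. The representing graph must contain these vertices and still belong to $\cC$. For an arbitrary hereditary completable class this fails. Let $\cC$ consist of all graphs on at most five vertices and all complete graphs. Then $C_5$ has simultaneous $\cC$-number $1$, while $C_5$ plus an isolated vertex needs $5$ labels, since the only admissible $H$ is $K_6$.

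Under the theorem's hypotheses you can repair this as the paper does. Add the isolated vertices to $H$ as twins of an existing vertex; in the substitution case, substitute a clique for a vertex. Give them empty label sets, and treat edgeless $G$ separately (answer $0$).
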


\begin{proof}
    Let $G$ be a graph.
    By results in \cite{lafond2023parameterized}, the parameters $\cG$-modular cardinality for $\cG \in \set{\mathsf{edgeless}, \mathsf{complete}, \mathsf{cograph}}$ can be calculated in polynomial time. It was furthermore shown that for so called \emph{trivially mergeable} graph classes $\cG$, $\cG$-modular cardinality can be computed in polynomial time. A graph class is trivially mergeable if the following holds: the optimal $\cG$-modular partition of the disjoint union (or join) $G$ of multiple graphs of $\cG$ is either given by the vertex sets of the used graphs or by $\set{V(G)}$. Note that hereditary completable graph classes $\cC$ which are closed under vertex substitution are trivially mergeable, and therefore the $\cC$-modular cardinality of such a graph class can be computed in polynomial time.
   
    We first assume that $\cC$ is closed under the addition of false twins.
    By the above comments, we can compute an $\mathsf{edgeless}$-modular partition $P$ of $G$ in polynomial time.
    Consider the graph $G/P$ and let $k = \abs{V(G/P)}$.
    We use a brute force approach to compute $\simulnonempty[\cC][G/P]$: Enumerate every possible supergraph $H$ of $G/P$ on $k$ vertices contained in $\cC$. Enumerate for every possible value $d$ with $1 \leq d \leq k^2$ every possible labeling function $L$ on $V(H)$. Then check for every pair $(H,L)$ if it is a simultaneous $\cC$-representation of $G/P$.
    The enumeration of every supergraph in $\cC$ can, for example, be done by enumerating every graph on $k$ vertices, testing if it is a supergraph of $G/P$, and deciding if it is in $\cC$.
    Thus, $\simulnonempty[\cC][G/P]$ can be computed in time only dependent on $k$ and by \cref{thm:quotientgraph} equals $\simulnonempty[\cC][G]$.
    Furthermore, a simultaneous $\cC$-representation of $G$ can be constructed in polynomial time using the process described in the proof of \cref{thm:quotientgraph}.
    To compute $\simul[\cC]$, note that it only differs from $\simulnonempty[\cC]$ if isolated vertices exist. Thus, we can remove any isolated vertices of $G$ in a preprocessing step and later add them back to the graph as twins to an arbitrary vertex and give them empty label sets.

    The other three cases follow analogously.
\end{proof}

For example, interval graphs are closed under addition of true twins and therefore the simultaneous interval number can be computed in $\FPT$ time when parameterized by $\mathsf{complete}$-modular cardinality. As the other three conditions do not hold for interval graphs, this is the only of the above four results we can use to compute the simultaneous interval number.
As mentioned before, Bonomo-Braberman, Brandwein, and Sau~\cite{bonomobraberman2025computing} have given an $\FPT$ algorithm for computing the simultaneous interval number parameterized by $\mathsf{cluster}$-modular cardinality plus solution size. Their algorithm makes implicit use of the fact that interval graphs are closed under addition of true twins, but also somehow deals with false twins.
To this end, they show that, under certain conditions, vertices behaving similarly to false twins can be added to interval graphs. The property that they use is given in the following observation.

\begin{observation}
    Let $G$ be an interval graph.
    Let $u,v \in V(G)$ be two non-adjacent vertices such that $N(u) \cap N(v) \neq \emptyset$.
    Then, the graph obtained by adding a vertex $w$ with neighborhood $N(u) \cap N(v)$ to $G$ is an interval graph.
\end{observation}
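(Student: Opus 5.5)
We work directly with an interval model: fix intervals $I_x = [a_x, b_x]$ for all $x \in V(G)$, so two vertices are adjacent if and only if their intervals intersect. We then choose an interval $I_w$ whose intersection pattern is exactly $N(u) \cap N(v)$. Since $u$ and $v$ are non-adjacent, $I_u$ and $I_v$ are disjoint; assume without loss of generality that $b_u < a_v$. The natural candidate is the gap between them, $I_w := [b_u + \varepsilon, a_v - \varepsilon]$ for a small $\varepsilon > 0$. We first perturb the model so that all endpoints are distinct and $b_u < a_v$ is a strict gap; this does not change the graph.

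\textbf{Step 1.} Every $x \in N(u) \cap N(v)$ intersects both $I_u$ and $I_v$, so $a_x \le b_u$ and $b_x \ge a_v$. Hence $I_x \supseteq [b_u, a_v] \supseteq I_w$, and $x$ is adjacent to $w$.

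\textbf{Step 2.} The converse fails for this candidate: an interval lying strictly inside the gap $(b_u, a_v)$ would meet $I_w$ without being adjacent to $u$ or $v$. So the choice of $I_w$ must be refined. Since $N(u) \cap N(v) \neq \emptyset$, the set $S$ of intervals containing the whole closed gap $[b_u, a_v]$ is nonempty. Moreover, these are exactly the vertices in $N(u) \cap N(v)$: any interval meeting both $I_u$ and $I_v$ contains the gap, and conversely any interval containing $[b_u, a_v]$ meets $I_u$ at $b_u$ and $I_v$ at $a_v$.

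We therefore let $I_w$ be a degenerate point interval at a position $p$ lying in every interval of $S$, but in no interval that fails to span the gap. The obstacle is that $p$ may be hit by some interval $I_y$ with $y \notin S$. Such an $I_y$ contains $p$ but either ends before $a_v$ or starts after $b_u$.

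\textbf{Step 3 (the main obstacle).} No single point of the original line avoids all such $I_y$, so we modify the model locally. We insert a fresh tiny segment into the real line at the point $a_v$, stretching the line by $\delta$ there. Concretely, every endpoint that is $\ge a_v$ is shifted right by $\delta$, and every endpoint $< a_v$ stays in place. This is an order-preserving relabeling of endpoints, because no endpoint sits exactly at the insertion point other than $a_v$ itself, and $a_v$ is shifted. The graph is therefore unchanged.

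After the shift, the new segment $(a_v, a_v + \delta)$ (in old coordinates) is covered exactly by those intervals that started before $a_v$ and end at or after $a_v$. An interval that ends at or after $a_v$ and meets the region around it must also have started at most $b_u$ to touch the gap side. This needs care: an interval starting inside the gap and ending after $a_v$ meets $v$ but not $u$, and it would also cover the new segment.

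To handle this, we insert the new segment at $b_u$ instead of $a_v$. The intervals covering an inserted point just to the right of $b_u$ are then those with $a_x \le b_u < b_x$, that is, exactly the neighbours of $u$ extending into the gap. This removes the gap-interior intervals but not the neighbours of $u$ that end inside the gap.

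The resolution is to combine both ends by taking $I_w$ to be a point at the right end of the inserted segment. The cleaner formulation is the one we would then check: choose $p \in (b_u, a_v)$ infinitesimally right of $b_u$ after shrinking. Since the endpoints are distinct and finite, we may slide the right endpoints of all intervals that end inside the gap and start at or before $b_u$ leftwards to just past $b_u$, and the left endpoints of intervals lying wholly in the gap rightwards. The remaining routine verification is that this sliding preserves all adjacencies, using that no interval endpoint within the gap is otherwise constrained. With that in place, the point $p$ is covered exactly by $S$, so $N(w) = N(u) \cap N(v)$.
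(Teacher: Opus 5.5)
Your Steps 1 and 2 are correct, and they contain all the true content of the statement. With $b_u<a_v$, the set $N(u)\cap N(v)$ is exactly the set of intervals containing $[b_u,a_v]$. Hence it is a clique, $w$ is simplicial, and chordality of $G+w$ is automatic.

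The argument breaks in Step 3, at the ``routine verification''. Suppose $I_x$ starts at or before $b_u$ and ends inside the gap, and some $I_y$ starts inside the gap before $b_x$. Then pulling $b_x$ back to just past $b_u$, or pushing $a_y$ to the right, destroys the edge $xy$. No rearrangement avoids this. If $G$ has a path $u-x-y-v$ with $x,y\notin N(u)\cap N(v)$, then in \emph{every} interval model $I_x\cup I_y\supseteq[b_u,a_v]$. So no point of the gap is ever covered by $N(u)\cap N(v)$ alone.

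Moving $I_w$ outside the gap does not rescue the argument in general, because the statement is false as written. Take $I_a=[0,2]$, $I_x=[1,7]$, $I_u=[3,5]$, $I_z=[4,12]$, $I_y=[6,14]$, $I_v=[9,11]$, $I_b=[13,15]$. Then $uv\notin E(G)$, $N(u)=\{x,z\}$ and $N(v)=\{y,z\}$, so $N(u)\cap N(v)=\{z\}\neq\emptyset$. Adding $w$ with $N(w)=\{z\}$ makes $\{a,x,y,b,z,w\}$ induce a net: the triangle $xyz$ with pendant vertices $a$ at $x$, $b$ at $y$, and $w$ at $z$. The three pendant vertices form an asteroidal triple: $a-x-y-b$ avoids $N[w]$, $w-z-x-a$ avoids $N[b]$, and $w-z-y-b$ avoids $N[a]$. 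Hence $G+w$ is not an interval graph. Note that this example is exactly the sliding failure above, with $x$ ending inside the gap $(5,9)$ and $y$ starting inside it.

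So there is no proof to complete. Only chordality survives; AT-freeness can fail. The paper offers only a sketch that asserts the same rearrangement (``we can arrange the intervals so that some part of the line is covered only by $N(u)\cap N(v)$'') without justification. That sketch fails on this example too, and so, by this same example, does the paper's claim that interval graphs have the intersection sibling property.
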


The idea is to look at the space between the intervals of $u$ and $v$.
Any intervals in $N(u) \cap N(v)$ must cover this space and we can arrange the interval in such a way that there is some interval of the real line that is contained only in the intervals of $N(u) \cap N(v)$. Adding the interval of the newly added vertex there yields an interval model of the desired graph.

We formalize this property of graph classes as follows.

\begin{definition}
    Let $\cC$ be a graph class.
    We say that $\cC$ has the \emph{intersection sibling property} if for any graph $G \in \cC$ with two non-adjacent vertices $u,v \in V(G)$ that have non-empty neighborhoods, the graph obtained by adding a vertex $w$ with neighborhood $N(u) \cap N(v)$ to $G$ is in $\cC$.
\end{definition}

Other examples of graph classes with this property are trivially perfect graphs, interval graphs, chordal graphs, proper chordal graphs, and split graphs. Graph classes $\cC$ that have the intersection sibling property allow us, under certain conditions, to add false twins to a graph without increasing the simultaneous $\cC$-number.

\begin{lemma} \label{thm:intersectionsibling}
    Let $\cC$ be a hereditary completable graph class that has the intersection sibling property.
    Let $G$ be a graph with false twins $u,v \in V(G)$.
    If $(H,L)$ is a $d$-simultaneous $\cC$-representation of $G$ such that $uv \not\in E(H)$, then the graph $G'$ obtained by adding another false twin to $u$ and $v$ admits a $d$-simultaneous $\cC$-representation $(H',L')$.
\end{lemma}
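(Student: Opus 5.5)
The plan is to extend the given representation $(H,L)$ by one vertex $w$, using the intersection sibling property of $\cC$ to build $H'$. Let $N_G(u)=N_G(v)$ be the common neighbourhood of the false twins in $G$, and let $G'$ be $G$ plus a vertex $w$ with $N_{G'}(w)=N_G(u)$. I would define $H'$ as $H$ together with a new vertex $w$ satisfying $N_{H'}(w)=N_H(u)\cap N_H(v)$. I would set $L'(x)=L(x)$ for $x\in V(G)$ and
\[ L'(w):=L(u)\cup L(v). \]
This uses no new labels, so $(H',L')$ is a $d$-simultaneous representation once it is shown to be valid and $H'\in\cC$.

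To check that $(H',L')$ represents $G'$, only the pairs involving $w$ are new, since $H'-w=H$ and $L'$ agrees with $L$ off $w$. The vertices $u$ and $v$ themselves are not neighbours of $w$ in $H'$, because $uv\notin E(H)$ gives $u\notin N_H(v)$ and $v\notin N_H(u)$. This matches $G'$, where $w$ is a false twin of $u$ and $v$ and hence adjacent to neither. For any other vertex $x$ we argue in both directions.
\begin{itemize}
\item Suppose $x\in N_{G'}(w)=N_G(u)=N_G(v)$. Since $E(G)\subseteq E(H)$, we get $x\in N_H(u)\cap N_H(v)=N_{H'}(w)$. Moreover $L(x)\cap L(u)\neq\emptyset$ because $ux\in E(G)$, so $L'(x)\cap L'(w)\neq\emptyset$.
\item Conversely, suppose $xw\in E(H')$ and $L(x)$ meets $L(u)\cup L(v)$. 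If $L(x)$ meets $L(u)$, then $xu\in E(H)$ together with the label intersection gives $xu\in E(G)$, so $x\in N_{G'}(w)$. If $L(x)$ meets $L(v)$, the same argument gives $x\in N_G(v)=N_G(u)$.
\end{itemize}
The key point is that taking the union of the two label sets is harmless exactly because $u$ and $v$ are twins in $G$.

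It remains to show $H'\in\cC$, and this is the main obstacle. If $N_H(u)$ and $N_H(v)$ are both non-empty, then $u,v$ are non-adjacent vertices of $H\in\cC$ with non-empty neighbourhoods, and the intersection sibling property applies directly. The difficulty is the degenerate case where one of them, say $N_H(u)$, is empty.

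In that case $u$ is isolated in $H$, hence isolated in $G$. Its false twin $v$ is then isolated in $G$ as well, and so $w$ must be isolated in $G'$. Here I would first try to keep the construction above, with $w$ having $H'$-neighbourhood $N_H(u)\cap N_H(v)=\emptyset$, and to realise $H'$ as $H$ plus a false twin of the already isolated vertex $u$. That requires showing that $\cC$ admits a second isolated vertex whenever it contains a graph with one. If this cannot be derived from hereditariness, completability and the sibling property alone, the fallback is different. One would interpret the sibling property via $G$-neighbourhoods, or else note that the lemma is applied only where $u,v$ have neighbours, and record this case as a harmless assumption. I expect resolving this edge case, rather than the label verification, to be where care is needed.
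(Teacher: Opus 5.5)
Your construction of $H'$ and your verification of the adjacencies at $w$ are correct, and they are essentially the paper's argument. The only difference is that the paper takes $L'(w)=L(u)$; your $L(u)\cup L(v)$ works just as well and likewise keeps label sets non-empty. The paper does not treat your degenerate case at all: it invokes the intersection sibling property without checking that $N_H(u),N_H(v)\neq\emptyset$. This is harmless in its only application (\cref{thm:edgelessintersection}), where isolated vertices are removed beforehand, so $\emptyset\neq N_G(u)\subseteq N_H(u)$ and likewise for $v$.

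Your suggested way of handling that case does not work. Closure under adding a further isolated vertex does not follow from the hypotheses. The class of all $K_m$ and $K_m\cup K_1$ is hereditary and completable, and it has the intersection sibling property vacuously, since every non-adjacent pair contains a vertex with empty neighbourhood. Yet it contains $2K_1$ but not $3K_1$. Rereading the sibling property via $G$-neighbourhoods is not available either, since it is a closure property of $\cC$ itself.

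For the lemma as stated, the case can still be closed because empty label sets are allowed. Suppose $N_H(u)=\emptyset$; then $u,v$ are isolated in $G$, and $w$ must be isolated in $G'$.
\begin{itemize}
\item If $H$ contains some non-adjacent $x,y$ with non-empty neighbourhoods, add $w$ with $N_{H'}(w)=N_H(x)\cap N_H(y)$, so that $H'\in\cC$, and put $L'(w)=\emptyset$.
\item Otherwise the non-isolated vertices of $H$ form a clique $S$. Take $H'$ complete on $V(G)\cup\{w\}$, which lies in $\cC$ by completability. Keep $L$ on $S$, and give the $H$-isolated vertices and $w$ empty label sets. This is valid because all these vertices are isolated in $G'$ and $H[S]$ was already complete.
\end{itemize}

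With non-empty label sets the degenerate case genuinely fails. In the class above, $H=G=2K_1$ with $L\equiv\{1\}$ is a $1$-representation. A $1$-representation of $3K_1$ with non-empty labels would force $H'=3K_1\notin\cC$. So the requirement that $u,v$ have neighbours is really needed in the non-empty-label setting where the paper applies the lemma.
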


\begin{proof}
    Let $G$ and $(H,L)$ be given such that $u,v \in V(G)$ are false twins in $G$ and non-adjacent in $H$.
    As $u$ and $v$ are false twins in $G$, it holds that $N_G(u) = N_G(v) \subseteq N_H(u) \cap N_H(v)$.
    Thus, we can add a vertex $w$ to $H$ whose neighborhood is $N_H(u) \cap N_H(v)$ to obtain the graph $H'$.
    We define $L'(x) = L(x)$ for all $x \in V(G)$ and $L'(w) = L(u)$.
    Let $G'$ be the graph defined by the $d$-simultaneous $\cC$-representation $(H',L')$.
    Then, as $N_G(u) \subseteq N_{H'}(w) \subseteq N_{H'}(u)$ and $L'(u) = L'(w)$ holds, it follows that $N_{G'}(w) = N_G(u) = N_{G'}(u)$ and $G'$ is the desired graph.
\end{proof}

This lemma gives us the backbone of the algorithm given by Bonomo-Braberman, Brandwein, and Sau~\cite{bonomobraberman2025computing}. We can use it to formulate an algorithm for computing simultaneous $\cC$-numbers where $\cC$ has the intersection sibling property parameterized by $\mathsf{edgeless}$-modular cardinality plus solution size. Note that $\mathsf{edgeless}$-modular cardinality is never explicitly used in~\cite{bonomobraberman2025computing}.

\begin{theorem} \label{thm:edgelessintersection}
    Let $\cC$ be a graph class that is decidable, hereditary, completable, and has the intersection sibling property.
    Let $G$ be a graph with $\mathsf{edgeless}$-modular cardinality $k$.
    Then, deciding if $G$ admits a $d$-simultaneous $\cC$-representation (with non-empty label sets) and computing one if it exists is $\FPT$ parameterized by $k+d$.
\end{theorem}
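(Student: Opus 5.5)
The idea is to reduce $G$ to a kernel whose size depends only on $k+d$, and then solve the kernel by brute force exactly as in the proof of \cref{thm:computing}. First we compute an $\mathsf{edgeless}$-modular partition $P$ of $G$ with $k$ modules $M_1,\dots,M_k$ in polynomial time, using~\cite{lafond2023parameterized}. Isolated vertices are handled as in \cref{thm:computing}: remove them first, and at the end reinsert them with empty label sets, or with a fresh singleton label in the non-empty variant. Every module $M_i$ is a set of pairwise false twins. We build $G'$ by shrinking each module to at most $t := 2^d+1$ vertices, deleting the surplus twins. Then $G'$ has at most $k(2^d+1)$ vertices. Because $G'$ is an induced subgraph of $G$ and $\cC$ is hereditary, a $d$-simultaneous $\cC$-representation of $G$ restricts to one of $G'$, so a no-instance stays a no-instance.

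The main step is the converse. Suppose $(H,L)$ is a $d$-simultaneous $\cC$-representation of $G'$. Fix a module $M_i$ that was truncated, so that $|M_i \cap V(G')| = 2^d+1$. By pigeonhole, two of its vertices $u,v$ receive the same label set $A$. These two vertices are false twins in $G'$, so they are non-adjacent in $G'$.

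\begin{itemize}
\item If $A\neq\emptyset$, then $L(u)\cap L(v)\neq\emptyset$. Since $uv\notin E(G')$, the definition of a simultaneous representation forces $uv\notin E(H)$.
\item If $A=\emptyset$, then $u$ and $v$ are isolated in $G'$. This case cannot occur, because isolated vertices were removed beforehand and $M_i$ has neighbours.
\end{itemize}

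In particular $u$ and $v$ have non-empty neighbourhoods in $H$, since they have neighbours in $G'$ and $E(G')\subseteq E(H)$, so the intersection sibling property of $\cC$ applies. \cref{thm:intersectionsibling} now gives a $d$-simultaneous $\cC$-representation of $G'$ with one more false twin added to $M_i$, and in it the new vertex gets the label set $A$. After this step $u$ and $v$ are still non-adjacent in the new $H'$ and still share the label set $A$, so we can apply the lemma again. Repeating this restores every module to its original size and yields a $d$-simultaneous representation of $G$. The new vertices copy $L(u)\neq\emptyset$, so non-emptiness of label sets is preserved.

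The algorithm is then as follows.
\begin{enumerate}
\item Build $G'$.
\item Enumerate all graphs $H\supseteq G'$ on $V(G')$ and test membership in $\cC$, which is possible because $\cC$ is decidable.
\item For each such $H$, enumerate all labelings $L : V(G') \to \cP(\{1,\dots,d\})$, optionally with non-empty sets, and test whether $(H,L)$ is a valid representation.
\end{enumerate}
Every step depends only on $k(2^d+1)$, so this runs in time $f(k+d)$. If we find a representation, we lift it back to $G$ in polynomial time by the repeated twin additions above and reinsert the isolated vertices. This gives an $\FPT$ algorithm.

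The step I expect to be the main obstacle is making sure that \cref{thm:intersectionsibling} can be iterated. We must check that after each addition the pair $u,v$, or any pair involving the new twin, still satisfies the hypotheses: non-adjacency in $H'$ and non-empty neighbourhoods. The first holds because the new vertex $w$ gets neighbourhood $N_H(u)\cap N_H(v)$, which contains neither $u$ nor $v$, and $H$ is otherwise unchanged. The second holds because $N_{G'}(u)\subseteq N_{H'}(u)$ is unchanged.
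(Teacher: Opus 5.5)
Your kernel argument is correct, and it takes a different route from the paper. The paper branches over $2^k$ guesses, one for each module $M_i$. Either $M_i$ contains two vertices that are non-adjacent in $H$, and then only those two are kept and the rest are restored with \cref{thm:intersectionsibling}. Or $M_i$ is a clique in $H$, and then non-empty label sets force $|M_i|\le d$. This yields constrained instances on at most $\max\{2,d\}\cdot k$ vertices.

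You replace the guessing by pigeonhole. Among the $2^d+1$ kept vertices of a module, two share a label set. That set is non-empty because the vertices have neighbours, and a shared non-empty label set together with non-adjacency in $G'$ forces non-adjacency in $H$. So every representation of your single kernel lifts, with no branching and no need to carry a non-adjacency constraint into the brute force. The cost is a kernel of size $k(2^d+1)$, which is irrelevant for FPT. Your check that the hypotheses of the twin lemma survive each iteration is correct. For graphs without isolated vertices your argument covers both variants, since all label sets are then non-empty anyway.

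The gap is in the variant with non-empty label sets when $G$ has isolated vertices. Deleting them and reinserting them with fresh labels breaks the budget $d$, so your procedure does not decide the stated question. Take $\cC$ to be the interval graphs, $G=K_2\cup K_1$ and $d=1$. Then $G$ has a $1$-simultaneous representation with non-empty label sets: take $H=G$ with all label sets $\{1\}$. Your procedure either outputs two labels, or, if the fresh label is charged to the budget, it rejects, because $K_2$ has no $0$-label representation.

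In this variant the isolated vertices must stay in the kernel as an edgeless module of their own, and they must also be restored by twin additions. Pigeonhole over the $2^d-1$ non-empty label sets still gives two of them with the same label set, hence non-adjacent in $H$. However, the intersection sibling property now applies only if these two vertices have non-empty neighbourhoods in $H$. That no longer follows from $E(G')\subseteq E(H)$, so this case needs a separate argument. The paper does not delete isolated vertices in this variant; it treats them as an ordinary module and applies the same twin lemma. It therefore meets the same hypothesis, which its proof of the lemma does not check.

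A smaller point concerns the variant with possibly empty label sets. Reinserting isolated vertices ``as in \cref{thm:computing}'' means adding them as twins, and closure under twins is not assumed here. Instead, add them through the sibling property at a non-adjacent pair of $H$; after the deletion every vertex of $H$ has a non-empty neighbourhood, so the property applies. If $H$ is complete, add them as universal vertices. This is what the paper does.
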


\begin{proof}
    We consider the case of simultaneous representations with non-empty label sets.
    The case with possibly empty label sets can be done similarly to \cref{thm:computing} by first removing all isolated vertices, using the algorithm for non-empty label sets on the smaller instance, and later adding the isolated vertices back to the graph as vertices with empty label sets. Removing vertices is allowed since $\cC$ is hereditary. Adding back the vertices is possible as follows: if the graph $H$ used in the simultaneous representation has a pair of non-adjacent vertices, we can add arbitrarily many vertices due to the intersection sibling property. Otherwise, the graph $H$ is a complete graph and we can add arbitrarily many universal vertices due to $\cC$ being completable. 
    
    Let $G$ be a graph and $P = \{M_1, \dots, M_k\}$ be an optimal $\mathsf{edgeless}$-modular partition of $G$, which can be computed in polynomial time~\cite{lafond2023parameterized}.
    Consider some $M_i$ with at least two vertices.
    For a simultaneous $\cC$-representation $(H,L)$ of $G$ there are two cases:
    either at least two vertices of $M_i$ are non-adjacent in $H$, or $M_i$ is a clique in $H$. We now fix for every $M_i$ with at least two vertices which of the two cases holds and consider all $2^k$ possible choices separately. 
    
    If $M_i$ contains two vertices that are not adjacent in $H$, then it suffices to only consider these two vertices of $M_i$ as we can add the other vertices without increasing the number of labels using \cref{thm:intersectionsibling}.
    If $M_i$ is a clique in $H$, then a simultaneous $\cC$-representation with non-empty labels needs at least $\abs{M_i}$ many labels since $M_i$ is independent in $G$ but a clique in $H$.
    So, if $\abs{M_i} > d$, we can discard the problem instance.
    Thus, the number of vertices in all non-discarded instances is bounded by $d \cdot k$.
    
    For each instance, we can use a brute force approach, as described in the proof of \cref{thm:computing}, to decide whether a $d$-simultaneous $\cC$-representation with non-empty label sets of the constrained instance exists.
    If it does, we can use \cref{thm:intersectionsibling} to construct a $d$-simultaneous $\cC$-representation of $G$ with non-empty label sets.
\end{proof}

Every graph class we have mentioned having the intersection sibling property is also closed under the addition of true twins.
Under these conditions, we can give a generalization of the algorithm given for the simultaneous interval number in~\cite{bonomobraberman2025computing}.

\begin{theorem}
    Let $\cC$ be a graph class that is decidable, hereditary, completable, closed under the addition of true twins, and has the intersection sibling property.
    Let $G$ be a graph with $\mathsf{cluster}$-modular cardinality $k$.
    Then, deciding if $G$ admits a $d$-simultaneous $\cC$-representation (with non-empty label sets) and computing one if it exists is $\FPT$ parameterized by $k+d$.
\end{theorem}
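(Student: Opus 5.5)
The plan is to reduce to \cref{thm:edgelessintersection} by handling the clique modules with the closure of $\cC$ under true twins. We work with non-empty label sets and treat isolated vertices exactly as in the proof of \cref{thm:edgelessintersection}. First we compute an optimal $\mathsf{cluster}$-modular partition $P=\{M_1,\dots,M_k\}$ of $G$, which is polynomial by \cite{lafond2023parameterized}, since cluster graphs form a trivially mergeable class. Each $G[M_i]$ is a disjoint union of cliques $Q_{i,1},\dots,Q_{i,r_i}$.

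The key observation is that each clique $Q_{i,j}$ is itself a module of $G$. Indeed, $M_i$ is a module, and inside $M_i$ the set $Q_{i,j}$ is a connected component. Hence the vertices of each $Q_{i,j}$ are pairwise true twins in $G$. We show that we may contract each $Q_{i,j}$ to a single vertex $q_{i,j}$ without changing $\simulnonempty[\cC]$. One direction holds because $\cC$ is hereditary. For the other, given a representation of the contracted graph, we add true twins of $q_{i,j}$ in $H$ and copy its label set; this works since $\cC$ is closed under true twins, mirroring \cref{thm:quotientgraph}.

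In the contracted graph $G'$, each $M_i$ becomes an independent module $\{q_{i,1},\dots,q_{i,r_i}\}$. The vertices of this module are pairwise false twins: in $G$, two cliques of $M_i$ have the same neighbours outside $M_i$ and no edges to each other. Therefore $G'$ has $\mathsf{edgeless}$-modular cardinality at most $k$, and the partition inherited from $P$ witnesses this.

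We then run the algorithm of \cref{thm:edgelessintersection} on $G'$ with parameter $k+d$. This either rejects, which is correct since $\simulnonempty[\cC][G]=\simulnonempty[\cC][G']$, or returns a $d$-simultaneous $\cC$-representation of $G'$. In the latter case we expand it back to $G$ by adding true twins, as above, in polynomial time.

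The main point to verify carefully is the contraction step. It must preserve modularity, and it must preserve the exact value of $\simulnonempty[\cC]$, not just bound it. The equality follows from heredity plus closure under true twins. The remaining care concerns isolated vertices and empty label sets in the variant with possibly empty label sets; these are handled as before by removing isolated vertices and reinserting them via the intersection sibling property or, if $H$ is complete, by completability.
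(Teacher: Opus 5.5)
Your proposal is correct and is essentially the paper's proof. The cliques inside each $\mathsf{cluster}$ module are sets of pairwise true twins, so contracting them leaves $\simulnonempty[\cC]$ unchanged (heredity plus closure under true twins); the modules then become edgeless, \cref{thm:edgelessintersection} finishes the argument, and isolated vertices are treated separately (the paper simply reinserts them as true twins with empty label sets). One quibble: your justification for computing the partition in polynomial time is doubtful, since under the definition recalled in the paper cluster graphs are not obviously trivially mergeable (the join of $K_1$, $K_1$ and $2K_1$ already violates it); this does not affect the argument, because the paper cites a linear-time algorithm from \cite{bonomobraberman2025computing}.
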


\begin{proof}
    We consider the case of simultaneous representations with non-empty label sets.
    The case with possibly empty label sets can be done similarly to \cref{thm:computing} by first removing all isolated vertices, using the algorithm for non-empty label sets on the smaller instance, and later adding the isolated vertices as true twins to some vertex and giving them empty label sets.

    Let $G$ be a graph and $M_1, \dots, M_k$ be an optimal $\mathsf{cluster}$-modular partition which can be computed in linear time~\cite{bonomobraberman2025computing}.
    We can contract the complete subgraphs of any $M_i$ without changing the simultaneous $\cC$-number of the graphs, as $\cC$ is closed under the addition of true twins.
    Thus, we may assume that every $M_i$ is an independent set and can use the algorithm of \cref{thm:edgelessintersection}.
\end{proof}

\subsection{Cliques}

Here, we consider the following problem.

\begin{problem}[Maximum Clique Problem given a simultaneous $\cC$-representation]~ \label{prob:clique}
    \begin{description}
        \item[Input:] A graph $G$ and a $d$-simultaneous $\cC$-representation of $G$.
        \item[Parameter:] $d$.
        \item[Task:] Compute the clique number of $G$.
    \end{description}
\end{problem}

Beisegel et al.~\cite{beisegel2024simultaneousinterval} showed that this problem can be solved in \FPT{} time when $\cC$ is the class of interval graphs. We show that this result can be generalized to all classes $\cC$ on which the maximum clique problem can be solved in polynomial time. Of course, this condition is also necessary.

\begin{theorem} \label{thm:clique}
    Let $\cC$ be a hereditary completable graph class.
    \cref{prob:clique} is solvable in $\FPT$ time if and only if the maximum clique problem can be solved on $\cC$ in polynomial time.
\end{theorem}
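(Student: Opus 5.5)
Both directions are short; the interesting one is the reduction from the representation to polynomially many clique computations on graphs of $\cC$.

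\emph{Necessity.} Suppose \cref{prob:clique} is solvable in \FPT{} time. Given any $H \in \cC$, we build the $1$-simultaneous $\cC$-representation $(H,L)$ with $L(v)=\{1\}$ for every vertex $v$. This represents exactly $G=H$. Running the \FPT{} algorithm with $d=1$ gives $\omega(H)$ in time $f(1)\cdot n^{O(1)}$, which is polynomial. So maximum clique is polynomial on $\cC$.

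\emph{Sufficiency.} Let $(H,L)$ be a $d$-simultaneous $\cC$-representation of $G$, and let $K$ be a clique of $G$. Every pair of vertices of $K$ has intersecting label sets, so $\{L(v) : v\in K\}$ is a pairwise intersecting family of subsets of $\{1,\dots,d\}$. Moreover, any set $S\subseteq V(G)$ that is a clique in $H$ and whose label sets pairwise intersect is a clique in $G$.

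We therefore enumerate all families $\mathcal{F}$ of nonempty subsets of $\{1,\dots,d\}$ that are pairwise intersecting. There are at most $2^{2^d}$ such families. For each $\mathcal{F}$ we set $V_{\mathcal{F}} := \{v : L(v)\in\mathcal{F}\}$. Every pair of vertices in $V_{\mathcal{F}}$ has intersecting labels, so $G[V_{\mathcal{F}}]=H[V_{\mathcal{F}}]$. Since $\cC$ is hereditary, $H[V_{\mathcal{F}}]\in\cC$, and we can compute its clique number with the assumed polynomial algorithm.

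We output the maximum over all $\mathcal{F}$, and $1$ if $G$ is nonempty but all computed values are smaller; this covers the case where every vertex has an empty label set, so that only single vertices are cliques. Correctness has two sides.
\begin{itemize}
\item Every clique of $H[V_{\mathcal{F}}]$ is a clique of $G$, so the output never exceeds $\omega(G)$.
\item Every clique $K$ of $G$ with $|K|\ge 2$ lies in $V_{\mathcal{F}}$ for $\mathcal{F}=\{L(v):v\in K\}$. This family consists of nonempty, pairwise intersecting sets, so the output is at least $\omega(G)$.
\end{itemize}
The running time is $2^{2^d}\cdot \mathrm{poly}(n)$, which is \FPT.

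The step needing the most care is the equality $G[V_{\mathcal{F}}]=H[V_{\mathcal{F}}]$ together with membership in $\cC$. The first follows directly from the definition of a simultaneous representation once labels pairwise intersect; the second uses heredity. I don't expect a real obstacle, only bookkeeping for empty label sets and the trivial $d=0$ case.
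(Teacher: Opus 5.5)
Your proof is correct and follows the paper's argument. You enumerate the at most $2^{2^d}$ pairwise intersecting families of label sets, note that on the vertices carrying those labels $G$ and $H$ coincide and the induced subgraph lies in $\cC$ by heredity, and run the polynomial clique algorithm on each; for necessity you use the $1$-label representation $L(v)=\{1\}$, which the paper calls obvious.

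Your restriction to nonempty label sets, together with the fallback output $1$, handles the empty-label corner case more explicitly than the paper does.
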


\begin{proof}
    Obviously, if \cref{prob:clique} is solvable in $\FPT$ time, then the maximum clique problem is solvable on $\cC$ in polynomial time.
    So we consider only the other direction.

    Let $\cC$ be a hereditary completable graph class such that the maximum clique problem can be solved in polynomial time $f(n,m)$ on $\cC$.
    Let $G$ be a graph with $n = \abs{V(G)}, \ m = \abs{E(G)}$ and $\br{H, L}$ a $d$-simultaneous $\cC$-representation of $G$.
    We first note that for any clique $C \subseteq V(G)$ we have $L(u) \cap L(v) \neq \emptyset$ for any two vertices $u,v \in C$.
    Thus, it suffices to only consider induced subgraphs where the label sets of vertices pairwise intersect.
    Such subgraphs are induced subgraphs of $H$ and are therefore in $\cC$.
    Hence, the maximum clique problem can be solved on such subgraphs in time $f\!\br{n,m}$.
    
    This is the basis of our algorithm.
    We generate all binary vectors in $\set{0,1}^{2^d}$. 
    Every entry represents some label set.
    We only consider those vectors, where the label sets with entry~$1$ pairwise intersect. This can be checked in $\O\!\br{d \cdot 2^{2d}}$ time per vector.
    For every of those vectors, we consider the subgraph induced by the vertices that have a label set with entry $1$ and solve the maximum (weighted) clique problem for every of those subgraphs.
    In total, we need time $\O\!\br{2^{2^d} \cdot (n +m)}$ to generate all these subgraphs and time $\O\!\br{2^{2^d} \cdot f\br{n,m}}$ to solve the maximum (weighted) clique problem on all of them.
\end{proof}

The same result also holds for the weighted maximum clique problem. The proof of this theorem also shows that if the number of maximal cliques in any graph in class $\cC$ is polynomial in $n$, then this also holds for any graph with simultaneous $\cC$-number $d$, for some fixed $d$.
In this case, we can use the algorithm of \cite{tsukiyama1977cliques} to compute all maximal cliques in $\FPT$ time parameterized by $\simul[\cC]$ without needing a simultaneous $\cC$-representation as part of the input.

\begin{corollary}
     Let $\cC$ be a hereditary completable graph class such that graphs of $\cC$ have a polynomial number of maximal cliques.
    Then the Maximum Clique Problem can be solved in \FPT{} time when parameterized by the simultaneous $\cC$-number.
\end{corollary}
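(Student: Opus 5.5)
The plan is to reuse the label-set filtering idea from the proof of \cref{thm:clique}, but now without knowing a representation in advance. Fix $d$ and suppose $\simul[\cC][G] \leq d$, witnessed by some (unknown) $d$-simultaneous $\cC$-representation $(H,L)$. The key claim is that the number of maximal cliques of $G$ is bounded by $2^{2^d}\cdot p(n)$, where $p$ is the polynomial bounding the number of maximal cliques of $n$-vertex graphs in $\cC$. With this bound, the output-sensitive enumeration algorithm of Tsukiyama et al.~\cite{tsukiyama1977cliques} lists all maximal cliques of $G$ with polynomial delay, so it finishes in time $2^{2^d}\cdot \mathrm{poly}(n)$. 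Taking the largest clique found gives the clique number. The representation is needed only in the analysis and never has to be computed.

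To prove the counting claim, let $C$ be a maximal clique of $G$. Let $\mathcal{S}$ be the family of label sets occurring on vertices of $C$. Any two of these sets intersect, since the vertices of $C$ are pairwise adjacent in $G$. If $C$ contains a vertex with empty label set, then $C$ is a single isolated vertex. Such cliques contribute at most $n$ and can be treated separately.

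Next, let $U_{\mathcal{S}}$ be the set of vertices whose label set lies in $\mathcal{S}$. Because the sets in $\mathcal{S}$ pairwise intersect, $G[U_{\mathcal{S}}] = H[U_{\mathcal{S}}]$, and this graph lies in $\cC$ since $\cC$ is hereditary. Moreover $C \subseteq U_{\mathcal{S}}$, and $C$ is a maximal clique of $G[U_{\mathcal{S}}]$, because any clique of this induced subgraph is also a clique of $G$. Hence the map $C \mapsto (\mathcal{S}, C)$ is injective into pairs consisting of a family $\mathcal{S} \subseteq \mathcal{P}(\{1,\dots,d\})$ and a maximal clique of the graph $H[U_{\mathcal{S}}] \in \cC$. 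There are at most $2^{2^d}$ choices of $\mathcal{S}$, and each graph $H[U_{\mathcal{S}}]$ has at most $p(n)$ maximal cliques. This gives the bound $2^{2^d}p(n)+n$.

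The main thing to verify carefully is that the algorithm stays FPT without being told $d$. This holds because the running time of the Tsukiyama et al. algorithm is proportional to the number of maximal cliques times a polynomial, and that count is at most $2^{2^{\simul[\cC][G]}}\mathrm{poly}(n)$ regardless of whether we know $\simul[\cC][G]$. I do not see a serious obstacle. The only point needing care is confirming that maximality in $G$ transfers to maximality in $G[U_{\mathcal{S}}]$, which is immediate because that graph is induced.
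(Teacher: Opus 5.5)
Your proposal is correct and takes the same route as the paper. The paper likewise notes that the label-family decomposition from the proof of \cref{thm:clique} bounds the number of maximal cliques of a graph with simultaneous $\cC$-number $d$ by $2^{2^d}$ times a polynomial, and then runs the output-sensitive enumeration of Tsukiyama et al.~\cite{tsukiyama1977cliques} without needing a representation. Your write-up adds details the paper leaves implicit: the separate count for isolated vertices with empty label sets, and the fact that maximality carries over to $G[U_{\mathcal{S}}] = H[U_{\mathcal{S}}]$.
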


Similarly to {\cite[Theorem 5.7]{beisegel2024simultaneousinterval}}, we can show that the double exponential time of the above algorithm can likely not be improved to a single exponential time bound.

\begin{theorem}
    For any hereditary completable graph class $\cC$ and any fixed $k \in \N$, there is no algorithm that solves the maximum clique problem on complements of cubic graphs in time $2^{\O\br{\simulnonempty[\cC]}} \cdot n^k$, unless $\P = \NP$. 
\end{theorem}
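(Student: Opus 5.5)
The plan is to reduce from the maximum clique (equivalently, maximum independent set) problem on complements of cubic graphs, following the strategy of \cite[Theorem 5.7]{beisegel2024simultaneousinterval}. Maximum independent set is \NP-hard on cubic graphs. Hence computing the clique number of complements of cubic graphs is \NP-hard.

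The key ingredient is an upper bound on $\simulnonempty[\cC]$ for these graphs that is logarithmic in $n$. Let $G$ be the complement of a cubic graph on $n$ vertices. By Alon~\cite{alon1986cubicecc}, complements of graphs of bounded maximum degree have edge clique cover number $\O(\log n)$. Since $\cC$ is completable, I take $H$ to be the complete graph on $V(G)$. As in \cref{cor:eccsimulclique} (and the result of Beisegel et al.\ that the edge clique cover number upper bounds $\simul[\cC]$), each clique of an edge clique cover becomes one label. This gives a representation with at most $c\log n$ labels for some constant $c$.

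The main subtlety is the non-empty label requirement. Since $G$ is the complement of a cubic graph, every vertex has degree $n-4 \geq 1$ for $n \geq 5$, so $G$ has no isolated vertices. By \cref{obsv:isolatedvertices2}, we therefore get $\simulnonempty[\cC][G] = \simul[\cC][G] \leq c \log n$. A tiny alternative fix would be to add one fresh label per vertex, but that is unnecessary here.

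Now suppose an algorithm solves maximum clique on these graphs in time $2^{\O(\simulnonempty[\cC])} \cdot n^k$. Plugging in the bound gives a running time of $2^{\O(c \log n)} \cdot n^k = n^{\O(1)}$, which is polynomial. This would solve an \NP-hard problem in polynomial time, so $\P = \NP$.

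The only step needing care is whether the algorithm is given the representation or just the graph. The statement measures the running time in terms of the parameter value, so no representation is needed. If one were needed, note that Alon's bound is probabilistic, so constructing the representation deterministically in polynomial time would be the main obstacle. I would try a derandomized or explicit construction, for example via covering families and perfect hashing, of $\O(\log n)$ cliques covering the complement of a bounded-degree graph. Under the statement as written, though, this is not required.
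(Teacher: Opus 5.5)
Your proposal is correct and follows the paper's proof: you use Alon's $\O(\log n)$ bound on the edge clique cover number of complements of cubic graphs, the bound of $\simul[\cC][G]$ by the edge clique cover number, and the absence of isolated vertices for $n \geq 5$, which gives $\simulnonempty[\cC] = \simul[\cC]$ via \cref{obsv:isolatedvertices2}, so that a $2^{\O(\simulnonempty[\cC])} \cdot n^k$ algorithm would run in polynomial time and contradict \NP-hardness of independent set on cubic graphs. Your remark that no representation needs to be constructed is consistent with the statement as written, since the running time is measured only in terms of the parameter value.
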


\begin{proof}
    Complements of cubic graphs with $n \geq 5$ vertices do not contain isolated vertices.
    Thus, $\simul[\cC]$ and $\simulnonempty[\cC]$ are equal on these graphs.
    Alon~\cite{alon1986cubicecc} showed that there exists a constant $c \in \N$ such that the edge clique cover number on complements of cubic graphs on $n$ vertices is bounded by $c \log n$.
    Furthermore, Beisegel et al.~\cite{beisegel2024simultaneousinterval} showed that $\simul[\cC][G] \leq \ecc{G}$ holds for all hereditary completable $\cC$ and all graphs $G$.
    This together with the observation above implies that $\simulnonempty[\cC]$ is bounded by $c \log n$ on complements of cubic graphs with $n \geq 5$ vertices.
    Therefore, an algorithm that has the runtime given in the statement of the theorem actually is a polynomial time algorithm in $n$.
    However, Mohar~\cite{mohar2001cubicclique} proved $\NP$-hardness of the maximum independent set problem on cubic graphs, implying $\NP$-hardness of the maximum clique problem on their complements.
    Thus, the existence of such an algorithm would imply $\P = \NP$.
\end{proof}

There may be two reasons why this theorem holds. Either, the complexity lower bound comes from the complexity of the clique problem or from complexity of computing a simultaneous representation. In fact, it might be possible that we can achieve single-exponential time if the representation is given. We could rule out this possibility for any class $\cC$ by presenting a polynomial-time algorithm that computes or approximates the edge clique cover number of complements of cubic graphs. To the best of our knowledge, it is open whether this is possible.

\section{Concluding Remarks}

In this paper, we have shown for a large number of structural graph parameters that they are $\simul$-bounded.
This means that the structural properties that these parameters describe behave similarly on a graph class $\cC$ and the class of graphs with simultaneous $\cC$-number $d$ for some fixed $d$.
In particular, we have shown that all parameters which were shown to be lower bounds on the simultaneous interval number by Beisegel et al.~\cite{beisegel2024simultaneousinterval} are $\simul$-bounded, meaning that these lower bounds follow from the fact that simultaneous $\cC$-numbers preserve certain structural properties of their corresponding graph classes and are not directly related to the special structure of interval graphs.
We have furthermore characterized those graph classes for which simultaneous $\cC$-numbers are lower bounds on treedepth, bandwidth, treebandwidth, pathwidth, and treewidth. These characterizations show that it suffices to show that one of these parameters is an upper bound on some simultaneous $\cC$-number only on restricted graph classes. This again seems to suggest that structures of these parameters are somehow preserved under simultaneous $\cC$-representations. 

We hope to complement the structural results we have given with more algorithmic results in the future.
On the one hand, the usability of simultaneous $\C$-numbers to design parameterized algorithms for \NP-hard graph problems should be further studied. In particular, the existence of general theorems similar to \cref{thm:clique} for a family of graph classes would be interesting. On the other hand, less restrictive results or better parameterizations for computing simultaneous $\cC$-numbers would be desirable. 

Another avenue for future research directions with a more structural focus could be given by adapting the idea of local covering numbers to simultaneous $\cC$-numbers. A \emph{local simultaneous $\cC$-number} could count the maximum number of labels each vertex receives instead of the total number of labels used in the representation. Such a parameter would not be relevant regarding algorithmic aspects, as they capture graphs of bounded maximum degree on which many interesting graph problems are already hard. Structural results for these parameters could, however, be very interesting, for example an analogous concept to $\simul$-boundedness. 

\bibliography{lit}
\bibliographystyle{plainurl}

\newpage

\appendix

\section{Omitted Proofs} \label{sec:proofs}

\subsection{Forbidden Subgraph Characterizations of Graph Classes}

\begin{theorem}
    A graph is $(3K_1, C_4, P_4)$-free if and only if it is a complete graph, where the edge set of a complete bipartite graph is removed.
\end{theorem}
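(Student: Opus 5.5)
We pass to complements, where the statement becomes cleaner. The complement of $3K_1$ is $K_3$, the complement of $C_4$ is $2K_2$, and $P_4$ is self-complementary. So $G$ is $(3K_1, C_4, P_4)$-free if and only if $\overline{G}$ is $(K_3, 2K_2, P_4)$-free. Likewise, ``$G$ is a complete graph with the edge set of a complete bipartite graph removed'' means that $\overline{G}$ is a complete bipartite graph $K_{a,b}$ (with $a,b\ge 0$) plus isolated vertices. We therefore plan to prove that a graph $F$ is $(K_3,2K_2,P_4)$-free if and only if $F$ is a complete bipartite graph together with some isolated vertices.

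The easy direction comes first. Let $F$ be $K_{a,b}$ plus isolated vertices.
\begin{itemize}
\item It has no triangle, since it is bipartite.
\item It has no induced $2K_2$, since any two edges of $K_{a,b}$ share an endpoint or are joined by an edge.
\item It has no induced $P_4$: a $P_4$ would have to lie inside the one nontrivial component $K_{a,b}$, and its endpoints lie on opposite sides and so are adjacent.
\end{itemize}

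For the converse, let $F$ be $(K_3,2K_2,P_4)$-free. Freeness of $2K_2$ means at most one component has an edge. Call that component $D$; all other vertices are isolated. $D$ is connected and $P_4$-free, so it is a cograph. If $|D|\ge 2$, then $\overline{D}$ is disconnected. Hence $D$ is a join of two nonempty parts $A$ and $B$ (group the complement components into two sides). If $A$ contained an edge $xy$, then $x,y$ and any $z\in B$ would form a triangle. So $A$, and symmetrically $B$, is independent, and therefore $D=K_{|A|,|B|}$.

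The only step needing care is the cograph fact that a connected $P_4$-free graph on at least two vertices has disconnected complement (Seinsche's theorem). We would cite it as standard. Alternatively, one can argue directly: $D$ is triangle-free, $P_4$-free and connected, so a BFS layering from a vertex has depth at most $2$ (otherwise an induced $P_4$ appears). Triangle-freeness makes the first layer independent. Absence of $P_4$ forces every vertex in layer $2$ to see all of layer $1$, and layer $2$ is independent since an edge there together with a common neighbour gives a triangle. This yields the bipartition $\{v\}\cup L_2$ versus $L_1$, complete between the sides. Complementing back gives the theorem.
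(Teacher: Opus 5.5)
Your proof is correct, but it is organised differently from the paper's. The paper stays in $G$. It fixes a single non-adjacent pair $a,b$ and uses $3K_1$-freeness to see that every other vertex sees $a$ or $b$, and that $A=\{a\}\cup(N(a)\setminus N(b))$ and $B=\{b\}\cup(N(b)\setminus N(a))$ are cliques. It uses the path $a,c,d,b$ to exclude edges between $A$ and $B$. Finally, an induced $C_4$ or $P_4$ on $a,u,b,v$ shows that no vertex outside $A\cup B$ has a non-neighbour. You instead complement, so the obstructions become $K_3$, $2K_2$, $P_4$ and the target becomes a complete bipartite graph plus isolated vertices. This makes the role of each forbidden graph transparent: $2K_2$ confines all edges to one component, $P_4$ yields a join via Seinsche's theorem, and $K_3$ forces both sides of the join to be independent. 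You also verify the easy direction, which the paper only asserts. The cost is the appeal to the cograph decomposition theorem. Your BFS-layering substitute removes that dependence and is close to the paper's argument read in $\overline{G}$: given $3K_1$-freeness, the paper's $A$ and $B$ are exactly $N_{\overline{G}}(b)$ and $N_{\overline{G}}(a)$. The differences are that your version is rooted at a vertex and needs the preliminary reduction to a single nontrivial component, whereas the paper's choice of a non-edge disposes of the isolated vertices in the same pass. You leave implicit the case that $\overline{G}$ is edgeless, so that $G$ is complete and the removed bipartite graph is empty, but it is trivial.
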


\begin{proof}
    It is easy to see that complete graphs, where the edge set of a complete bipartite graph is removed, are $(3K_1, C_4, P_4)$-free.
    
    Consider a $(3K_1, C_4, P_4)$-free graph $G$.
    Assume that there are two vertices $a,b \in V(G)$ such that $a$ and $b$ are non-adjacent (if no such vertices exist, $G$ is a complete graph).
    Any other vertex in $G$ must be adjacent to at least one of $a,b$, as a vertex which is non-adjacent to both of them would induce a $3K_1$ together with $a,b$.
    We consider the sets $ A :=\{a\} \cup (N(a) \setminus N(b))$ and $B := \{b\} \cup (N(b) \setminus N(a))$ and claim that these sets are cliques in $G$ and no edges between these $A$ and $B$ exist.
    Assume first that there are two vertices in $A$ that are non-adjacent. Then they induce a $3K_1$ together with $b$, therefore $A$ is a clique.
    Analogously, $B$ is a clique as well.
    Now assume there exists an edge $cd$ between the sets $A$ and $B$ with $c \in A$, $d \in B$.
    Then we have the induced $P_4$ $a, c, d, b$, thus no edges between $A$ and $B$ exist.
    It remains to show that the non-edges between $A$ and $B$ are the only non-edges in $G$.
    Assume to the contrary that there are vertices $u,v$ such that $u$ and $v$ are non-adjacent and $u \not\in A \cup B$.
    By the above comments, $u$ must be adjacent to both $a$ and $b$ and $v$ must be adjacent to at least one of $a$ and $b$.
    If $v$ is also adjacent to $a$ and $b$, then $a, u, b, v$ induces a $C_4$.
    On the other hand, if $v$ is adjacent to only one of $a,b$, then $a,b,u,v$ induce a $P_4$.
\end{proof}

\begin{theorem}
    A graph is $(2K_2, P_3)$-free if and only if it is the disjoint union of a complete graph and an edgeless graph.
\end{theorem}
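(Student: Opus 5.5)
The easy direction comes first. A disjoint union of a complete graph $K$ and an edgeless graph $I$ cannot contain an induced $P_3$ or $2K_2$. Every edge lies inside $K$. So any induced subgraph with at least one edge has all its edge-carrying vertices inside the clique $K$. An induced $P_3$ would need two non-adjacent vertices, both incident to edges, hence both in $K$, which is impossible. An induced $2K_2$ would need four vertices, all incident to edges and therefore all in $K$, but then they would be pairwise adjacent, again impossible.

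For the converse, let $G$ be $(2K_2, P_3)$-free. The plan is to use $P_3$-freeness first. A graph is $P_3$-free if and only if every connected component is complete. Indeed, if some component were not complete, it would contain two non-adjacent vertices at distance exactly $2$ (take a shortest path between two non-adjacent vertices). The first three vertices of that path then induce a $P_3$. Hence $G$ is a disjoint union of cliques.

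Next, $2K_2$-freeness shows that at most one of these cliques has at least two vertices. Two distinct components each containing an edge would give two edges with no edges between them, that is, an induced $2K_2$.

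Consequently, $G$ consists of at most one clique of size at least two, together with isolated vertices. It is therefore the disjoint union of a complete graph and an edgeless graph, where either part may be empty. If $G$ has no edges, we take the complete part empty, or equivalently let it be a single vertex.

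No step is a real obstacle. The only point needing a line of justification is the standard fact that $P_3$-free graphs are exactly the cluster graphs, which follows from the shortest-path argument above.
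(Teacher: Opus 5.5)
Your proof is correct and follows the same route as the paper: $P_3$-freeness makes every component a clique, and $2K_2$-freeness leaves at most one component with an edge. You additionally spell out the easy direction and the shortest-path justification of the cluster-graph fact, both of which the paper takes for granted.
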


\begin{proof}
    The $P_3$-free graphs are exactly the graphs where every connected component is a complete graph.
    Furthermore, $2K_2$-free graphs have at most one connected component that contains edges.
    Combining these two facts, we obtain the statement.
\end{proof}

\subsection{Proof of \cref{lem:partition}}

\ineq*

\begin{proof}
    $\abs{X_1} \leq \abs{X_2} \leq \dots \leq \abs{X_k}$ implies that $\abs{X_i} \leq \frac{n}{k-i+1}$, as otherwise there would be $k-i+1$ sets of cardinality $> \frac{n}{k-i+1}$, which in sum would be $> n$.
    We have:
    \begin{align*}
        \sum_{i = 1}^{k-1} \br{(k-i) \abs{X_i}} &\leq \sum_{i = 1}^{k-1} \frac{k-i}{k-i+1} n = n \sum_{i = 1}^{k-1} \frac{i}{i+1} = n \sum_{i = 1}^{k-1} \br{1 - \frac{1}{i+1}} \\
        &= n \br{k-1 - \sum_{i = 1}^{k-1} \frac{1}{i+1}} = n \br{k-1 - \sum_{i = 2}^{k} \frac{1}{i}} \\
        &\leq n \br{k - 1 - \ln k + 1} = n \br{k - \ln k}
    \end{align*}
    where we used the well-known approximation of the harmonic number
  \[ \ln k \leq \sum_{i=1}^k \frac{1}{i} \leq \ln (k+1) \Leftrightarrow \ln k  - 1 \leq \sum_{i=2}^{k} \frac{1}{i} \leq \ln (k+1) - 1, \]
  which follows from lower/upper sum approximations of the integral of the function $\frac{1}{x}$.
\end{proof}

\section{Better Bounds for Parameters Closed Under FO-Transductions} \label{sec:FObounds}

In this section, we will consider multiple parameters that cannot be defined in terms of individual graphs, but only in terms of graph classes. To this end, we make use of the term graph class parameter.
Remember that we denoted by $\bbG$ the class of all graphs. We denote by $\GC$ the class of all graph class.
A \emph{graph class parameter}~$p$ is a function $p : \GC \to \bbR \cup \set{\infty}$.
We can compare a graph class parameter to a regular graph parameter by considering its \emph{canonical graph class parameter} which is defined as $p(\cC) := \sup_{G \in \cC} p(G)$ for a graph class $\cC$.
A graph parameter and its canonical graph class parameter may be used interchangeably using the term parameter.
Again, if $p(\cC) < \infty$ for some graph class parameter $p$ and some graph class $\cC$, then $p$ is \emph{bounded on $\cC$}, otherwise it is \emph{unbounded on $\cC$}.
A parameter $p'$ \emph{upper bounds} the parameter $p$ if $p'(\cC) < \infty$ implies $p(\cC) < \infty$ for all graph classes $\cC$.
The definition of $\simul$-boundedness works exactly the same for graph class parameters. 
We will use this notion for the parameters flip-width, merge-width, and shrub-depth.

We will now start showing better bounds for parameters whose $\simul$-boundedness we obtained from closure under first-order transductions. We start with the well-known parameter cliquewidth.

\begin{definition}[Cliquewidth]
  Consider the following four operations on $k$-labeled graphs, i.e., graphs where every vertex has one of $k$ labels:
  \begin{enumerate}
    \item $\emptyset_i$: create a vertex with label $i$,
    \item $join(i,j)$: make all vertices with label $i$ adjacent to those with label $j$,
    \item $(i \mapsto j)$: relabel all vertices with label $i$ to label $j$,
    \item $\cup$: take the disjoint union of $G_1$ and $G_2$.
  \end{enumerate}
  The smallest number of labels needed to construct a graph $G$ using the above operations is called the \emph{cliquewidth} of $G$, denoted by $\cw{G}$.
\end{definition}

\begin{theorem} \label{thm:cwsimul}
    Cliquewidth is $\simul$-bounded.
    In particular, $\cw{G} \leq \cw{\cC} \cdot 2^{\simul[\cC][G]}$ for any hereditary completable $\cC$ and any graph $G$. This bound is tight up to a constant factor.
\end{theorem}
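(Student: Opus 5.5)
Let $d = \simul[\cC][G]$, let $(H,L)$ be a $d$-simultaneous $\cC$-representation of $G$, and let $k = \cw{H} \leq \cw{\cC}$. Fix a $k$-expression for $H$. I would turn it into a $k2^d$-expression for $G$ by refining every label $i$ into the pairs $(i,A)$ with $A \subseteq \set{1,\dots,d}$. The second coordinate records the label set $L(v)$ of the vertex; it never changes, because $L$ is fixed per vertex.

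The translation goes operation by operation:
\begin{itemize}
  \item $\emptyset_i$ creating vertex $v$ becomes $\emptyset_{(i,L(v))}$.
  \item Disjoint union stays disjoint union.
  \item $(i\mapsto j)$ becomes the sequence of relabelings $((i,A)\mapsto(j,A))$ over all $A$.
  \item $join(i,j)$ becomes the sequence of $join((i,A),(j,B))$ over exactly those pairs $A,B$ with $A\cap B\neq\emptyset$.
\end{itemize}
Since the second coordinates are untouched, a straightforward induction over the expression tree shows the following invariant. At every node, the vertex set is the same as in the original, each vertex $v$ carries label $(i,L(v))$ exactly when it carried $i$ originally, and the edge set is that of the original subexpression intersected with $\setcond{uv}{L(u)\cap L(v)\neq\emptyset}$. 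At the root this gives $E(H)$ restricted to label-intersecting pairs, which is $E(G)$ by definition of a simultaneous representation.

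The one subtlety is that a pair may be joined several times in a $k$-expression, or may become adjacent through a join after earlier operations. Since edges are only ever added, the invariant still holds: an edge $uv$ is created in the new expression if and only if some join in the original creates $uv$ while $L(u)\cap L(v)\neq\emptyset$. This yields $\cw{G}\leq k2^d \leq \cw{\cC}\cdot 2^{\simul[\cC][G]}$. Empty label sets need no special treatment, since such vertices simply never receive edges.

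Tightness is the harder part. I would take $\cC=\mathsf{complete}$, where $\cw{\cC}\le 2$ and $\simul[\mathsf{complete}]$ is the edge clique cover number by \cref{cor:eccsimulclique}. I need a family whose edge clique cover number is about $\log$ of its cliquewidth, that is, graphs of cliquewidth $\Omega(2^d)$ with edge clique cover number $d$. The natural candidates are complements of bounded-degree graphs of large cliquewidth. Complements of $n\times n$ grids have edge clique cover number $O(\log n)$ by Alon~\cite{alon1986cubicecc}, as already used for miam-width. Their cliquewidth is $\Theta(n)$, since complementation changes cliquewidth by at most a factor of $2$ and grids have cliquewidth $\Theta(n)$. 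So $d=O(\log n)$ gives $2^d=n^{O(1)}$, but only with an unspecified constant in the exponent. To get tightness up to a constant factor rather than a polynomial, I expect the main obstacle is finding a family with $\ecc{G}\le \log_2 \cw{G}+O(1)$. My first try would be the complement of a perfect matching on $2^{d+1}$ vertices, or a graph built directly from all $2^d$ label sets, as in the construction of $G_{d,k}$ in \cref{thm:modw}. I would take vertices $v_A^{(i)}$ with label set $A$ over a suitable complete graph, and lower-bound cliquewidth via a rank or mim-width type argument on a balanced cut (\cref{lem:balancedcut}), aiming to show that about $2^d$ distinct neighborhoods cross every balanced cut.
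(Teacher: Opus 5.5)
Your upper bound matches the paper's construction: labels $i_A$, creation with $i_{L(v)}$, componentwise relabelling, and $join(i_A,j_B)$ only when $A\cap B\neq\emptyset$. Your invariant spells out what the paper calls obvious.

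\textbf{The gap is the tightness claim.} It is part of the statement, and you give no proof of it. Your concrete candidates do not work:
\begin{itemize}
\item The complement of a perfect matching is a cograph, so its cliquewidth is at most $2$.
\item Over a complete $H$, the vertices $v_A^{(i)}$ and $v_A^{(j)}$ are true twins. The index $i$ therefore does nothing, and $G$ is just a blow-up of the intersection graph of the nonempty subsets of $\{1,\dots,d\}$.
\item On such graphs a mim-width argument cannot succeed. By \cref{thm:mim}, $mim_G(A)\le d\cdot mim_H(A)\le d$ for complete $H$, so the mim-width of $G$ is at most $d$.
\item The twins also make the neighborhood symmetric-difference bound for cliquewidth useless.
\end{itemize}
A rank or neighborhood-class bound of order $2^d$ across every balanced cut is exactly the step you would still have to prove.

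The missing idea is to replace $\mathsf{complete}$ by $\cocluster$. Co-clusters still have cliquewidth $2$, but they break these twins. Take $H$ complete multipartite with $k=2^d+1$ independent parts $V_1,\dots,V_k$. Each part contains one vertex $v_A^{(i)}$ for every nonempty $A$, with $L(v_A^{(i)})=A$. This is the graph $G_{d,k}$ from \cref{thm:modw}.

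Every pair of vertices in it satisfies $|N(x)\triangle N(y)|\ge 2^d$:
\begin{itemize}
\item If $x\in V_i$ and $y\in V_j$ with $i\neq j$, then $x$ has at least $2^{d-1}$ neighbors in $V_j$ and $y$ has none there. The same holds symmetrically in $V_i$.
\item If $x,y\in V_i$ and $\ell\in L(x)\setminus L(y)$, they differ on the $k-1=2^d$ vertices $v_{\{\ell\}}^{(h)}$ with $h\neq i$.
\end{itemize}
By the result of Alecu et al.~\cite{alecu2021functionality}, every graph of cliquewidth $w$ has two vertices with $|N(x)\triangle N(y)|\le 2w$. Hence $\cw{G_{d,2^d+1}}\ge 2^{d-1}$. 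On the other side, $\simul[\cocluster][G_{d,2^d+1}]\le d$, so your upper bound gives at most $2\cdot 2^d$. This establishes tightness up to a factor of $4$.
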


\begin{proof}
    Let $k \in \N$ and $\cC$ be a graph class such that $\cw{H} \leq k$ for all $H \in \cC$.
    Furthermore, let $G$ be a graph with $d = \simul[\cC][G]$ and $(H,L)$ be a $d$-simultaneous $\cC$-representation.
    Let a sequence of operations that constructs $H$ and uses $k$ labels be given.
    We introduce $k 2^d$ new labels $i_A$, one for every combination of an old label $1 \leq i \leq k$ and every subset $A$ of $\set{1,2,\dots,d}$.
    The idea is that we have for every vertex $v$ a label with index $L(v)$, simulating the behavior of the simultaneous representation.
    Then, we change the operations as follows:
    \begin{enumerate}
        \item $\emptyset_i$: instead of adding a vertex $v$ with label $i$, we add $v$ with label $i_{L(v)}$,
        \item $join(i,j)$: for all $A \subseteq \set{1,2,\dots,d}$ and all $B \subseteq \set{1,2,\dots,d}$ with $A \cap B \neq \emptyset$, we perform the operation $join(i_A,j_B)$,
        \item $(i \mapsto j)$: for all $A \subseteq \set{1,2,\dots,d}$, we perform the operation $(i_A \mapsto j_A)$,
        \item $\cup$: the operation is not changed.
    \end{enumerate}
    Obviously, the sequence of operations obtained by changing the operations in the original sequence as above results in the graph $G$.

    To show that this bound is tight up to a constant factor, we consider the graphs $G_{d,k}$ used in the proof of \cref{thm:modw} which are constructed using the simultaneous $\cocluster$-representations. Note that the cliquewidth of co-clusters is $2$.
    For some arbitrary $d \in \N$ we choose $k := 2^d + 1$ and show that $\cw{G_{d,2^d +1}} \geq 2^{d-1}$, while the simultaneous $\cocluster$-number of this family is obviously bounded by $d$.
    We assume that we are given $G_{d,2^d +1}$ together with the simultaneous $\cocluster$-representation $(H,L)$ constructed in the proof of \cref{thm:modw}.
    
    It was shown in \cite{alecu2021functionality} that for any graph $G$ of cliquewidth $w$ it holds that there exist two vertices $x,y \in V(G)$ with $\abs{N(x) \triangle N(y)} \leq 2w$.
    Thus, to show that the cliquewidth of $G_{d,2^d + 1}$ is $\geq 2^{d-1}$, it suffices to show that for every pair of vertices $x,y \in V\!\br{G_{d,2^d + 1}}$ we have $\abs{N(x) \triangle N(y)} \geq 2^d$.
    Let $x,y \in V\!\br{G_{d,2^d + 1}}$ be chosen arbitrarily.
    Let $x \in V_i$ and $y \in V_j$.
    We first consider the case that $i \neq j$.
    Then $x$ has at least $2^{d-1}$ neighbors in $V_j$, while $y$ has at least $2^{d-1}$ neighbors in~$V_i$.
    As $V_i$ and $V_j$ are independent sets, $x$ has no neighbors in $V_i$ and $y$ has no neighbors in~$V_j$.
    Thus, $\abs{N(x) \triangle N(y)} \geq 2^d$.
    Now consider the case that $i = j$.
    By construction, we have that $L(x) \neq L(y)$.
    W.l.o.g. let there be some label $\ell$ such that $\ell \in L(x) \setminus L(y)$.
    Then $x$ is adjacent to all vertices $v_{\set{\ell}}^{(h)}$ with $h \neq i$, while $y$ can not be adjacent to these vertices.
    As there are at least $k-1 = 2^d$ of these vertices, we have $\abs{N(x) \triangle N(y)} \geq 2^d$.
\end{proof}

We now examine boolean-width, which is functionally equivalent to cliquewidth.

\begin{definition}[Boolean-width]
    Let $G$ be a graph and $A \subseteq V(G)$.
    Define the set of unions of neighborhoods of $A$ across the cut $\set{A, \overline{A}}$ as 
    \[ U(A) := \setcond{Y \subseteq \overline{A}}{\exists X \subseteq A \text{ and } Y = N(X) \cap \overline{A}}. \]
    Then define $bool\text{-}dim_G : \cP(V(G)) \to \bbR$ by $bool\text{-}dim_G(A) := \log_2 \abs{U(A)}$.
    Using \cref{def:dectree} with $f = (bool\text{-}dim_G)_{G \in \cG}$ we define the \emph{(linear) boolean-width} of $G$ as the (linear) $f$-width of $G$.
\end{definition}

\begin{theorem}
    Let $d \in \N$, $G$ be a graph, and $(H,L)$ be a $d$-simultaneous $\cC$-representation of $G$ for some hereditary completable graph class $\cC$.
    Then $bool\text{-}dim_G(A) \leq 2^d + bool\text{-}dim_H(A)$ for any vertex set $A \subseteq V(G)$.
    Thus, (linear) boolean-width is $\simul$-bounded.
\end{theorem}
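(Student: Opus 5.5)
Fix $A \subseteq V(G)$ and write $\overline{A} = V(G)\setminus A$. The plan is to bound $\abs{U_G(A)}$ by $2^{2^d}\cdot\abs{U_H(A)}$, which is the same as $bool\text{-}dim_G(A) \leq 2^d + bool\text{-}dim_H(A)$. The $\simul$-boundedness of (linear) boolean-width then follows from \cref{lem:bintree} with $g(d,x) = 2^d + x$, which is non-decreasing.

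\textbf{Decomposition by label sets.} For each label set $D \subseteq \set{1,\dots,d}$, let $A_D := \setcond{v \in A}{L(v) = D}$. Neighborhoods of unions split as $N_G(X)\cap\overline{A} = \bigcup_D \br{N_G(X\cap A_D)\cap\overline{A}}$. For $v \in A_D$ we have
\[ N_G(v)\cap \overline{A} = N_H(v) \cap \overline{A} \cap W_D, \qquad W_D := \setcond{w \in \overline{A}}{L(w)\cap D \neq \emptyset}. \]
Hence for $X \subseteq A$,
\[ N_G(X)\cap\overline{A} = \bigcup_{D} \br{N_H(X\cap A_D)\cap \overline{A} \cap W_D}. \]

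\textbf{The key observation.} Membership of $w \in \overline{A}$ in the set on the right depends only on two things. The first is whether $w \in N_H(X\cap A_D)$ for the various $D$. The second is the label set $L(w)$, which determines for which $D$ we have $w \in W_D$. The naive count, taking one $H$-neighborhood per $D$, gives $\abs{U_H(A)}^{2^d}$. That is too weak, so the map has to be organized better.

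\textbf{Counting.} The main step is to encode $Y = N_G(X)\cap\overline{A}$ using one set $Z\in U_H(A)$ together with $2^d$ bits. Let $\mathcal{D}_X := \setcond{D}{X\cap A_D\neq\emptyset}$; this set of label sets is the $2^d$ bits. Then $w \in Y$ exactly when some $D \in \mathcal{D}_X$ with $D\cap L(w)\neq\emptyset$ has $w \in N_H(X\cap A_D)$.

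This is still not determined by $Z := N_H(X)\cap\overline{A}$ alone. So I would replace $X$ by a canonical representative. Among all $X'$ with the same $Y$, choose $X'$ maximal, namely $X' := \setcond{v\in A}{N_G(v)\cap\overline{A}\subseteq Y}$. Then $Y = N_G(X')\cap\overline{A}$ holds provided every $v\in X$ is in $X'$, which is true.

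I would then try to show that $Y$ is determined by $Z' := N_H(X')\cap\overline{A}$ together with $\mathcal{D}_{X'}$. The candidate description is
\[ Y = \setcond{w\in Z'}{L(w)\cap D\neq\emptyset \text{ for some } D\in\mathcal{D}_{X'} \text{ with } w \in N_H(X'\cap A_D)}. \]
The difficulty is that the inner condition still refers to individual classes $X' \cap A_D$, not just to $Z'$.

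\textbf{Main obstacle and fallback.} That final determination step is the main obstacle. If the canonical-representative trick does not close it, the fallback is a different encoding that directly yields $\abs{U_G(A)} \leq 2^{2^d}\abs{U_H(A)}$. Namely, map $Y$ to the pair $\br{N_H(X)\cap\overline{A},\ \mathcal{D}_X}$, and argue injectivity through a maximality or closure argument on $X$ over the cut.
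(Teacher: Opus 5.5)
\textbf{There is a genuine gap, and it cannot be closed.} The step you flag as the main obstacle fails for every choice of representative or encoding, because the target inequality $\abs{U_G(A)}\le 2^{2^d}\abs{U_H(A)}$ (equivalently $bool\text{-}dim_G(A)\le 2^d+bool\text{-}dim_H(A)$) is false.

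\emph{Counterexample.} Take $d=2$ and let $\cC$ be the class of all graphs. Let $A=\set{a_1,\dots,a_n}$ and $\overline{A}=\set{b_1,\dots,b_n}$ with $n$ even. Let $H$ satisfy $a_ib_j\in E(H)$ iff $i\le j$; edges inside $A$ and inside $\overline{A}$ are arbitrary. Put $L(a_i)=L(b_i)=\set{1}$ for odd $i$ and $\set{2}$ for even $i$.
\begin{itemize}
\item The sets $N_H(a_i)\cap\overline{A}=\set{b_i,\dots,b_n}$ are nested, so $\abs{U_H(A)}=n+1$.
\item In $G$, $a_ib_j\in E(G)$ iff $i\le j$ and $i\equiv j\pmod 2$. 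So $N_G(X)\cap\overline{A}$ determines, and is determined by, the pair (smallest odd index in $X$, smallest even index in $X$), each possibly absent.
\item Hence $\abs{U_G(A)}=(n/2+1)^2$. For $n=64$ this is $1089>1040=16\cdot 65$.
\end{itemize}
With $d$ singleton labels assigned by index modulo $d$ (for $d$ dividing $n$), you get $(n/d+1)^d$ versus $n+1$. So for every fixed $d\ge 2$ the difference $bool\text{-}dim_G(A)-bool\text{-}dim_H(A)$ is unbounded, and no bound of the form $h(d)+bool\text{-}dim_H(A)$ can hold.

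In this example, both your maximal representative $X'$ and your fallback pair $\br{N_H(X)\cap\overline{A},\mathcal{D}_X}$ only remember the smaller of the two minimal indices. For instance, $X=\set{a_1,a_4}$ and $X=\set{a_1,a_6}$ receive the same code but have different $G$-neighbourhoods: they differ at $b_4$.

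\textbf{What does hold.} The bound you set aside as the naive count is correct and is enough. Your own decomposition shows that $N_G(X)\cap\overline{A}$ is determined by the tuple $\br{N_H(X\cap A_D)\cap\overline{A}}_D$, whose entries lie in $U_H(A)$. Hence $\abs{U_G(A)}\le\abs{U_H(A)}^{2^d}$, that is, $bool\text{-}dim_G(A)\le 2^d\cdot bool\text{-}dim_H(A)$.

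You can sharpen the factor to $d$ by grouping by single labels. For $w\in\overline{A}$, $w\in N_G(X)$ iff $w\in N_H(X\cap A^\ell)$ for some $\ell\in L(w)$, where $A^\ell:=\setcond{v\in A}{\ell\in L(v)}$. The example above shows this factor $d$ is asymptotically tight.

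Since \cref{lem:bintree} only needs some non-decreasing $g$, taking $g(d,x)=2^dx$ (or $dx$) shows that (linear) boolean-width is $\simul$-bounded. This also follows at once from its functional equivalence with cliquewidth and \cref{thm:fo}. Only the additive inequality in the statement has to be replaced by a multiplicative one.

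\textbf{Comparison with the paper.} The paper's proof has the same defect as your fallback. It asserts $U_G(A)\subseteq\setcond{Y\cap V_\cB}{Y\in U_H(A),\ \cB\subseteq\cP(\set{1,\dots,d})}$. Already for $n=4$ in the example, the set $N_G(\set{a_1,a_4})\cap\overline{A}=\set{b_1,b_3,b_4}$ is not of this form. The only $Y\in U_H(A)$ containing $b_1$ is $\overline{A}$. No $V_\cB$ contains $b_4$ but not $b_2$, since both have label set $\set{2}$.
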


\begin{proof}
    Let $G$ be a graph with $d$-simultaneous $\cC$-representation $(H,L)$.
    If $d = 0$, $G$ is edgeless and the inequality holds trivially.
    Thus, let $d \geq 1$.
    Consider some vertex set $A \subseteq V(G)$.
    Denote by $U_H(A)$ the set of unions of neighborhoods of $A$ across the cut $\set{A, \overline{A}}$ in $H$ and $U_G(A)$ analogously for $G$.
    Then we have
    \[ U_G(A) \subseteq \setcond{Y \cap V_\cB}{ Y \in U_H(A), \ \cB \subseteq \cP (\set{1,2,\dots,d})}, \]
    where we define $V_\cB = \setcond{v \in V(G)}{L(v) = B \in \cB}$.
    Thus, $\abs{U_G(A)} \leq 2^{2^d} \abs{U_H(A)}$ and therefore $bool\text{-}dim_G(A) \leq 2^d + bool\text{-}dim_H(A)$.
\end{proof}

We now consider rank-width, which also is functionally equivalent to cliquewidth.

\begin{definition}[Rank-width]
    For a graph $G$ define the function $cut\text{-}rank_G : \cP(V(G)) \to \N$ with $cut\text{-}rank(A)$ for $A \subseteq V(G)$ being the rank of the adjacency matrix of $G[A, \overline{A}]$ in $GF(2)$, the finite field over two elements.
    Using \cref{def:dectree} with $f = (cut\text{-}rank_G)_{G \in \cG}$ we define the \emph{(linear) rank-width} of $G$ as the (linear) $f$-width of $G$.
\end{definition}

\begin{theorem}
    Let $d \in \N$, $G$ be a graph, and $(H,L)$ be a $d$-simultaneous $\cC$-representation of $G$ for some hereditary completable graph class $\cC$.
    Then $cut\text{-}rank_G(A) \leq 2^d \cdot cut\text{-}rank_H(A)$ for any vertex set $A \subseteq V(G)$.
    Thus, (linear) rank-width is $\simul$-bounded.
\end{theorem}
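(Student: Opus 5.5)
The plan is to bound the rank of the cut matrix of $G$ by splitting the rows according to label sets. Fix $A \subseteq V(G)$ and let $M_H$ and $M_G$ be the $A \times \overline{A}$ adjacency matrices over $GF(2)$ of $H[A,\overline{A}]$ and $G[A,\overline{A}]$. For each label set $B \subseteq \set{1,\dots,d}$ let $A_B := \setcond{v \in A}{L(v) = B}$. These sets partition $A$, so the rows of $M_G$ split into at most $2^d$ blocks $M_G[A_B, \overline{A}]$. By subadditivity of rank under stacking row blocks,
\[ \mathrm{rank}\, M_G \leq \sum_{B} \mathrm{rank}\, M_G[A_B, \overline{A}]. \]
It therefore suffices to show $\mathrm{rank}\, M_G[A_B,\overline{A}] \leq \mathrm{rank}\, M_H$ for every $B$, which yields the bound $2^d \cdot cut\text{-}rank_H(A)$.

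For a fixed $B$, let $W_B := \setcond{w \in \overline{A}}{L(w) \cap B \neq \emptyset}$. For $u \in A_B$ and $w \in \overline{A}$ we have $uw \in E(G)$ if and only if $uw \in E(H)$ and $w \in W_B$, since $L(u) = B$. Hence $M_G[A_B, \overline{A}]$ is $M_H[A_B, \overline{A}]$ with every column outside $W_B$ replaced by zero. Equivalently, it equals $M_H[A_B,\overline{A}] \cdot D_B$, where $D_B$ is the diagonal $0/1$ matrix indicating $W_B$. So its rank is at most $\mathrm{rank}\, M_H[A_B,\overline{A}]$, which is at most $\mathrm{rank}\, M_H$ because it is a submatrix. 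Summing over the at most $2^d$ classes $B$ gives the claimed inequality.

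The only point needing care is that the column mask depends on the row's label set. This is exactly why we first partition rows by label set: within one row block the mask is uniform, so it becomes a single right multiplication. Empty label sets cause no trouble, since those rows become zero. Finally, $\simul$-boundedness of (linear) rank-width follows from \cref{lem:bintree} with $g(d,x) = 2^d x$, which is non-decreasing.
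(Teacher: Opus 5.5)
Your proof is correct and follows essentially the same approach as the paper. Both arguments partition the cut matrix into at most $2^d$ blocks by label set, note that within one block adjacency in $G$ is adjacency in $H$ under a uniform mask (so each block has rank at most $cut\text{-}rank_H(A)$), and then sum over the blocks. The only difference is that the paper groups the columns (vertices of $\overline{A}$) instead of the rows; your diagonal $0/1$ multiplication also states precisely what the paper phrases loosely as preservation of linear independence within a label class.
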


\begin{proof}
    Let $G$ be a graph with $d$-simultaneous $\cC$-representation $(H,L)$.
    If $d = 0$, $G$ is edgeless and the inequality holds trivially.
    Thus, let $d \geq 1$.
    Consider some vertex set $A \subseteq V(G)$.
    Let $k = cut\text{-}rank_H(A)$, that is, $k$ is the rank of the adjacency matrix $\cA_H$ of $H[A, \overline{A}]$ in $GF(2)$.
    Now consider the adjacency matrix $\cA_G$ of $G[A, \overline{A}]$.
    Note that the linear dependency of two column vectors of $\cA_G$ may only change whenever the label sets of the corresponding vertices are different, i.e., if two vertices have the same label set, their corresponding columns in $\cA_G$ are linearly independent if and only if they are linearly independent in $\cA_H$.
    Let us now consider some label set $B \subseteq \set{1,2,\dots,d}$ and the corresponding vertex set $V_B = \setcond{v \in V(G)}{L(v) = B}$.
    By the above argument, at most $k$ of the columns in $\cA_G$ corresponding to $V_B$ are linearly independent.
    Thus, as there are at most $2^d$ used label sets, we have that there are at most $2^d k$ many linearly independent column vectors in $\cA_G$ and, thus, $cut\text{-}rank_G(A) \leq 2^d k$.
\end{proof}

The next parameter we consider is \emph{twin-width}, which gives a generalization of cographs.
A \emph{trigraph} is a triple $G = (V,E,R)$, where $E$ and $R$ are two disjoint sets of edges on $V$: the edges $E$ and red edges $R$.
A trigraph $(V,E,R)$ such that $(V,R)$ has maximum degree at most $d$ is called a \emph{$d$-trigraph}.
Any graph $(V,E)$ may be interpreted as a trigraph $(V,E,\emptyset)$.
Given a trigraph $G = (V,E,R)$ and two vertices $u,v \in V$, we define the trigraph $G / u,v = (V^\prime, E^\prime, R^\prime)$ by \emph{contracting} $u,v$ into a new vertex $w$ as the trigraph on $V^\prime = \br{V \setminus \set{u,v}} \cup \set{w}$ such that $G - \set{u,v} = \br{G / u,v} - \set{w}$ and with the following edges incident to $w$:
\begin{itemize}
    \item $\set{w,x} \in E^\prime$ if and only if $\set{u,x}, \set{v,x} \in E$,
    \item $\set{w,x} \not\in E^\prime \cup R^\prime$ if and only if $\set{u,x} \not\in E \cup R$ and $\set{v,x} \not\in E \cup R$,
    \item $\set{w,x} \in R^\prime$ otherwise.
\end{itemize}
We say that $G / u,v$ is a \emph{contraction} of $G$.
If both $G$ and $G / u,v$ are $d$-trigraphs, $G / u,v$ is a \emph{$d$-contraction}.
A (tri)graph $G$ on $n$ vertices is \emph{$d$-collapsible}, if there exists a sequence of $d$-contractions, which contracts $G$ to a single vertex.

\begin{definition}[Twin-width]
  The minimum $d$ for which a graph $G$ is $d$-collapsible is the \emph{twin-width} of $G$, denoted by $\tww{G}$.
\end{definition}

\begin{theorem}
    Twin width is $\simul$-bounded.
    In particular, $\tww{G} \leq (\tww{\cC} + 1) \cdot 2^{\simulbr{\cC}{G}}$ for any hereditary completable $\cC$ and any graph $G$.
    This bound is tight up to a constant factor.
\end{theorem}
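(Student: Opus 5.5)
Given a $d$-simultaneous $\cC$-representation $(H,L)$ of $G$ with $\tww{H} \leq t$, I would take a $t$-contraction sequence of $H$ and refine it by label sets, mirroring the proof of \cref{thm:cwsimul}. For every $A \subseteq \set{1,\dots,d}$ let $V_A := \setcond{v}{L(v) = A}$. Each vertex of an intermediate trigraph of $H$ is a \emph{part} $P \subseteq V(H)$. In the new sequence for $G$ we keep, for each part $P$ and each $A$, a vertex $P_A := P \cap V_A$, discarding empty ones. The $H$-contraction merging parts $P$ and $Q$ is simulated by contracting $P_A$ with $Q_A$ for every $A$, one after another. At the end one vertex per nonempty label class remains, at most $2^d$ of them, and these are contracted arbitrarily. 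That final stage has red degree at most $2^d - 1$.

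The main step is bounding red degrees. Consider $P_A$ and $Q_B$ whose parts $P,Q$ are homogeneous in $H$, i.e., joined by a black edge or a non-edge. If $uv \notin E(H)$, then $uv \notin E(G)$. If $P$ is completely joined to $Q$ in $H$, then adjacency in $G$ between $P_A$ and $Q_B$ is decided by whether $A \cap B \neq \emptyset$ alone. So $P_A$ and $Q_B$ are homogeneous in $G$ as well.

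Hence a red edge at $P_A$ can only go to some $Q_B$ with $PQ$ red in $H$, or to $P_B$ with $B \neq A$ if $P$ itself is internally inhomogeneous. In the latter case we bound generously: at most $2^d - 1$ such neighbours, plus at most $t \cdot 2^d$ from the red neighbours of $P$. That gives red degree at most $(t+1)2^d$, matching the claim $(\tww{\cC}+1)2^{\simul[\cC][G]}$.

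The obstacle is the intermediate stages while one $H$-contraction is being simulated label by label. At such a moment both the old vertices $P_{A'}, Q_{A'}$ and the merged vertices $(P\cup Q)_A$ coexist, so the red degree of a third vertex $R_C$ may briefly count neighbours from both. I would handle this by noting that $R$ has red degree at most $t$ to parts both before and after the step. During the transition, $R$ sees at most $t+1$ "part-level" neighbours among $P$, $Q$, $P\cup Q$ and its other red parts, each split into at most $2^d$ label pieces. Together with the internal term, red degree stays within $(t+1)2^d$ once the constant is checked carefully. If it slips slightly, I would order the label-wise contractions so that $P_A, Q_A$ are merged only when all other pieces are already merged.

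For tightness, I would reuse the graphs $G_{d,k}$ from \cref{thm:modw}, whose co-clusters have twin-width $0$. The cited fact \cite{alecu2021functionality}, that graphs of bounded width have a pair with small neighbourhood symmetric difference, has a twin-width analogue: the first contraction of a $t$-sequence requires $\abs{N(x)\triangle N(y)} \leq t$ (up to excluding $x,y$). The computation in \cref{thm:cwsimul} shows all pairs in $G_{d,2^d+1}$ have symmetric difference at least $2^d$, so $\tww{G_{d,2^d+1}} \geq 2^d - 2$.
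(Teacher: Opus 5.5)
Your plan (refine a contraction sequence of $H$ by the label classes, transfer homogeneity from $H$ to $G$, get tightness from $G_{d,2^d+1}$) uses the same ingredients as the paper, and several parts are fine: the stable-stage count $t2^d+2^d-1$, the lower bound via $\abs{N(x)\triangle N(y)\setminus\set{x,y}}\geq 2^d-2$, and the third-vertex case. For the third-vertex case the count has to be done carefully: the pieces of $P$, $Q$, $P\cup Q$ number at most $2^{d+1}$ in total, and if $R$ is red to both $P$ and $Q$ then it has at most $t-2$ other red parts.

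The gap is that you never bound the red degree of the freshly merged pieces $S_A=(P\cup Q)_A$ themselves. Suppose only the labels in some set $\mathcal{M}$ have been merged. Then $S_A$ can be red to $P_{A'}$ \emph{and} to $Q_{A'}$ for every unmerged $A'$ meeting $A$. These inhomogeneities sit inside $H[P\cup Q]$ and are invisible at the part level of $H$. On top of that come up to $t2^d$ pieces of the red neighbours of $P\cup Q$, so the red degree reaches about $t2^d+2^{d+1}-\abs{\mathcal{M}}$.

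This does occur, and no order of the label-wise merges prevents it. Your suggested order cannot hold for all $A$ at once. Concretely, take $t=0$ and $d=4$, and let $A$ range over the $15$ non-empty label sets. Let $H$ be the disjoint union of a clique on the $x_A$, a clique on the $x'_A$, and isolated vertices $y_A,y'_A$, where $x_A,y_A,x'_A,y'_A$ all receive label set $A$. Let $P$ consist of all $x_A,y_A$ and $Q$ of all $x'_A,y'_A$. Use the $0$-sequence that first shrinks $P$ and $Q$ to single vertices and then merges them. Whatever order you use for the $15$ label-wise merges, when the first non-singleton $A$ is merged, $S_A$ is red to at least $2\abs{I(A)}-\abs{A}-2\geq 20>16=(t+1)2^d$ pieces. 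Here $I(A)$ denotes the set of non-empty label sets meeting $A$.

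Carried out carefully, your scheme yields $\tww{G}\leq (t+2)2^d$. That is enough for $\simul$-boundedness and for tightness up to a constant factor, but not for the displayed inequality with $(t+1)2^d$.

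The paper avoids transition bookkeeping altogether. It takes from Lemma~5.1 of \cite{BonnetKTW22} a contraction sequence of $H$ that only merges vertices with equal label sets and has bounded red degree at \emph{every} step. It then applies your homogeneity observation to arbitrary label-homogeneous parts. Hence at every step each red edge of the trigraph of $G$ is also a red edge of the trigraph of $H$, and only the final at most $2^d$ label-class vertices are contracted freely. Applying the transfer to refined partitions of $H$, rather than comparing $G$'s refined partition with $H$'s unrefined one, is what makes the intermediate stages harmless. So either invoke that lemma in the same way, or weaken the stated constant.
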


\begin{proof}
    Let $k \in \N$ and $\cC$ be a graph class such that $\tww{H} \leq k$ for all $H \in \cC$.
    Furthermore, let $G$ be a graph with $d = \simul[\cC][G]$ and $(H,L)$ be a $d$-simultaneous $\cC$-representation.
    We make use of Lemma 5.1 in \cite{BonnetKTW22}, which states that the twin-width of a graph augmented by $h$ unary relations grows at most by a factor of $2^h$ when compared to the twin-width of the underlying graph.
    In the case that unary relations are added to the graph, only vertices that agree on all unary relations can be contracted in a contraction sequence, thus leaving up to $2^h$ vertices in the end of the sequence instead of one.
    We can interpret the property of a vertex having label $i$ for $1 \leq i \leq d$ as a unary relation and equip the graph $H$ with these $d$ unary relations.
    Then by the above stated lemma, there is a $2^dk$-contraction sequence of $H$ where only vertices $u,v$ with $L(u)=L(v)$ are contracted. 
    We claim that this sequence is a $2^dk$-contraction sequence of $G$.
    Consider two vertices $u,v$ with $L(u)=L(v)$ which are to be contracted in the sequence.
    Note that a red edge can only appear incident to the contraction of $u$ and $v$, if they disagreed on a neighbor $w$ in $H$ where $L(u) \cap L(w) \neq \emptyset$.
    In the other cases, i.e. if they agreed on $w$ in $H$ or if $L(u) \cap L(w) = \emptyset$ held, they would agree on $w$ in $G$ and no red edge would be created.
    Therefore, any red edge appearing in $G$ must also appear in $H$, which implies that this sequence is indeed a $2^dk$-contraction sequence.
    Note that this sequence does not end on a single vertex, but on up to $2^d$ many vertices.
    These can be arbitrarily contracted, as a graph of size up to $2^d$ can only have red degree up to $2^d$, which gives us the claimed bound.

    To see that this bound is tight, we use the same construction given in the proof of \cref{thm:cwsimul} together with \cite[Lemma 3.2]{AhnHKO22}.
\end{proof}

Next we want to show $\simul$-boundedness of \emph{flip-width}, a family of parameters defined via a cops-and-robbers game.
Let $G = (V,E)$ be a graph.
A \emph{flip} on $G$ between two vertex sets $A,B$ ($A=B$ is possible) results in the graph obtained from $G$ by inverting the adjacency between any vertex pair $a \in A$, $b \in B$.
If $\cP$ is a partition of $V$, we call a graph $G'$ a \emph{$\cP$-flip} of $G$, if $G'$ can be obtained from $G$ by performing a sequence of flips between pairs $A,B \in \cP$.
We call $G'$ a \emph{$k$-flip} of $G$, if $G'$ is a $\cP$-flip of $G$ for some partition $\cP$ of $V$ with $\abs{\cP} \leq k$.
The \emph{flipper game} with radius $r \in \N \cup \set{\infty}$ and width $k \in \N$ is played by two players: the \emph{flipper} and the \emph{runner}.
In each round $i$ of the game, the flipper declares a $k$-flip $G_i$ of $G$ and the runner selects a new position $v_i$ as follows: initially, $G_0 = G$ and $v_0$ is chosen by the runner.
In round $i > 0$, the flipper announces a $k$-flip $G_i$ of $G$.
Before the $k$-flip is put in effect, the runner moves to a new vertex $v_i$ by following a path of length at most $r$ from $v_{i-1}$ in the old graph $G_{i-1}$.
Note that the runner knows the new graph $G_i$ at this point.
The game terminates when the runner is trapped, i.e. when $v_i$ is isolated in $G_i$.

\begin{definition}[Flip-width]
    For $r \in \N \cup \set{\infty}$ the \emph{radius-$r$ flip-width} of a graph $G$, denoted $\fwr{G}$, is the smallest $k \in \N$ such that the flipper has a winning strategy in the flipper game with radius $r$ and width $k$.
    A graph class $\cC$ has \emph{bounded flip-width}, if $\fwr{\cC} < \infty$ for every $r \in \N$.
    A graph class $\cC$ has \emph{almost bounded flip-width}, if for each fixed $r \in \N$, every $n$-vertex graph of $\cC$ has radius-r flip-width at most $n^{o(1)}$.
\end{definition}

We can interpret the property of bounded flip-width in terms of a graph class parameter, which is finite for a graph class if and only if it has bounded flip-width. Similarly, we can interpret almost bounded flip-width as a graph class parameter.

\begin{theorem}
    Let $r \in \N \cup \set{\infty}$.
    Then radius-$r$ flip-width is $\simul$-bounded.
    In particular, $\fwr{G} \leq \fwr{\cC} \cdot 2^{\simulbr{\cC}{G}}$ for any hereditary completable $\cC$ and any graph $G$.
\end{theorem}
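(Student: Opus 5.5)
The plan is a strategy-transfer argument. Fix $d = \simul[\cC][G]$ and a $d$-simultaneous $\cC$-representation $(H,L)$ of $G$, and let $k := \fwr{\cC} \ge \fwr{H}$. Let $\mathcal{L} := \setcond{V_A}{A \subseteq \set{1,\dots,d}}$ with $V_A := \setcond{v}{L(v)=A}$; this is a partition of $V(G)$ into at most $2^d$ classes. We take the flipper's winning strategy in the radius-$r$ game of width $k$ on $H$ and convert it into a winning strategy of width $k2^d$ on $G$.

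The first step is a structural observation. Let $\cP$ be any partition of $V(H)$, and let $\cP \wedge \mathcal{L}$ be its common refinement, which has at most $|\cP|\cdot 2^d$ parts. The graph $G$ is a $\cP\wedge\mathcal{L}$-flip of $H$: inside $V_A \times V_B$, $G$ agrees with $H$ when $A\cap B\neq\emptyset$ and is empty otherwise. So for each pair $A,B$ with $A\cap B=\emptyset$ and each pair of parts $X,Y \in \cP$, we flip between $X\cap V_A$ and $Y\cap V_B$ exactly when the parts $X,Y$ are completely adjacent in $H$. This works because $H$ is constant on $X\times Y$ whenever $\cP$ is a refinement of the homogeneous structure, and more generally we can use $\mathcal{L}$ itself: taking $\cP$ trivial, removing the edges of $H$ between $V_A$ and $V_B$ is not a flip. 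I would therefore avoid deriving $G$ from $H$ directly and instead compose flips.

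The key identity is that if $H_i$ is a $\cP_i$-flip of $H$, then the graph $G_i$ defined by taking $H_i$ on label-compatible pairs and $H_i$ with the base adjacency removed on incompatible pairs is simply the $\cP_i$-flip of $G$ obtained by performing the same flip sequence on $G$, restricted by $\mathcal{L}$. Concretely, define $G_i$ as the result of applying to $G$ the flips between $X\cap V_A$ and $Y\cap V_B$ for every flip $(X,Y)$ used for $H_i$ and every pair $A,B$ with $A\cap B\neq\emptyset$. Then $G_i$ is a $\cP_i\wedge\mathcal{L}$-flip of $G$ of width at most $k2^d$. Moreover, on compatible pairs $G_i$ equals $H_i$, and on incompatible pairs $G_i$ equals $G$, which is empty there. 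Hence $E(G_i) \subseteq E(H_i)$ for all $i$, including $G_0=G \subseteq H=H_0$.

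The flipper then copies on $G$ the strategy for $H$, announcing $G_i$ whenever the $H$-strategy announces $H_i$. Any runner move in $G_{i-1}$ along a path of length at most $r$ is also a legal move in $H_{i-1}$, since $G_{i-1}$ is a subgraph of $H_{i-1}$. The runner's play on $G$ is therefore a legal play against the $H$-strategy, which traps it in some round with $v_i$ isolated in $H_i$, and thus isolated in $G_i$. This gives $\fwr{G}\le k2^d$.

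The main obstacle is the identity $E(G_i)\subseteq E(H_i)$: we must verify that restricting the flips to label-compatible pairs preserves both the emptiness of incompatible pairs and agreement with $H_i$ on compatible pairs. This reduces to noting that flips act independently on each pair $(V_A,V_B)$ and that $G=H$ on compatible pairs.
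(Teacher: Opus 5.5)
Your proof is correct and is essentially the paper's argument: you refine each flip partition by label sets, perform each flip only between label-compatible refined parts, observe that every announced $G_i$ is then a subgraph of $H_i$ (equal to $H_i$ on compatible pairs and edgeless on incompatible ones), and transfer the runner's moves to the game on $H$. Delete the second paragraph, since its claim that $G$ itself is a $(\mathcal{P}\wedge\mathcal{L})$-flip of $H$ is false in general and you never use it; also, when a flip pairs a part $X$ with itself, take the label pairs $\{A,B\}$ unordered so that no vertex pair is flipped twice.
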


\begin{proof}
    Let $k \in \N$ and $\cC$ be a graph class such that $\fwr{H} \leq k$ for all $H \in \cC$.
    Furthermore, let $G$ be a graph with $d = \simul[\cC][G]$ and $(H,L)$ be a $d$-simultaneous $\cC$-representation.
    
    We give a strategy for the flipper on $G$ of width $k 2^d$.
    At the start, the runner chooses a vertex $v_0$.
    Let $H_1$ be the $k$-flip that is chosen by the flipper in a winning strategy, if the game was played on $H$.
    Let $\cP$ be the partition of size $\leq k$ used to obtain $H_1$.
    We consider the partition $\cP'$, where each set $A \in \cP$ is replaced by the $2^d$ sets $A_L := \setcond{v \in A}{L(v)=L}$ for $L \subseteq \set{1,\dots,d}$.
    Consider the sequence of flips which are used to obtain $H_1$ from $H$.
    For every performed flip between sets $A,B \in \cP$ we perform flips between the sets $A_{L_1}, B_{L_2} \in \cP'$ with $L_1 \cap L_2 \neq \emptyset$.
    The graph obtained by this $\cP'$-flip will be called $G_1$ and announced by the flipper.

    We claim that that $G_1$ is a subgraph of $H_1$.
    Consider two vertices $u,v \in V(G)$.
    If $L(u) \cap L(v) = \emptyset$, the edge $uv$ does not exist in $G$.
    Let $u \in A \in \cP$ and $v \in B \in \cP$.
    As $L(u) \cap L(v) = \emptyset$, the flip between $A_{L(u)}$ and $B_{L(v)}$ is never performed, thus the edge $uv$ does not exist in $G_1$ (even if it does exist in $H_1$).
    Now consider the case $L(u) \cap L(v) \neq \emptyset$.
    Then $uv \in E(G)$ if and only if $uv \in E(H)$.
    Let $u \in A \in \cP$ and $v \in B \in \cP$.
    Then, for every flip between $A, B$ used to create $H_1$, the flip between $A_{L(u)}$ and $B_{L(v)}$ is performed to create $G_1$.
    Thus, the edge $uv$ exists in $G$ if and only if it exists in $H$.
    As $G_1$ is a subgraph of $H_1$, the set of vertices reachable from $v_0$ on paths of length $\leq r$ in $G_1$ is a subset of the vertices reachable from $v_0$ on paths of length $r$ in $H_1$.
    For all of these vertices, the flipper has a strategy on how to proceed, were the runner to move to that vertex in $H_1$.
    By iteratively modifying the strategy as demonstrated above, we obtain a winning strategy of the flipper of width $\leq k 2^d$ on $G$.
\end{proof}

From this theorem we obtain the fact that bounded flip-width and almost bounded flip-width are $\simul$-bounded.

\begin{corollary}
    Let $\cC$ be a hereditary completable graph class with bounded flip-width.
    Then any graph class with bounded simultaneous $\cC$-number also has bounded flip-width.
    In particular, the graph class parameter corresponding to bounded flip-width is $\simul$-bounded.
    If $\cC$ has almost bounded flip-width, then any graph class with bounded simultaneous $\cC$-number also has almost bounded flip-width.
    In particular, the graph class parameter corresponding to almost bounded flip-width is $\simul$-bounded.
\end{corollary}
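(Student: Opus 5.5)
The corollary follows directly from the preceding theorem on radius-$r$ flip-width, which gives $\fwr{G} \leq \fwr{\cC} \cdot 2^{\simulbr{\cC}{G}}$ for every $r \in \N \cup \set{\infty}$. Fix a hereditary completable class $\cC$ and a class $\cD$ of graphs with $\simul[\cC][G] \leq d$ for all $G \in \cD$.

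For the bounded flip-width statement, I fix $r \in \N$. Since $\cC$ has bounded flip-width, $\fwr{\cC} =: k_r < \infty$. Applying the theorem to each $G \in \cD$ gives $\fwr{G} \leq k_r 2^d$, so $\fwr{\cD} < \infty$. As $r$ was arbitrary, $\cD$ has bounded flip-width. In terms of graph class parameters, the parameter "bounded flip-width" is finite on $\cD$ whenever it is finite on $\cC$ and $\simul[\cC]$ is bounded on $\cD$, which is exactly $\simul$-boundedness. The converse direction of the definition is automatic: $\cC$ is contained in the graphs with $\simul[\cC] \leq 1$, so if $\simul[\cC]$ upper bounds the parameter, the parameter must be bounded on $\cC$.

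For almost bounded flip-width, I fix $r$ and take an $n$-vertex graph $G \in \cD$ with a $d$-simultaneous $\cC$-representation $(H,L)$. Then $H \in \cC$ also has $n$ vertices, so $\fwr{H} \leq n^{o(1)}$. The proof of the theorem is graph-wise: it converts the flipper's strategy on $H$ into one on $G$ of width $\fwr{H} \cdot 2^d$. It therefore gives the pointwise bound $\fwr{G} \leq 2^d \fwr{H} \leq 2^d n^{o(1)} = n^{o(1)}$, since $d$ is a constant.

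The only point needing care is this pointwise bound. The statement of the theorem uses the supremum $\fwr{\cC}$, which may be infinite for a class of almost bounded flip-width. I will therefore invoke the proof rather than the statement, noting that $H$ and $G$ share the same vertex set. With that, $\simul$-boundedness of the almost-bounded parameter follows exactly as in the bounded case.
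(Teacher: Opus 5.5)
Your proposal is correct and follows the paper's route: the paper derives the corollary directly from the preceding radius-$r$ flip-width theorem, without a separate argument. In the almost-bounded case you use the graph-wise bound $\fwr{G} \leq 2^d \cdot \fwr{H}$ from that theorem's proof rather than the class supremum $\fwr{\cC}$, which may be infinite; this is the right way to make that step rigorous.
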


A family of parameters closely related to flip-width is given by \emph{merge-width}.
Indeed, it is conjectured that bounded flip-width and bounded merge-width are equivalent~\cite{DreierT25}, which would imply $\simul$-boundedness of merge-width.
However, as this conjecture remains unproven, we still consider merge-width separately.
Assume we are given a vertex set $V$.
A \emph{construction sequence} is a sequence of steps, maintaining a partition $\cP$ of $V$ and three sets $E, N, U$ which partition $\binom{V}{2}$, the subsets of $V$ with cardinality two.
In the first step we have that $\cP$ consists of singleton sets and $U = \binom{V}{2}$.
Each step performs one of the following operations:
\begin{itemize}
    \item merge two sets $A,B \in \cP$, replacing them with $A \cup B$,
    \item resolve positively a pair $A,B \in \cP$ ($A=B$ is possible), declaring all unresolved pairs $ab \in U$ with $a \in A$, $b \in B$ as edges, moving them to $E$,
    \item resolve negatively a pair $A,B \in \cP$ ($A=B$ is possible), declaring all unresolved pairs $ab \in U$ with $a \in A$, $b \in B$ as non-edges, moving them to $N$.
\end{itemize}
In the final step, $\cP$ has one part and $U = \emptyset$.
We say that the sequence is a construction sequence of the graph $G = (V,E)$.
The \emph{radius-$r$ width} of the construction sequence is the least number $k \in \N$ such that for every step, the following holds:
for all $v \in V$, at most $k$ sets of the current partition $\cP$ can be reached from $v$ by a path of length $\leq r$ in $(V, E \cup N)$.

\begin{definition}[Merge-width]
    The \emph{radius $r$ merge-width} of $G$, denoted $\mwr{G}$, is the least radius-$r$ width of a construction sequence of $G$.
    A graph class $\cC$ has \emph{bounded merge-width}, if $\mwr{\cC} < \infty$ for all $r \in \N$.
    A graph class $\cC$ has \emph{almost bounded merge-width}, if for each fixed $r \in \N$, every $n$-vertex graph of $\cC$ has radius-$r$ merge-width at most $n^{o(1)}$.
\end{definition}

We again can interpret the properties of having bounded or almost bounded merge-width as graph class parameters.

\begin{theorem}
    Let $r \in \N$.
    Then radius-$r$ merge-width is $\simul$-bounded.
    In particular, $\mwr{G} \leq \mwr{\cC} \cdot 2^{\simulbr{\cC}{G}}$ for any hereditary completable $\cC$ and any graph $G$.
\end{theorem}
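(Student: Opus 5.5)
We refine a construction sequence for $H$ by label sets, mirroring the flip-width argument. Let $k=\mwr{\cC}$, let $(H,L)$ be a $d$-simultaneous $\cC$-representation of $G$, and fix a construction sequence of $H$ of radius-$r$ width at most $k$, with partitions $\cP_t$ and sets $E^H_t,N^H_t,U^H_t$. For a part $A$ and a label set $B\subseteq\set{1,\dots,d}$ put $A_B:=\setcond{v\in A}{L(v)=B}$.

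We build a construction sequence for $G$ whose partition at each stage is the refinement $\cP'_t:=\setcond{A_B}{A\in\cP_t,\ B\subseteq\set{1,\dots,d}}$. Initially both partitions consist of singletons. The steps are translated as follows.
\begin{enumerate}
\item A merge of $A$ and $B$ in $H$ becomes the merges of $A_S$ with $B_S$ for every label set $S$, so each label class stays separate.
\item A positive resolution of $(A,B)$ becomes, for each pair $(S,T)$, a positive resolution of $(A_S,B_T)$ if $S\cap T\neq\emptyset$ and a negative one otherwise.
\item A negative resolution of $(A,B)$ becomes negative resolutions of all pairs $(A_S,B_T)$.
\end{enumerate}
After the last step of $H$'s sequence, $\cP'$ consists of at most $2^d$ parts $V_S$. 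We then merge these in arbitrary order, and nothing remains unresolved. One checks directly that a pair $uv$ is declared an edge iff $uv\in E(H)$ and $L(u)\cap L(v)\neq\emptyset$, which is exactly $uv\in E(G)$. Moreover, at every stage the resolved pairs coincide as sets: $E'_t\cup N'_t=E^H_t\cup N^H_t$, since a pair becomes resolved in the $G$-sequence exactly when it does in the $H$-sequence.

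For the width bound, fix a stage and a vertex $v$. Since the graphs $(V,E'_t\cup N'_t)$ and $(V,E^H_t\cup N^H_t)$ are identical, the vertices reachable from $v$ within distance $r$ are the same in both. These vertices meet at most $k$ parts of $\cP_t$, and each part splits into at most $2^d$ parts of $\cP'_t$, so at most $k2^d$ parts of $\cP'_t$ are reached. In the final phase there are at most $2^d\le k2^d$ parts in total, as $k\ge1$ whenever $G$ has an edge; if $d=0$ then $G$ is edgeless and the claim is trivial. This gives $\mwr{G}\le k\,2^d$.

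The main obstacle is bookkeeping rather than any deep step. A single step of $H$ expands into many steps for $G$, and the width condition must also hold at the intermediate stages. This is handled by doing all merges of a translated merge step first, or by noting that at every intermediate stage $\cP'$ is still a refinement of $\cP_{t+1}$ into label classes. Resolution steps do not change the partition, and the resolved-pair set at an intermediate stage lies between those of $H$ at stages $t$ and $t+1$. Hence the width bound at an intermediate stage follows from the bound at stage $t+1$ of $H$.
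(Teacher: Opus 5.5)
Your proposal is correct and matches the paper's proof: you split each part of $H$'s construction sequence by label sets, translate merges and positive and negative resolutions exactly as the paper does, and finish by merging the at most $2^d$ classes $V_S$. One bookkeeping fix: for intermediate stages of a translated merge, compare with stage $t$ rather than $t+1$. There the partition is coarser than the label-refinement of $\cP_t$ and the resolved pairs are exactly those of stage $t$, which gives the bound $k2^d$. Measured against $\cP_{t+1}$, the part $A\cup B$ can temporarily be split into up to $2^{d+1}$ pieces, so that comparison only yields $(k+1)2^d$.
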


\begin{proof}
    Let $k \in \N$ and $\cC$ be a graph class such that $\mwr{H} \leq k$ for all $H \in \cC$.
    Furthermore, let $G$ be a graph with $d = \simul[\cC][G]$ and $(H,L)$ be a $d$-simultaneous $\cC$-representation.
    Let $(\cP_1, E_1, N_1, U_1), \dots, (\cP_m, E_m, N_m, U_m)$ be a construction sequence of $H$ with radius-$r$ width $k$.
    We describe how to adapt this construction sequence to a construction sequence of $G$ with radius-$r$ width $\leq 2^dk$.
    For every set $A \in \cP_t$ of some partition $\cP_t$ in the above sequence, we consider the sets $A_L := \setcond{v \in A}{L(v) = L}$ for all $L \subseteq \set{1,\dots,d}$.
    In our construction sequence, any set $A$ will be replaced by the sets $(A_L)_{L \subseteq \set{1,\dots,d}}$.
    Then, for every step in the original construction sequence, we adapt them in the following ways:
    \begin{itemize}
        \item merging two sets $A,B \in \cP$: merging $A_L$ and $B_L$ for every $L \subseteq \set{1,\dots, d}$,
        \item resolving $A,B \in \cP$ negatively: resolving $A_{L_1}, B_{L_2}$ negatively for every $L_1, L_2 \subseteq \set{1,\dots,d}$,
        \item resolving $A,B \in \cP$ positively: resolving $A_{L_1}, B_{L_2}$ with $L_1 \cap L_2 \neq \emptyset$ positively and $A_{L_1}, B_{L_2}$ with $L_1 \cap L_2 = \emptyset$ negatively.
    \end{itemize}
    After each of these bundles of steps, the same edges are resolved in $G$ that were resolved after the single step in $H$.
    Thus, we have the same paths and reachability.
    As every original set of the partition is now split into up to $2^d$ subsets, any vertex can reach at most $k2^d$ parts of the partition using a resolved path of length $\leq r$.
    Observing a step of the sequence inside a bundle of steps, we have that strictly less edges are resolved, thus there can not be more paths.
    At the end of the process, we have that the partition is not a single set, but contains the sets $V_L = \setcond{v \in V}{L(v) = L}$.
    Then, in the last $2^d$ steps, these sets are merged in no particular order.
    During that, at most $2^d$ parts of the partition can be reached during any step.
\end{proof}

Again, we obtain that bounded merge-width and almost bounded merge-width are $\simul$-bounded.

\begin{corollary}
    Let $\cC$ be a hereditary completable graph class with bounded merge-width.
    Then any graph class with bounded simultaneous $\cC$-number also has bounded merge-width.
    In particular, the graph class parameter corresponding to bounded merge-width is $\simul$-bounded.
    If $\cC$ has almost bounded merge-width, then any graph class with bounded simultaneous $\cC$-number also has almost bounded merge-width.
    In particular, the graph class parameter corresponding to almost bounded merge-width is $\simul$-bounded.
\end{corollary}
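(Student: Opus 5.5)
We adapt the construction sequence of $H$ to one for $G$, splitting every part of the partition by the label sets of its vertices, exactly as in the preceding proof for flip-width. Fix $k$ with $\mwr{H} \leq k$ for all $H \in \cC$. Take $G$ with $d = \simul[\cC][G]$, a $d$-simultaneous $\cC$-representation $(H,L)$, and a construction sequence $(\cP_t, E_t, N_t, U_t)_t$ of $H$ of radius-$r$ width at most $k$. For a part $A$ of some $\cP_t$ and $L' \subseteq \set{1,\dots,d}$, write $A_{L'} := \setcond{v \in A}{L(v) = L'}$. The new sequence maintains the refined partition $\setcond{A_{L'}}{A \in \cP_t, L' \subseteq \set{1,\dots,d}}$, discarding empty parts.

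\textbf{Translating the steps.} We translate each original step into a bundle of steps:
\begin{itemize}
  \item A merge of $A$ and $B$ becomes the $2^d$ merges of $A_{L'}$ with $B_{L'}$.
  \item A negative resolution of $A,B$ becomes negative resolutions of all pairs $A_{L_1}, B_{L_2}$.
  \item A positive resolution of $A,B$ becomes a positive resolution of $A_{L_1},B_{L_2}$ when $L_1 \cap L_2 \neq \emptyset$, and a negative one otherwise.
\end{itemize}
When the original sequence ends with the single part $V$, we are left with the classes $V_{L'}$. We merge these in at most $2^d - 1$ further steps.

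\textbf{Correctness.} A pair $uv$ is resolved at the same bundle in which it was resolved in $H$. It is declared an edge exactly when $uv \in E(H)$ and $L(u)\cap L(v)\neq\emptyset$, that is, when $uv \in E(G)$. So the new sequence constructs $G$, and it ends with one part and $U=\emptyset$.

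\textbf{Width.} The resolved graph $(V, E\cup N)$ at the end of each bundle equals the resolved graph of $H$ after the corresponding original step. Partway through a bundle, only a subset of those pairs is resolved. In either case the resolved graph is a subgraph of the resolved graph after the original step. It is also a supergraph of the resolved graph before that step, but we only need the subgraph direction.

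Merges never change resolved pairs. Inside a merge bundle, each current part is contained in some $A_{L'}$ or $(A\cup B)_{L'}$, and the partition refines the pre-merge partition of $H$ split by labels, or refines the post-merge one. Take the coarser choice where needed: mid-merge parts are contained in parts of the pre-merge refined partition or are unions $A_{L'}\cup B_{L'}$. Hence a vertex reaches parts lying in at most $k$ original parts, each split into at most $2^d$ pieces, i.e.\ at most $k2^d$ parts. During a merge bundle this bound is the delicate point. We count reached parts by the distinct pairs (original part of $\cP_t$, label set), with the original partition taken before the merge. Mid-bundle parts are unions of such pieces, so the count of reachable parts does not exceed the count of reachable pieces, which is at most $k2^d$.

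In the final phase the partition has at most $2^d \le k2^d$ parts, so the bound holds trivially. This gives $\mwr{G} \le k 2^d$.

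The main obstacle is the bookkeeping inside bundles. We must check that intermediate partitions and resolution states never exceed the bound. This rests on monotonicity: resolved paths only shrink, and counting by (original part, label) pieces upper bounds the number of reached parts.
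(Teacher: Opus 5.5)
Your construction and width bookkeeping are correct and follow the paper's argument. The paper obtains the corollary from the preceding merge-width theorem, whose proof also refines every part by label sets, replaces merges and resolutions by the same bundles you use, and ends by merging the classes $V_{L'}$. Your treatment of the middle of a merge bundle is, if anything, more careful than the paper's remark: the mid-bundle partition coarsens the \emph{pre}-merge refined partition, so the number of reachable parts is at most the number of reachable pieces, which is at most $k2^d$.

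The one real omission is the second half of the statement. You begin by fixing $k$ with $\mwr{H}\le k$ for all $H\in\cC$. No such $k$ exists when $\cC$ only has almost bounded merge-width, so as written your proof says nothing about that case. The repair is to observe that your argument works graph by graph: it shows $\mwr{G}\le 2^{d}\cdot\mwr{H}$ for the specific $H$ of the representation. Since $V(H)=V(G)$ has $n$ vertices, this gives $\mwr{G}\le 2^{d}\,n^{o(1)}=n^{o(1)}$ for each fixed $r$ and $d$.

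Two smaller points. In the bounded case, state that you apply your bound for every $r$. For the ``in particular'' clauses, the other direction of $\simul$-boundedness is trivial, since every graph of $\cC$ has $\simul[\cC]$ at most $1$.
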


The last parameter we consider in this subsection is \emph{shrub-depth}.
Let $m,k \in \N_0$.
A \emph{tree-model of $m$ colors and depth $k$} for a graph $G$ is a pair $((T,r),S)$ of a rooted tree $(T,r)$ of height $k$ and a set $S \subseteq \set{1,\dots,m}^2 \times \set{1,\dots,k}$, called a \emph{signature} of the tree-model, such that:
\begin{itemize}
    \item the length of each path from $r$ to some leaf is exactly $k$,
    \item the set of leaves of $T$ is $V(G)$,
    \item each leaf of $T$ is assigned one of the colors $\set{1,\dots,m}$,
    \item for any $i, j, \ell$ it holds $(i,j,\ell) \in S$ if and only if $(j,i,\ell) \in S$ and for any two vertices $u,v \in V(G)$ such that $u$ is colored $i$ and $v$ is colored $j$ and the distance between $u,v$ in $T$ is $2 \ell$, the edge $uv$ exists in $G$ if and only if $(i,j,\ell) \in S$.
\end{itemize}
The class of all graphs having a tree-model of $m$ colors and depth $k$ is denoted by $\TM_m(k)$.

\begin{definition}[Shrub-depth]
    A graph class $\cC$ has \emph{shrub-depth $k$} if there exists an $m$ such that $\cC \subseteq \TM_m(k)$, while for all $m'$ it holds that $\cC \not\subseteq \TM_{m'}(k-1)$.
\end{definition}

\begin{theorem}
    Shrub-depth is $\simul$-bounded.
    In particular, if $\cC \subseteq \TM_m(k)$ for some $m, k \in \N$, then any graph class with simultaneous $\cC$-number at most $d$ is contained in $\TM_{2^d m} (k)$.
\end{theorem}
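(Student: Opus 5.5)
The plan is to take a tree-model of $H$ and refine its leaf colors by label sets, keeping the same tree and depth. Let $\cC \subseteq \TM_m(k)$, let $G$ be a graph with $\simul[\cC][G] \leq d$, and let $(H,L)$ be a $d$-simultaneous $\cC$-representation of $G$. Since $H \in \cC$, there is a tree-model $((T,r),S)$ of $H$ with $m$ colors and depth $k$, where leaf $v$ receives some color $c(v) \in \set{1,\dots,m}$.

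We keep the rooted tree $(T,r)$ unchanged, so the depth stays $k$ and the leaves are still $V(G) = V(H)$. Each leaf $v$ gets the new color $(c(v), L(v))$, drawn from $\set{1,\dots,m} \times \cP(\set{1,\dots,d})$. This set has $2^d m$ elements, and we encode it as $\set{1,\dots,2^d m}$. The new signature is
\[ S' := \setcond{((i,A),(j,B),\ell)}{(i,j,\ell) \in S \text{ and } A \cap B \neq \emptyset}. \]
Symmetry of $S'$ follows from the symmetry of $S$ and of the condition $A \cap B \neq \emptyset$.

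To check correctness, take two leaves $u,v$ at distance $2\ell$ in $T$, with $c(u) = i$, $c(v) = j$, $L(u) = A$ and $L(v) = B$. By definition of a simultaneous representation, $uv \in E(G)$ holds iff $uv \in E(H)$ and $A \cap B \neq \emptyset$. Because $((T,r),S)$ models $H$, $uv \in E(H)$ holds iff $(i,j,\ell) \in S$. Together these give $uv \in E(G)$ iff $((i,A),(j,B),\ell) \in S'$. So $((T,r),S')$ is a tree-model of $G$ with $2^d m$ colors and depth $k$, and hence $G \in \TM_{2^d m}(k)$.

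From this, $\simul$-boundedness follows directly. If $\cC$ has shrub-depth $k$, witnessed by some $m$, then every class of graphs with simultaneous $\cC$-number at most $d$ lies in $\TM_{2^d m}(k)$ and so has shrub-depth at most $k$. The converse direction, that bounded shrub-depth on the graphs of bounded simultaneous $\cC$-number forces it on $\cC$, is trivial because $\cC$ is contained in the graphs with $\simul[\cC] \leq 1$. No step is a real obstacle. The only care needed is that the definition of $\TM_m(k)$ indexes colors by integers, which the bijective re-encoding of pairs handles, and that $S'$ satisfies the required symmetry, which was checked above.
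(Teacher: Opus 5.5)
Your proposal is correct and follows the paper's proof essentially verbatim: both keep the tree $(T,r)$, refine each leaf color $i$ into a pair consisting of the color and the label set (giving $2^d m$ colors), and put $((i,A),(j,B),\ell)$ into the new signature exactly when $(i,j,\ell) \in S$ and $A \cap B \neq \emptyset$. You also check the symmetry of $S'$ and the trivial converse direction explicitly, which the paper leaves implicit.
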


\begin{proof}
    Let $\cC \subseteq \TM_m(k)$ be a hereditary completable graph class.
    Let $G$ be a graph with $d = \simul[\cC][G]$ and $(H,L)$ be a $d$-simultaneous $\cC$-representation of $G$.
    Let $((T,r),S)$ be a tree-model of $H$ of $m$ colors and depth $k$.
    For every color $i$ and every label set $L \subseteq \set{1,\dots,d}$ we introduce a new color $i_L$.
    This gives us $2^d m$ many colors.
    If a vertex $v$ has color $i$ in the tree-model, we give it the new color $i_{L(v)}$.
    Furthermore, we define a new signature $S'$ as follows:
    \[ (i_{L_1}, j_{L_2}, \ell) \in S' \Leftrightarrow (i,j,\ell) \in S \text{ and } L_1 \cap L_2 \neq \emptyset. \]
    Thus, in the graph defined by the tree-model $((T,r), S')$, two vertices $u$ and $v$ with colors $i$ and $j$ and distance $2 \ell$ in $T$ are adjacent if and only if $(i,j,\ell) \in S$, i.e., if they are adjacent in $H$, and if $L(u) \cap L(v) \neq \emptyset$.
    The graph defined this way is exactly the graph $G$, therefore $G \in \TM_{2^dm}(k)$.
\end{proof}

As $\TM_m(1)$ is exactly the class of graphs with neighborhood diversity $m$, we immediately get the following:

\begin{corollary}
    Neighborhood diversity is $\simul$-bounded.
    In particular, for any hereditary completable $\cC$ and any graph $G$, $\nd{G} \leq \nd{\cC} \cdot 2^{\simulbr{\cC}{G}}$.
\end{corollary}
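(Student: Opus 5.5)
Since $\TM_m(1)$ is precisely the class of graphs of neighborhood diversity at most $m$, the plan is to specialise the preceding shrub-depth theorem to depth $k=1$. Because the constant in the corollary matters, I will also run the refinement argument directly on the neighborhood-diversity partition.

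Setup. Let $m = \nd{\cC}$, which we may assume finite. Let $G$ be a graph with $d = \simul[\cC][G]$, and let $(H,L)$ be a $d$-simultaneous $\cC$-representation of $G$. Since $H \in \cC$, there is a partition $V(H) = T_1 \cup \dots \cup T_m$ into types: each $T_i$ is a clique or an independent set of $H$, and between any two types $T_i, T_j$ there are either all edges or no edges of $H$. In other words, $H$ has a tree-model of depth $1$ with $m$ colours.

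Refinement. I refine the partition by label sets, setting $T_{i,A} := \setcond{v \in T_i}{L(v) = A}$ for $1 \leq i \leq m$ and $A \subseteq \set{1,\dots,d}$. This gives at most $m\cdot 2^d$ nonempty parts, and I claim each $T_{i,A}$ is a type of $G$.

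Verification. The main step is checking adjacency between two vertices $u \in T_{i,A}$ and $v \in T_{j,B}$. By definition, $uv \in E(G)$ if and only if $uv \in E(H)$ and $A \cap B \neq \emptyset$. The condition $uv \in E(H)$ depends only on the pair $(i,j)$, together with whether $T_i$ is a clique in the case $i=j$. The condition $A\cap B\neq\emptyset$ depends only on $(A,B)$. So for distinct parts, adjacency in $G$ is either complete or empty. Inside a single part $T_{i,A}$, two vertices are adjacent in $G$ exactly when $T_i$ is a clique in $H$ and $A\neq\emptyset$; hence the part is a clique or an independent set. The one care point is the empty label set $A=\emptyset$: such a part is independent in $G$, which is still a valid type.

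Hence $\nd{G} \leq m\cdot 2^d = \nd{\cC}\cdot 2^{\simulbr{\cC}{G}}$. This is the shrub-depth signature argument with $k=1$, where the colour set $\{i_A\}$ takes the role of the refined types.
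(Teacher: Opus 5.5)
Your proof is correct and follows the paper's route: the paper obtains the corollary by specialising the shrub-depth theorem to $\TM_m(1)$, whose proof replaces each colour $i$ by colours $i_{L(v)}$ with signature condition $L_1 \cap L_2 \neq \emptyset$, and that is exactly your refinement into the parts $T_{i,A}$. One small wording point: the parts $T_{i,A}$ need not be full twin classes of $G$, but each is a clique-or-independent module, and that is all you need to bound $\nd{G}$ by the number of parts.
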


\section{Justification of Figure 1} \label{sec:justification}

We now justify \cref{fig:diagram} following the argumentation scheme given in \cite{simchord}.  We have to prove that a parameter $p$ bounds parameter $q$ if and only if there is a directed path from $p$ to $q$ in \Cref{fig:diagram}.
For each parameter $p$, we justify only pairs $(p,q)$ for which one of the following conditions hold:
\begin{itemize}
    \item $p$ bounds $q$ where $q$ is an immediate successor of $p$ in the diagram, or 
    \item $p$ does not bound $q$, all the parameters $p'\neq p$ that bound $p$ also bound $q$, and $q$ is a vertex of out-degree $0$ in the subgraph induced by the vertices corresponding to the parameters that are not bounded by $p$. 
\end{itemize}
The remaining relations follow by transitivity.

Additionally, we justify why a parameter is closed or is not closed under first-order transductions.

We will often make use of the following three results.

\begin{lemma} \label{lem:completebipartite}
    The following parameters are unbounded on the class of complete bipartite graphs: Cubicity, interval number, and tree independence number.
\end{lemma}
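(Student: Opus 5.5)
The three unboundedness claims are independent, so I will take the complete bipartite graphs $K_{n,n}$ and handle each parameter in turn. For each I either invoke a classical lower bound or give a short direct argument. Throughout, $K_{n,n}$ has bipartition $A=\set{a_1,\dots,a_n}$, $B=\set{b_1,\dots,b_n}$.

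\emph{Cubicity.} A unit interval graph is claw-free, so in any intersection of unit interval graphs whose result is $K_{1,3}$ each factor must destroy some induced claw. I will not attempt this from scratch. Instead I cite the known result of Roberts that the cubicity of $K_{n,n}$ is logarithmic in $n$; in particular $\mathsf{cubicity}(K_{n,n}) \geq \lceil \log_2 n\rceil$, which tends to infinity.

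A self-contained alternative uses a Ramsey-style pigeonhole argument. Every factor $I_t$ is a unit interval supergraph of $K_{n,n}$. For each pair $a_i,a_j$ some factor must make them non-adjacent. Within one factor, non-adjacent vertices of $A$ are ordered by position, and all of $B$ must lie in the region adjacent to both. This forces each factor's non-adjacency on $A$ to be a partial order of dimension one with respect to a split. Bounding the number of such splits needed gives the logarithmic bound. Because this is fiddly, citation is the route I would take.

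\emph{Interval number.} This is classical (Trotter and Harary): the interval number of $K_{m,n}$ is $\lceil (mn+1)/(m+n)\rceil$, hence roughly $n/2$ for $K_{n,n}$. A quick self-contained check is by edge counting, and this is the step needing the most care. Suppose each vertex receives at most $t$ intervals and we realize the graph with $2n$ vertices and $n^2$ edges. Since $K_{n,n}$ is triangle-free, each point of the line lies in intervals of at most one vertex of $A$ and at most one of $B$. Ordering all interval endpoints, each edge of the graph needs a "first overlap event" where an interval begins inside an interval of the other side. There are at most $2nt$ interval starts, so $n^2 \le 2nt$ up to the standard correction. Hence $t\ge n/2$, which is unbounded.

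\emph{Tree independence number.} In any tree-decomposition of $K_{n,n}$ I claim some bag contains all of $A$ or all of $B$. Suppose no bag contains all of $A$. For each $a\in A$ consider the subtree $T_a$ of nodes whose bags contain $a$. Every edge $ab$ forces $T_a\cap T_b\neq\emptyset$. The family $\set{T_a}\cup\set{T_b}$ has the property that every $T_a$ meets every $T_b$. If the $T_b$ pairwise intersect, then by the Helly property of subtrees they share a common node, giving a bag containing $B$. Otherwise some $T_{b},T_{b'}$ are disjoint, and every $T_a$ meets both, so every $T_a$ contains the tree path between them. Any node of that path then lies in all $T_a$, so its bag contains $A$. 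In either case the resulting bag contains an independent set of size $n$, so the tree independence number is at least $n$.
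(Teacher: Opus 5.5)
Your proposal is correct. The paper proves this lemma purely by citation: Roberts for cubicity, Griggs and West for interval number, and Dallard, Milani\v{c}, and \v{S}torgel for tree independence number. You also cite Roberts for cubicity, but for the other two parameters you give self-contained arguments, and that is the real difference.

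\textbf{Interval number.} Your charging argument is sound. The left end of the overlap of an interval of $a$ with an interval of $b$ is the left endpoint of one of the two intervals. The vertices whose intervals share a point form a clique, and $K_{n,n}$ is triangle-free, so each of the at most $2nt$ left endpoints witnesses at most one edge. Hence $n^2 \le 2nt$ holds outright. No correction is needed to get $t \ge n/2$; the correction only recovers the exact Trotter--Harary value.

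\textbf{Tree independence number.} Your Helly-type argument that some bag contains $A$ or $B$ is also sound. Its opening supposition that no bag contains $A$ is never used and can be dropped.

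\textbf{Comparison.} Your route gives explicit bounds that do not depend on the literature: $t \ge n/2$, and tree independence number at least $n$. The paper's route is shorter, and it suffices because the lemma only serves the justification of the diagram.

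\textbf{Cubicity sketch.} Your self-contained sketch for cubicity is muddled as written, though you do not rely on it. A clean version goes as follows. In any unit interval supergraph of $K_{n,n}$, fix $b \in B$ with interval $[x,x+1]$. Every interval of a vertex of $A$ meets it and hence contains $x$ or $x+1$, so each factor partitions $A$ into two cliques. Every pair in $A$ must be non-adjacent in some factor, so $k$ factors induce an injective map $A \to \{0,1\}^k$, which forces $k \ge \lceil \log_2 n \rceil$.

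Alternatively, $K_{1,n}$ is an induced subgraph of $K_{n,n}$, so Roberts' bound for stars already suffices. This holds regardless of whether he treats $K_{n,n}$ directly, and it is how the paper uses that reference elsewhere.
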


This lemma follows from results in \cite{DallardMS24treeindependence,GriggsW80intervalnumber,roberts1969boxicity}.

\begin{lemma} \label{lem:compmatching}
    Thinness and chordality are unbounded on complements of matchings.
\end{lemma}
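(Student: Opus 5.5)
The statement to prove is \cref{lem:compmatching}: thinness and chordality are unbounded on complements of matchings. Write $\overline{nK_2}$ for the complement of a perfect matching on $2n$ vertices, with pairs $\{a_i,b_i\}$ that are non-adjacent. Every other pair of vertices is adjacent. I will treat the two parameters separately. For thinness I rely on the known linear lower bound of Mannino et al.~\cite{mannino2002solving}, which the paper already cites. For chordality I will give a direct combinatorial argument.

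\emph{Thinness.} The plan is to reprove the lower bound by showing that any two matched pairs cannot be arranged badly inside a single class. Fix an ordering $\sigma$ and a partition that are consistent. Take two vertices $x<y$ in the same class. Consistency requires that every $z>y$ adjacent to $x$ is also adjacent to $y$. In $\overline{nK_2}$, $x$ is adjacent to everything except its partner $x'$, and $y$ misses only its partner $y'$. So if $y'>y$ and $y'\neq x'$, then $y'$ is a right neighbor of $x$ that is not a neighbor of $y$, which is a contradiction.

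Now let $F$ be the set of vertices that come \emph{before} their partner in $\sigma$; this set has exactly $n$ elements. For $x<y$ both in $F$, the partners $x',y'$ are distinct, and $y'>y$. Hence $x$ and $y$ cannot share a class. It follows that the $n$ vertices of $F$ lie in pairwise distinct classes, so $\thin(\overline{nK_2})\ge n$. This is the main step, and it is short once the symmetric difference structure is used.

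\emph{Chordality.} Suppose $\overline{nK_2}=\bigcap_{j=1}^k C_j$ with each $C_j$ chordal. Since each $C_j$ is a supergraph on the same vertex set, every non-edge $a_ib_i$ must be missing from some $C_j$. Assign to each $i$ one index $j(i)$ with $a_ib_i\notin C_{j(i)}$. If $j(i)=j(i')$ for some $i\neq i'$, then $C_{j(i)}$ contains all four cross edges $a_ia_{i'}, a_ib_{i'}, b_ia_{i'}, b_ib_{i'}$, because these are edges of $\overline{nK_2}$. It is missing both $a_ib_i$ and $a_{i'}b_{i'}$. Therefore $a_i,a_{i'},b_i,b_{i'}$ induce a $C_4$ in $C_{j(i)}$, contradicting chordality. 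So the map $i\mapsto j(i)$ is injective, and $k\ge n$.

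The only delicate point is checking the direction of the consistency condition in the thinness argument, namely that the offending vertex $y'$ lies after $y$. Choosing $F$ as the set of "first occurrences" guarantees this.
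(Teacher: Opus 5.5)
Your proposal is correct, but it takes a different route from the paper. The paper gives no argument of its own: it cites Mannino et al.~\cite{mannino2002solving} for thinness and \cite[Theorem 4.16]{exphierarchy} for chordality. You prove both bounds directly.

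Your thinness step is sound. For $x<y$ both preceding their partners, $y'\neq x$ and $y'\neq x'$ hold automatically. So the triple $x<y<y'$ violates consistency, and the $n$ ``first'' vertices need $n$ distinct classes.

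Your chordality step is also sound. Every chordal factor is a supergraph of $\overline{nK_2}$ on the same vertex set. Hence two non-edges dropped by the same factor leave an induced $C_4$ on $a_i,a_{i'},b_i,b_{i'}$.

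Your version buys a self-contained, elementary proof with explicit linear bounds. Both bounds are even tight: the partition into the $n$ matched pairs is consistent with every ordering, and $\overline{nK_2}$ is the intersection of the $n$ chordal graphs $K_{2n}-a_ib_i$. This linear growth is exactly what the paper needs later for the tightness of its exponential bound on thinness, where it again only cites \cite{mannino2002solving}. The citation route buys only brevity.

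Your opening remark that you ``rely on'' Mannino et al.\ is unnecessary, since your argument reproves the needed bound.
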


This follows from \cite{mannino2002solving} and \cite[Theorem 4.16]{exphierarchy}.

\begin{lemma} \label{lem:grids}
    Grids have unbounded sim-width.
\end{lemma}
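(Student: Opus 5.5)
The statement to prove is \cref{lem:grids}: grids have unbounded sim-width. The plan is to exhibit, for every binary decomposition tree of the $n\times n$ grid $G_n$, a node whose cut contains an induced matching of size $\Omega(n)$. Here "induced" is in the sense of \cref{def:sim}: it is a matching between $A$ and $\overline{A}$ that is induced in $G_n$ itself, so edges inside $A$ or inside $\overline{A}$ also count as violations.

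First, given a decomposition tree $((T,r),\delta)$ of $G_n$, I apply \cref{lem:balancedcut} to obtain a node $a$ with $n^2/3 \le \abs{V_a} \le 2n^2/3$. Put $A := V_a$ and $B := \overline{A}$. The standard edge-isoperimetric argument for grids then gives many crossing edges. If some row lies entirely in $A$ and some other row entirely in $B$, then every column contains a crossing edge. Otherwise almost all rows are mixed, and each mixed row contains a crossing edge. Either way, there are at least about $n/3$ crossing edges that are pairwise in distinct columns, or pairwise in distinct rows.

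Second, I extract an induced matching from these crossing edges, which I expect to be the main obstacle. Two crossing edges may share endpoints or be joined by an edge of $G_n$, and under sim-width even edges inside $A$ or $B$ are forbidden. Since $G_n$ has maximum degree $4$, the crossing edges form a graph in which each edge conflicts with $O(1)$ others. A conflict means sharing a vertex or having endpoints at distance $1$, and the relevant neighbourhood has size at most about $2\cdot 4 + 2\cdot 4\cdot 3$. A greedy selection therefore keeps a constant fraction. Concretely, taking the crossing edges in every third row or column, namely the rows or columns of index $\equiv 0 \pmod 3$, leaves edges whose endpoints lie at pairwise distance at least $2$ in $G_n$, since they sit in different rows separated by two rows, or in different columns. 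This yields an induced matching of $G_n$ in which every edge crosses $(A,B)$, of size at least $n/9 - O(1)$.

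Hence $sim_{G_n}(V_a) \ge n/9 - O(1)$ for every decomposition tree, so the sim-width of $G_n$ is $\Omega(n)$ and is unbounded. Alternatively, the statement follows by citing the known result of Kang et al.~\cite{kang2017sim}, which shows that grids, and more generally graphs with large grid-like structure, have unbounded sim-width; the argument above is a self-contained version of that result.
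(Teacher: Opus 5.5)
Your argument is correct, but it takes a different route from the paper's. The paper's proof is a two-line citation. Grids have unbounded mim-width \cite{vatshelle2012mimwdith}, and by \cite{kang2017sim}, a $K_3$-free family of unbounded mim-width also has unbounded sim-width. The second step is a Ramsey-type transfer that turns a large induced matching of the bipartite cut graph $G[A,\overline{A}]$ into one that is induced in $G$ itself. You instead redo the balanced-cut argument for grids directly. You then get inducedness in $G$ from the geometry of the grid (edges in rows or columns at distance at least $2$ have non-adjacent endpoints) or from bounded degree (the greedy selection), with no Ramsey step.

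What each route buys: yours is self-contained and gives an explicit linear lower bound $\Omega(n)$ for the $n\times n$ grid. Your extraction step also works for any bounded-degree graph whose balanced cuts have many crossing edges. The paper's route is shorter, reuses known results, and applies verbatim to any triangle-free (indeed any $K_t$-free) family of unbounded mim-width, without needing bounded degree. The price is a poor quantitative bound coming from Ramsey.

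Three small points to fix.
\begin{enumerate}
\item In the case where no row lies entirely in $A$ or no row lies entirely in $B$, the bounds $\abs{A},\abs{B}\le 2n^2/3$ only give at least $n/3$ mixed rows, not ``almost all''.
\item Fixing the residue class $0 \bmod 3$ does not work in that case, since the mixed rows may lie almost entirely in a different residue class. Either choose the residue class containing the most mixed rows (pigeonhole, which gives your $n/9$), or take every other mixed row in sorted order. Rows at distance at least $2$ already have non-adjacent endpoints, so the second option gives about $n/6$. Your greedy remark also covers this.
\item Your closing attribution should match the paper's citation: \cite{vatshelle2012mimwdith} for the mim-width lower bound, combined with the $K_3$-free transfer from \cite{kang2017sim}, rather than \cite{kang2017sim} alone.
\end{enumerate}
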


\begin{proof}
    This follows from the fact that grids have unbounded mim-width~\cite{vatshelle2012mimwdith} and that $K_3$-free graphs with unbounded mim-width have unbounded sim-widh~\cite{kang2017sim}.
\end{proof}

\subsection{Edge Clique Cover Number}

Edge clique cover number upper bounds
\begin{itemize}
    \item cubicity by \cite[Theorem 4]{MichaelQ06sphericity}
    \item track number: every clique is an interval graphs
    \item tree independence number: By \cite[Corollary 3.13]{beisegel2024simultaneousinterval} edge clique cover number upper bounds the simultaneous interval number, which in turn upper bounds the tree independence number by \cite[Corollary 3.7]{beisegel2024simultaneousinterval}
    \item neighborhood diversity by \cite[Theorem 4.1]{exphierarchy}
\end{itemize}
Edge clique cover number does not upper bound sm-width by \cref{lem:smecc}. \\
The edge clique cover number is not closed under first-order transductions as it is not closed under graph complementation. This can be seen using the family of graphs $(2K_n)_{n \in \N}$.

\subsection{Bandwidth}

Bandwidth upper bounds
\begin{itemize}
    \item cubicity by \cite[Theorem 5]{ChandranFS13}
    \item treewidth: this can easily be seen by considering the equivalent definition of bandwidth as \emph{proper pathwidth}~\cite{kaplan1996bandwidth}
    \item thinness: bandwidth upper bounds pathwidth, which in turn upper bounds thinness~\cite{mannino2007thinness}
\end{itemize}
Bandwidth does not upper bound
\begin{itemize}
    \item precedence thinness: unions of graphs with bounded bandwidth have bounded bandwidth, while unions of graphs with bounded precedence thinness $\geq 2$ have unbounded precedence thinness by \cref{lem:precthinunion}
    \item tree-length: Cycles have bandwidth $2$ and unbounded tree-length~\cite[Lemma 4]{DourisboureG07}.
\end{itemize}
Bandwidth is not closed under first-order transductions as it is not closed under graph complementation. This can be seen using edgeless graphs.

\subsection{Treedepth}

Treedepth upper bounds
\begin{itemize}
    \item treewidth by \cite[Lemma 11]{bodlaender1995approximating}
    \item thinness: treedepth upper bounds pathwidth~\cite[Lemma 11]{bodlaender1995approximating} which in turn upper bounds thinness~\cite{mannino2007thinness}
    \item shrub-depth by \cite[Proposition 3.4]{GanianHNOM19}
\end{itemize}
Treedepth does not upper bound
\begin{itemize}
    \item cubicity: Stars have treedepth $1$ and unbounded cubicity~\cite[Theorem 1]{roberts1969boxicity}.
    \item precedence thinness: unions of graphs of bounded treedepth have bounded treedepth, while unions of graphs with bounded precedence thinnes $\geq 2$ have unbounded precedence thinness by \cref{lem:precthinunion}
    \item modular-width: Trees of bounded height have bounded treedepth and unbounded modular-width (e.g. rooted height $2$ trees where every non-leaf has $\geq d$ children for arbitrary $d$).
\end{itemize}
Treedepth is not closed under first-order transductions as it is not closed under graph complementation. This can be seen using edgeless graphs.

\subsection{Neighborhood Diversity}

Neighborhood diversity upper bounds
\begin{itemize}
    \item precedence thinness: Let $G$ be a graph with neighborhood diversity $k$. Let $V_1, \dots, V_k$ be the type partition of $G$, that is, each $V_i$ is a module in $G$ and induces either a complete or edgeless subgraph. We claim that any vertex ordering $v_1, \dots, v_n$ is consistent with the type partition. Let $v_r, v_s, v_t \in V(G)$ with $r < s < t$, $v_r, v_s \in V_i$, and $v_r v_t \in E(G)$ be given. If $v_t$ is also contained in $V_i$, then $V_i$ induces a complete subgraph and we have $v_s v_t \in E(G)$. Otherwise, $v_t$ is in another set $V_j$. Then we have that $v_s v_t \in E(G)$ as $V_i$ is a module-
    \item shrub-depth by definition (see also \cite{GanianHNOM19})
    \item induced matching treewidth: Let $G$ be a graph, $k = \nd{G}$ and $V_1, \dots, V_k$ the type partition of $G$. Let $M \subseteq E(G)$ be some maximum induced matching in $G$. Consider the sets of endpoints $A,B \subseteq V(G)$ of $M$. Obviously, no two vertices in $A$ can be of the same type, similiarly for $B$. Thus, as the induced matching treewidth of $G$ is bounded by $\abs{M} \leq \nd{G}$, neighborhood diversity upper bounds induced matching treewidth.
    \item iterated type partition by definition
\end{itemize}
Neighborhood diversity does not upper bound
\begin{itemize}
    \item cubicity: Complete bipartite graphs have neighborhood diversity $2$ and unbounded cubicity (\cref{lem:completebipartite})
    \item interval number: Complete bipartite graphs have neighborhood diversity $2$ and unbounded interval number (\cref{lem:completebipartite})
    \item tree independence number: Complete bipartite graphs have neighborhood diversity $2$ and unbounded tree independence number (\cref{lem:completebipartite})
\end{itemize}
Neighborhood diversity is closed under first-order transductions, as shrub-depth is closed under first-order transductions~\cite{GanianHNOM19}. This is because the class of graphs of neighborhood diversity $k$ is exactly the class $\TM_k(1)$ and any first-order transduction of $\TM_k(1)$ is contained in $\TM_{k'}(1)$ for some $k'$, which has bounded neighborhood diversity.

\subsection{Precedence Thinness}

Precedence thinness upper bounds thinness by definition. \\
Precedence thinness does not upper bound
\begin{itemize}
    \item induced matching treewidth: The family of graphs with unbounded induced matching treewidth given in \cite[Proposition 7.10]{lima2024inducedmatchingarxiv} can easily be seen to have precedence thinness $2$
    \item tree-length: Cycles have unbounded tree-length~\cite[Lemma 4]{DourisboureG07} but bounded precedence thinness. A cyclic ordering $v_1, \dots, v_n$ of $C_n$ of a cycle $C_n$ together with the partition $V^1 := \set{v_1}, V^2 := \set{v_2, \dots, v_{n-1}}, V^3 := \set{v_n}$ verify this
    \item flip-width: Interval graphs have precedence thinness $1$ but unbounded flip-width~\cite{EppsteinM23geometric}
\end{itemize}
Precedence thinness is not closed under first-order transductions as it is not closed under graph complementation. This can be seen using matchings, which have bounded precedence thinness, while their complements have already unbounded thinness~\cite{mannino2002solving}.

\subsection{Thinness}

Thinnes upper bounds
\begin{itemize}
    \item boxicity by \cite[Theorem 4.5]{mannino2002solving}
    \item mim-width by \cite[Theorem 9]{bonomo2019properthinness}
\end{itemize}
Thinness does not upper bound precedence thinness: Unions of graphs of bounded thinness have bounded thinness, while unions of graphs of bounded precedence thinness $\geq 2$ have unbounded precedence thinness by \cref{lem:precthinunion}. \\
Thinness is not closed under first-order transductions using the same arguments as precedence thinness.

\subsection{Iterated Type Partition}

Iterated type partition upper bounds modular-width, which follows easily from the definitions, see also \cite{cordascoetal2020itp}. \\
Iterated type partition does not upper bound
\begin{itemize}
    \item chordality: Complements of matchings have iterated type partition $1$ and unbounded chordality
    \item thinness: Complements of matchings have iterated type partition $1$ and unbounded thinness
    \item shrub-depth: The graph class of \cite[Example 5.4a)]{GanianHNOM19} with unbounded shrub-depth has iterated type partition $1$
    \item induced matching treewidth: The family of graphs with unbounded induced matching treewidth given in \cite[Proposition 7.10]{lima2024inducedmatchingarxiv} has iterated type partition $1$
\end{itemize}
Iterated type partition is not closed under first-order transductions as it is not $\simul$-bounded.

\subsection{Treewidth}

Treewidth upper bounds
\begin{itemize}
    \item boxicity by \cite[Theorem 14]{ChandranS07boxicity}
    \item track number: by \cite[Theorem 4.1]{DingOSV98starforests} every graph of treewidth $k$ can be obtained as the union of $k+1$ star forests. Star forests are interval graphs and thus treewidth upper bounds track number.
    \item tree independence number by definition
    \item sm-width by \cite[Proposition 33]{saether2016between}
\end{itemize}
Treewidth does not upper bound thinness: Trees have bounded treewidth and unbounded thinness~\cite{bonomobraberman2025thinnesstrees}. \\
Treewidth is not closed under first order transductions as it is not closed under graph complementation. This can be seen using edgeless graphs.

\subsection{Cubicity}

Cubicity upper bounds boxicity by definition. \\
Cubicity does not upper bound
\begin{itemize}
    \item sim-width: Grids have cubicity $3$ and unbounded sim-width. To see that grids have cubicity $3$, consider an $n \times m$ grid with vertices $v_{i,j}$, $1 \leq i \leq n$, $1 \leq j \leq m$ such that $v_{i,j}$ is adjacent to $v_{i+1,j}$ and $v_{i,j+1}$ for $i < n$, $j < m$. Now consider the three unit interval graphs with the following interval representations: $R_1(v_{i,j}) = [i,i+1]$, $R_2(v_{i,j}) = [j, j+1]$, and $R_3(v_{i,j}) = [i+j, i+j+1]$. Then the $n \times m$ grid is exactly the intersection of these unit interval graphs.
    \item flip-width: Intersection graphs of axis-parallel unit squares have cubicity $2$ and unbounded flip-width~\cite{EppsteinM23geometric}.
\end{itemize}

It seems to be open whether cubicity upper bounds interval number and track number. However, cubicity does upper bound the track number on $K_3$-free graphs, as it upper bounds the \emph{claw number}~\cite{AdigaC10clawcubicity}, which is an upper bound on the maximum degree in $K_3$-free graphs, which in turn is an upper bound on the track number (this can be seen by adapting the proof of \cite[Theorem 2]{GriggsW80intervalnumber}). It is not immediately clear, whether a similar argument can be used to show that cubicity upper bounds track number on all graphs.\\
Cubicity is not closed under first order transductions as it is not closed under graph complementation. This can be seen using matchings, which have bounded cubicity, while their complements have already unbounded chordality.

\subsection{Boxicity}

Boxicity upper bounds chordality by definition. \\
Boxicity does not upper bound
\begin{itemize}
    \item cubicity: Stars have boxicity $1$ and unbounded cubicity~\cite[Theorem 1]{roberts1969boxicity}
    \item interval number: Complete bipartite graphs have boxicity $2$ and unbounded interval number.
\end{itemize}
Boxicity is not closed under first-order transductions using the same arguments as cubicity.

\subsection{Chordality}

Chordality does not upper bound boxicity: It was shown in \cite{ChandranFM11chordalbipartite} that chordal bipartite graphs with unbounded boxicity exist, while every bipartite graph is the intersection of $2$ split graphs, which are chordal. \\
Chordality is not closed under first-order transductions using the same arguments as cubicity.

\subsection{Modular-Width}

Modular-width upper bounds
\begin{itemize}
    \item tree-length: it has been shown in \cite[Theorem 4.7]{exphierarchy} that modular-width upper bounds the maximum diameter of components of a graph, which is a trivial upper bound on tree-length
    \item cliquewidth by \cite[Corollary 3.6]{courcelle2007upper}
\end{itemize}
Modular-width does not upper bound iterated type partition: The family of lexicographic powers of $P_4$ have modular-width $4$ but unbounded iterated type partition. That is a sequence of graphs defined as follows: the first graph in the sequence is the path $P_4$. Any later graph in the sequence is obtained by substituting every vertex of a $P_4$ with the previous graph in the sequence. \\
Modular-width is not closed under first-order transductions as it is not $\simul$-bounded.

\subsection{Shrub-Depth}

Shrub-depth upper bounds
\begin{itemize}
    \item cliquewidth by \cite[Proposition 3.4]{GanianHNOM19}
    \item tree-length: For any class of bounded shrub-depth there is a $k \in \N$ such that no graph in that class has the induced subgraph $P_k$ (as paths have unbounded shrub-depth). Thus, it must hold that the diameter (and thus the tree-length) of any graph in that class is bounded by $k$
\end{itemize}
Shrub-depth does not upper bound
\begin{itemize}
    \item chordality: Complements of matchings have bounded shrub-depth ($\leq 2$) and unbounded chordality
    \item induced matching treewidth: The family of graphs with unbounded induced matching treewidth given in \cite[Proposition 7.10]{lima2024inducedmatchingarxiv} has bounded shrub-depth ($\leq 3$)
\end{itemize}
Shrub-depth is closed under first-order transductions~\cite{GanianHNOM19}.

\subsection{Tree-Length}

Tree-lengt does not upper bound
\begin{itemize}
    \item sim-width: To construct a family of graphs with bounded tree-length and unbounded sim-width, consider any family of graphs with unbounded sim-width (e.g. grids) and add to every graph a universal vertex. This does not decrease the sim-width, but as the diameter of graphs containing a universal vertex is bounded, so is the tree-length.
    \item flip-width: Interval graphs have bounded tree-length and unbounded flip-width~\cite{EppsteinM23geometric}
\end{itemize}
Tree-length is not closed under first-order transductions as it is not $\simul$-bounded.

\subsection{Sm-Width}

Sm-width upper bounds cliquewidth by \cite[Proposition 33]{saether2016between}. \\
Sm-width does not upper bound
\begin{itemize}
    \item chordality: complements of matchings are cographs and therefore have sm-width $1$~\cite[Proposition 35]{saether2016between}, while they have unbounded chordality
    \item interval number: complete bipartite graphs have sm-width $1$ but unbounded interval number
    \item induced matching treewidth: the graphs with unbounded induced matching treewidth given in \cite[Proposition 7.10]{lima2024inducedmatchingarxiv} are cographs and therefore have sm-width $1$~\cite[Proposition 35]{saether2016between}
\end{itemize}
Sm-width is not closed under first-order transductions as it is not $\simul$-bounded.

\subsection{Cliquewidth, Rank-Width, and Boolean-Width}

Cliquewidth, rank-width and boolean-width pairwise upper bound each other by \cite[Theorem 4.1.3, Theorem 4.2.9]{vatshelle2012mimwdith}. \\
Rank-width upper bounds mim-width by \cite[Theorem 4.2.10]{vatshelle2012mimwdith}. \\
Boolean-width upper bounds twin-width by \cite[Theorem 4.2]{BonnetKTW22}. \\
Cliquewidth is closed under first-order transductions~\cite{Colcombet07, Courcelle_Engelfriet_2012}.

\subsection{Track Number and Interval Number}

Track number upper bounds interval number by definition. \\
Track number does not upper bound
\begin{itemize}
    \item chordality: Complements of matchings have track number $2$ and unbounded chordality
    \item sim-width: Grids have track number $2$ and unbounded sim-width
    \item flip-width: Interval graphs have track number $1$ and unbounded flip-width~\cite{EppsteinM23geometric}
\end{itemize}
Interval number does not upper bound track number: Line graphs of complete graphs have interval number $2$ and unbounded track number~\cite[Section 3]{milans2015ordered}. \\
Track number and interval number are not closed under first-order transductions as they are not closed under graph complementation. This can be seen using the family $(2K_n)_{n \in \N}$ which has bounded track number, while its complement family of complete bipartite graphs has unbounded interval number.

\subsection{Mim-Width}

Mim-width upper bounds o-mim-width by definition. \\
Mim-width does not upper bound flip-width: Interval graphs have mim-width $1$ and unbounded flip-width~\cite{EppsteinM23geometric}. \\
Mim-width is not closed under first-order transductions as it is not closed under graph complementation. This can be seen using complements of grids, which have bounded mim-width due to them being complements of $2$-degenerate graphs~\cite[Lemma 2, Corollary 14]{belmonte2013graphclasses}, while grids already have unbounded sim-width.

\subsection{Twin-Width}

Classes of bounded twin-width have bounded merge-width by \cite[Theorem 1.4]{DreierT25}. \\
Twin-width does not upper bound
\begin{itemize}
    \item boolean-width: Proper interval graphs have bounded twin-width~\cite[Lemma 3.6]{BonnetGKTW24twinwidth3} but unbounded cliquewidth~\cite[Theorem 1.3]{GolumbicR00}
    \item sim-width: Grids have bounded twin-width and unbounded sim-width
\end{itemize}
Twin-width is closed under first-order transductions~\cite{BonnetKTW22}.

\subsection{Tree Independence Number}

Tree independence number upper bounds
\begin{itemize}
    \item induced matching treewidth as the existence of an induced matching with $k$ edges implies the existence of an independent set of size $k$
    \item o-mim-width by \cite[Theorem 2]{bergougnoux2023omim}
\end{itemize}
Tree independence number does not upper bound
\begin{itemize}
    \item chordality: Complements of matchings have tree independence number $2$ and unbounded chordality
    \item interval number: Split graphs have tree independence number $1$ and unbounded interval number~\cite[Theorem 1.5]{BaloghOP04ontheinterval}
    \item mim-width: Chordal graphs have tree independence number $1$ and unbounded mim-width~\cite{mengel2018lower}
    \item flip-width: Interval graphs have tree independence number $1$ and unbounded flip-width~\cite{EppsteinM23geometric}
\end{itemize}
Tree independence number is not closed under first-order transductions as it is not closed under graph complementation. This can be seen using the family $(2K_n)_{n \in \N}$ which has bounded tree independence number, while its complement family of complete bipartite graphs has unbounded tree independence number.

\subsection{O-Mim-Width}

O-mim-width upper bounds sim-width by definition. \\
O-mim-width is not closed under first-order transductions using the same arguments as mim-width.

\subsection{Merge-Width}

Bounded merge-width implies bounded flip-width by \cite[Theorem 1.7]{DreierT25}. It is open whether there are graph classes with bounded flip-width and unbounded merge-width. \\
Bounded merge-width does not imply bounded twin-width: $3$-regular graphs have bounded merge-width~\cite{DreierT25} and unbounded twin-width~\cite{BonnetGKTW22twinwidth2}. \\
Merge-width is closed under first-order transductions~\cite{DreierT25}.

\subsection{Induced Matching Treewidth}

Induced matching treewidth upper bounds sim-width by \cite[Theorem 15]{bergougnoux2023omim}. \\
Induced matching treewidth does not upper bound o-mim-width, which is shown in \cite{bergougnoux2023omimarxiv}. \\
Induced matching treewidth is not closed under first-order transductions as it is not closed under graph complementation. This can be seen using the family of graphs with unbounded induced matching treewidth given in \cite[Proposition 7.10]{lima2024inducedmatchingarxiv}, whose complement family can easily be seen to have bounded induced matching treewidth.

\subsection{Sim-Width and Flip-Width}

It is open whether there are classes with bounded flip-width and unbounded merge-width~\cite{DreierT25}. \\
Flip-width is closed under first-order transductions~\cite{Torunczyk23}. \\
Sim-width is not closed under first-order transductions using the same arguments as mim-width.

\end{document}